\renewcommand\footnotetextcopyrightpermission[1]{}
\begin{document}
\pagestyle{empty}

\author{Ryan Doenges}
\orcid{0000-0002-6899-4529}
\affiliation{%
  \institution{Cornell University}%
  \city{Ithaca, NY}
  \country{USA}%
}
\email{rhd89@cornell.edu}
\author{Mina Tahmasbi Arashloo}
\affiliation{%
  \institution{Cornell University}%
  \city{Ithaca, NY}
  \country{USA}%
}
\email{mt822@cornell.edu}
\author{Santiago Bautista}
\orcid{0000-0003-2129-897X}
\authornote{Work performed at Cornell University.}
\affiliation{%
  \institution{ENS Rennes}%
  \city{Bruz}
  \country{France}%
}
\email{santiago.bautista@ens-rennes.fr}
\author{Alexander Chang}
\affiliation{%
  \institution{Cornell University}%
  \city{Ithaca, NY}
  \country{USA}%
}
\email{apc73@cornell.edu}
\author{Newton Ni}
\affiliation{%
  \institution{Cornell University}%
  \city{Ithaca, NY}
  \country{USA}%
}
\email{cn279@cornell.edu}
\author{Samwise Parkinson}
\affiliation{%
  \institution{Cornell University}%
  \city{Ithaca, NY}
  \country{USA}%
}
\email{stp59@cornell.edu}
\author{Rudy Peterson}
\affiliation{%
  \institution{Cornell University}%
  \city{Ithaca, NY}
  \country{USA}%
}
\email{rnp39@cornell.edu}
\author{Alaia Solko-Breslin}
\affiliation{%
  \institution{Cornell University}%
  \city{Ithaca, NY}
  \country{USA}%
}
\email{ajs644@cornell.edu}
\author{Amanda Xu}
\affiliation{%
  \institution{Cornell University}%
  \city{Ithaca, NY}
  \country{USA}%
}
\email{ax49@cornell.edu}
\author{Nate Foster}
\affiliation{%
  \institution{Cornell University}%
  \city{Ithaca, NY}
  \country{USA}%
}
\email{jnfoster@cs.cornell.edu}

\renewcommand{\shortauthors}{Doenges, Arashloo, Bautista, Chang,
Ni, Parkinson, Peterson, Solko-Breslin, Xu and Foster}

\title{Petr4: Formal Foundations for P4 Data Planes}

\begin{abstract}
P4 is a domain-specific language for programming and specifying
packet-processing systems. It is based on an elegant design with
high-level abstractions like parsers and match-action pipelines that
can be compiled to efficient implementations in software or hardware.
Unfortunately, like many industrial languages, P4 has developed
without a formal foundation. The P4 Language Specification is a
160-page document with a mixture of informal prose, graphical
diagrams, and pseudocode. The P4 reference implementation is a complex
system, running to over 40KLoC of C++ code. Clearly neither of these
artifacts is suitable for formal reasoning.

This paper presents a new framework, called \petra, that puts P4 on a
solid foundation. \petra consists of a clean-slate definitional
interpreter and a calculus that models the semantics of a core
fragment of P4. Throughout the specification, some aspects of program
behavior are left up to targets. Our interpreter is parameterized over a
target interface which collects all the target-specific behavior
in the specification in a single interface.

The specification makes ad-hoc restrictions on the nesting of certain
program constructs in order to simplify compilation and avoid the
possibility of nonterminating programs. We captured the latter
intention in our core calculus by stratifying its type system, rather
than imposing unnatural syntactic restrictions, and we proved that all
programs in this core calculus terminate.

We have validated the interpreter against a suite of over 750 tests
from the P4 reference implementation, exercising our target interface
with tests for different targets. We established termination for the
core calculus by induction on the stratified type system. While
developing \petra, we reported dozens of bugs in the language
specification and the reference implementation, many of which have
been fixed.
\end{abstract}

\maketitle

\thispagestyle{empty}

\section{Introduction}
\label{sec:introduction}

Most networks today are designed and operated without the use of
formal methods. The philosophy of the Internet Engineering Task Force
(IETF), which manages the standards for protocols like TCP and IP, can
be summarized by David Clark's slogan: ``we believe in rough consensus
and running code.'' Likewise, Jon Postel's famous dictum to ``be
conservative in what you do, be liberal in what you accept from
others,'' advocates for a kind of robustness that is achieved not by
adhering to precise logical specifications, but rather by designing
systems that can tolerate minor deviations from perfect behavior.

But while it is hard to argue with the success of modern networks, one
only has to glance at the recent headlines to see that operating a
network correctly is becoming a huge challenge, especially at
scale~\cite{century-link}. Hardware and software bugs frequently rear
their heads, causing service outages, performance degradations, and
security incidents.

Given this context, it is natural to ask whether formal methods may
assist in building networks that behave as intended. Indeed, a number
of recent tools including Header Space Analysis (HSA)~\cite{hsa},
Anteater~\cite{anteater}, NetKAT~\cite{netkat},
Batfish~\cite{batfish}, Minesweeper~\cite{minesweeper},
ARC~\cite{arc}, and others enable operators to automatically verify a
variety of network-wide properties. Startup companies like Forward
Networks, Veriflow Systems, and Intentionet offer commercial products
based on these tools, and even large companies like
Amazon~\cite{amazon-inspector}, Cisco~\cite{cisco-assurance}, and
Microsoft~\cite{secguru,crystalnet} are making substantial investments
in network verification.

Despite significant progress, there is a widening gap between the
simple models used by network verification tools, and the growing set
of features supported on modern routers and switches. Early tools like
HSA and VeriFlow were based on OpenFlow, a stateless packet-forwarding
model that handles about a dozen basic protocols. However, today a
typical data center switch supports 40 or more conventional protocols
(e.g., Ethernet, ARP, VLAN, IPv4, TCP, and UDP), and new protocols
(e.g., VXLAN, Segment Routing, and ILA) are rapidly emerging.
Moreover, even when the protocols are well understood, it can be
difficult to collect the inputs that verification tools require
because device configurations are usually written in idiosyncratic,
vendor-specific formats.

\paragraph*{P4 language}

A promising idea for addressing these challenges is to encode the
behavior of each device in a common representation that is amenable to
analysis. In particular, the P4 language~\cite{bosshart14,p4-spec}
provides a collection of domain-specific abstractions (e.g., header
types, packet parsers, match-action tables, and structured control
flow) that can be used to describe the functionality of a wide range
of packet-processing systems. P4 can be used to model conventional
protocols~\cite{Heule19}, but it is flexible enough to specify
completely new forwarding behavior---e.g., in-band
telemetry~\cite{INT} or in-network computing~\cite{NetCache,netchain}.

Unfortunately, although P4 has been gaining momentum in
industry---companies like Arista, Cisco, NVIDIA, and Xilinx all offer
P4-programmable devices---the language lacks a solid semantic
foundation. The official definition of the language is an informal
document maintained by a language design committee. Parts of the
document are vague, so it is not always clear what a given program
construct means. Turning to the open-source reference implementation
of P4 does not provide clear guidance either, because it is complex,
contains bugs, and occasionally diverges from the specification.
Besides understanding individual programs, the absence of a clear
formal foundation for P4 has made it difficult to understand and evolve
the language itself. For instance, bounded polymorphism has been a
topic of discussion in the language design committee for over three
years, but without the type system written down anywhere it is
difficult to see how such an extension would interact with existing
language features. Overall, in its current form, P4 does not provide a
suitable foundation for reasoning formally about network behavior.

\paragraph*{The \petra framework}

This paper presents \petra (pronounced ``petra''---i.e., Greek for
stone), a new framework that puts P4\footnote{In this paper, when we
  refer to P4, we mean \pfoursix, and not earlier versions of the
  language.} on a solid foundation. \petra is based on two distinct
contributions: (i) a clean-slate definitional interpreter for P4, and
(ii) a calculus that models the formal semantics of a core fragment of
P4. The two artifacts were designed to be consistent---we developed
the calculus after building the interpreter---but they are not
formally related. Our implementation offers a front-end, type checker,
interpreter, and test harness, as well as command-line and web-based
user interfaces. Our calculus models the meaning of a simple P4
program in terms of standard typing and operational semantics
judgments.

\petra builds on standard techniques developed by the programming
languages community over several decades (e.g., definitional
interpreters, type systems, and operational semantics) and applies
these tools to a large, industrial language in a new domain. In
building \petra we had to overcome several challenges. First, as has
already been mentioned, the official definition of P4 is a 160-page
specification document containing informal prose, graphical diagrams,
snippets of code, and a grammar. But while the document is generally
well written, there are some surprising inconsistencies and
omissions---e.g., it does not define P4's lexical syntax or its type
system precisely. Second, P4 is a low-level language with a variety of
constructs for bit-level manipulation of packet data. There are subtle
issues that arise with undefined values, casts, and exceptional
control flow that require a careful treatment. Third, P4 is not really
a single language but a family of languages---there is one dialect for
each architecture that it supports. Hence, to fully understand the
meaning of a P4 program, one must also understand the semantics of the
underlying target devices.

To address these challenges, we first studied the language
specification, reporting dozens of bugs, ambiguities, and
inconsistencies to the language design committee. We then built a
clean-slate definitional interpreter, carefully following the
specification rather than adapting code from the open-source
implementation (i.e., to avoid replicating bugs). One unusual aspect
of our interpreter is that it is parameterized on the choices
delegated to architectures---e.g., what happens when reading or
writing an invalid packet header. We developed a ``plug-in'' model
that enables semantics to be instantiated for many different
architectures, often just by writing a few hundred lines of OCaml
code. We validated our semantics against the test suite for the
open-source implementation, which uncovered additional bugs. Finally,
we extracted a core calculus from our implementation and proved key
properties including type soundness and termination.

\begin{figure}[t]
\begin{minipage}{.475\textwidth}
\centerline{\includegraphics[width=.95\textwidth]{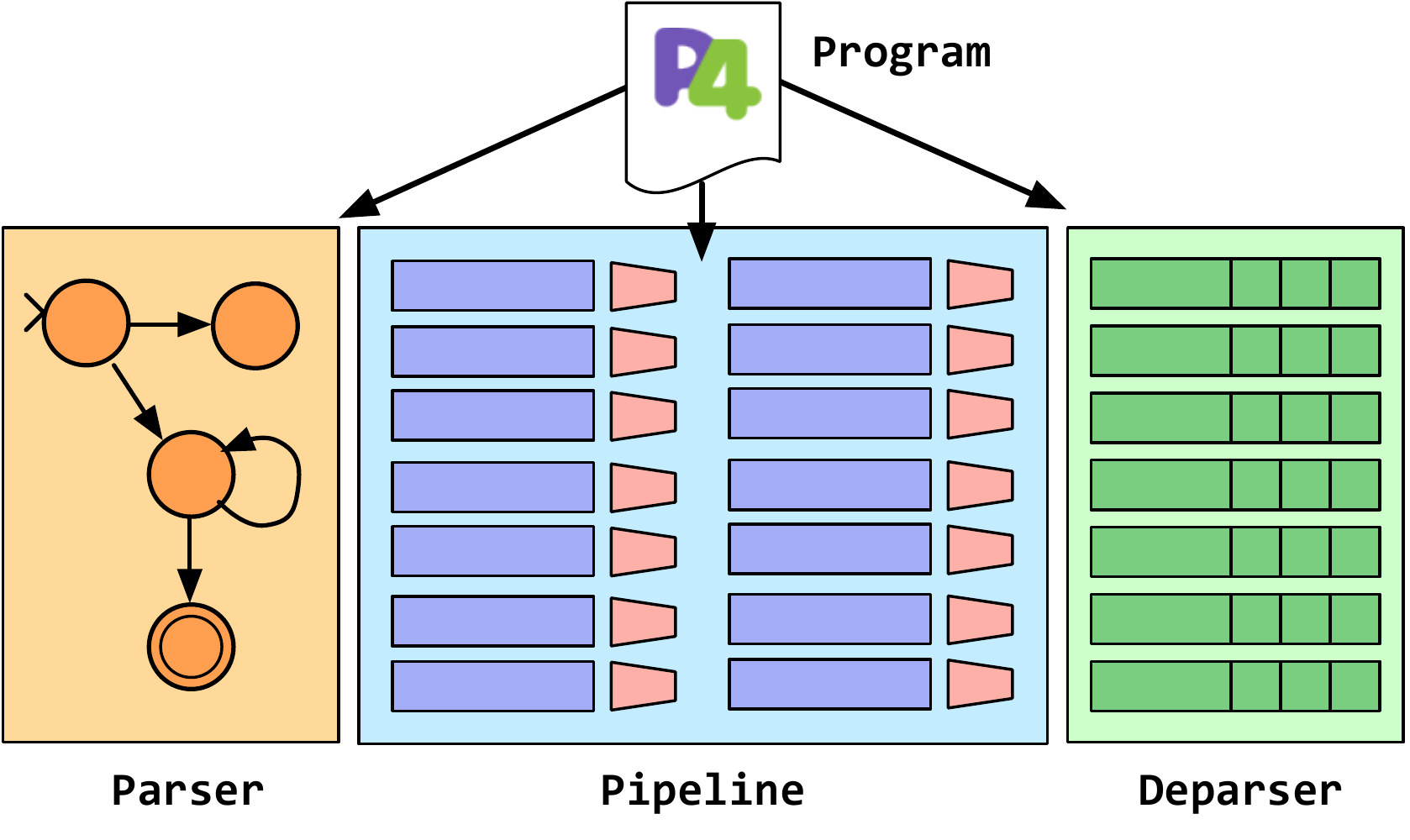}}
\centerline{\footnotesize(a)}
\begin{snugshade}
\begin{lstlisting}
// Architecture externs for packet I/O
extern packet_in {
  void extract<T>(out T hdr);
}
extern packet_out {
  void emit<T>(in T hdr);
}

// Architecture metadata
struct std_meta {
  // metadata initialized on ingress
  bit<8> ingress_port;
  bit<32> packet_length;
  // metadata controlling egress
  bit<8> egress_port;
}

// Architecture types
parser Parse<H>(packet_in pkt,
                out H hdrs,
                inout std_meta meta);
control Pipeline<H>(inout H hdrs,
                    inout std_meta meta);
control Deparse<H>(packet_out pkt,
                   in H hdrs);

// Architecture package
package Switch<H>(Parse<H> parse,
                  Pipeline <H> pipe,
                  Deparse<H> deparse);
\end{lstlisting}
\end{snugshade}
\centerline{\footnotesize (b)}
\end{minipage}\quad\begin{minipage}{.475\textwidth}
\medskip
\begin{snugshade}
\begin{lstlisting}
// Programmer-defined types
header hop {
  bit<7> port;
  bit<1> bos;
}
struct headers {
  hop[9] hops;
}
// Programmer-defined components
parser MyParse(packet_in pkt,
               out headers hdrs,
               inout std_meta meta) {
  state start {
    pkt.extract(hdrs.hops.next);
    transition select(hdrs.hops.last.bos) {
      1: accept;
      default: start;
    }
  }
}
control MyPipe(inout headers hdrs,
               inout std_meta meta) {
  action allow() { }
  action deny() { meta.egress_port = 0xFF; }
  table acl {
    key = { meta.ingress_port : exact;
            meta.egress_port : exact; }
    actions = { allow; deny; }
    default_action = deny();
  }
  apply {
    meta.egress_port =
      (bit<8>)hdrs.hops[0].port;
    hdrs.hops.pop_front(1);
    acl.apply();
  }
}
control MyDeparse(packet_out pkt,
                  in headers hdrs) {
  apply { pkt.emit(hdrs.hops); }
}
Switch(MyParse(),MyPipe(),MyDeparse()) main;
\end{lstlisting}
\end{snugshade}
\centerline{\footnotesize(c)}
\end{minipage}
\caption{Example: (a) Diagram of three-stage architecture; (b) P4
  definition of three-stage architecture; (c) simple P4 program that
  implements source routing with access control in the three-stage
  architecture.}
\label{fig:example}
\end{figure}

\paragraph*{Contributions}
Overall, this paper makes the following contributions:

\begin{itemize}
\item We develop a clean-slate, definitional interpreter for the P4
  language~(\S~\ref{sec:implementation}).
\item We define a calculus that models the semantics of a core
  fragment of P4 in terms of standard typing and operational semantics
  judgments.~(\S~\ref{sec:corep4}).
\item We prove type soundness and termination (\S~\ref{sec:corep4})
  for our calculus.
\item We develop an extension to the language~(\S~\ref{sec:unions}) as
  a case study.
\item We validate our implementation against hundreds of tests from
  the test suite for P4's reference compiler and classify some of the
  bugs we found~(\S~\ref{sec:evaluation}).
\end{itemize}

\noindent Overall, \petra represents a promising first step toward the
vision of formally verified systems built using P4. In particular, we
are optimistic that \petra will not only provide a rigorous foundation
for current language-based verification tools, but will also serve as
a catalyst for future efforts that target higher layers of the
networking stack.

\section{Background}
\label{sec:background}

This section introduces the P4 language using a simple example and
motivates the need for formal foundations by highlighting some of the
opportunities and challenges related to formal reasoning about P4
programs.

\paragraph*{Targets and architectures.}

P4 is a domain-specific language designed for programming a range of
packet-processing targets, including high-speed routers, software
switches, and network interface cards (NICs). Although the details of
these targets vary, they tend to have a few features in common,
including a programmable parser that maps input packets into typed
representations for processing and a pipeline that interleaves
reconfigurable tables and fixed-function blocks. Some targets offer
limited forms of persistent state that can be read and written by each
packet, but they typically do not support general recursion---looping
would require sending the packet through the pipeline multiple times,
which degrades throughput.

The core constructs in P4 capture what these targets have in common.
To accommodate their differences, it also provides the notion of an
architecture, which exposes the structure and capabilities of the
underlying target while abstracting away implementation details. For
example, \Cref{fig:example}~(a) depicts the structure of a simple
architecture that processes packets in three stages: the input packet
is first parsed into a typed representation using a finite-state
machine, then the parsed representation is transformed using a
sequence of match-action tables and arithmetic units, and finally the
parsed representation is serialized into the output packet.

\paragraph*{Example program: three-stage architecture.}

\Cref{fig:example}~(b) shows how this three-stage architecture
can be defined as a part of a P4 program. The architecture definitions
should be read like a Java interface or ML module signature---they
specify the structure and types of each component, but do not define
their implementation. The first few declarations define the types
of \lstinline{extern} objects that can be used to map between raw
packets and typed representations. For instance, the \lstinline{packet_in} object's
\lstinline{extract} method reads from the input packet and
populates the header passed as an argument. The next few declarations
define a \lstinline{struct} type for the metadata associated with the
architecture, including the \lstinline{ingress_port}, which is
initialized by the target when a packet is received; and
the \lstinline{egress_port}, which specifies the port to
use when emitting the packet. The last few declarations define
the P4-programmable components of the architecture: a parser, a
pipeline, and a deparser, as well as the \lstinline{package} that
models the device itself.

\paragraph*{Example program: source routing with access control.}

\Cref{fig:example}~(c) defines a P4 program written against the
three-stage architecture that implements a simple form of source
routing and access control. Here source routing means that each packet
carries a stack of values that encodes the series of ports the packet
should be forwarded out on as it traverses the network, while access
control means the control plane can install filtering rules in a
match-action table to drop certain packets. More formally, each packet
has a fixed-length array (or ``stack'') of byte-sized \lstinline{hop}
headers. Each header is initially ``invalid'' but becomes ``valid''
when it is populated by the parser. For the \lstinline{hop} header,
the first 7 bits encode the output port and the 8th ``bottom of
stack'' bit is \lstinline{1} if it is the last element in the stack.
The \lstinline{MyParse} parser uses a finite state machine abstraction
to map raw input packets into this typed representation. The parser
has a single state that repeatedly extracts \lstinline{hop} headers
from the packet until the \lstinline{bos} marker is \lstinline{1}.
Note that because packets are finite and the loop extracts some bits
from the packet on each iteration, the parser is guaranteed to
terminate. Next, the \lstinline{MyPipe} control defines
an \lstinline{apply} method that specifies how packets are processed.
This method sets the \lstinline{egress_port} metadata field to the
port encoded in the top element of the stack, pops the stack, and then
executes the \lstinline{acl} table, which matches
the \lstinline{ingress_port} and \lstinline{egress_port} metadata
fields against filtering rules (not shown) installed by the
control-plane. The rules either \lstinline{allow} or \lstinline{deny}
the packet, defaulting to \lstinline{deny} if no matching rule can be
found. Finally, the \lstinline{MyDeparse} control serializes the
parsed data back into an output packet.

\paragraph*{Formal methods opportunities and challenges.}

At first glance, P4 appears to be a relatively simple language. So it
seems like it should be possible to use P4 to reason formally about a
range of network scenarios, such as the following:

\begin{itemize}
\item{\textit{Executable specifications of protocols:}} Rather than
  specifying protocols using informal documents, like IETF RFCs, we
  could use P4 to create executable protocol specifications that
  precisely specify packet formats and allowed behaviors. For example,
  the program in \Cref{fig:example}~(c) might serve as the definition
  of the source routing scheme it realizes. Whereas current efforts to
  standardize protocols rely on informal ASCII documents, the P4
  program would provide an unambiguous, mechanized, executable
  reference that could be used to design and validate other
  implementations.

\item{\textit{Program verification:}} P4 programs are expected to
  satisfy various properties---e.g., an IPv4 router should correctly
  decrement the \lstinline{ttl} field and also unambiguously specify
  the forwarding behavior of each packet. Generally speaking,
  verification is simpler than in many other languages because P4
  lacks complex data types and iteration. But current P4 verification
  tools~\cite{p4v,vera} rely on existing front-ends such as the
  open-source reference implementation, which is known to deviate from
  the specification and has bugs. Hence, the results of verification
  are potentially compromised.

\item{\textit{Verified compilers:}} P4 compilers must generate low-level code
  for hardware devices such as programmable switches and FPGAs. This
  process transforms the input program in complex ways---e.g.,
  unrolling parser state machines, eliminating common sub-expressions,
  and extracting parallelism for hardware pipelines. Many of these
  transformations rely on intricate side conditions that are easy to
  get wrong~\cite{ruffy-osdi2020}. A verified compiler for P4, either
  using static verification or translation validation, could eliminate
  bugs in compilers and make it possible to obtain implementations
  that are guaranteed to be correct.

\item{\textit{Proof-carrying code:}} Today, cloud platforms
  allow customers to customize the network infrastructure to suit
  their needs---e.g., they can obtain an isolated virtual network
  slice that they can configure however they like. In the near future,
  cloud providers are likely to go further and allow customers to
  customize the low-level behavior of devices such as routers and
  smart NICs. Techniques such as proof-carrying code~\cite{pcnc} could
  be used to allow P4 programs written by different customers to
  collaborate to implement new features without interfering with the
  functionality of the network as a whole.
\end{itemize}

Unfortunately, while these examples represent some exciting
applications of formal methods to networks, realizing them today would
be difficult. The key challenge is that P4 lacks a formal foundation,
so it is difficult to reason about the language and its programs. More
specifically, we identify three challenges that any formalization of P4
must overcome:

\begin{itemize}
\item{\textit{Incomplete specification}:} The language specification
  is generally well-written but does not fully specify the meaning of
  each language construct. For example, the type system is only
  described at a high level, and important questions such as the
  precise semantics of implicit casts and the definition of type
  equivalence are left unanswered. There are also tricky interactions
  between features that have apparently never been considered, such as
  whether extern objects can have recursive types.

\item{\textit{Undefined values}:} To ease compilation to
  resource-limited targets, P4 makes certain tradeoffs between safety
  and efficiency. For example, P4 allows programs to manipulate
  uninitialized or invalid headers; reading or writing an invalid
  header yields an undefined value. For example, the forwarding
  behavior of the program in \Cref{fig:example}~(c) is undefined in
  cases where \lstinline{hops[0]} is invalid.

\item{\textit{Architecture-specific behaviors}:} The P4 specification
  also delegates many key decisions to architectures, making the meaning
  of a P4 program architecture-dependent. To give one example, the
  behavior of the program in \Cref{fig:example}~(c) depends on whether
  malformed packets---i.e., with more than 9 \lstinline{hops}
  headers---are automatically dropped by the parser or propagated to
  the pipeline. Other architecture-specific behaviors and restrictions
  include the matches and actions supported in match-action tables and
  the availability of certain arithmetic operations such as division.
\end{itemize}

To reason precisely about the behavior of a P4 program today, a
programmer has two main options: they can consult the language
specification or they can execute the program using an existing
implementation. Of course, there are serious issues with either
choice. The specification is incomplete and contradictory, and any
implementation restricts programs to target-specific behavior.

\paragraph*{Our approach.}

Our primary goal in developing \petra was to produce a reusable,
realistic formal semantics for P4.
In particular, we wanted to support executing programs in
a manner that precisely follows the existing specification (to the extent
possible), and facilitate doing formal proofs about programs as well
as the language as a whole. To this end, we developed a clean-slate
definitional interpreter for P4 in OCaml, and we also designed a
calculus that models the type system and operational semantics for a
core fragment of the language. Working carefully from the
specification, our implementation was designed to be independent of
the existing open-source implementation. To resolve situations where
the specification was vague or delegated decisions to architectures,
we parameterized our development, allowing each target to make a
different choice. For example, our calculus models undefined values
using an oracle, and our interpreter is an OCaml functor that can be
instantiated to realize the behavior of different architectures.
Overall, we believe that \petra represents a promising first step
toward our vision of verified data planes, offering a rigorous
foundation as well as running code.

\section{Core P4}
\label{sec:corep4}
This section presents syntax and semantics for \pcore, a simple
language that models the essential features of P4 in a core calculus.
P4 is a large and idiosyncratic language and while our definitional
interpreter handles nearly all of its features, formalizing the full
language in a paper would be unwieldy. We offer here a selective
transcription of the semantics realized in the \petra
implementation. The semantics is sufficiently rich to capture the
feature interactions that make P4 tricky to reason about, while
avoiding the notational clutter of the full language. The most
significant omission from \pcore is parsers. Hence, \pcore models the
essential packet-processing done by \lstinline{control} blocks but
omits recursion, which allows us to prove a termination result.

If desired, parsers that have been unrolled to eliminate recursion can
be emulated using \pcore's functions, with one function for each
state.  This retains the termination theorem and is often done in
practice on resource-constrained targets. Indeed, the P4 specification
states that compilers ``may reject parsers containing loops that
cannot be unrolled at compilation time.''

\subsection{Syntax and examples}

\pcore is a mostly standard imperative language, with separate
syntactic classes for expressions, statements, and declarations. It
also includes mutable variables, generic functions, and standard types
such as booleans, enumerations, and records.
Architecture-specific functionality is modeled using ``native''
functions. For example, the following \pcore program models
the \lstinline{apply} block from \Cref{fig:example}~(c):
\[
\begin{array}{l}
\Passign{\kw{meta}.\kw{egress\_port}}{\Pcast{\Pbittype{8}}{\kw{hops}[0]}};\\
\Pcall{\kw{pop\_front}}{\kw{hops}, 1};\\
\Pcall{\kw{acl}}{};\\
\end{array}
\]
Note that with a few exceptions, such as the use of function calls to
model header stack operations (\kw{pop\_front}) and match-action
tables (\kw{acl}), the original program and the \pcore program are
nearly identical.

The P4 specification imposes a multitude of restrictions on type
nesting, parameter types, locations of instantiations, and
other language constructs. While we stratify the \pcore type system
to prevent higher-order phenomena, we avoid modeling the remainder of
the specification's restrictions in \pcore. Nonetheless, \pcore is
type safe. The restrictions aim to simplify compiling P4 programs to
sometimes idiosyncratic and resource-limited hardware targets. They
are not fundamental and could be lifted if P4 compilers developed new
resource allocation strategies and optimizations.

\subsubsection{Notational conventions}

We typeset metavariables in $\mathit{italics}$ and keywords and other
concrete identifiers in $\kw{sans~serif}$. We avoid explicit indexing
of sequences by writing a line over the term we would otherwise index.
For instance, $\Pmany{x}$ represents a list $x_1, x_2, \dots, x_n$. We
write $x$ for ordinary variables and $X$ for type variables and names.
We write $f$ for fields of records or members of enumeration and
``open enumeration'' types. There are two open enums, which have the
reserved type names $\kw{error}$ and $\kw{match\_kind}$. Locations
$\ell$ appear in the dynamic semantics. We write $\Pfresh{\ell}$ to
obtain a new location $\ell$.

\begin{figure}\footnotesize
\[
\begin{array}{ll}
\begin{array}{rcll}
\Psyndef{\rho}{\Pbool}{booleans}
\Psyncase{\Pinttype}{integers}
\Psyncase{\Pbittype{\Pexp}}{bitstrings}
\Psyncase{\Perrortype{f}}{errors}
\Psyncase{\Pmatchtype{f}}{match kinds}
\Psyncase{\Penumtype{X}{\Pmany{f}}}{enums}
\Psyncase{\Prectype{\Pmany{f:\rho}}}{records}
\Psyncase{\Pheader{\Pmany{f:\rho}}}{headers}
\Psyncase{\Pstacktype{\rho}{n}}{stacks}
\Psyncase{X}{type variables}
\end{array} &
\begin{array}{rcll}
\Psyndef{\tau}{\rho}{data types}
\Psyncase{\Ptabletype}{tables}
\Psyncase{\Pfunctiontype{\Pmany{X}}{\Pmany{d~x:\rho}}{\rho_{\mathit{ret}}}}{functions}
\Psyncase{\Pctortype{\Pmany{x:\tau}}{\tau_{\mathit{ret}}}}{constructors}
\\
\Psyndef{d}{\Pin}{copy-in}
\Psyncase{\Pout}{copy-out}
\Psyncase{\Pinout}{copy-in-out}
\\
\\
\\
\end{array}
\end{array}
\]
\caption{\pcore types and directions.}
\label{fig:types}
\end{figure}

\subsubsection{Types (\cref{fig:types}).}
\pcore types are separated into function types $\tau$ and
base types $\rho$, with generics only allowed to range over base types.

Numeric datatypes in P4 are flexible. Consider this header type
representing an IPv4 option:
\begin{align*}
&\Pheader{
\\
&\quad\kw{copyFlag}:\Pbittype{1}
\\
&\quad\kw{optClass}:\Pbittype{2}
\\
&\quad\kw{option}:\Pbittype{5}
\\
&\quad\kw{optionLength}:\Pbittype{8}
\\
&}
\end{align*}
Each field is an unsigned integer (a \kw{bit} type) with its width
specified in angle brackets. This is convenient for network
programming, where network protocols can involve 1 or 5-bit values
packed into the packet without padding. P4 allows the width of a
numeric type to be an expression, provided it can be evaluated at
compile time. The presence of expressions in types complicates type
equality, as can be seen in this short example.
\begin{align*}
&\Pconst{\Pinttype}{\kw{w}}{8};
\\
&\Pvardeclinst{\Pbittype{\kw{w}}}{\kw{x}}{1};
\\
&\Pvardeclinst{\Pbittype{8}}{\kw{y}}{\kw{x}};
\end{align*}
The type $\Pbittype{\kw{w}}$ is not syntactically equal to $\Pbittype{8}$, but
the type checker should permit the assignment. The \pcore type system handles
this by reducing types to a normal form before comparing the normal forms with
syntactic equality (modulo $\alpha$-equivalence for generics). The
implementation of this equality check will in some situations impose type
equality by inserting casts, but we do not model implicit casts in \pcore.

The $\kw{match\_kind}$ and $\kw{error}$ types are ``open
enumerations,'' comparable to the extensible exception type in
Standard ML~\cite{standard-ml}.  Repeated declarations extend the open
enumeration with new members without replacing the old members or
shadowing the existing type.

\begin{figure}\footnotesize
\centering\(
\begin{array}[t]{ll}
\begin{array}{rcll}
\Psyndef{\Pexp}{b}{booleans}
\Psyncase{\Pint{n}{w}}{integers}
\Psyncase{x}{variables}
\Psyncase{\Pexp_1[\Pexp_2]}{array accesses}
\Psyncase{\Pexp_1[\Pexp_2{:}\Pexp_3]}{bitstring slices}
\Psyncase{\ominus\,\Pexp}{unary ops}
\Psyncase{\Pexp_1 \oplus \Pexp_2}{binary ops}
\Psyncase{\Pcast{\rho}{\Pexp}}{casts}
\Psyncase{\Prec{\Pmany{f=\Pexp}}}{records}
\Psyncase{\Pexp.f}{fields}
\Psyncase{X.f}{type members}
\Psyncase{\Pgencall{\Pexp}{\Pmany{\rho}}{\Pmany{\Pexp}}}{function call}
\end{array}
&
\begin{array}{rcll}
\Psyndef{\Pstmt}{\Pgencall{\Pexp}{\Pmany{\rho}}{\Pmany{\Pexp}}}{method call}
\Psyncase{\Passign{\Pexp}{\Pexp}}{assignment}
\Psyncase{\Pif{\Pexp}{\Pstmt}{\Pstmt}}{conditional}
\Psyncase{\Pblk}{sequencing}
\Psyncase{\Pexit}{exit}
\Psyncase{\Preturn{\Pexp}}{return}
\Psyncase{\Pvdecl}{variable declaration}
\\[-\bigskipamount]
\Psyndef{\Plval}{x}{local variables}
\Psyncase{\Plval.f}{fields}
\Psyncase{\Plval[n]}{array elements}
\Psyncase{\Plval[n_1:n_2]}{bitstring slices}
\end{array}
\end{array}
\)
\caption{\pcore expression, statement, and l-value syntax.
The expression on the left of an assignment is not an l-value to allow
(for example) a computed array index, which evaluates to an l-value
with a fixed index.}
\label{fig:expstmtsyntax}
\end{figure}

\subsubsection{Expressions (\cref{fig:expstmtsyntax})}

\pcore offers a rich set of expressions for manipulating packet 
contents. For example, the following program extracts the 6th byte of
a bitstring \kw{bits}:
\begin{align*}
&\Pconst{\Pbittype{48}}{\kw{bits}}{\dots};\\
&\Pconst{\Pinttype}{\kw{n}}{6};\\
&\Pvardeclinst{\Pbittype{8}}
              {\kw{nth\_byte}}
              {\kw{bits}[n*8-1{:}(n-1)*8]}
\end{align*}
The bitstring slice operator
$exp[exp_{\mathit{hi}}{:}exp_{\mathit{lo}}]$ computes a slice of the
bits of $exp$ from the high bit at $exp_{\mathit{hi}}$ down to the low
bit $exp_{\mathit{lo}}$ (inclusive). Since the slice endpoints appear
in the type ($\Pbittype{exp_{\mathit{hi}} - exp_{\mathit{lo}} + 1}$)
they must be known at compile-time.

Unary operations $\ominus$ and binary operations $\oplus$ are drawn
from a set of symbols including standard arithmetic and bitwise
operations as well as comparisons and equality. Casts are permitted
between numeric types and from record types to header types.

\subsubsection{Statements (\cref{fig:expstmtsyntax})}
\pcore's statement language is small and mostly standard.

Constants are available for use at the type level, so their
initializers must themselves be known at compile-time.

Exit statements abort an entire computation. For example, if we pass
an invalid IP header to $g$ in the following program, the \Pexit
statement in $f$ causes the second call to never happen:
\begin{align*}
&\Psimplefunctiondecl{\{\}}{f}{\Pin~\kw{hdr\_t}~h}{
  \Pif{!\kw{isValid}(\kw{h}.\kw{ip})}{\{\Pexit\}}{\{\dots\}}
}\\
&\Psimplefunctiondecl{\{\}}{g}{\Pin~\kw{hdr\_t}~h}{
  f(h); f(h)
}
\end{align*}

An instantiation takes the form $\Pinst{X}{\Pmany{\Pexp}}{x}$ and
creates an object named $x$ by invoking the constructor for the type
$X$.  In full P4 there are restrictions on what kinds of objects can
be instantiated where, but we do not reproduce these rules in \pcore.

\begin{figure}\footnotesize
\centering
\(
\begin{grammar}
\Psyndef{\Pdecl}{\Pvdecl}{variables}
\Psyncase{\Podecl}{objects}
\Psyncase{\Ptdecl}{types}
\Psyndef{\Pvdecl}{\Pconst{\tau}{x}{\Pexp}}{constants}
\Psyncase{\Pvardeclinst{\tau}{x}{\Pexp}}{local variables (initialized)}
\Psyncase{\Pvardecl{\tau}{x}}{local variables (uninitialized)}
\Psyncase{\Pinst{X}{\Pmany{\Pexp}}{x}}{instantiations}
\Psyndef{\Ptdecl}{\Ptypedef{\tau}{X}}{typedefs}
\Psyncase{\Penumdecl{X}{\overline{f}}}{enums}
\Psyncase{\Perrordecl{\Pmany{f}}}{errors}
\Psyncase{\Pmatchkinddecl{\Pmany{f}}}{match kinds}
\Psyndef{\Podecl}{\Ptable{x}{\Pmany{\Pkey}~\Pmany{\Pact}}}{tables}
\Psyncase{\Pcontroldecl{X}{\Pmany{d~x:\tau}}{\Pmany{x:\tau}}{\Pmany{\Pdecl}}{\Pstmt}}{controls}
\Psyncase{\Pfunctiondecl{\tau}{x}{\Pmany{X}}{\Pmany{d~x:\tau}}{\Pstmt}}{functions}

\Psyndef{\Pkey}{\Pexp:x}{table keys}
\Psyndef{\Pact}{x(\Pmany{\Pexp},\Pmany{x:\tau})}{actions}
\Psyndef{\Pprog}{\Pmany{\Pdecl}}{programs}
\end{grammar}
\)
\caption{\pcore declarations and programs.}
\label{fig:declsyntax}
\end{figure}

\subsubsection{Declarations and programs (\cref{fig:declsyntax})}

Declarations are partitioned into variable declarations, object
declarations, and type declarations. Variable declarations are part of
statements, which have already been introduced, and type declarations
are essentially types, which are discussed above. This leaves object
declarations: tables, controls, and functions.

P4 tables can be thought of as generalizing routing tables and switch
statements. Like routing tables on specialized network hardware, they
store a list of pattern-matching rules that can be edited at run time.
Like switch statements, they may run different code depending on the
value of an expression.

In the following example, a table inspects the packet's destination
Ethernet address and either sets its egress (output) port or drops the
packet.
\begin{align*}
&\Psimplefunctiondecl{\{\}}{\kw{set\_port}}{\Pbittype{9}~\kw{port}}{
\kw{meta}.\kw{egress\_port}=\kw{port};
}
\\
&\Psimplefunctiondecl{\{\}}{\kw{drop}}{}{
\kw{meta}.\kw{drop}=\Ptrue;
}
\\
&\Ptable{\kw{forward}}{\{\kw{hdr}.\kw{eth}.\kw{dstAddr:\kw{exact}}\}\ \ \{\kw{set\_port}; \kw{drop};\}}
\end{align*}

The $\kw{meta}$ struct contains metadata about the packet, while
$\kw{hdr}$ holds the parsed contents of the packet. The $\kw{exact}$
annotation on the key indicates that patterns in rules should be
matched exactly, as opposed to ranges, longest prefixes, or any other
$\kw{match\_kind}$ supported by the architecture.

We do not model table rules in \pcore and instead overapproximate them
by assuming a ``control plane'' $\mathcal{C}$ that deterministically
selects an action given an identifier for a table and values for its
keys. The identifier is an internal location rather than a table name
so that distinct table instantiations arising from the same
declaration can have separate rules.

Control declarations include a list of parameters (with directions)
and a list of constructor parameters (without directions). The body of
the declaration includes a list of declarations followed by a
statement, which is typically a block containing several statements.
While \pcore does not impose this restriction, the full P4 language
requires tables and other stateful objects to be declared within
controls rather than at the top level.

Functions are standard, although recursion is not permitted and type
parameters can only be instantiated with base types ($\rho$).

\subsubsection{L-values (\cref{fig:expstmtsyntax})}

An l-value is an expression that can appear on the left-hand side of
an assignment statement. They are built up from variables, array
indexing, field lookup, and bitslices. A syntactic distinction between
expressions and l-values is not enough in general because of function
calls, which require arguments for \kw{out} or \kw{inout} parameters
to be l-values but make no such imposition on their \kw{in} arguments.
To address this the type system checks whether expressions are
assignable (see \cref{sec:static-semantics}).

\begin{figure}\footnotesize
\centering
\(
\begin{grammar}
\Psyndef{\Pval}{b}{booleans}
\Psyncase{\Pint{n}{w}}{integers}
\Psyncase{\Prec{\Pmany{f=\Pval}}}{records}
\Psyncase{\Pheaderval{\Pmany{f:\tau=\Pval}}}{headers}
\Psyncase{X.f}{type members}
\Psyncase{\Pstackval{\tau}{\Pmany{\Pval}}}{header stacks}
\Psyncase{\Pclos{\Penv}{\Pmany{X}}{\Pmany{d~x:\tau}}{\tau}{\Pmany{\Pdecl}~\Pstmt}}{closures}
\Psyncase{\Pnative{x}{\Pmany{d~x:\tau}}{\tau}}{built-in functions}
\Psyncase{\Ptableval{\ell}{\Penv}{\Pmany{\Pkey}}{\Pmany{\Pact}}}{table values}
\Psyncase{\Pcclos{\Penv}{\Pmany{d~x:\tau}}{\Pmany{x_c:\tau_c}}{\Pmany{\Pdecl}}{\Pstmt}}{constructor closures}
\\
\Psyndef{\Psig}{\Pcont}{continue normally}
\Psyncase{\Psigreturn{\Pval}}{return value}
\Psyncase{\Psigexit}{exit/reject all enclosing calls}
\end{grammar}
\)
\caption{\pcore values and signals. A value, naturally, is the result of evaluating an expression. A signal is the result of evaluating a statement or declaration.}
\label{fig:valuesyntax}
\end{figure}

\subsubsection{Values and signals (\cref{fig:valuesyntax})}

Record values are standard. A header value augments a record with a
validity tag, marking whether the header has been initialized. When
parsing a packet into a header, a $\kw{valid}$ tag is added if it does
not already exist. Native functions are available to check for,
remove, or add a tag. Header and stack values include their field and
element types to facilitate our treatment of undefined reads
(see \cref{sec:expreval}).

A single closure construct is used to represent function closures and
constructed controls, so a closure can contain declarations. A closure
includes an environment, but not a store, so that closure calls see
updates to mutable variables that were in scope when the closure was
created.

Native functions are provided by the architecture in the initial
program environment. They always include common operations for
manipulating header validity bits and the like, but may also include
architecture-specific functionality, for example, hash functions.
\begin{align*}
\Pbittype{16}~\kw{hash\_crc16}\langle\kw{T}\rangle(\kw{T}\ \kw{data});
\end{align*}

Table closures include an environment for evaluating key expressions
and a list of actions. They include a location $\ell$ used as an
identifier for the control plane to disambiguate between different
instances of the same table declaration.

Signals are used to encode normal and exceptional control-flow:
continuing normally, returning a value, or exiting.

\begin{figure}\footnotesize
\[\begin{array}{rcll}
\Psyndef{\Gamma}{\Gamma,\ x_1:\tau_1}{typing context}
\Psyncase{\Gamma,\ X_1:\tau_1}{constructor type}
\Psyncase{[]}{}
\Psyndef{\Delta}{\Delta,\ X_1~\kw{var}}{type variable and definition context}
\Psyncase{\Delta,\ X_1=\tau_1}{type definition}
\Psyncase{[]}{}
\Sigma&:&\kw{Var}\to\kw{Value}&\text{constant context} \\
\Pstore&:&\kw{Loc} \to \kw{Value}&\text{store} \\
\Penv&:&\kw{Var} \to\kw{Loc} &\text{environment} \\
\Xi&:&\kw{Loc} \to\kw{Type}&\text{store typing context}\\
\mathcal{C}&:&\kw{Loc} \times \kw{Value} \times \Pmany{\kw{PartialActRef}} \to\kw{ActRef} &\text{control plane} \\
\end{array}\]
\caption{Evaluation and typechecking contexts and environments.
All function spaces in this figure are restricted to finite partial
maps. Stores associate values with locations. Evaluation environments
associate locations with variables. A $\kw{PartialActRef}$ is a
function call expression with missing parameters, while an
$\kw{ActRef}$ is an ordinary function call expression.}
\label{fig:evalenvs}
\end{figure}

\subsubsection{Typing and evaluation contexts (\cref{fig:evalenvs})}

There are four kinds of context used in typechecking \pcore programs:
typing contexts $\Gamma$, type definition contexts $\Delta$, store
typing contexts $\Xi$, and constant contexts $\Sigma$. Typing
contexts are lists of bindings, giving types to variable names and
type names $X$. In particular, if $X$ is a type with a
constructor, the type of the constructor will be recorded in $\Gamma$
under the name $X$. Type definition contexts include freely mixed
definitions $X=\tau$ and variable markers
$X~\kw{var}$. Store typings are finite partial maps from locations to
types. Constant contexts are finite partial maps from variable names
to compile-time values.

\subsection{Static semantics}
\label{sec:static-semantics}

The static semantics for \pcore takes care of copy-in copy-out
typechecking, compile-time computation in types, generics, type
definitions, casts, open enumerations, and extern (native) functions.
Surface concerns like type argument inference and implicit cast
insertion are handled in the \petra interpreter but omitted here
(see \Cref{sec:implementation} for details).

Typing judgments are given in \cref{fig:staticjudgments}. The first
three judgments are the top-level program typing judgments. Store,
environment, and value typing are not used to typecheck programs but
are necessary in order to formulate our type safety theorem. The type
simplification judgment replaces type variables in $\tau$ with their
definitions in $\Delta$ and performs compile-time evaluation on any
expressions that appear in $\tau$. The compile-time evaluation
judgment only needs a constant environment and an expression.

\begin{figure}\footnotesize
\[
\begin{array}{ll}
\begin{array}{ll}
\Pjudgment{\Sigma,\Gamma,\Delta \vdash \Pexp:\tau\mathrel{\kw{goes}}d}
& \text{Expression typing}
\\

\Pjudgment{\vphantom{p}\Sigma,\Gamma,\Delta \vdash \Pstmt \dashv \Sigma',\Gamma'}
& \text{Statement typing}
\\

\Pjudgment{\vphantom{p}\Sigma,\Gamma,\Delta \vdash \Pdecl \dashv \Sigma',\Gamma',\Delta'}
& \text{Declaration typing}
\\

\Pjudgment{\vphantom{p}\Xi,\Sigma,\Delta \vdash \Pstore}
& \text{Store typing}
\\
\Pjudgment{\Pcteval{\Sigma, \Pexp}{v}}
& \text{Compile-time evaluation}
\end{array} &
\begin{array}{ll}
\Pjudgment{\vphantom{p} \Xi,\Delta \vdash \Penv:\Gamma}
& \text{Environment typing}
\\

\Pjudgment{\vphantom{\Gamma p}\Xi,\Sigma,\Delta \vdash \Pval:\tau}
& \text{Value typing}
\\

\Pjudgment{\vphantom{p}\Delta \vdash \rho \preceq \rho'}
& \text{Legal casts}
\\
\Pjudgment{\Ptypeeval{\Sigma, \Delta}{\tau}{\tau'}}
& \text{Type simplification}\\[\medskipamount]
\end{array}
\end{array}
\]
\caption{Selected judgment signatures from the static semantics.}\label{fig:staticjudgments}
\end{figure}

The expression typing judgment produces a direction
indicating whether the expression is assignable ($\kw{goes}~\Pinout$)
or not ($\kw{goes}~\Pin$.) Sometimes we need the type of an expression
but do not care about its direction. In such a situation the
expression typing judgment may be written $\Sigma,\Gamma,\Delta \vdash \Pexp:\tau$, 
leaving off the direction annotation $\kw{goes}~d$.

Statement typechecking produces a new constant context and a new
typing context. Declaration typechecking produces new constant and
typing contexts, as for statements, but it also produces an updated
type variable context to hold any new type definitions.

Our type soundness proof assumes that function and control bodies
always return a value. In the implementation, a simple static analysis
integrated into statement typechecking ensures that this is the case.
We omit it here in order to avoid cluttering up the typing rules.

The type simplification judgment replaces type variables with their
definitions in $\Delta$ and evaluates expressions occurring in
types. Here is an example of it substituting a definition for the type
variable $C$, including recursive substitutions for the type variable
$B$ and the expression $c+1$.
\[
c:=7,~B=\Pbool,~C=\Psimplefunctiontype{\Pin~x:\Pbittype{c+1}}{B} \quad \vdash \quad
          C
          \rightsquigarrow
          (\Psimplefunctiontype{\Pin~x:\Pbittype{8}}{\Pbool})
\]

\subsubsection{Expression typing (\cref{fig:staticjudgments})}
\begin{figure}
\footnotesize
\begin{mathpar}
\Pinfer{T-Var}{
x \notin \kw{dom}(\Sigma)
\\
\Gamma(x)=\tau
}{
\Sigma,\Gamma,\Delta \vdash x:\tau \Pgoes{\Pinout}
}
\and
\Pinfer{T-Var-Const}{
x \in \kw{dom}(\Sigma) \\
\Gamma(x)=\tau
}{
\Sigma,\Gamma,\Delta \vdash x:\tau \Pgoes{\Pin}
}
\and
\Pinfer{T-Bit}{w \neq \infty}{
\Sigma,\Gamma,\Delta \vdash \Pint{n}{w}:\Pbittype{w} \Pgoes{\Pin}
}
\and
\Pinfer{T-Bool}{ }{
\Sigma,\Gamma,\Delta \vdash b:\Pbool \Pgoes{\Pin}
}
\and
\Pinfer{T-Integer}{ }{
\Sigma,\Gamma,\Delta \vdash \Pint{n}{\infty}:\Pinttype \Pgoes{\Pin}
}
\and
\Pinfer{T-Index}{
\Sigma,\Gamma,\Delta \vdash \Pexp_1:\tau[n]\Pgoes{d} \\\\
\Sigma,\Gamma,\Delta \vdash \Pexp_2:\Pbittype{32}
}{
\Sigma,\Gamma,\Delta \vdash \Pexp_1[\Pexp_2]:\tau \Pgoes{d}
}
\and
\Pinfer{T-Enum}{
\Delta(X)=\Penumtype{X}{\Pmany{f}}
}{
\Sigma,\Gamma,\Delta \vdash X.f_i: \Penumtype{X}{\Pmany{f}} \Pgoes{\Pin}
}
\and
\Pinfer{T-Err}{
\Perrortype{f} \in \Delta(\Perror) \\
f_i \in \Pmany{f}
}{
\Sigma,\Gamma,\Delta \vdash \Perror.f_i:\Perror \Pgoes{\Pin}
}
\and
\Pinfer{T-Match}{
\Pmatchkindtype{f} \in \Delta(\Pmatchkind) \\
f_i \in \Pmany{f}
}{
\Sigma,\Gamma,\Delta \vdash \Pmatchkind.f_i:\Pmatchkind \Pgoes{\Pin}
}
\and
\Pinfer{T-Cast}{
\Sigma,\Gamma,\Delta \vdash \Pexp:\rho_0\Pgoes{d} \\
\Ptypeeval{\Sigma,\Delta}{\rho}{\tau'} \\
\Delta \vdash \rho_0 \preceq \tau'
}{
\Sigma,\Gamma,\Delta \vdash \Pcast{\rho}{\Pexp}:\tau'\Pgoes{d}
}
\and
\Pinfer{T-UOp}{
\Puoptype{\Ptypedefs,\ominus}{\rho_1}{\rho_2} \\
\Sigma,\Gamma,\Delta \vdash \Pexp:\rho_1
}{
\Sigma,\Gamma,\Delta \vdash {\ominus}\Pexp:\rho_2\Pgoes{\Pin}
}
\and
\Pinfer{T-BinOp}{
\Pbinoptype{\Ptypedefs,\oplus}{\rho_1}{\rho_2}{\rho_3} \\
\Sigma,\Gamma,\Delta \vdash \Pexp_1:\rho_1 \\
\Sigma,\Gamma,\Delta \vdash \Pexp_2:\rho_2
}{
\Sigma,\Gamma,\Delta \vdash \Pexp_1\oplus\Pexp_2:\rho_3\Pgoes{\Pin}
}
\and
\Pinfer{T-MemHdr}{
\Sigma,\Gamma,\Delta \vdash \Pexp: \Pheader{\Pmany{f:\tau}}\Pgoes{d}
}{
\Sigma,\Gamma,\Delta \vdash \Pexp.f_i : \tau_i\Pgoes{d}
}
\and
\Pinfer{T-MemRec}{
\Sigma,\Gamma,\Delta \vdash \Pexp:\Prectype{\Pmany{f:\tau}}\Pgoes{d}
}{
\Sigma,\Gamma,\Delta \vdash \Pexp.f_i : \tau_i\Pgoes{d}
}
\and
\Pinfer{T-Record}{
\Sigma,\Gamma,\Delta \vdash \Pmany{\Pexp:\tau}
}{
\Sigma,\Gamma,\Delta \vdash \Prec{\Pmany{f=\Pexp}}:\Prectype{\Pmany{f:\tau}}\Pgoes{\Pin}
}
\and
\Pinfer{T-Slice}{
\Sigma,\Gamma,\Delta \vdash \Pexp_1:\Pbittype{w}\Pgoes{d} \\\\
\Sigma,\Gamma,\Delta \vdash \Pexp_2:\Pinttype \\
\Sigma,\Gamma,\Delta \vdash \Pexp_3:\Pinttype \\\\
\Pcteval{\Sigma, \Pexp_2}{n_2} \\
\Pcteval{\Sigma, \Pexp_3}{n_3} \\\\
w > n_2 \geq n_3 \geq 0
}{
\Sigma,\Gamma,\Delta \vdash \Pexp_1[\Pexp_2{:}\Pexp_3]:\Pbittype{n_2-n_3+1}\Pgoes{d}
}
\and
\Pinfer{T-Call}{
\Sigma,\Gamma,\Delta \vdash \Pexp:\Pfunctiontype{\Pmany{X}}{\Pmany{d~x:\tau}}{\tau_{\mathit{ret}}} \\\\
\Ptypeeval{\Sigma,\Delta[\Pmany{X=\rho}]}{\Pmany{\tau}}{\Pmany{\tau'}} \\
\Sigma,\Gamma,\Delta \vdash \Pmany{\Pexp:\tau'\mathrel{\kw{goes}}d} \\\\
\Ptypeeval{\Sigma,\Delta[\Pmany{X=\rho}]}{\tau_{\mathit{ret}}}{\tau_{\mathit{ret}}'}
}{
\Sigma,\Gamma,\Delta \vdash \Pgencall{\Pexp}{\Pmany{\rho}}{\Pmany{\Pexp}}:\tau_{\mathit{ret}}'\Pgoes{\Pin}
}
\end{mathpar}
\caption{Expression typing rules.}
\label{fig:expr-typing}
\end{figure}

The expression typing judgment is defined in \cref{fig:expr-typing}.
It is designed to only ever output types in a canonical form with no
unevaluated expressions and no free variables except the ones declared
with $X~\kw{var}$ in $\Delta$. This works if the contents of $\Gamma$
are also in canonical form for $\Delta$.

The typing rules for l-values (arrays, bitslices, fields) check the
direction $d$ of their ``root'' subexpression. The only rule that
produces $d \neq \Pin$ is \textsc{T-Var}, which requires $x$ to not be
in the constant context. The types of unary and binary operators are
determined by a type interpretation function $\mathcal{T}$. Array
indexes are not required to be compile time known and are not bounds
checked. By contrast, the endpoints of a bit slice receive both
treatments because the type of a slice depends on the values of its
endpoints and types should only depend on compile-time values. Bounds
checking is a bonus, since the endpoints are already evaluated.

The function call rule uses type simplification to substitute type arguments
into parameter types and return types.

\subsubsection{Statement typing (\cref{fig:stmt-typing})}
\begin{figure}
\footnotesize
\begin{mathpar}
\Pinfer{TS-Empty}{ }{
\Sigma,\Gamma,\Delta \vdash \Pbraces{} \dashv \Sigma,\Gamma
}
\and
\Pinfer{TS-Exit}{ }{
\Sigma,\Gamma,\Delta \vdash \Pexit \dashv \Sigma,\Gamma
}
\and
\Pinfer{TS-Block}{
\Sigma,\Gamma,\Delta \vdash \Pstmt \dashv \Sigma_1, \Gamma_1\\
\Sigma_1,\Gamma_1,\Delta \vdash \Pbraces{\Pmany{\Pstmt}} \dashv \Sigma_2,\Gamma_2
}{
\Sigma,\Gamma,\Delta \vdash \Pbraces{\Pstmt; \Pmany{\Pstmt}} \dashv \Sigma,\Gamma
}
\\
\Pinfer{TS-Decl}{
\Sigma_0,\Gamma_0,\Delta_0 \vdash \Pvdecl \dashv \Sigma_1,\Gamma_1,\Delta_1
}{
\Sigma_0,\Gamma_0,\Delta_0 \vdash \Pvdecl \dashv \Sigma_1, \Gamma_1
}
\and
\Pinfer{TS-Assign}{
\Sigma,\Gamma,\Delta \vdash \Pexp_1:\tau\mathrel{\kw{goes}}\kw{inout}
\\\\
\Sigma,\Gamma,\Delta \vdash \Pexp_2:\tau
}{\Sigma,\Gamma,\Delta \vdash \Passign{\Pexp_1}{\Pexp_2} \dashv \Sigma,\Gamma}
\and
\Pinfer{TS-Ret}{
\Sigma,\Gamma,\Delta \vdash \Pexp:\tau \\
\Sigma,\Delta \vdash \Gamma(\Preturnvoid) \rightsquigarrow \tau \\
}{
\Sigma,\Gamma,\Delta \vdash \Preturn{\Pexp} \dashv \Sigma,\Gamma
}
\and
\Pinfer{TS-If}{
\Sigma,\Gamma,\Delta \vdash \Pexp:\Pbool \\\\
\Sigma,\Gamma,\Delta \vdash \Pstmt_1 \dashv \Sigma_1, \Gamma_1 \\
\Sigma,\Gamma,\Delta \vdash \Pstmt_2 \dashv \Sigma_2, \Gamma_2
}{
\Sigma,\Gamma,\Delta \vdash \Pif{\Pexp}{\Pstmt_1}{\Pstmt_2} \dashv \Sigma,\Gamma
}
\and
\Pinfer{TS-TblCall}{
\Sigma,\Gamma,\Delta \vdash \Pexp:\Ptabletype
}{
\Sigma,\Gamma,\Delta \vdash \Pcall{\Pexp}{} \vdash \Sigma, \Gamma
}
\and
\Pinfer{TS-Call}{
\Sigma,\Gamma,\Delta \vdash \Pgencall{\Pexp}{\Pmany{\rho}}{\Pmany{\Pexp}}:\tau
}{\Sigma,\Gamma,\Delta \vdash
\Pgencall{\Pexp}{\Pmany{\rho}}{\Pmany{\Pexp}}
\dashv \Sigma, \Gamma}
\end{mathpar}
\caption{Statement typing rules.}
\label{fig:stmt-typing}
\end{figure}

The typing rules for statements, defined in \cref{fig:stmt-typing},
are largely standard. The relation
$\Sigma,\Gamma,\Delta \vdash \Pstmt \dashv \Sigma,\Gamma$ holds when a
statement executed in the contexts on the left side will produce a
final state satisfying the contexts on the right side. The constant
context $\Sigma$ appears on the right because constants can be
declared in statements, while $\Gamma$ appears because variables can
be declared in statements.

The assignment rule \textsc{TS-Assign} checks that the expression on
the left side has direction \Pinout, which means (as we saw in the
expression typing rules) that it is an l-value. The return
rule \textsc{TS-Ret} checks that the type of the value being returned
agrees with the type of the special identifier
$\kw{return}$. Declaration typing rules
(see \cref{fig:object-decl-typing}) insert a type for \kw{return}
before typechecking the bodies of functions and controls.

\subsubsection{Variable declaration rules (\cref{fig:var-typing})}

Variable declarations introduce new variables and can be used as
statements. Their typing relation includes an output type context for
uniformity with other declarations but they do not bind new types.

\subsubsection{Object declaration rules (\cref{fig:object-decl-typing})}
Table typechecking checks keys and match kinds.  An action
$\mathit{act}$ is a partial application of a function, so the
auxiliary judgment \kw{act\_ok} checks the action like a function call
but allows any number of arguments to be left off. Omitted arguments
are the responsibility of the control plane.

The typing rules for controls and functions use a special \kw{return}
identifier to check return statements within the body of the
declaration. The $\{\}$ is an empty record type standing in
for what P4 calls \lstinline{void}.

\begin{figure}
\footnotesize
\begin{mathpar}
\Pinfer{Type-Const}{
\Ptypeeval{\Sigma,\Delta}{\tau}{\tau'} \\\\
\Sigma,\Gamma,\Delta \vdash \Pexp:\tau'\\
\Pcteval{\Sigma, \Pexp}{v}
}{
\Sigma,\Gamma,\Delta \vdash \Pconst{\tau}{x}{\Pexp} \dashv \Sigma[x=v],\Gamma[x:\tau'],\Delta
}
\and
\Pinfer{Type-Var}{
\Ptypeeval{\Sigma,\Delta}{\tau}{\tau'}
}{
\Sigma,\Gamma,\Delta \vdash
\Pvardecl{\tau}{x} \dashv
\Sigma,\Gamma[x:\tau'],\Delta
}
\\
\Pinfer{Type-VarInit}{
\Ptypeeval{\Sigma,\Delta}{\tau}{\tau'} \\
\Sigma,\Gamma,\Delta \vdash \Pexp:\tau'
}{
\Sigma,\Gamma,\Delta \vdash
\Pvardeclinst{\tau}{x}{\Pexp} \dashv
\Sigma,\Gamma[x:\tau'],\Delta
}
\and
\Pinfer{Type-Inst}{
\Sigma,\Gamma,\Delta \vdash C:\Pctortype{\Pmany{x:\tau}}{\tau_{\mathit{inst}}} \\
\Sigma,\Gamma,\Delta \vdash \Pmany{\Pexp:\tau}
}{
\Sigma,\Gamma,\Delta \vdash
\Pinst{X}{\Pmany{\Pexp}}{x} \dashv
\Sigma,\Gamma[x:\tau_{\mathit{inst}}],\Delta
}
\end{mathpar}
\caption{Variable declaration typing rules.}
\label{fig:var-typing}
\end{figure}

\begin{figure}
\footnotesize
\begin{mathpar}
\Pinfer{T-TableDecl}{
\Sigma,\Gamma,\Delta \vdash \Pmany{\Pexp_k : \tau_k} \\
\Sigma,\Gamma,\Delta \vdash \Pmany{x_k:\Pmatchkind} \\
\Sigma,\Gamma,\Delta \vdash \Pmany{act~\kw{act\_ok}}
 }{
\Sigma,\Gamma,\Delta \vdash
\Ptable{x}{\Pmany{\Pexp_k:x_k}~\Pmany{\Pact}} \dashv
\Sigma,\Gamma[x : \Ptabletype],\Delta
}
\and
\Pinfer{T-CtrlDecl}{
\Ptypeeval{\Sigma,\Delta}{\Pmany{\tau_c}}{\Pmany{\tau_c'}} \\
\Ptypeeval{\Sigma,\Delta}{\Pmany{\tau}}{\Pmany{\tau'}} \\
\Sigma,\Gamma[\Pmany{x_c:\tau_c}][\Pmany{x:\tau}],\Delta \vdash
\Pmany{\Pdecl} \dashv \Sigma_1, \Gamma_1, \Delta_1 \\
\Sigma_1, \Gamma_1[\Preturnvoid:\{\}], \Delta_1 \vdash \Pstmt \dashv \Sigma_2, \Gamma_2
}{
\Sigma,\Gamma,\Delta \vdash
\Pcontroldecl{X}{\Pmany{d~x:\tau'}}{\Pmany{x_c:\tau_c'}}{\Pmany{\Pdecl}}{\Pstmt} \dashv
\Sigma,\Gamma[X : \Pctortype{\Pmany{x_c:\tau_c'}}{
\Psimplefunctiontype{\Pmany{d~x:\tau'}}{\{\}}
}],\Delta
}
\and
\Pinfer{T-FuncDecl}{
\Gamma_1 = \Gamma[\Pmany{x_i:\tau_i'}, \kw{return}: \tau'] \\
\Delta_1 = \Delta[\Pmany{X~\kw{var}}] \\
\Ptypeeval{\Sigma, \Delta_1}{\Pmany{\tau_i}}{\Pmany{\tau_i'}} \\\\
\Ptypeeval{\Sigma, \Delta_1}{\tau}{\tau'} \\
\Sigma, \Gamma_1, \Delta_1 \vdash \Pstmt \dashv \Sigma_2, \Gamma_2
}{
\Sigma,\Gamma,\Delta \vdash
\Pfunctiondecl{\tau}{x}{\Pmany{X}}{\Pmany{d~x_i:\tau_i}}{\Pstmt} \dashv
\Sigma,\Gamma[x : \Pfunctiontype{\Pmany{X}}{\Pmany{d~x_i:\tau_i'}}{\tau'}],\Delta
}
\end{mathpar}
\caption{Object declaration typing rules.}
\label{fig:object-decl-typing}
\end{figure}

\subsection{Dynamic semantics}

The dynamic semantics for \pcore is defined in a big-step style.
\Cref{fig:dynamic-judgments} gives the types of the main judgments. 
Local state is split into a store and an environment to implement the
scoping of mutable variables. The environment maps names of variables
to store locations, and the store maps locations to values. This
decoupling allows closures to witness updates to mutable variables
saved in their environments.

Morally speaking, P4 programs are deterministic. The semantics of
\pcore does introduce nondeterminism in a few places to simplify the
presentation or to model architecture-dependent behavior. For example,
the result of reading an invalid header is an undefined value, which
may vary from target to target and even from read to read within a
single program. In the semantics, we write $\kw{havoc}(\tau)$ to
indicate an operation producing an arbitrary value of type $\tau$.
Match-action evaluation uses the control plane $\mathcal{C}$ to select
from the table's actions (rather than defining an algorithm for
selecting the matching entry from a list of forwarding rules). As
mentioned previously, we give tables unique identifiers for control
plane use by reusing locations $\ell$, which are also generated
non-deterministically, although this is not essential.

Statements evaluate to signals, which indicate how control flow should
proceed. Expressions evaluate to signals as well but with values
$\Pval$ in place of the $\Pcont$ signal. The signals are how \pcore
models non-standard control flow. To save space, we elide the
``unwinding'' rules for handling signals other than $\Pcont$ or
$\Pval$ in most places. For each intermediate computation with outputs
$\Pstore$ and $\Penv$ if that computation terminates in \Pexit\xspace
or $\Preturn{\Pval}$, the overall computation freezes the state at
$\langle \Pstore, \Penv\rangle$ and propagates the signal.

\begin{figure}\footnotesize
\centering
\(
\centering
\begin{array}{ll}
\Pjudgment{
\langle \Delta, \sigma,\epsilon,\tau \rangle \Downarrow_{\tau} \tau'
}
& \text{Type simplification}
\\[\medskipamount]

\Pjudgment{
\langle\Pcp,\Ptypedefs,\Pstore,\Penv,\Pmany{d~x:\tau:=\Pexp}\rangle \Downarrow_{\mathit{copy}}
\langle
\Pstore',
\Pmany{\Penvset{x}{\ell}},
\Pmany{\Passign{\Plval}{\ell}}
\rangle
}
& \text{Copy-in copy-out}
\\[\medskipamount]

\Pjudgment{
\Passignlval{\Pcp,\Delta,\Pstore}{\Penv}{\Passign{\Plval}{\Pval}}{\Pstore'}
}
& \text{L-value assignment}
\\[\medskipamount]

\Pjudgment{
\Plvaleval{\Pcp}{\Ptypedefs}{\Pstore}{\Penv}{\Pexp}
          {\Pstore'}{\Plval}
}
& \text{L-value evaluation}
\\[\medskipamount]

\Pjudgment{
\Pexpreval{\Pcp}{\Ptypedefs}{\Pstore}{\Penv}{\Pexp}
          {\Pstore'}{\Pval}
}
& \text{Expression evaluation}
\\[\medskipamount]

\Pjudgment{
\langle\Pcp,x,\Pmany{\Pval:x}\rangle \Downarrow_{\mathit{match}}
x(\Pmany{\Pexp})}
& \text{Match-action evaluation}
\\[\medskipamount]

\Pjudgment{
\Peval{\Pcp,\Ptypedefs,\Pstore,\Penv,\Pstmt}
{\Pstore',\Penv',\Psig}}
& \text{Statement evaluation}
\\[\medskipamount]

\Pjudgment{
\Pstmtconf{\Pcp,\Ptypedefs}{\Pstore,\Penv}{\Pdecl}
\Downarrow
\Pstmtconf{\Ptypedefs'}{\Pstore',\Penv'}{\Psig}
}
& \text{Declaration evaluation}
\end{array}
\)
\caption{Selected judgment signatures from the dynamic semantics.}
\label{fig:dynamic-judgments}
\end{figure}

\subsubsection{Copy-in copy-out rules (\cref{fig:copy})}

This example shows how copy-in copy-out handles aliasing of function
arguments. The $\{\}$ before $f$ is its return type, a record with no
fields (i.e., unit).
\begin{align*}
&\Psimplefunctiondecl{\{\}}{f}{\Pinout~\Pbittype{8}~\mathit{src},
\Pinout~\Pbittype{8}~\mathit{dst}}{
  \Passign{dst}{src+1};\ 
  \Passign{src}{0}
}\\
&\Passign{x}{1};
\\
&f(x,x);
\end{align*}
In a call-by-reference language $x$ would be $0$ after the call to
$f$. In a call-by-value language, it would still be $1$. In P4,
however, $x$ will be $2$. A function call creates temporaries for
storing its arguments for each call and copies the temporaries back,
in order, after the body of the function finishes. In the example
$dst$ comes last in the parameter list of $f$, so $x$ ends up with the
$dst$ value ($2$) overwriting the $src$ value ($0$).

The calling convention guarantees that distinct variable names within
a function refer to distinct storage locations. This means P4
compilers and static analysies never have to account for aliasing.

\begin{figure}
\footnotesize
\begin{mathpar}
\Pinfer{CopyIn}{
\Pstmtconf{\Pcp,\Ptypedefs}{\Pstore,\Penv}{\Pexp}
\Downarrow
\langle \Pstore',\Pval \rangle \\
\Pfresh{\ell}
}{
\langle\Pcp,\Ptypedefs,\Pstore,\Penv,\kw{in}~x:\tau:=\Pexp\rangle \Downarrow_{\mathit{copy}}
\langle
\Pstore'[\Pheapset{\ell}{\Pval}],
\Penvset{x}{\ell},
[]
\rangle
}
\\
\Pinfer{CopyOut}{
\Pstmtconf{\Pcp,\Ptypedefs}{\Pstore,\Penv}{\Pexp}
\Downarrow_{\mathit{\Plval}}
\langle
\Pstore',\Plval
\rangle \\
\Pfresh{\ell}
}{
\langle\Pcp,\Ptypedefs,\Pstore,\Penv,\kw{out}~x:\tau:=\Pexp\rangle \Downarrow_{\mathit{copy}}
\langle
\Pstore[\Pheapset{\ell}{\kw{init}_\Ptypedefs~\tau}],
\Penvset{x}{\ell},
[\Passign{\Plval}{\ell}]
\rangle
}
\\
\Pinfer{CopyInOut}{
\Pstmtconf{\Pcp,\Ptypedefs}{\Pstore,\Penv}{\Pexp}
\Downarrow_{\mathit{\Plval}}
\langle
\Pstore_1,\Plval
\rangle\\
\Pstmtconf{\Ptypedefs}{\Pstore_1,\Penv}{\Plval}
\Downarrow
\langle
\Pstore_2,
\Pval
\rangle\\
\Pfresh{\ell}
}{
\langle\Pcp,\Ptypedefs,\Pstore,\Penv,\kw{inout}~x:\tau:=\Pexp\rangle \Downarrow_{\mathit{copy}}
\langle
\Pstore_2[\Pheapset{\ell}{\Pval}],
\Penvset{x}{\ell},
[\Passign{\Plval}{\ell}]
\rangle
}
\end{mathpar}
\caption{Copy-in and copy-out operations. We define them for single arguments and they are lifted to lists of arguments in the obvious way.}
\label{fig:copy}
\end{figure}

\subsubsection{Expression evaluation (\cref{fig:expression-eval-i,fig:expression-eval-ii}).}
\label{sec:expreval}

Unary operations, binary operations, and casts are axiomatized. Rather
than spell out all the legal casts or arithmetic expressions, we
assume we have typing and evaluation oracles for each of them which
agree. For unary and binary operations, this means that there is a
typing function $\mathcal{T}$ and an evaluation function
$\mathcal{E}$. For casts, this means there is agreement between a
casting check $\Delta \vdash \tau \preceq \tau'$ and a casting
function $\Pcasteval{\Sigma}{\Pval}{\tau}$.

The P4 specification allows programs to produce ``undefined values''
in certain situations. This is substantially more restrictive than the
concept of ``undefined behavior'' in C, which has
notoriously confusing semantics~\cite{wang2012undefined}.
Our \textsc{E-HdrMemUnref} rule introduces an undefined (havoc'd)
value when a program attempts to read from an invalid header, but does not affect any other program state.

The full P4 expression language includes built-in functions for
operations such as accessing header validity bits. \pcore models these
functions using native functions, which we assume are already in the
context at the start of program execution and which are evaluated by
appealing to an interpretation $\mathcal{N}$.

\begin{figure}
\footnotesize
\begin{mathpar}
\Pinfer{E-Int}{ }{
\Pexpreval{\Pcp}{\Ptypedefs}{\Pstore}{\Penv}{\Pint{n}{w}}
          {\Pstore}{\Pint{n}{w}}
}
\and
\Pinfer{E-Bool}{ }{
\Pexpreval{\Pcp}{\Ptypedefs}{\Pstore}{\Penv}{b}
          {\Pstore}{b}
}
\and
\Pinfer{E-TypMem}{
}{
\Pexpreval{\Pcp}{\Ptypedefs}{\Pstore}{\Penv}{X.f}
          {\Pstore}{X.f}
}
\\
\Pinfer{E-Var}{
\Penv(x)=\ell \and
\Pstore(\ell)=\Pval
}{
\Pexpreval{\Pcp}{\Ptypedefs}{\Pstore}{\Penv}{x}
          {\Pstore}{\Pval}
}
\and
\Pinfer{E-Cast}{
\Pexpreval{\Pcp}{\Ptypedefs}{\Pstore}{\Penv}{\Pexp}
          {\Pstore'}{\Pval} \\\\
\langle \Delta,\Pstore,\Penv,\tau\rangle \Downarrow_{\tau} \tau'
}{
\Pexpreval{\Pcp}{\Ptypedefs}{\Pstore}{\Penv}{(\tau)\Pexp}
          {\Pstore'}{\Pcasteval{\Delta}{\Pval}{\tau'}}
}
\\
\Pinfer{E-Uop}{
\Pexpreval{\Pcp}{\Ptypedefs}{\Pstore}{\Penv}{\Pexp}
          {\Pstore'}{\Pval} \\
}{
\Pexpreval{\Pcp}{\Ptypedefs}{\Pstore}{\Penv}{{\ominus}\Pexp}
          {\Pstore'}{\Puopeval{\ominus}{\Pval}}
}
\and
\Pinfer{E-BinOp}{
\Pexpreval{\Pcp}{\Ptypedefs}{\Pstore}{\Penv}{\Pexp_1}
          {\Pstore_1}{\Pval_1} \\\\
\Pexpreval{\Pcp}{\Ptypedefs}{\Pstore_1}{\Penv}{\Pexp_2}
          {\Pstore_2}{\Pval_2}
}{
\Pexpreval{\Pcp}{\Ptypedefs}{\Pstore}{\Penv}{\Pexp_1\oplus\Pexp_2}
          {\Pstore_2}{\Pbinopeval{\oplus}{\Pval_1}{\Pval_2}}
}
\\
\Pinfer{E-Rec}{
\Pexpreval{\Pcp}{\Ptypedefs}{\Pstore}{\Penv}{\Pmany{\Pexp}}
          {\Pstore'}{\Pmany{\Pval}} \\
}{
\Pexpreval{\Pcp}{\Ptypedefs}{\Pstore}{\Penv}{\Prec{\Pmany{f=\Pexp}}}
          {\Pstore'}{\Prec{\Pmany{f=\Pval}}}
}
\and
\Pinfer{E-RecMem}{
\Pexpreval{\Pcp}{\Ptypedefs}{\Pstore}{\Penv}{\Pexp}
          {\Pstore'}{\Prec{\Pmany{f:\tau=\Pval}}}
}{
\Pexpreval{\Pcp}{\Ptypedefs}{\Pstore}{\Penv}{\Pexp.f_i}
          {\Pstore'}{\Pval_i}
}
\\
\Pinfer{E-Slice}{
\Pexpreval{\Pcp}{\Ptypedefs}{\Pstore}{\Penv}{\Pexp_1}
          {\Pstore_1}{\Pint{n}{w}} \\\\
\Pexpreval{\Pcp}{\Ptypedefs}{\Pstore_1}{\Penv}{\Pexp_2}
          {\Pstore_2}{\Pint{p}{\infty}} \\\\
\Pexpreval{\Pcp}{\Ptypedefs}{\Pstore_2}{\Penv}{\Pexp_3}
          {\Pstore_3}{\Pint{q}{\infty}}
}{
\Pexpreval{\Pcp}{\Ptypedefs}{\Pstore}{\Penv}{\Pexp_1[\Pexp_2{:}\Pexp_3]}
          {\Pstore_3}{\Pint{n}{w}[p{:}q]}
}
\and
\Pinfer{E-Index}{
\Pexpreval{\Pcp}{\Ptypedefs}{\Pstore}{\Penv}{\Pexp_1}
          {\Pstore_1}{\Pstackval{\tau}{\Pmany{\Pval}}} \\\\
\Pexpreval{\Pcp}{\Ptypedefs}{\Pstore_1}{\Penv}{\Pexp_2}
          {\Pstore_2}{\Pint{n}{32}} \\
0 \leq n < \kw{len}(\Pmany{\Pval})
}{
\Pexpreval{\Pcp}{\Ptypedefs}{\Pstore}{\Penv}{\Pexp_1[\Pexp_2]}
          {\Pstore_2}{\Pval_n}
}
\\
\Pinfer{E-IndexOOB}{
\Pexpreval{\Pcp}{\Ptypedefs}{\Pstore}{\Penv}{\Pexp_1}
          {\Pstore_1}{\Pstackval{\tau}{\Pmany{\Pval}}} \\\\
\Pexpreval{\Pcp}{\Ptypedefs}{\Pstore_1}{\Penv}{\Pexp_2}
          {\Pstore_2}{\Pint{n}{32}} \\
n \geq \kw{len}(\Pmany{\Pval})
}{
\Pexpreval{\Pcp}{\Ptypedefs}{\Pstore}{\Penv}{\Pexp_1[\Pexp_2]}
          {\Pstore_2}{\Phavoc{\tau}}
}
\end{mathpar}
\caption{Semantics for expressions I.}
\label{fig:expression-eval-i}
\end{figure}

\begin{figure}
\footnotesize
\begin{mathpar}
\Pinfer{E-HdrMem}{
\Pexpreval{\Pcp}{\Ptypedefs}{\Pstore}{\Penv}{\Pexp}
          {\Pstore'}{\Pheaderval{\Pmany{f:\tau=\Pval}}}
}{
\Pexpreval{\Pcp}{\Ptypedefs}{\Pstore}{\Penv}{\Pexp.f_i}
          {\Pstore'}{\Pval_i}
}
\and
\Pinfer{E-HdrMemUndef}{
\Pexpreval{\Pcp}{\Ptypedefs}{\Pstore}{\Penv}{\Pexp}
          {\Pstore'}{\Pheaderinval{\Pmany{f:\tau=\Pval}}} \\
}{
\Pexpreval{\Pcp}{\Ptypedefs}{\Pstore}{\Penv}{\Pexp.f_i}
          {\Pstore'}{\Phavoc{\tau_i}}
}
\and
\Pinfer{E-Call-DeclExit}{
\Peval{\Pcp,\Ptypedefs,\Pstore,\Penv,\Pexp}
      {\Pstore_1,\Pclos{\Penv_c}{\Pmany{X}}{\Pmany{d~x:\tau}}{\tau}{\Pmany{\Pdecl}~\Pstmt}}
\\\\
\langle \Ptypedefs[\Pmany{X=\rho}], \Pstore, \Penv, \Pmany{\tau} \rangle \Downarrow_\tau \Pmany{\tau'}
\\\\
\langle \Pcp,\Ptypedefs,\Pstore_1,\Penv,\Pmany{\Passign{d~x:\tau'}{\Pexp}}\rangle
\Downarrow_{\mathit{copy}}
\langle \Pstore_2,\Pmany{\Penvset{x}{\ell}},\Pmany{\Passign{\Plval}{\ell}}\rangle
\\\\
\langle
\Pcp,\Ptypedefs[\Pmany{X=\rho}],\Pstore_2,\Penv_c[\overline{\Penvset{x}{\ell}}],\Pmany{\Pdecl}
\rangle
\Downarrow
\langle
\Ptypedefs_2,\Pstore_3,\Penv_2,\Pexit
\rangle
\\\\
\Passignlval{\Pcp,\Delta,\Pstore_3}{\Penv}{\Pmany{\Passign{\Plval}{\Pstore_3(\ell)}}}{\Pstore_4}
}{
\Pexpreval{\Pcp}{\Ptypedefs}{\Pstore}{\Penv}{\Pgencall{\Pexp}{\Pmany{\rho}}{\Pmany{\Pexp}}}
      {\Pstore_4}{\Pexit}
}
\and
\Pinfer{E-Call-StmtExit}{
\Peval{\Pcp,\Ptypedefs,\Pstore,\Penv,\Pexp}
      {\Pstore_1,\Pclos{\Penv_c}{\Pmany{X}}{\Pmany{d~x:\tau}}{\tau}{\Pmany{\Pdecl}~\Pstmt}}
\\\\
\langle \Ptypedefs[\Pmany{X=\rho}], \Pstore, \Penv, \Pmany{\tau} \rangle \Downarrow_\tau \Pmany{\tau'}
\\\\
\langle \Pcp,\Ptypedefs,\Pstore_1,\Penv,\Pmany{\Passign{d~x:\tau}{\Pexp}}\rangle
\Downarrow_{\mathit{copy}}
\langle \Pstore_2,\Pmany{\Penvset{x}{\ell}},\Pmany{\Passign{\Plval}{\ell}}\rangle
\\\\
\langle
\Pcp,\Ptypedefs[\Pmany{X=\rho}],\Pstore_2,\Penv_c[\overline{\Penvset{x}{\ell}}],\Pmany{\Pdecl}
\rangle
\Downarrow
\langle
\Ptypedefs_2,\Pstore_3,\Penv_2,\Pcont
\rangle
\\\\
\langle
\Pcp,\Ptypedefs_2,\Pstore_3,\Penv_2,\Pstmt
\rangle
\Downarrow
\langle
\Pstore_4,
\Penv_3,
\Pexit
\rangle
\\\\
\Passignlval{\Pcp,\Delta,\Pstore_4}{\Penv}{\Pmany{\Passign{\Plval}{\Pstore_4(\ell)}}}{\Pstore_5}
}{
\Pexpreval{\Pcp}{\Ptypedefs}{\Pstore}{\Penv}{\Pgencall{\Pexp}{\Pmany{\rho}}{\Pmany{\Pexp}}}
      {\Pstore_5}{\Pexit}
}
\and
\Pinfer{E-CallN}{
\Peval{\Pcp,\Ptypedefs,\Pstore,\Penv,\Pexp}
      {\Pstore_1,\Pnative{x}{\Pmany{d~x:\tau}}{\tau}}
\\\\
\langle \Pcp,\Ptypedefs,\Pstore_1,\Penv,\Pmany{\Passign{d~x:\tau}{\Pexp}}\rangle
\Downarrow_{\mathit{copy}}
\langle \Pstore_2,\Pmany{\Penvset{x}{\ell}},\Pmany{\Passign{\Plval}{\ell}}\rangle
\\\\
\Pnativeeval{x}{\Pstore_2}{[\Pmany{\Penvset{x}{\ell}}]}{\Pstore_3}{\Pval}
\\\\
\Passignlval{\Pcp,\Delta,\Pstore_3}{\Penv}{\Pmany{\Passign{\Plval}{\Pstore_3(\ell)}}}{\Pstore_4}
}{
\Pexpreval{\Pcp}{\Ptypedefs}{\Pstore}{\Penv}{\Pcall{\Pexp}{\Pmany{\Pexp}}}
      {\Pstore_4}{\Pval}
}
\and
\Pinfer{E-Call}{
\Peval{\Pcp,\Ptypedefs,\Pstore,\Penv,\Pexp}
      {\Pstore_1,\Pclos{\Penv_c}{\Pmany{X}}{\Pmany{d~x:\tau}}{\tau}{\Pmany{\Pdecl}~\Pstmt}}
\\\\
\langle \Ptypedefs[\Pmany{X=\rho}], \Pstore, \Penv, \Pmany{\tau} \rangle \Downarrow_\tau \Pmany{\tau'}
\\\\
\langle \Pcp,\Ptypedefs,\Pstore_1,\Penv,\Pmany{\Passign{d~x:\tau'}{\Pexp}}\rangle
\Downarrow_{\mathit{copy}}
\langle \Pstore_2,\Pmany{\Penvset{x}{\ell}},\Pmany{\Passign{\Plval}{\ell}}\rangle
\\\\
\langle
\Pcp,\Ptypedefs[\Pmany{X=\rho}],\Pstore_2,\Penv_c[\overline{\Penvset{x}{\ell}}],\Pmany{\Pdecl}
\rangle
\Downarrow
\langle
\Ptypedefs_2,\Pstore_3,\Penv_2,\Pcont
\rangle
\\\\
\langle
\Pcp,\Ptypedefs_2,\Pstore_3,\Penv_2,\Pstmt
\rangle
\Downarrow
\langle
\Pstore_4,
\Penv_3,
\Preturn{\Pval}
\rangle
\\\\
\Passignlval{\Pcp,\Delta,\Pstore_4}{\Penv}{\Pmany{\Passign{\Plval}{\Pstore_4(\ell)}}}{\Pstore_5}
}{
\Pexpreval{\Pcp}{\Ptypedefs}{\Pstore}{\Penv}{\Pgencall{\Pexp}{\Pmany{\rho}}{\Pmany{\Pexp}}}
      {\Pstore_5}{\Pval}
}
\end{mathpar}
\caption{Semantics for expressions II.}
\label{fig:expression-eval-ii}
\end{figure}

\subsubsection{Variable declaration evaluation (\cref{fig:var-decl-semantics})}

The next collection of formal rules handles variable declarations.
Constants and regular values are not distinguished at run time. The
most interesting rule is \textsc{E-Inst} for instantiations. It
produces a closure without executing any additional code, saving the
constructor arguments in the store and placing pointers to those
arguments in the closure's environment.

\subsubsection{Object declaration evaluation (\cref{fig:object-decl-semantics}}
The object declarations create closures from declarations of tables,
controls, and functions. All closures save a copy of the environment,
but do not save a copy of the store. Control and function closures are
standard.  Table closures save the fresh location of the table for use
by the control plane in disambiguating multiple tables instantiated
from a single declaration. Table closures also save the table key
expressions and the the list of actions available to the table for use
in matching.

\begin{figure}
\begin{mathpar}\footnotesize
\Pinfer{E-Const}{
\Peval{\Pcp,\Ptypedefs,\Pstore,\Penv,\Pvardeclinst{\tau}{x}{\Pexp}}
      {\Ptypedefs,\Pstore_1,\Penv_1,\Pok}
}{
\Peval{\Pcp,\Ptypedefs,\Pstore,\Penv,\Pconst{\tau}{x}{\Pexp}}
      {\Ptypedefs,\Pstore_1,\Penv_1,\Pok}
}
\and
\Pinfer{E-VarDecl}{
\Pfresh{\ell} \\
\langle \Delta,\Pstore,\Penv,\tau \rangle \Downarrow \tau'
}{
\Peval{\Pcp,\Ptypedefs,\Pstore,\Penv,\Pvardecl{\tau}{x}}
      {\Ptypedefs,\Pstore[\Passign{\ell}{\kw{init}_\Ptypedefs~\tau'}],\Penv[\Penvset{x}{\ell}],\Pok}
}
\and
\Pinfer{E-VarInit}{
\Pfresh{\ell} \\
\Peval{\Pcp,\Ptypedefs,\Pstore,\Penv,\Pexp}
      {\Pstore_1,\Pval}
}{
\Peval{\Pcp,\Ptypedefs,\Pstore,\Penv,\Pvardeclinst{\tau}{x}{\Pexp}}
      {\Ptypedefs,\Pstore_1[\Passign{\ell}{\Pval}],\Penv[\Penvset{x}{\ell}],\Pok}
}
\and
\Pinfer{E-Inst}{
\Peval{\Pcp,\Ptypedefs,\Pstore,\Penv,X}
      {\Pstore_1,\Pcclos{\Penv_{\mathit{cc}}}{\Pmany{d~x:\tau}}{\Pmany{x_c:\tau_c}}{\Pmany{\Pdecl}}{\Pstmt}}
\\\\
\Peval{\Pcp,\Ptypedefs,\Pstore_1,\Penv,\Pmany{\Pexp}}
      {\Pstore_2,\Pmany{\Pval_c}}
\\\\
\overline{\ell_c},\Pfresh{\ell}
\\
\Pval = \Pclos{\Penv_{\mathit{cc}}[\Pmany{\Penvset{x_c}{\ell_c}}]}{\langle \rangle}{\Pmany{d~x:\tau}}{\{\}}{\Pmany{\Pdecl}~\Pstmt}
}{
\Peval{\Pcp,\Ptypedefs,\Pstore,\Penv,\Pinst{X}{\Pmany{\Pexp}}{x}}
      {\Ptypedefs,\Pstore_2[\Pmany{\Pheapset{\ell_c}{\Pval_c}}][\Pheapset{\ell}{\Pval}],\Penv[\Penvset{x}{\ell}],\Pok}
}
\end{mathpar}
\caption{Semantics for variable declarations.}
\label{fig:var-decl-semantics}
\end{figure}

\begin{figure}
\footnotesize
\begin{mathpar}
\Pinfer{E-TableDecl}{
\Pfresh{\ell} \\
\Pval = \Ptableval{\ell}{\Penv}{\Pmany{\Pkey}}{\Pmany{\Pact}} \\
}{
\Peval{\Pcp,\Ptypedefs,\Pstore,\Penv,\Ptable{x}{\Pmany{\Pkey}~\Pmany{\Pact}}}
      {\Ptypedefs,\Pstore[\Pheapset{\ell}{\Pval}],\Penv[\Penvset{x}{\ell}],\Pok}
}
\and
\Pinfer{E-CtrlDecl}{
\Pfresh{\ell} \\
\langle\Delta,\Pstore,\Penv, \Pmany{\tau_c}\rangle \Downarrow_\tau \Pmany{\tau_c'} \\
\langle\Delta,\Pstore,\Penv, \Pmany{\tau}\rangle \Downarrow_\tau \Pmany{\tau'} \\
\Pval = \Pcclos{\Penv}{\Pmany{d~x:\tau'}}{\Pmany{x_c:\tau_c'}}{\Pmany{\Pdecl}}{\Pstmt} \\
}{
\Peval{\Pcp,\Ptypedefs,\Pstore,\Penv,\Pcontroldecl{X}{\Pmany{d~x:\tau}}{\Pmany{x_c:\tau_c}}{\Pmany{\Pdecl}}{\Pstmt}}
      {\Ptypedefs,\Pstore[\Pheapset{\ell}{\Pval}],\Penv[\Penvset{X}{\ell}],\Pok}
}
\and
\Pinfer{E-FuncDecl}{
\Pfresh{\ell} \\
\langle\Delta[\Pmany{X~\kw{var}}],\Pstore,\Penv, \Pmany{\tau_i}\rangle \Downarrow_\tau \Pmany{\tau_i'} \\
\langle\Delta[\Pmany{X~\kw{var}}],\Pstore,\Penv, \tau\rangle \Downarrow_\tau \tau' \\
\Pval = \Pclos{\Penv}{\Pmany{X}}{\Pmany{d~x_i:\tau_i'}}{\tau'}{\Pstmt} \\
}{
\Peval{\Pcp,\Ptypedefs,\Pstore,\Penv,\Pfunctiondecl{\tau}{x}{\Pmany{X}}{\Pmany{d~x_i:\tau_i}}{\Pstmt}}
      {\Ptypedefs,\Pstore[\Pheapset{\ell}{\Pval}],\Penv[\Penvset{x}{\ell}],\Pok}
}
\end{mathpar}
\caption{Semantics for object declarations.}
\label{fig:object-decl-semantics}
\end{figure}

\subsubsection{Statement evaluation (\cref{fig:stmt-semantics}).}

The rules for statement evaluation are mostly standard. The most
interesting rule is \textsc{E-Call-Table}, which handles table
invocation. It first evaluates the key and then uses the control-plane
to locate a matching action, and executes the body of the action to
obtain the final result. For simplicity, in \pcore, we assume that
tables have a default action, so they cannot ``miss.''

\begin{figure}
\footnotesize
\begin{mathpar}
\Pinfer{E-Empty}{
}{
\Peval{\Pcp,\Ptypedefs,\Pstore,\Penv,\Pbraces{}}
      {\Pstore,\Penv,\Pcont}
}
\and
\Pinfer{E-Exit}{ }{
\Peval{\Pcp,\Ptypedefs,\Pstore,\Penv,\Pexit}
      {\Pstore,\Penv,\Psigexit}
}
\\
\Pinfer{E-IfT}{
\Peval{\Pcp,\Ptypedefs,\Pstore,\Penv,\Pexp}
      {\Pstore_1,\Ptrue}
\\\\
\Peval{\Pcp,\Ptypedefs,\Pstore_1,\Penv,\Pstmt_1}
      {\Pstore_2,\Penv_2,\Psig}
}{
\Peval{\Pcp,\Ptypedefs,\Pstore,\Penv,\Pif{\Pexp}{\Pstmt_1}{\Pstmt_2}}
      {\Pstore_2,\Penv,\Psig}
}
\and
\Pinfer{E-IfF}{
\Peval{\Pcp,\Ptypedefs,\Pstore,\Penv,\Pexp}
      {\Pstore_1,\Pfalse}
\\\\
\Peval{\Pcp,\Ptypedefs,\Pstore_1,\Penv,\Pstmt_2}
      {\Pstore_2,\Penv_2,\Psig}
}{
\Peval{\Pcp,\Ptypedefs,\Pstore,\Penv,\Pif{\Pexp}{\Pstmt_1}{\Pstmt_2}}
      {\Pstore_2,\Penv,\Psig}
}
\\
\Pinfer{E-Block}{
\Peval{\Pcp,\Ptypedefs,\Pstore,\Penv,\Pstmt}
      {\Pstore_1,\Penv_1,\Pcont}
\\\\
\Peval{\Pcp,\Ptypedefs,\Pstore_1,\Penv_1,\Pbraces{\Pmany{\Pstmt}}}
      {\Pstore_2,\Penv_2,\Psig}
}{
\Peval{\Pcp,\Ptypedefs,\Pstore,\Penv,\Pbraces{\Pstmt,\Pmany{\Pstmt}}}
      {\Pstore_2,\Penv,\Psig}
}
\and
\Pinfer{E-Return}{
\Peval{\Pcp,\Ptypedefs,\Pstore,\Penv,\Pexp}
      {\Pstore_1,\Pval}
}{
\Peval{\Pcp,\Ptypedefs,\Pstore,\Penv,\Preturn{\Pexp}}
      {\Pstore_1,\Penv,\Preturn{\Pval}}
}
\\
\Pinfer{E-Assign}{
\Pstmtconf{\Pcp,\Ptypedefs}{\Pstore,\Penv}{\Pexp_1}
\Downarrow_{\mathit{\Plval}}
\langle\Pstore_1,\Plval\rangle
\\\\
\Peval{\Pcp,\Ptypedefs,\Pstore_1,\Penv,\Pexp_2}
      {\Pstore_2,\Pval}
\\\\
\Passignlval{\Pcp,\Delta,\Pstore_2}{\Penv}{\Passign{\Plval}{\Pval}}{\Pstore_3}
}{
\Peval{\Pcp,\Ptypedefs,\Pstore,\Penv,\Passign{\Pexp_1}{\Pexp_2}}
      {\Pstore_3,\Penv,\Pcont}
}
\and
\Pinfer{E-Call-Table}{
\Peval{\Pcp,\Ptypedefs,\Pstore,\Penv,\Pexp}
      {\Pstore_1,\Ptableval{\ell}{\Penv_c}{\Pmany{\Pexp_{\mathit{key}}:x}}{\Pmany{x_{\mathit{act}}(\Pmany{\Pexp_s},\Pmany{x_c:\tau})}}}
\\\\
\Peval{\Pcp,\Ptypedefs,\Pstore_1,\Penv_c,\Pmany{\Pexp_{\mathit{key}}}}
      {\Pstore_2,\Pmany{\Pval_{\mathit{key}}}}
\\\\
\langle \Pcp, \ell, \Pmany{\Pval_{\mathit{key}}:x},\Pmany{x_{\mathit{act}}(\Pmany{x_c:\tau})}\rangle \Downarrow_{\mathit{match}}
x_{\mathit{act}}(\Pmany{\Pexp_c})
\\\\
\Peval{\Pcp,\Ptypedefs,\Pstore_2,\Penv_c,x_{\mathit{act}}(\Pmany{\Pexp_s},\Pmany{\Pexp_c})}
      {\Pstore_3,\Penv_c',\Pcont}
}{
\Peval{\Pcp,\Ptypedefs,\Pstore,\Penv,\Pcall{\Pexp}{}}
      {\Pstore_3,\Penv,\Pcont}
}
\\
\Pinfer{E-VarDecl}{
\Peval{\Pcp,\Ptypedefs,\Pstore,\Penv,\Pvdecl}
      {\Ptypedefs',\Pstore',\Penv',\Pcont}
}{
\Peval{\Pcp,\Ptypedefs,\Pstore,\Penv,\Pvdecl}
      {\Pstore',\Penv',\Pcont}
}
\and
\Pinfer{E-Call}{
\Pexpreval{\Pcp}{\Ptypedefs}{\Pstore}{\Penv}{\Pgencall{\Pexp}{\Pmany{\rho}}{\Pmany{\Pexp}}}
          {\Pstore'}{sig}
}{
\Peval{\Pcp,\Ptypedefs,\Pstore,\Penv,\Pgencall{\Pexp}{\Pmany{\rho}}{\Pmany{\Pexp}}}
      {\Pstore',\Penv,sig}
}
\and
\end{mathpar}
\caption{Semantics for statements.}
\label{fig:stmt-semantics}
\end{figure}

\subsection{Putting it all together}

The static and dynamic semantics presented thus far omit type
declarations. Full rules are in the appendix, but type declarations are simple.
Aside from the open enum type declarations, which add new members to their type,
type declarations just add new type definitions to the type context.

\subsection{Type soundness and termination}

Big-step semantics fail to distinguish between programs that ``go
wrong'' and programs that run forever. For a language with recursion
or loops, this can complicate the proof of a useful type soundness
result. Fortunately, the parser-free fragment of P4 has neither, so we
can prove that all well-typed expressions and statements evaluate to a
final value of appropriate type. The main theorem shows this for
statements.
\begin{theorem}
Let $\langle \Pcp, \Ptypedefs, \Pstore, \Penv, \Pstmt \rangle$ be an
initial configuration and take contexts
$\Xi, \Sigma, \Sigma', \Gamma, \Gamma', \Delta$.  Suppose
\begin{enumerate}
\item $\Xi,\Sigma,\Delta \vdash \Pstore$,
\item $\Xi \vdash \Penv:\Gamma$, and
\item $\Sigma,\Gamma,\Delta \vdash \Pstmt \dashv \Sigma', \Gamma'$
\end{enumerate}
Then there exists a final configuration
$\langle \Pstore', \Penv', \Psig \rangle$ and a store typing
$\Xi' \supseteq \Xi$ such that
\begin{enumerate}
\item $\Peval{\Pcp,\Ptypedefs,\Pstore,\Penv,\Pstmt} {\Pstore',\Penv',\Psig}$,
\item $\Xi',\Sigma',\Gamma',\Delta \vdash \Pstore'$,
\item $\Xi',\Sigma',\Gamma',\Delta \vdash \Penv':\Gamma'$, and
\item if $\Psig = \Preturn{\Pval}$ then there is a type $\tau$ such that
$\Gamma(\Preturnvoid)=\tau$ and
$\Xi', \Sigma', \Delta \vdash \Pval:\tau$.
\end{enumerate}
\end{theorem}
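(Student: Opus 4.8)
The plan is to prove this combined preservation-and-termination statement by a Tait--Girard style \emph{reducibility} (logical-relation) argument. The big-step formulation is the reason an ordinary induction does not suffice: the mere existence of a derivation already encodes termination, so clause~(1) cannot be obtained by induction on the typing derivation of $\Pstmt$—and that derivation is in any case not structural at a call, since a call's typing mentions only the callee's \emph{type}, not its body. Instead I would define a family of \emph{semantic typing} predicates, indexed by a type and a store typing $\Xi$, expressing when a value is ``good'' at $\tau$: at a base type $\rho$ the predicate is essentially $\Xi,\Sigma,\Delta\vdash\Pval:\rho$ together with goodness of any nested components, while at a function type it is the Kripke condition that, in every extended store typing $\Xi'\supseteq\Xi$ and for every tuple of good arguments, invoking the closure terminates, produces a good result, and yields a further store-typing extension (with the exceptional $\Psigexit$ outcome admitted uniformly). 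This is lifted pointwise to stores (every location holds a good value at its $\Xi$-type) and to environments (every variable points to a location holding a good value at its $\Gamma$-type).

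The crucial observation is that this definition is well-founded precisely because the type system is stratified: a function type's parameter and return types are drawn from the \emph{base} types $\rho$, generics range only over $\rho$, and base types nest only base types---records, headers, and stacks all carry fields and elements of base type. Hence the clause for a function type refers only to the predicate at strictly smaller (base) types, the recursion bottoms out, and it is exactly this ``induction on the stratified type system'' that makes the relation total. Were data types permitted to contain function types, or generics to range over them, this defining recursion could diverge---mirroring how genuine recursion would break termination itself.

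With the relation in hand I would prove the \emph{fundamental theorem}: every well-typed expression, statement, l-value, declaration, copy-in/out step, and match step sends semantically-good inputs to a terminating evaluation with semantically-good outputs and an extended store typing. This is a simultaneous induction over the mutually defined typing derivations, one compatibility lemma per rule. Most cases are routine: \textsc{E-Var} reads a good value out of a good store; \textsc{E-Assign} with l-value evaluation needs that overwriting a location with a value of its declared type preserves store well-typedness; the arithmetic and cast cases discharge to the assumed agreement of the oracles $\mathcal{T}$ and $\mathcal{E}$ and of $\preceq$ with the casting function; native calls discharge to the assumed interpretation $\mathcal{N}$; and \textsc{E-HdrMemUndef} and \textsc{E-IndexOOB} are immediate since $\kw{havoc}(\tau_i)$ is chosen at the required type. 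The theorem as stated then follows by instantiating the fundamental theorem for statements and reading off clauses~(1)--(4), the $\Preturn{\Pval}$ case supplying the typed return value against $\Gamma(\Preturnvoid)$ and the $\Psigexit$ and $\Pcont$ cases carrying no value obligation.

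I expect the call rules (\textsc{E-Call}, \textsc{E-Call-DeclExit}, \textsc{E-Call-StmtExit}, and \textsc{E-Call-Table}) to be the main obstacle, for two reasons. First, a closure captures an \emph{environment} but not a store, so by the time it is invoked the ambient store has grown; I must show the captured environment is still good at the current, extended store typing, which forces every component of the relation to be monotone in $\Xi'\supseteq\Xi$ and forces me to thread a single growing store typing through copy-in, body evaluation, and copy-out. Second, copy-in/copy-out allocates fresh locations $\ell$ and later writes results back through the caller's l-values: I must show each fresh $\ell$ extends $\Xi$ without disturbing existing bindings, that the initialized or argument values are good at their parameter types, and that the copy-out assignments preserve store goodness---all while correctly propagating an $\Psigexit$ or $\Preturn{\Pval}$ signal raised inside the body through the elided unwinding rules. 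Making the bookkeeping of store-typing extensions and signal propagation line up across these several premises is where the real work lies; once the call case goes through, termination comes for free from the well-foundedness of the reducibility relation rather than from any separate measure.
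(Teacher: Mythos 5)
Your proposal matches the paper's own proof: the paper likewise defines Kripke-style semantic typing judgments ($\vDash$) for values, stores, environments, expressions, statements, declarations, closures, and tables---monotone under store-typing extension $\Xi' \supseteq \Xi$ and well-founded by the stratification of types---and proves the fundamental theorems (syntactic typing implies semantic typing) by induction on typing derivations, with the call and copy-in/copy-out cases carrying the bulk of the work and the operator/cast/native cases discharged to oracle assumptions, exactly as you anticipate. The stated theorem is then read off from the fundamental theorem for statements, so your plan is essentially a faithful reconstruction of the paper's argument.
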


The proof, given in \Cref{app:safety}, is a simple but tedious proof by logical relations. \Cref{app:safety}
includes additional supporting definitions and analogous theorems for
expressions and variable declarations. Note that this is a ``weak
termination'' result: it states that a final configuration exists, but
does not (and cannot, in the language of big-step semantics) say that
all possible ways of evaluating a program will terminate.

\section{Implementation}
\label{sec:implementation}

This section presents \petra's definitional interpreter. Unlike the
mathematical semantics for \pcore developed in the last section, which
only models a subset of the language, our implementation is designed
to handle the full \pfoursix language, with a few caveats and
limitations discussed below.

\paragraph*{Overview.}

\Cref{fig:interpreter}~(a) depicts the architecture of the \petra
implementation, as well as the way that programs and packets flow
through it. We implemented \petra in OCaml, using the Menhir parser
generator, the Jane Street \texttt{Core} library, and the \texttt{js\_of\_ocaml}
OCaml-to-Javascript compiler. In total, the \petra implementation runs
13KLoC (as reported by \texttt{cloc}) of which 1.5KLoC implements
lexing and parsing, 1.5KLoC defines syntax, 4KLoC implements
typechecking, and 4.5KLoC implements evaluation/interpretation. The
remaining 1.5KLoC is miscellaneous utility code.

\paragraph{Lexer and parser}
The \pfoursix specification defines the syntax of the language with an
EBNF grammar. Unfortunately the grammar cannot be parsed by any
LALR(1) parser due to a conflict between generics and bit shifts over
the symbols `\texttt{<}' and `\texttt{>}' following identifiers. As a
workaround, the specification separates the tokens for identifiers
into two categories:

\begin{quote}
\textit{The grammar is actually ambiguous, so the lexer and the parser
  must collaborate for parsing the language. In particular, the lexer
  must be able to distinguish two kinds of identifiers: type names
  previously introduced (\texttt{TYPE\_IDENTIFIER} tokens) [and]
  regular identifiers (\texttt{IDENTIFIER} token).}
\end{quote}

\noindent Hence, the parser must keep track of rudimentary type
information as well as lexical scope, so that the lexer can produce
the correct tokens. We follow Jourdan and Pottier's approach for
implementing a parser for C11 in Menhir~\cite{jourdan2017simple}: the
parser maintains a simple context to keep track of the set of type
names, and we wrap a simple lexer that produces \texttt{NAME} tokens
with a second lexer that uses the context to rewrite those tokens into
\texttt{IDENTIFIER} or \texttt{TYPE\_IDENTIFIER} as appropriate.

\paragraph{Type checker}

P4 surface syntax leaves much to the imagination. Function calls may
omit type arguments which have to be inferred. Expressions may be used
at the ``wrong'' type, omitting implicit casts which have to be
inserted by the typechecker. Widths in numeric types, as in \pcore,
may be expressions which have to be evaluated. The \petra type checker
addresses all these issues, converting programs written in an
ambiguous surface syntax into an unambiguous internal syntax. In the
typed internal syntax tree, all nodes are tagged with their type and
all casts and type arguments are made explicit. Compile-time known expressions
are replaced with their values. The \pcore language is closer to this fully
elaborated and typed syntax, although it does retain an account of compile-time
evaluation.

The \pfoursix specification does not precisely define a type system
for the language. Key questions such as how type inference works,
where casts may be automatically inserted, and whether type equivalence is
nominal or structural are not addressed. As an example, the
specification uses the following text to introduce ``don't care''
types:

\begin{quote}
\textit{The ``don't care'' identifier (\texttt{\_}) can only be used for an
\texttt{out} function/method argument, when the value of [sic] returned in
that argument is ignored by subsequent computations. When used in
generic functions or methods, the compiler may reject the program if
it is unable to infer a type for the don't care argument.}
\end{quote}

\noindent However, aside from a brief mention of the Hindley-Milner
inference algorithm~\cite{damas1984type}, there is no explanation of
when the compiler should, if ever, be able to infer a missing type
argument. In practice, \pfourc does use a full Hindley-Milner
implementation to infer type arguments and check type equality, which
has been the source of surprising typechecking bugs~\cite{p4cbug2036}.
What is more surprising is that Hindley-Milner is unnecessary for
\pfoursix. Without solid metatheory available, the language specification
restricts type abstraction to only a few language constructs. In this
simple setting, we found that a much simpler inference algorithm can
get the job done.

The \petra inference algorithm is inspired by local type
inference~\cite{pierce-turner00}, but even LTI is a little heavyweight
for the present state of P4 generics. Where LTI collects type-type
constraints of the form $\tau_1=\tau_2$, \petra is able to stick to
variable-type constraints of the form $X=\tau$. At a call site with
missing type arguments, \petra collects constraints by checking
function arguments, solves those constraints, and then descends back
into the arguments to insert casts where appropriate. The resulting
AST contains no hidden casts or missing type arguments, which makes
life easier for the interpreter.

P4 allows implicit casts between some types. For example, the variable
initialization \lstinline{bit<8> x = 4} will typecheck even though
\lstinline{4} is an \lstinline{int} and not a \lstinline{bit<8>}. The
\petra typechecker inserts a cast and emits a type safe initialization
\lstinline{bit<8> x = (bit<8>)4}. This requires changes to the
inference algorithm to address the combination of implicit casts and
missing type arguments, since two apparently irreconcilable
constraints may be solvable with implicit casts inserted in the right spots.

P4 also includes overloading of functions and extern methods. Here the
specification restricts potential type system complexity by requiring
overloads to be resolvable by just looking at the number or names of
arguments and not their types. Our implementation handles overloading
in the code for checking function calls.

\begin{figure}
\includegraphics[width=.95\textwidth]{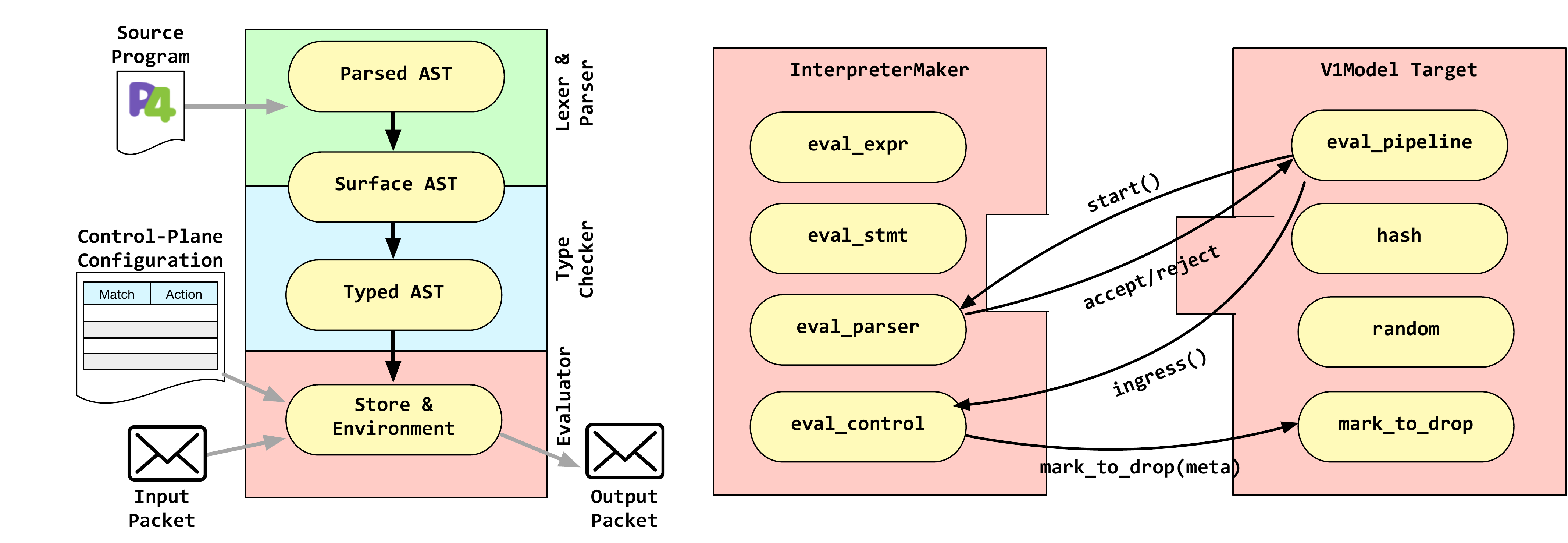}
\centerline{\null\hfill(a)\hfill\hfill(b)\hfill\null}
\caption{\petra implementation: (a) interpreter data flow; (b) architecture support via plug-ins.}
\label{fig:interpreter}
\end{figure}

\paragraph{Interpreter}

The \petra interpreter implements a big-step evaluator, following the
same basic approach as the \pcore evaluation relation
(\Cref{sec:corep4}). However, whereas \pcore uses nondeterminism to
overapproximate possible target-specific behaviors, the \petra
interpreter uses a ``plugin'' approach. The interpreter is an OCaml
functor with the following signature:
\[\texttt{functor (T : Target) $\rightarrow$ Interpreter}\]
The \texttt{Interpreter} module signature includes
functions analogous to \pcore evaluation judgments:
\texttt{eval\_declaration}, \texttt{eval\_statement}, and
\texttt{eval\_expression}.

The \texttt{Target} signature passed into the interpreter functor
defines the interface between a \pfoursix program and the architecture
it runs on. Targets offer a list of externs:
\[
\begin{array}{rl}
\texttt{extern} : & \texttt{env} \to \texttt{state} \to \texttt{type list} \to (\texttt{value} \times \texttt{type}) \texttt{list} \to \\
& \texttt{env} \times \texttt{state} \times \texttt{value}
\end{array}
\]
Each extern is modeled as an OCaml function that takes as input the
environment (\texttt{env}), store (\texttt{state}), type arguments
(\texttt{type list}), and arguments ((\texttt{value} $\times$
\texttt{type}) \texttt{list}), and returns an updated environment
(\texttt{env}), updated store (\texttt{state}), and result
(\texttt{value}). This expansive type reflects how \pfoursix externs
are allowed to do practically anything (short of modifying their
caller's local variables).

Targets must also define the implementation of the packet-processing
pipeline.
\[
\begin{array}{rl}
\texttt{eval\_pipeline} : & \texttt{ctrl} \to \texttt{env} \to \texttt{state} \to \texttt{buf} \to \texttt{apply} \to \\
& \texttt{state} \times \texttt{env} \times \texttt{pkt option}
\end{array}
\]
The pipeline evaluator takes as arguments the control-plane
configuration (\texttt{ctrl}), environment (\texttt{env}), store
(\texttt{state}), input packet (\texttt{buf}), and a hook for
interpreting parsers and controls (\texttt{apply}) and produces an
updated store (\texttt{state}), environment (\texttt{env}) and output
packet (\texttt{pkt option}). As can be seen from this type, \petra
does not currently support multicast, but adding it would be a
relatively straightforward extension.

\Cref{fig:interpreter}~(b) shows how the \texttt{Target} and
\texttt{Interpreter} pass control back and forth during execution,
using the \texttt{V1Switch} architecture as a concrete example.

The output of the \texttt{InterpreterMaker} functor is an
\texttt{Interpreter}, which defines a function for evaluating entire P4
programs:
\[
\begin{array}{rl}

 \texttt{eval\_program} : & \texttt{ctrl} \to \texttt{env} \to \texttt{state} \to \texttt{buf} \to \texttt{int} \to \texttt{prog} \to \\
& \texttt{state} \times (\texttt{buf} \times \texttt{int}) \; \texttt{option}
\end{array}
\]
It takes an initial control-plane configuration, environment, store, packet
buffer and port, along with a program, and produces an updated state and
(optional) modified packet as output.

We have used \petra to construct interpreters for two \pfoursix
architectures: V1Model and eBPF. V1Model is the most widely-used
architecture in open-source \pfoursix code. It includes a variety of
features that exercise the full range of the abstraction boundary
separating interpreter and target. The V1Model pipeline consists of 6
programmable blocks with some fixed-function compenents in between.
The eBPF architecture supports running P4 on the Linux kernel's
packet filter infrastructure. Packet filters have a simpler structure
than V1Model pipelines and support a different collection of
externs. These implementations show that our abstraction effectively
supports multiple architectures.

Adding a new architecture to \petra means writing a few OCaml
functions and datatypes. The implementer has to provide the function
\texttt{eval\_pipeline} above, which defines how control flow passes
between stages of the packet-processing pipeline. The implementer must
also provide data types to represent any \texttt{extern} objects
provided by the architecture and implement their methods. Our current
functor does not model everything left up to architectures in the
specification, but it does cover the most important points. We discuss
this further in \textit{Limitations} and leave a more
precise definition of architecture-dependent behavior to future work.

\paragraph{Control-plane APIs}

The control plane plays an important role in the execution of most
\pfoursix programs by dynamically populating the match-action tables
with forwarding entries. \petra exposes two different control-plane
APIs: one based on a serialization of table entries into JSON, and the
other the ASCII interface supported by the Simple Test Framework (STF)
tool bundled with \pfourc.

For example, the following STF test checks that sending a
packet containing a stack with a single \lstinline{hop} header whose
\lstinline{port} field and \lstinline{bos} fields are both
\lstinline{1} will cause the packet to be forwarded out on port
\lstinline{1}, provided the \lstinline{acl} table is configured to
allow the packet:
\begin{snugshade}\begin{lstlisting}[basicstyle=\footnotesize\ttfamily\color{black}]
add MyPipe.acl MyPipe.acl.ingress_port:0 MyPipe.acl.egress_port:1 MyPipe.allow()
packet 0 03FF
expect 1 FF
\end{lstlisting}\end{snugshade}

\begin{figure}
\fbox{\includegraphics[width=\textwidth]{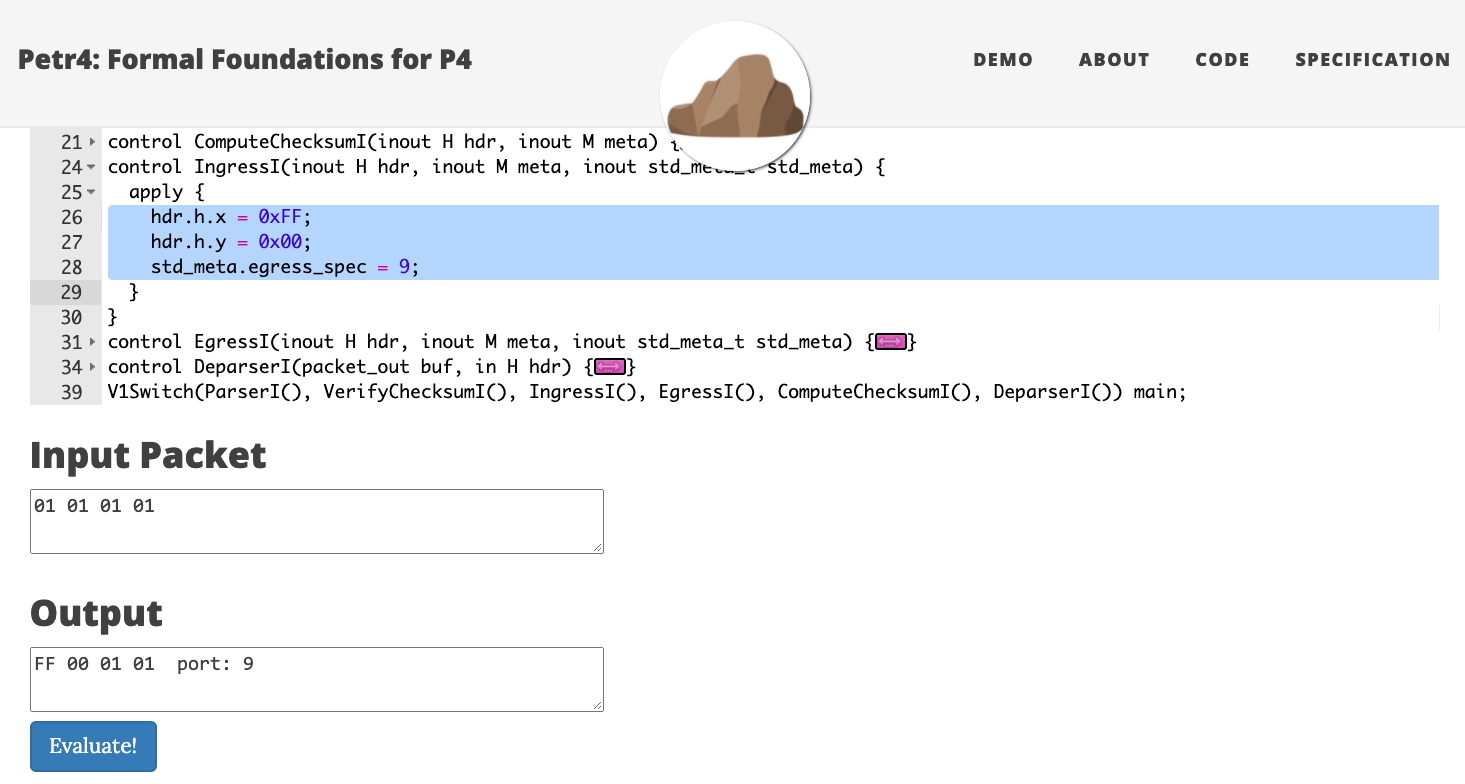}}
\caption{Petr4 interpreter running in a web browser.}
\label{fig:web}
\end{figure}

\paragraph{User interfaces}
We have equipped \petra with two user interfaces. The first provides a
simple command-line interface for \petra that supports several modes
of operation including parsing, type checking, and interpreting a P4
program. The second provides a web-based front-end that runs a P4
program directly in a browser, as shown in \Cref{fig:web}. The
web-based interface is implemented using \texttt{js\_of\_ocaml},
allowing \petra to run directly in the browser. Compared to the
open-source reference implementation, which requires compiling the
program with \pfourc to an intermediate JSON representation that can
then executed on \texttt{bmv2}, a software switch, \petra is
dramatically simpler to use. We expect that both user interfaces will
be useful in teaching P4, as they eliminate much of the overhead and
complexity associated with using \pfourc and \texttt{bmv2}---e.g.,
setting up virtual machines, installing dependencies, hooking into the
Linux networking stack, and coordinating behavior across multiple
stand-alone binaries.

\subsection{Limitations}
\label{sec:limitations}

\petra implements the vast majority of features discussed in the
\pfoursix specification. However, our current prototype does have some
important limitations. \petra implements a sequential model of
computation: this is more restrictive than the specification, which
allows for certain forms of concurrency. \petra also lacks support for
cloning and packet replication. Adding support for both of these
features should be straightforward, but will require additional
engineering both in the formalization and the implementation. \petra
largely ignores annotations, including annotations that can affect
packet processing on some architectures. \petra does not support
\texttt{abstract} externs or user-defined initialization blocks---two
recent additions to the language. \petra's implementation of the
V1Model target omits some externs, including direct-mapped objects.
Finally, while the \texttt{Target} signature exposes hooks that would
allow an implementation to customize behaviors left up to
architectures (e.g., semantics of reads/writes to invalid headers),
some behaviors have not yet been parameterized (e.g., custom
properties for match-action tables).

\section{Evaluation}
\label{sec:evaluation}

To evaluate \petra, we want to determine the correctness and utility
of its semantics and corresponding implementation. That is, we need to
determine how well they capture P4, both as it is really used by the
open-source community, and as it is described in the specification. To
this end, we first explore the results of running \petra against the
same test suite as the reference implementation. We next describe bugs
and ambiguities we discovered and addressed during development, both
in the reference implementation and in the language specification.

\paragraph{Parser and typechecker}

We have imported 792 test cases from \pfourc for the parser and
typechecker. These consist of ``good'' tests (those the typechecker
should accept) and ``bad'' tests (those the typechecker should
reject). Currently, \petra's parser passes all 792 of these. The
typechecker, on the other hand, passes 782 of these, with 10 failures
due to the following issues:

\begin{itemize}

\item A bug in the grammar requiring an additional ``lexer hack'' only recently fixed in \pfourc.

\item The \texttt{@optional} annotation for arguments, which \petra
  does not support.

\item Type casts that discard significant bits of bitstrings should
  emit a warning, but not fail. \pfourc's suite expects them to fail.

\item \pfourc rejects programs that shadow names (control-plane and
  local scope) of functions, actions, controls, tables, and parsers,
  whereas \petra is more permissive. The specification says
  the compiler ``may provide a warning if multiple resolutions are
  possible for the same name'' for some situations but does not
  require it to be a type error.

\item Implicit casts from signed to unsigned integers may turn a bad
  (negative) operand for division into a good (positive)
  value. Division with negative values is not allowed by the
  specification, so the difference with \pfourc in this case is only
  because \petra checks the sign after doing implicit casts rather
  than before.

\item Several restrictions on the structure of programs are imposed by
  \texttt{V1Model} but not enforced by \petra. For
  example, \texttt{V1Model} requires deparser code to be free of
  conditionals, but Petr4 does not enforce this kind of
  architecture-specific syntactic restriction yet.

\end{itemize}

There are an additional 110 tests imported from \pfourc which are
unsupported by our typechecker. For more detail
see \Cref{sec:limitations} above.

\paragraph{Interpreter}

Of the good checker tests, 121 are accompanied by corresponding \stf
files used to test the correctness of \pfourc's back end. As described
in section 4, our control plane API allows us to run these same tests
on our interpreter. We currently pass 95 of these \stf tests with 26
failures. Most of our failures (20 tests) are \texttt{P4} programs
written in architectures unimplemented by \petra (\texttt{PSA} and
\texttt{UBPF}). The remaining 6 utilize externs in the \texttt{EBPF}
and \texttt{V1Model} architectures that \petra also leaves
unsupported, such as multicast and the \texttt{crc16} checksum
algorithm. Some of the more interesting tests imported from \pfourc
are described in detail in \Cref{fig:eval2}. We also provide 40 of our
own custom \stf tests accumulated during test-driven development of
the interpreter that address difficult edge cases of the language we
felt the \pfourc suite did not sufficiently exercise. \petra passes
all 40 custom tests, a sample of which are described in detail in
\Cref{fig:eval1}.

\begin{figure}[t]
	\footnotesize
	\renewcommand{\arraystretch}{1.2}
\begin{tabular}{@{} >{\hspace{0pt}}p{9em} p{23em}  c p{3.3em} >{\centering}p{2.3em} c @{}}
	\toprule
	\textbf{Test File} & \textbf{Description (Features Tested)} & \textbf{LoC} & \textbf{headers (\#, bits)} & \textbf{parser states} & \textbf{tables?}\\
	\midrule
	issue2287-bmv2.p4 & apply binary operators to function calls with side-effects (operators, side-effects, copy-in/copy-out) & 95 & (3, 248) & 1 & \xmark\\

	enum-bmv2.p4 & equality test on basic enum (enums) & 44 & (1, 96) & 1 & \xmark\\

	issue1025-bmv2.p4  & call \texttt{lookahead} as argument to \texttt{extract} (\texttt{extract}, \texttt{lookahead}, variable-size bitstrings) & 176 & (3, 468) & 3 & \xmark\\

	subparser-with-header-stack-bmv2.p4 & subparser invocation while parsing a header stack (header stacks, parser application) & 168 & (7, 224) & 5 & \xmark\\

	test-parserinvalidarg-ument-error-bmv2.p4 & variable-size extract triggers parser error (variable-size bitstrings, parser errors, control-flow) & 118 & (2, 128) & 2 & \xmark\\

	table-entries-priority-bmv2.p4 & \texttt{priority} annotation affects constant table entries (table application, priority, constant table entries, \texttt{ternary}) & 89 & (1, 48) & 1 & \cmark\\

	default\_action-bmv2.p4 & table application falls through to non-trivial default action (table application, default action, control-plane interface) & 35 & (1, 64) & 1 & \cmark\\

	table-entires-ser-enum-bmv2.p4 & serializable enum appears in constant table entries (serializable enums, table application, constant table entries) & 85 & (1, 16) & 1 & \cmark\\

	checksum3-bmv2.p4 & compute checksum using \texttt{csum16} (externs) & 195 & (3, 320) & 3 & \xmark\\

	count\_ebpf.p4 & stateful extern from \texttt{ebpf\_model} architecture (stateful externs, target abstraction) & 62 & (2, 272) & 2 & \xmark\\
\bottomrule
	
\end{tabular}
	\caption{Selection from \pfourc's \stf test suite.}
	\label{fig:eval2}
\end{figure}
\begin{figure}[t]
	\footnotesize
	\renewcommand{\arraystretch}{1.1}
\begin{tabular}{@{} p{5.5em} p{27em}  c p{3.3em} >{\centering}p{2.3em} c @{}}
	\toprule
	\textbf{Test File} & \textbf{Description (Features Tested)} & \textbf{LoC} & \textbf{headers (\#, bits)} & \textbf{parser states} & \textbf{tables?}\\
	\midrule
	bitstrings.p4 & emit results of binary operators on bitstrings (bit-strings, \texttt{emit}) & 97 & (0, 0) & 1 & \xmark\\

	stack.p4 & complex operations on header stacks (header stacks) & 141 & (43, 688) & 1 & \xmark\\

	union.p4 & complex operations on header unions (header unions) & 130 & (6, 72) & 1 & \xmark\\

	scope.p4 & function name shadowing (lexical scope) & 52 & (1, 8) & 1 & \xmark\\

	error2.p4 & triggers parser errors (parser errors, control-flow) & 98 & (2, 32) & 2 & \xmark\\

	subparser.p4 & direct application of sub-parser from main parser (parser application, \texttt{verify}, control-flow) & 133 & (5, 40) & 7  & \xmark\\

	exit.p4 & \texttt{exit} statement in nested calls to actions (control-flow) & 107 & (13, 104) & 3 & \xmark\\

	subcontrol.p4 & direct application of sub-control with \texttt{exit} from egress processing (control application, control-flow) & 71 & (2, 16) & 1 & \xmark\\

	table.p4 & apply control-plane-defined table (control-plane interface, table application) & 65 & (1, 8) & 1 & \cmark\\

	table3.p4 & apply table with constant \texttt{lpm} and \texttt{ternary} entries (constant table entries, \texttt{lpm}, \texttt{ternary}, table application) & 94 & (1, 8) & 1 & \cmark\\

	switch-stmt.p4  & \texttt{switch} statement on table with constant entries (constant table entries, table application, \texttt{switch} statement) & 93 & (2, 16) & 2 & \cmark\\
\bottomrule
	
\end{tabular}
	\caption{Selection from \petra's custom \stf test suite.}
	\label{fig:eval1}
\end{figure}

In developing \petra, we uncovered bugs in \pfourc, ambiguities in the
informal \pfoursix spec, and issues with \pfourc arising from choices
it made to resolve these ambiguities. We describe some bugs here, but
see \Cref{fig:p4cbugs} and \Cref{fig:p4specbugs} for a full list. All
bugs have been reported to either the \pfoursix specification
repository or the \pfourc repository on Github.

\begin{figure}[t]
	\footnotesize
	\aboverulesep=-1em
\begin{tabular}{@{} m{5em}  m{43em} @{}}
	\toprule
	\textbf{Category} & \textbf{Issues Description} \\
	\hline
	Grammar and Parser &  
	\begin{enumerate}[label=(\alph*), leftmargin=1.1em,align=left,labelsep=-1.1em, series = p4c_issues]
	\item The parser had a conflict with the \texttt{TYPE} token where it could either reduce a \texttt{nonTypeName} out of \texttt{TYPE} or shift to recognize a newtype declaration. This problem was detected before it could manifest in the compiler since at the time, top-level functions were not yet implemented.
	 \item The parser incorrectly resolved names with a ``dot'' using the local context instead of the top-level context. 
	 \item The parser rejected actions with dot prefix.
	 \end{enumerate} \\
	\midrule
	Typing & 
	\begin{enumerate}[label=(\alph*), resume* = p4c_issues]
	\item The type system was sometimes nominal and sometimes structural. The behavior was not consistent across individual programming constructs.
	\item The type of a list expression was a tuple which inadvertently allowed tuples to be assigned to structs since list expressions are allowed to be assigned to structs.
	\item The front-end's constant folding transformed a program that was not well-typed into one that was.
	\item Tuples in set contexts are inconsistently flattened when checking matches against keys.
	\end{enumerate} \\ 
	\midrule
	Other & 
	\begin{enumerate}[label = (\alph*), resume* = p4c_issues]
	\item The compiler did not clearly enforce the constraint that the \texttt{default\_action} must appear after \texttt{actions}.
	\item The compiler did not clearly enforce that values of type \texttt{int} should all be compile-time known values.
	\item An STF test had two lines uncommented that were supposed to be commented.
	\item The compiler rejected any program with headers containing multiple \texttt{varbit} fields even though the spec only states such headers cannot be used in \texttt{extract}.
	\item The compiler did not stop compiling after encountering an error.
	\end{enumerate} \\
	\bottomrule
\end{tabular}
	\caption{\pfourc Issues}
	\label{fig:p4cbugs}
\end{figure}
\begin{figure}[t]
	\footnotesize
	\aboverulesep=-1em
\begin{tabular}{@{} m{5em}  m{43em} @{}}
	\toprule
	\textbf{Category} & \textbf{Issues Description} \\
	\hline
	Syntax & \begin{enumerate}[label=(\alph*), leftmargin=1.5em,align=left,labelsep=-1.1em, series = spec_issues]
	\item Annotations could take either an expression list or a keyValuePair list for their arguments but this made the grammar ambiguous because there was no way of telling whether the empty list was an expression list or keyValuePair list. This issue came to light before free-form annotations were added.
	\item The \pfoursix grammar required all optional type parameters to be non-type names even though there were use cases that contradicted this restriction and the compiler did not impose such a restriction.
	\item The spec does not impose a requirement on the placement of table entries even though it seems like it should so that a typechecker can process the properties in order.
	\end{enumerate}\\
	\midrule
	Types & \begin{enumerate}[label=(\alph*), resume*=spec_issues]
	\item The spec stated that the compiler does not insert implicit casts for the arguments to methods or functions. This was an undesirable restriction.
	\item The spec is too restrictive because it only permits division and modulo between positive \texttt{int} values.
	 \item The spec does not explicitly allow header unions in header stacks.
	\item The spec does not clarify whether int values can be cast to bool or not. It also does not state whether assigning a bit value to an int variable is allowed by implicitly casting the value to int.
	\end{enumerate} \\
	\midrule
	Operational & (h) The spec did not define the concatenation operator's behavior on signed and unsigned bitstrings.\\\\
	\bottomrule	
\end{tabular}

	\caption{\pfoursix Specification Issues}
	\label{fig:p4specbugs}
\end{figure}

\paragraph*{Grammar and parser.}

The \pfoursix grammar allowed annotations to either take an expression
list or a list of key-value pairs for their arguments. This approach
introduced an ambiguity into the grammar: there was no way to
discern whether an empty list was an expression list or a key-value
pair list. We eliminated this ambiguity by allowing
the annotations to take a non-terminal in the grammar called
\texttt{argumentList} in which each argument could be either an
expression or a key-value pair. Additionally, this simplification
allowed for more flexible behavior---mixing expression and key-value
pair arguments---that strictly contained the previous behavior without
introducing any new problems.

Even before support for top-level functions was implemented, we
discovered a conflict between function declarations and newtype
declarations. (A newtype declaration \lstinline{type name_t old_t}
creates an opaque type alias \lstinline{name_t} for the
type \lstinline{old_t}, like \lstinline{newtype} in Haskell.)
The P4 grammar begins both function and newtype declarations with a
token sequence \texttt{TYPE TYPE\_IDENTIFIER}.
The \texttt{TYPE} token corresponded to not only a \texttt{nonTypeName}
in the function declaration, but also the \texttt{type} keyword in
the newtype declaration. Thus, the parser could either reduce a
\texttt{nonTypeName} out of \texttt{TYPE} or shift to recognize a
newtype declaration.

\paragraph*{Type checker.}

We found multiple discrepancies between \pfourc and the \pfoursix specification
with respect to typechecking. \pfourc rejected any program with
headers containing multiple \texttt{varbit} fields. However, the specification
only requires that such headers cannot be used in
\texttt{extract}. Since P4 implementations are allowed to provide
extensions that could make such headers useful, \pfourc's
restriction was too strong compared to the spec. Conversely, the specification
is too restrictive in permitting division and modulo between
positive \texttt{int} values only, whereas \pfourc relaxes this constraint
to permit the same operations between \texttt{bit} values.

\paragraph*{Semantics of P4 constructs.}

The P4 specification originally imposed a restriction on inserting
implicit casts for the arguments to methods or functions. Implicit
casts were intended to reduce the friction for the programmer and
allow her to use constants naturally. Thus, the restriction was rather
undesirable and was lifted when the specification was amended to allow implicit
casts on \texttt{in} arguments for methods and functions. We also
influenced another amendment to define explicitly the
concatenation operator's behavior for both signed and
unsigned bitstrings. Prior to this change, it was unclear whether the
concatenation operator was supported for signed bitstrings until we
found that \pfourc allowed it.

\paragraph*{Inconsistent type system, buggy inference, untyped constant folding.}

These bugs emphasize subtle ways in which handling types can be tricky
to get right. First, we discovered that the type system was
inconsistent in that it was sometimes nominal and sometimes
structural. For example, controls were structural in architecture
definitions, and nominal elsewhere. The solution we implemented was to
distinguish a control from the type of the control. Consequently, a
control could no longer be used as the type of a parameter. In
conjunction, a tuple cannot be assigned to a struct where list
expressions can, and in fact have a tuple type. This subtlety allowed
tuples to be assigned to structs. This was fixed by checking for a
tuple type by means of a struct-like type conversion and introducing a
new type internally for list expressions. Another example of a subtle
type issue was in \pfourc's constant folding. Because it was not
paying enough attention to types, it transformed an ill-typed program
into a well-typed one in some cases.

\section{Case study: adding type-safe unions to P4}
\label{sec:unions}
In this section, we exercise our formal semantics by adding union
types to P4.
Uncertainty about the safety of language extensions is a perennial
concern for the P4 Language Design Working Group, resulting in
language features which are hamstrung or, worse yet, buggy.
With formal semantics, we can prove adding a feature is safe
by defining and proving correct a translation from the augmented
language back into the original one.

P4 already has a restricted form of unions, but they can only contain
header types and lack a type-safe elimination form.
To address this shortcoming, we define an extension of P4 with tagged
unions and formalize its semantics.  We then define a translation from
P4 with unions into standard P4 and prove that the translation
preserves program semantics.

\paragraph*{Syntax.}
We extended the syntax to allow the declaration of a union type. We
allow assignment to union fields, and extended the statements with a
switch statement where cases are either union fields or
default. Finally, we add union values which consist of the union type,
its ``active'' field, and that field's value.

\[
\begin{grammar}
\Psyndef{\tau}{\dots~|~\Punion{X}{\Pmany{\tau~f}}}{}
\Psyndef{\Pstmt}{\dots~|~\Pswitch{\Pexp}{\Pmany{\Pswitchcase{\Plbl}{\Pblk}}}}{}
\Psyndef{\Plbl}{\dots~|~\kw{default}}{}
\Psyndef{\Pval}{\dots~|~\Punionval{X}{f, \Pval}}{}
\end{grammar}
\]

\paragraph*{Typing rules.} We need two new typing rules for statements that include unions (as well as an auxiliary judgment \kw{switchaseok} to check the branches---see the appendix for details).

\begin{mathpar}
\Pinfer{T-Union}{
\Sigma,\Gamma,\Delta \vdash \Pexp_1 : \Punion{X}{\Pmany{\tau~f}} \\\\
\Sigma,\Gamma,\Delta  \vdash \Pexp_2 : \tau_i
}{\Sigma,\Gamma,\Delta \vdash \Pexp_1.f_i
= \Pexp_2 \dashv \Sigma,\Gamma}

\Pinfer{T-Switch}{
\Sigma,\Gamma,\Delta \vdash \Pexp : \Punion{X}{\Pmany{\tau~f}} \\\\
\Pmany{\Plbl \in \{ \Pmany{f}, \kw{default} \}}  \\\\
\Sigma,\Gamma,\Delta \vdash \kw{switchcaseok}(\Pmany{\tau~f}, \Pmany{\Pswitchcase{\Plbl}{\Pblk}})
}{
\Sigma,\Gamma,\Delta \vdash \Pswitch{\Pexp}{\Pmany{\Pswitchcase{\Plbl}{\Pblk}}} \dashv \Sigma,\Gamma
}
\end{mathpar}

\paragraph*{Evaluation rules.}
Union variables are initialized upon declaration ($\kw{init}_\Ptypedefs~X = \Punionval{X}{f_0, \kw{init}_\Ptypedefs~\tau_0}$). A union's value is modified by assigning a value to one of its fields:
\begin{mathpar}
\Pinfer{E-Union}{
\Pstmtconf{\Pcp,\Ptypedefs}{\Pstore,\Penv}{\Pexp_1}
\Downarrow_{\mathit{\Plval}}
\langle\Pstore_1,\Plval\rangle
\\\\
\Peval{\Pcp,\Ptypedefs,\Pstore_1,\Penv,\Pexp_2}
      {\Pstore_2,\Pval}
\\
\Passignlval{\Pstore_2}{\Penv}{\Passign{\Plval.f}{\Pval}}{\Pstore_3}
}{
\Peval{\Pcp,\Ptypedefs,\Pstore,\Penv,\Passign{\Pexp_1.f_i}{\Pexp_2}}
      {\Pstore_3,\Penv,\Pcont}
}
\end{mathpar}
To evaluate a union switch statement on a union with value $\Punionval{X}{f_i, \Pval_i}$, we match $f_i$ against the labels. If it matches a label other than \kw{default}, we evaluate the corresponding block in an environment where $f_i$ maps to $\Pval_i$ (\textsc{E-UnionSwitch}). Note that we can rewrite the blocks so that they have no local variable declaration with the same name as their corresponding label, so we don't violate our naming convention. If a \kw{default} is provided and $f_i$ matches no other label, we proceed with evaluating the corresponding block in the same environment.
If no \kw{default} is provided and there is no match, we skip.
\begin{mathpar}
\Pinfer{E-UnionSwitch}{
\Pstmtconf{\Pcp,\Ptypedefs}{\Pstore,\Penv}{\Pexp} \Downarrow_{\mathit{\Plval}} \langle\Pstore_1,\Plval\rangle\\
\Penv(\Plval) = \ell_1 \\
\Pstore_1(\ell_1) = \Punionval{X}{f_i, \Pval_i} \\\\
\kw{match\_union\_case}~(\Pmany{\Pswitchcase{\Plbl}{\Pblk}}, f_i) = (f_i, k)\\
\Pfresh{\ell_2}\\\\
\Pstore_2 = \Pstore_1[\Pheapset{\ell_2}{\Pval_i}] \\
\Peval{\Pcp,\Ptypedefs, \Pstore_2,\Penv[\Penvset{f_i}{\ell_2}], \Pbraces{\Pmany{\Pstmt}_k}}
      {\Pstore_3,\Penv[\Penvset{f_i}{\ell_2}],sig}
}{
\Peval{\Pcp,\Ptypedefs,\Pstore,\Penv,\Pswitch{\Pexp}{\Pmany{\Pswitchcase{\Plbl}{\Pblk}}}}
      {\Pstore_3,\Penv,sig}
}
%

%
%
\end{mathpar}

\paragraph*{Translation to standard P4} We use records in standard P4 to implement unions. $\Puniontrans{\cdot}$ translates from P4 extended with unions to P4.
Expressions are not changed in the extended language. As such, $\Puniontrans{\Pexp} = \Pexp$.
Declaring a new union type translates to a typedef:
$$\Puniontrans{\Punion{X}{\Pmany{\tau~f}}} = \Ptypedef{\Pbraces{tag:\Pbittype{n}, \Pmany{f:\tau}}}{X}$$
Here, $tag$ keeps track of the ``active'' union field. Declaring a new variable of the union type translates to declaring a new variable of the corresponding record type.
$$\Puniontrans{\Pvardecl{X}{x}} = \Pvardecl{X}{x}, \Passign{x.tag}{0}, \Pmany{\Passign{x.f_i}{\kw{init}_\Ptypedefs~\tau_i}}$$
The only extensions to statements are assignment to union fields, and the union switch. Thus, except for the following cases, $\Puniontrans{\Pstmt} = \Pstmt$. In the clause for assignment the field names $\Pmany{f_j}$ range over all $\Pmany{f}$ except $f_i$.
\begin{mathpar}
\begin{array}{rl}
\Puniontrans{\Passign{\Pexp_1.f_i}{\Pexp_2}}
= & \Passign{\Pexp_1}{\Pbraces{tag:\Pbittype{n} = i, f_i:\tau_i = \Pexp_2, \Pmany{f_j:\tau_j = \kw{init}_\Ptypedefs~\tau_j}}} \\
\Puniontrans{\Pswitch{\Pexp}{\Pmany{\Pswitchcase{\Plbl}{\Pblk}}}} = & \Pvardeclinst{X}{tmp}{\Pexp}; \\
& \kw{if}~(c_0)~\Pbraces{b_0} \dots \kw{else\ if}~(c_n)~\Pbraces{b_n}~\kw{else}~\Pbraces{}
\end{array}
\end{mathpar}
where $(c_i, b_i) = \kw{trans\_union\_case}~(\Plbl_i, \Pmany{\Pstmt}_i)$ and:
\begin{mathpar}
\begin{array}{rl}
\kw{trans\_union\_case}~(\kw{default}, \Pmany{\Pstmt}) & = (\kw{true}, \Puniontrans{\Pmany{\Pstmt}}) \\
\kw{trans\_union\_case}~(f_j, \Pmany{\Pstmt}) & =  (tmp.tag == j, \Pvardeclinst{\tau_j}{f_j}{tmp.f_j}, \Puniontrans{\Pmany{\Pstmt}})
\end{array}
\end{mathpar}
Note that $tmp$ is a fresh variable name. Union values are translated to records:
\begin{mathpar}
\Puniontrans{\Punionval{X}{f_i, \Pval_i}} =  \Pbraces{tag:\Pbittype{n} = i, f_i:\tau_i = \Pval_i, f_j:\tau_j = \kw{init}_\Ptypedefs~\tau_j}
\end{mathpar}
For records, headers, and header stacks that are inductively built from other values, we have:
\begin{mathpar}
\begin{array}{rl}
\Puniontrans{\Prec{\Pmany{f:\tau=\Pval}}} = & \Prec{\Pmany{f:\tau=\Puniontrans{\Pval}}} \\
\Puniontrans{\Pheaderval{\Pmany{f:\tau=\Pval}}} = & \Pheaderval{\Pmany{f:\tau=\Puniontrans{\Pval}}} \\
\Puniontrans{\Pstackval{\tau}{\Pmany{\Pval}}} = & \Pstackval{\tau}{\Pmany{\Puniontrans{\Pval}}}
\end{array}
\end{mathpar}
For all other values, $\Puniontrans{\Pval} = \Pval$. We translate stores by translating their range: $\Puniontrans{\Pstore}$ has the same domain as $\Pstore$. If $\Pstore(l) = \Pval$, then $\Puniontrans{\Pstore}(l) = \Puniontrans{\Pval}$.

\paragraph*{Translation property.} We prove in \Cref{app:union-proof} that the translation function is semantics-preserving. Specifically, we prove the following theorem by induction on the statement evaluation rules, and a case analysis on the last rule in the derivation.
\begin{theorem}
If $\Peval{\Pcp,\Ptypedefs,\Pstore,\Penv,\Pstmt}
      {\Pstore',\Penv', sig}$,
then $\Peval{\Pcp,\Ptypedefs,\Puniontrans{\Pstore},\Penv,\Puniontrans{\Pstmt}}
      {\Pstore_t,\Penv_t, sig}$ and
$\Penvlt{\Puniontrans{\Pstore'}}{\Penv'}{\Pstore_t}{\Penv_t}$.
We say $\Penvlt{\Pstore_1}{\Penv_1}{\Pstore_2}{\Penv_2}$ if $\Penv_1$'s domain is a subset of $\Penv_2$'s domain, and for all $\Plval$ in $\Penv_1$'s domain, if $\Penv_1(\Plval) = \ell_1$ and $\Pstore_1(\ell_1) = \Pval$, then $\Penv_2(\Plval) = \ell_2$ and $\Pstore_2(\ell_2) = \Pval$.
\end{theorem}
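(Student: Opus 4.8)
The plan is to prove the theorem by simultaneous induction on the derivations of statement evaluation, expression evaluation, and l-value evaluation (together with the auxiliary copy-in/copy-out and declaration judgments), since a call expression drives statement evaluation of a function body and that body may itself contain the new union constructs, so the three judgments are mutually recursive. For each judgment I would carry a translation invariant. For expressions the companion statement is: if $\Pexpreval{\Pcp}{\Ptypedefs}{\Pstore}{\Penv}{\Pexp}{\Pstore'}{\Pval}$, then $\Pexpreval{\Pcp}{\Ptypedefs}{\Puniontrans{\Pstore}}{\Penv}{\Pexp}{\Pstore_t}{\Puniontrans{\Pval}}$ with $\Penvlt{\Puniontrans{\Pstore'}}{\Penv}{\Pstore_t}{\Penv}$, and analogously for l-values. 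Since $\Puniontrans{\Pexp}=\Pexp$ no expression is syntactically rewritten; the content of these lemmas is that evaluation commutes with $\Puniontrans{\cdot}$ on values and stores.

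Because locations are drawn fresh and nondeterministically, the two runs allocate different locations, and the translated run additionally allocates locations with no counterpart on the extended side (the $tmp$ of a switch, the per-branch field binding). Hence I cannot hope for store equality, only for the compatibility relation $\Penvlt{\cdot}{\cdot}{\cdot}{\cdot}$, which compares values reachable through the environment and tolerates both a larger target environment and (via its existential on $\ell_2$) differing locations. To make the induction compose across the sub-evaluations threaded inside each rule, most visibly in \textsc{E-Block}, I would first generalize the statement so the two executions may begin from any $\Penvlt{\cdot}{\cdot}{\cdot}{\cdot}$-related pair of configurations, not merely from $\Pstore$ and $\Puniontrans{\Pstore}$ under a shared $\Penv$; the stated theorem is then the reflexive instance. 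The enabling device is a coherence lemma, proved within the same induction: evaluation respects $\Penvlt{\cdot}{\cdot}{\cdot}{\cdot}$. Here I would synchronize the nondeterministic fresh-location choices so that corresponding allocations agree and the surplus target allocations are fresh, which keeps stored closure values equal on the shared domain while the domain-subset clause absorbs the extra bindings.

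With this scaffolding the standard cases are routine: for \textsc{E-If}, \textsc{E-Block}, \textsc{E-Assign}, \textsc{E-Return}, \textsc{E-Empty}, \textsc{E-Exit}, the calls, and the variable declarations, $\Puniontrans{\cdot}$ is the homomorphic image, so I apply the induction hypotheses to the sub-derivations and reassemble, using coherence to realign intermediate stores that differ only up to $\Penvlt{\cdot}{\cdot}{\cdot}{\cdot}$. The two genuinely new cases carry the argument. For \textsc{E-Union}, the extended run overwrites the union l-value with $\Punionval{X}{f_i,\Pval}$, whose image under $\Puniontrans{\cdot}$ is precisely the record $\Pbraces{tag:\Pbittype{n}=i, f_i:\tau_i=\Puniontrans{\Pval}, \Pmany{f_j:\tau_j=\kw{init}_\Ptypedefs~\tau_j}}$ that $\Puniontrans{\Passign{\Pexp_1.f_i}{\Pexp_2}}$ assigns wholesale, so I invoke the expression lemma on $\Pexp_2$ and the l-value lemma on $\Pexp_1$ and observe the two stores agree at that location after translation. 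For \textsc{E-UnionSwitch}, the scrutinee holds $\Punionval{X}{f_i,\Pval_i}$, so in the translated cascade every guard $tmp.tag==j$ with $j\ne i$ is false and $tmp.tag==i$ is true; the selected branch binds a fresh local $f_i$ to $tmp.f_i=\Puniontrans{\Pval_i}$ and runs $\Puniontrans{\Pbraces{\Pmany{\Pstmt}_k}}$, mirroring the extended rule's binding of $f_i$ to $\Pval_i$ before running $\Pbraces{\Pmany{\Pstmt}_k}$. I then apply the statement hypothesis to the block; the extra $tmp$ binding is absorbed by $\Penvlt{\cdot}{\cdot}{\cdot}{\cdot}$, and the stated pre-renaming of blocks so no local shadows its label prevents capture. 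The $\kw{default}$ and no-match variants go the same way: the default's $\kw{true}$ guard, placed last, fires exactly when no explicit label matches, and the empty trailing $\kw{else}$ reproduces the skip.

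The main obstacle is the relational bookkeeping forced by the fresh-location mismatch: because $\Penvlt{\cdot}{\cdot}{\cdot}{\cdot}$ is observational rather than an equality, every case must be pushed through the coherence lemma proved simultaneously with the translation statement, and the invariant must be strong enough to survive both fresh allocation and the surplus bindings introduced by the switch translation. Two further points demand care and are where I expect the real friction. First, the call cases reduce to the induction hypothesis only if the closures reached during evaluation already carry translated bodies, so I would read $\Puniontrans{\cdot}$ as applied uniformly (equivalently, restrict attention to stores arising from a translated program). Second, a closure created inside a translated switch branch captures the auxiliary $tmp$, so the equality of stored closure values demanded by $\Penvlt{\cdot}{\cdot}{\cdot}{\cdot}$ must be understood up to such harmless environment extensions; handling this cleanly is the most delicate part of the coherence argument, while the union-switch reduction itself, once the machinery is in place, is intricate but local.
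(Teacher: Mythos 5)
Your overall blueprint matches the paper's: commutation lemmas for expressions and l-values, induction on statement evaluation with the two union rules as the substantive cases, synchronized choices of fresh locations, and use of $\Penvlt{\cdot}{\cdot}{\cdot}{\cdot}$ to absorb the surplus $tmp$ and per-branch field bindings. But your key technical devices differ genuinely from the paper's. The paper keeps the expression and l-value lemmas as \emph{exact} commutation statements: a separate lemma shows any value produced by expression evaluation satisfies $\Puniontrans{\Pval}=\Pval$, and the target run's output store is literally $\Puniontrans{\Pstore'}$; the observational relation appears only in the statement-level theorem. You instead relativize everything to $\Penvlt$-related configurations, prove a coherence lemma simultaneously, and recover the theorem as the reflexive instance. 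Your generalization is not gratuitous: the paper's exact-correspondence induction hypothesis cannot, as stated, be applied to the tail of a block once a union switch has executed (the target store then contains $tmp$ locations and is no longer literally a translated store), a gap the paper hides under ``the cases for other P4 statements are trivial.'' So your simulation-style strengthening buys the compositionality that the paper's write-up lacks; what it costs is exactly the relational bookkeeping you describe.

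The one place your route runs into real trouble is the closure issue you flag at the end. $\Penvlt$ demands \emph{equality} of stored values, and a closure created inside a translated switch branch captures $tmp$ while its source counterpart does not; ``equality up to harmless environment extensions'' is therefore a strictly weaker relation, and a proof establishing only that weaker relation does not deliver the theorem as stated. The paper's device here is worth adopting instead of weakening the relation: it \emph{augments the source run}, re-running the branch from $\Penv[\Penvset{tmp}{\ell}][\Penvset{f_i}{\ell_2}]$ with $\ell$ holding the union value $\Punionval{X}{f_i, \Pval_i}$, which is harmless because $tmp$ is fresh and unused in the original block. With both runs executing the branch under identical variable-to-location maps, any closures created inside are syntactically equal, the induction hypothesis applies, and the extra $tmp$ binding is peeled off at the very end using the domain-subset clause of $\Penvlt$. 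Combining that augmentation trick with your generalized, simulation-style induction hypothesis would give a proof that is both compositional and faithful to the strict relation in the statement.
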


\section{Related work}

The problem of formalizing the semantics of a language is one of the
oldest problems in our field, and it remains an active and relevant
area of research today. This section briefly reviews some of the most
closely related work.

\paragraph{Semantics for industry languages}

Formal models have recently been developed for a growing number of
practical languages used in industry. Pioneering work by Milner,
Tofte, Harper, and MacQueen developed a formal definition of Standard
ML, one of the first languages to be given such a
treatment~\cite{standard-ml}. More recently, a number of prominent
efforts have developed semantics for languages as complex and diverse
as JavaScript~\cite{guha:js,park-stefanescu-rosu-2015-pldi},
WebAssembly~\cite{wasm17}, C~\cite{leroy2009formal},
x86-TSO~\cite{sewell2010x86}, and the POSIX shell~\cite{Greenberg20Smoosh}.
Like \petra, these efforts build on decades of foundational work
in semantics~\cite{scott1971toward,plotkin1981structural,kahn87} and
semantics engineering~\cite{JFP10}. Recent work by Ruffy~et~al.\ has
profitably combined fuzzing and translation validation to find
numerous bugs in \pfourc~\cite{ruffy-osdi2020}. Their Gauntlet
translation validator defines the behavior of P4 programs by an
SMT-LIB encoding, making program equivalence checkable with a single
Z3 query. Our work focused on building a reusable semantics which
could, for example, verify the translation used in Gauntlet. Of
course, the translation in Gauntlet could also be productively applied
to fuzzing the Petr4 interpreter.

\paragraph{Semantics for networks.}

In the networking context, Sewell et al. developed mechanized formal
models of TCP and UDP~\cite{netsem} using HOL4. A key challenge was
designing a ``loose'' semantics that could accommodate the
implementation choices made by different network stacks. The same
issue arises in \petra when modeling architecture-specific features,
such as read and write operations to invalid headers. Guha, Reitblatt,
and Foster developed a verified compiler from NetCore, a high-level
policy language, to OpenFlow, an early software-defined networking
standard~\cite{guha13}. Another line of recent work has focused on
eBPF, the packet-processing framework supported in the Linux kernel.
The JitK compiler~\cite{jitk} uses a machine-verified just-in-time
compiler to generate code that is guaranteed to satisfy the safety
conditions enforced by the kernel verifier, while JitSynth leverages
program synthesis~\cite{jitsynth}.

\paragraph{Network verification}

As mentioned in \Cref{sec:introduction}, there is a growing body of
work focused on data plane and control plane verification, including
Header Space Analysis (HSA)~\cite{hsa}, Anteater~\cite{anteater},
NetKAT~\cite{netkat}, Batfish~\cite{batfish},
Minesweeper~\cite{minesweeper}, and ARC~\cite{arc}, to name a few. Other
tools have applied techniques such as predicate
transformers~\cite{p4v}, symbolic execution~\cite{vera,p4pktgen}, or
translation into another language, such as
Datalog~\cite{mckeown2016automatically}, to verify P4 programs.
However, none of these tools are based on a foundational semantics
like \petra---they either rely on ad hoc models or rely on an existing
implementation such as \pfourc. Kheradmand and Rosu developed an
operational model for P4 in the K
framework~\cite{kheradmand2018p4k}. The P4K project implemented
P4$_{14}$, which has substantially different syntax and semantics from
\pfoursix, and provided an interpreter without an accompanying type
system. The interpreter is implemented in the K framework, which was
able to produce verification and translation validation tools
automatically from the interpreter definition. However, the encoding
in K limits its reusability outside of the K framework.

\section{Conclusion and future work}

This paper introduced \petra, a formal framework that models the
semantics of P4. We developed a clean-slate definitional interpreter
for P4 as well as a formal calculus that models the essential features
of the language. The implementation has been validated against over
750 tests from the reference implementation and the calculus was
validated with a proof of type-preserving termination.

In the future, we would like to extend our calculus to model the full
language and publish it as the official specification of the language
for the P4 community. We believe it would be a valuable resource for
designers, compiler writers, and application programmers alike.
Concretely, we would like to close the gap between our definitional
interpreter and calculus, obtaining a formal semantics that covers the
entire language. It would also be attractive to have a mechanized
semantics so the reference interpreter can be extracted from the
formalization. Toward this end, we have begun porting our definitional
interpreter to Coq. We do not foresee any major technical challenges
and believe it should be possible to complete this task
quickly---i.e., in a matter of weeks or months---though porting our
type soundess and termination proofs will take longer. The biggest
obstacles are likely to be related to architectures and extern
functions, which are straightforward to handle in principle but
somewhat tedious to implement in practice. Looking further ahead, we
eventally hope to use our Coq formalization to develop a verified
compiler for P4. We are also interested in using \petra to guide
development of further enhancements to P4---e.g., designing a smaller
core language to streamline development of tools, and adding full
support for generics and a module system to the language.

\begin{acks}
We are grateful to the POPL '21 reviewers for their feedback and many
suggestions for improving this paper. We wish to thank Chris Sommers
for many discussions on formalizing P4, and Michael Greenberg for
advice on presenting this work. Our work has been supported in part by
the National Science Foundation under grant FMiTF-1918396, the Defense
Advanced Research Projects Agency (DARPA) under Contract
HR001120C0107, and gifts from Keysight, Fujitsu, and InfoSys.
\end{acks}

\bibliography{popl21}

\appendix
\newpage
\section{Additional Judgments}
\label{app:judgment}
\subsection{Type declaration typing}
\begin{figure}
\footnotesize
\begin{mathpar}
\Pinfer{T-TypeDefDecl}{
}{
\Sigma,\Gamma,\Delta \vdash
\Ptypedef{\tau}{X} \dashv
\Sigma,\Gamma,\Delta[X = \tau']
}
\and
\Pinfer{T-EnumDecl}{ }{
\Sigma,\Gamma,\Delta \vdash
\Penumdecl{X}{\overline{f}} \dashv
\Sigma,\Gamma,\Delta[X = \Penumdecl{X}{\overline{f}}]
}
\and
\Pinfer{T-ErrorDecl}{ }{
\Sigma,\Gamma,\Delta \vdash
\Perrordecl{\Pmany{f}} \dashv
\Sigma,\Gamma,\Delta[\Perror = \Perrordecl{\Pmany{f}}]
}
\and
\Pinfer{T-MatchKindDecl}{ }{
\Sigma,\Gamma,\Delta \vdash
\Pmatchkinddecl{\Pmany{f}} \dashv
\Sigma,\Gamma,\Delta[\Pmatchkind = \Pmatchkinddecl{\Pmany{f}}]
}
\end{mathpar}
\caption{Type declaration typing rules.}
\label{fig:type-decl-typing}
\end{figure}
\subsection{Type declaration evaluation}
\begin{figure}
\begin{mathpar}
\Pinfer{E-TypeDefDecl}{
}{
\Peval{\Pcp,\Ptypedefs,\Pstore,\Penv, \Ptypedef{\tau}{X}}
      {\Ptypedefs[X = \tau],\Pstore,\Penv,\Pok}
}
\and
\Pinfer{E-EnumDecl}{ }{
\Peval{\Pcp,\Ptypedefs,\Pstore,\Penv, \Penumdecl{X}{\overline{f}}}
      {\Ptypedefs[X = \Penumdecl{X}{\overline{f}}],\Pstore,\Penv,\Pok}
}
\and
\Pinfer{E-ErrorDecl}{ }{
\Peval{\Pcp,\Ptypedefs,\Pstore,\Penv, \Perrordecl{\Pmany{f}}}
      {\Ptypedefs[\Perror = \Perrordecl{\Pmany{f}}],\Pstore,\Penv,\Pok}
}
\and
\Pinfer{E-MatchKindDecl}{ }{
\Peval{\Pcp,\Ptypedefs,\Pstore,\Penv, \Pmatchkinddecl{\Pmany{f}}}
      {\Ptypedefs[\Pmatchkind = \Pmatchkinddecl{\Pmany{f}}],\Pstore,\Penv,\Pok}
}
\end{mathpar}
\caption{Semantics for type declarations.}
\label{fig:type-decl-semantics}
\end{figure}
\subsection{Value typing}
\begin{figure}
\begin{mathpar}
\Pinfer{TV-Rec}{
\Xi,\Sigma,\Delta \vdash \Pmany{\Pval:\tau}
}{
\Xi,\Sigma,\Delta \vdash \Prec{\Pmany{f=\Pval}}:\Prectype{\Pmany{f:\tau}}
}
\and
\Pinfer{TV-Hdr}{
\Xi,\Sigma,\Delta \vdash \Pmany{\Pval:\tau}
}{
\Xi,\Sigma,\Delta \vdash \Pheaderval{\Pmany{f:\tau=\Pval}}:\Pheader{\Pmany{f:\tau}}
}
\and
\Pinfer{TV-MemEnum}{
\Delta(X)=\Penumtype{X}{\Pmany{f}}
}{
\Xi,\Sigma,\Delta \vdash X.f:\Penumtype{X}{\Pmany{f}}
}
\and
\Pinfer{TV-MemMk}{
\Delta(\Pmatchkind)=\Pmatchkindtype{f}
}{
\Xi,\Sigma,\Delta \vdash \Pmatchkind.f:\Pmatchkind
}
\and
\Pinfer{TV-MemErr}{
\Delta(\Perror)=\Perrortype{f}
}{
\Xi,\Sigma,\Delta \vdash \Perror.f:\Perror
}
\and
\Pinfer{TV-Stack}{
\kw{len}(\Pmany{\Pval})=n \\
\Xi,\Sigma,\Delta \vdash \Pmany{\Pval}:\tau
}{
\Xi,\Sigma,\Delta \vdash \Pstackval{\tau}{\Pmany{\Pval}}: \tau[n]
}
\and
\Pinfer{TV-Clos}{
\Xi,\Sigma,\Delta\vdash\Penv:\Gamma \\
\Sigma,\Gamma[\Pmany{x:\tau}],\Delta[\Pmany{X~\kw{var}}] \vdash \Pmany{\Pdecl} \dashv \Sigma',\Gamma',\Delta' \\
\Sigma',\Gamma'[\kw{return}=\tau],\Delta' \vdash \Pstmt \dashv \Sigma'',\Gamma''
}{
\Xi,\Sigma,\Delta \vdash \Pclos{\Penv}{\Pmany{X}}{\Pmany{d~x:\tau}}{\tau}{\Pmany{\Pdecl}~\Pstmt}:\Pfunctiontype{\Pmany{X}}{\Pmany{d~x:\tau}}{\tau}
}
\and
\Pinfer{TV-Native}{
\mathcal{N} \vdash x: \Psimplefunctiontype{\Pmany{d~x:\tau}}{\tau}
}{
\Xi,\Sigma,\Delta \vdash \Pnative{x}{\Pmany{d~x:\tau}}{\tau}:\Psimplefunctiontype{\Pmany{d~x:\tau}}{{\tau}}
}
\and
\Pinfer{TV-CClos}{
\Xi,\Sigma,\Delta\vdash\Penv:\Gamma
\\\\
\Sigma,\Gamma[\Pmany{x_c:\tau_c}][\Pmany{x:\tau}],\Delta \vdash \Pmany{\Pdecl} \dashv \Sigma_1,\Gamma_1,\Delta_1
\\
\Sigma_1,\Gamma_1[\Preturnvoid:\{\}],\Delta_1 \vdash \Pstmt \dashv \Sigma_2,\Gamma_2
}{
\Xi,\Sigma,\Delta \vdash
\Pcclos{\Penv}{\Pmany{d~x:\tau}}{\Pmany{x_c:\tau_c}}{\Pmany{\Pdecl}}{\Pstmt}
 : \Pctortype{\Pmany{x_c:\tau_c}}{\Psimplefunctiontype{\Pmany{d~x:\tau}}{\{\}}}
}
\and
\Pinfer{TV-Table}{
\Xi,\Sigma,\Delta\vdash\Penv:\Gamma \\
\Sigma,\Gamma,\Delta \vdash \Pmany{\Pexp:\tau_k} \\
\Sigma,\Gamma,\Delta \vdash \Pmany{x:\kw{match\_kind}} \\
\Sigma,\Gamma,\Delta \vdash \Pmany{act~\kw{act\_ok}}
}{
\Xi,\Sigma,\Delta \vdash \Ptableval{\ell}{\Penv}{\Pmany{\Pexp:x}}{\Pmany{act}}:\Ptabletype
}
\and
\Pinfer{Type-Partial-App}{
\Sigma,\Gamma,\Delta \vdash x_{\mathit{act}} : \Pfunctiontype{}{\Pmany{d~x:\tau}, \Pmany{\Pin~x_p:\tau}}{\{\}} \\
\Sigma,\Gamma,\Delta \vdash \Pmany{\Pexp:\tau\Pgoes{d}}
}{
\Sigma,\Gamma,\Delta \vdash x_{\mathit{act}}(\Pmany{\Pexp},\Pmany{x_c:\tau}):~\kw{act\_ok}
}
\end{mathpar}
\caption{Value typing rules.}
\label{fig:value-typing}
\end{figure}

\subsection{Compile Time Evaluation Rules}
\begin{mathpar}
\Pinfer{CTE-Bool}{  }{\Pcteval{\Sigma, b}{b}}
\and
\Pinfer{CTE-Bit}{ }{ \Pcteval{\Sigma, \Pint{n}{w}}{\Pint{n}{w}} }
\and
\Pinfer{CTE-Var}{\Sigma(x) = \Pval}{  \Pcteval{\Sigma, x}{ \Pval } }
\and
\Pinfer{CTE-UOp}{ \Pcteval{\Sigma, \Pexp}{ \Pval } }{  \Pcteval{\Sigma, {\ominus}\Pexp}{ {\ominus}\Pval } }
\and
\Pinfer{CTE-BinOp}{
\Pcteval{\Sigma, \Pexp_1}{\Pval_1} \\
\Pcteval{\Sigma, \Pexp_2}{\Pval_2} }{
\Pcteval{\Sigma, \Pexp_1\oplus\Pexp_2}{\Pval_1\oplus\Pval_2}
}
\end{mathpar}

\subsection{L-value writing.}
\begin{mathpar}
\Pinfer{LW-Var}{
\Penv(x) = \ell
}{
\Passignlval{\Pcp,\Delta,\Pstore}{\Penv}{\Passign{x}{\Pval}}
            {\Pstore[x:=\Pval]}
}
\and
\Pinfer{LW-Rec}{
\langle \Pcp, \Delta, \Pstore, \Penv, \Plval \rangle \Downarrow
\langle \Pstore_1, \{\Pmany{f=\Pval_f}\} \rangle
\\
\Passignlval{\Pcp,\Delta,\Pstore_1}{\Penv}{\Passign{\Plval}{
\Prec{f_i=\Pval, \Pmany{f_{\neq i} = \Pval_f}}
}}{\Pstore_2}
}{
\Passignlval{\Pcp,\Delta,\Pstore}{\Penv}{\Passign{\Plval.f_i}{\Pval}}
            {\Pstore_2}
}
\and
\Pinfer{LW-HdrV}{
\langle \Pcp, \Delta, \Pstore, \Penv, \Plval \rangle \Downarrow
\langle \Pstore_1, \Pheader{\kw{valid}=\Pfalse,\Pmany{f=\Pval_f}} \rangle
\\
\Passignlval{\Pcp,\Delta,\Pstore_1}{\Penv}{\Passign{\Plval}{
\Pheader{\kw{valid}=\Pfalse, f_i=\Pval, \Pmany{f_{\neq i} = \Pval_f}}
}}{\Pstore_2}
}{
\Passignlval{\Pcp,\Delta,\Pstore}{\Penv}{\Passign{\Plval.f_i}{\Pval}}
            {\Pstore_2}
}
\and
\Pinfer{LW-HdrInV}{
\langle \Pcp, \Delta, \Pstore, \Penv, \Plval \rangle \Downarrow
\langle \Pstore_1, \Pheader{\kw{valid}=\Pfalse, \Pmany{f=\Pval_f}} \rangle
}{
\Passignlval{\Pcp,\Delta,\Pstore}{\Penv}{\Passign{\Plval.f_i}{\Pval}}
            {\Pstore_2}
}
\and
\Pinfer{LW-Idx}{
\langle \Pcp, \Delta, \Pstore, \Penv, \Plval \rangle \Downarrow
\langle \Pstore_1, \Pstackval{\tau}{\Pmany{\Pval}} \rangle \\
\Passignlval{\Pcp,\Delta,\Pstore_1}{\Penv}{\Passign{\Plval}{\Pstackval{\tau}{\dots,\Pval_{n-1},\Pval,\Pval_{n+1},\dots}
}}{\Pstore_2}
}{
\Passignlval{\Pcp,\Delta,\Pstore}{\Penv}{\Passign{\Plval[n]}{\Pval}}
            {\Pstore_2}
}
\and
\Pinfer{LW-Slice}{
\langle \Pcp, \Delta, \Pstore, \Penv, \Plval \rangle \Downarrow
\langle \Pstore_1, n_w\rangle \\
\Passignlval{\Pcp,\Delta,\Pstore_1}{\Penv}{\Passign{\Plval}{
\kw{bitset}(n_w, n, m, n_v')
}}{\Pstore_2}
}{
\Passignlval{\Pcp,\Delta,\Pstore}{\Penv}{\Passign{\Plval[n{:}m]}{n_v'}}
            {\Pstore_2}
}
\end{mathpar}

\subsection{L-value evaluation.}
\begin{mathpar}
\Pinfer{LE-Var}{ }{
\Plvaleval{\Pcp}{\Ptypedefs}{\Pstore}{\Penv}{x}{\Pstore}{x}
}
\and
\Pinfer{LE-Rec}{
\Plvaleval{\Pcp}{\Ptypedefs}{\Pstore}{\Penv}{\Pexp}{\Pstore'}{\Plval}
}{
\Plvaleval{\Pcp}{\Ptypedefs}{\Pstore}{\Penv}{\Pexp.f}{\Pstore'}{\Plval.f}
}
\and
\Pinfer{LE-Rec}{
\Plvaleval{\Pcp}{\Ptypedefs}{\Pstore}{\Penv}{\Pexp_1}{\Pstore_1}{\Plval}
\\
\Peval{\Pcp,\Ptypedefs,\Pstore_1,\Penv,\Pexp_2}{\Pstore_2,n_w}
}{
\Plvaleval{\Pcp}{\Ptypedefs}{\Pstore}{\Penv}{\Pexp_1[\Pexp_2]}{\Pstore_2}{\Plval[n]}
}
\and
\Pinfer{LE-Slice}{
\Plvaleval{\Pcp}{\Ptypedefs}{\Pstore}{\Penv}{\Pexp_1}{\Pstore_1}{\Plval}
\\
\Peval{\Pcp,\Ptypedefs,\Pstore_1,\Penv,\Pexp_2}{\Pstore_2,n_\infty}
\\
\Peval{\Pcp,\Ptypedefs,\Pstore_2,\Penv,\Pexp_3}{\Pstore_3,m_\infty}
\\
}{
\Plvaleval{\Pcp}{\Ptypedefs}{\Pstore}{\Penv}{\Pexp_1[\Pexp_2{:}\Pexp_3]}{\Pstore_3}{\Plval[n{:}m]}
}
\end{mathpar}

\subsection{Type simplification, omitting rules for integers, booleans, errors, match kinds, enums, and tables.}
\begin{mathpar}
\Pinfer{TyS-Bit}{
\Pcteval{\Sigma,\Pexp}{n_w}
}{
\Ptypeeval{\Sigma,\Delta}{\Pbittype{\Pexp}}{\Pbittype{n}}
}
\and
\Pinfer{TyS-Rec}{
\Ptypeeval{\Sigma,\Delta}{\Pmany{\tau}}{\Pmany{\tau'}}
}{
\Ptypeeval{\Sigma,\Delta}{\Prectype{\Pmany{f:\tau}}}{\Prectype{\Pmany{f:\tau'}}}
}
\and
\Pinfer{TyS-Var}{
\Delta(X)=\tau \\
\Ptypeeval{\Sigma,\Delta}{\tau}{\tau'}
}{
\Ptypeeval{\Sigma,\Delta}{X}{\tau'}
}
\and
\Pinfer{TyS-Hdr}{
\Ptypeeval{\Sigma,\Delta}{\Pmany{\tau}}{\Pmany{\tau'}}
}{
\Ptypeeval{\Sigma,\Delta}{\Pheader{\Pmany{f:\tau}}}{\Pheader{\Pmany{f:\tau'}}}
}
\and
\Pinfer{TyS-Stack}{
\Ptypeeval{\Sigma,\Delta}{\Pmany{\tau}}{\Pmany{\tau'}}
}{
\Ptypeeval{\Sigma,\Delta}{\Pstacktype{\tau}{n}}{\Pstacktype{\tau'}{n}}
}
\and
\Pinfer{TyS-Fun}{
\Ptypeeval{\Sigma,\Delta[\Pmany{X~\kw{var}}]}{\Pmany{\tau}}{\Pmany{\tau'}}
\\
\Ptypeeval{\Sigma,\Delta[\Pmany{X~\kw{var}}]}{\tau_{\mathit{ret}}}{\tau_{\mathit{ret}}'}
}{
\Ptypeeval{\Sigma,\Delta}{\Pfunctiontype{\Pmany{X}}{\Pmany{d~x:\tau}}{\tau_{\mathit{ret}}}}
                         {\Pfunctiontype{\Pmany{X}}{\Pmany{d~x:\tau'}}{\tau_{\mathit{ret}}'}}
}
\and
\Pinfer{TyS-Ctor}{
\Ptypeeval{\Sigma,\Delta}{\Pmany{\tau}}{\Pmany{\tau'}}
\\
\Ptypeeval{\Sigma,\Delta}{\tau_{\mathit{ret}}}{\tau_{\mathit{ret}}'}
}{
\Ptypeeval{\Sigma,\Delta}{\Pctortype{\Pmany{x:\tau}}{\tau_{\mathit{ret}}}}
                         {\Pctortype{\Pmany{x:\tau'}}{\tau_{\mathit{ret}}'}}
}
\end{mathpar}
\subsection{Runtime type evaluation. We omit the trivial rules for integers, booleans, errors, match kinds, enums, and tables.}
\begin{mathpar}
\Pinfer{TyE-Bit}{
\langle \Delta,\Pstore,\Penv,\Pexp \rangle \Downarrow \langle \sigma',n_w\rangle
}{
\langle \Delta,\Pstore,\Penv,\Pbittype{\Pexp} \rangle \Downarrow_{\tau} \Pbittype{n}
}
\and
\Pinfer{TyE-Rec}{
\langle \Delta,\Pstore,\Penv,\Pmany{\tau} \rangle \Downarrow_{\tau}
\Pmany{\tau'}
}{
\langle \Delta,\Pstore,\Penv,\Prectype{\Pmany{f:\tau}} \rangle \Downarrow_{\tau}
\Prectype{\Pmany{f:\tau'}}
}
\and
\Pinfer{TyE-Var}{
\Delta(X)=\tau \\
\langle \Delta,\Pstore,\Penv,\tau \rangle \Downarrow_{\tau}  \tau'
}{
\langle \Delta,\Pstore,\Penv, X\rangle \Downarrow_{\tau}  \tau'
}
\and
\Pinfer{TyE-Hdr}{
\langle \Delta,\Pstore,\Penv, \Pmany{\tau}\rangle \Downarrow_{\tau}
\Pmany{\tau'}
}{
\langle \Delta,\Pstore,\Penv, \Pheader{\Pmany{f:\tau}}\rangle \Downarrow_{\tau}
\Pheader{\Pmany{f:\tau'}}
}
\and
\Pinfer{TyE-Stack}{
\langle \Delta,\Pstore,\Penv, \Pmany{\tau}\rangle \Downarrow_{\tau}
\Pmany{\tau'}
}{
\langle \Delta,\Pstore,\Penv,\Pstacktype{\tau}{n} \rangle \Downarrow_{\tau}
\Pstacktype{\tau'}{n}
}
\and
\Pinfer{TyE-Fun}{
\langle \Delta[\Pmany{X~\kw{var}}],\Pstore,\Penv, \Pmany{\tau}\rangle \Downarrow_{\tau}
\Pmany{\tau'}
\\
\langle \Delta[\Pmany{X~\kw{var}}],\Pstore,\Penv, \tau_{\mathit{ret}}\rangle \Downarrow_{\tau} \tau_{\mathit{ret}}'
}{
\langle \Delta,\Pstore,\Penv,\Pfunctiontype{\Pmany{X}}{\Pmany{d~x:\tau}}{\tau_{\mathit{ret}}} \rangle \Downarrow_{\tau}
\Pfunctiontype{\Pmany{X}}{\Pmany{d~x:\tau'}}{\tau_{\mathit{ret}}'}
}
\and
\Pinfer{TyE-Ctor}{
\langle \Delta,\Pstore,\Penv, \Pmany{\tau}\rangle \Downarrow_{\tau} \Pmany{\tau'}
\\
\langle \Delta,\Pstore,\Penv,\tau_{\mathit{ret}} \rangle \Downarrow_{\tau} \tau_{\mathit{ret}}'
}{
\langle \Delta,\Pstore,\Penv,\Pctortype{\Pmany{x:\tau}}{\tau_{\mathit{ret}}} \rangle \Downarrow_{\tau}
\Pctortype{\Pmany{x:\tau'}}{\tau_{\mathit{ret}}'}
}
\end{mathpar}

\subsection{Environment and store typing}
\begin{definition}
We say $\Xi,\Delta \vdash \Pstore$ if for all $\ell$ in the domain of
$\Pstore$ there exists a type $\tau$ with $\Xi(\ell)=\tau$ and
$\Xi,\Delta \vdash \Pstore(\ell):\tau$.
\end{definition}

\begin{definition}
We define $\Xi \vdash \Penv:\Gamma$ inductively from the following rules.
\begin{mathpar}
\Pinfer{TEnv-E}{ }{\Xi \vdash []:[]}
\and
\Pinfer{TEnv-C}{
\Xi \vdash \Penv:\Gamma
\\
\Xi(\ell)=\tau
\\
}{
\Xi \vdash (\Penv,x\mapsto\ell):(\Gamma,x:\tau)
}
\and
\Pinfer{TEnv-R}{
\Xi \vdash \Penv:\Gamma
}{
\Xi \vdash \Penv:\Gamma,\kw{return}\mapsto\ell
}
\end{mathpar}
\end{definition}

\begin{definition}
We say $\Sigma \vdash \langle \Pstore, \Penv \rangle$ if for all $x$
in the domain of $\Sigma$, $\Pstore(\Penv(x))$ is defined and
$\Pstore(\Penv(x)) = \Sigma(x)$.
\end{definition}

\subsection{Semantic typing}
\begin{definition}
The abstract variables $\Pvars{\Ptypedefs}$ of a context are its
entries of the form $X~\kw{var}$.
\end{definition}

\begin{definition}
The definitions $\Pdefs{\Ptypedefs}$ of a context are its entries
of the form $X=\rho$.
\end{definition}

Next we define ``semantic typing'' for the purpose of proving
termination. These propositions are written with the double turnstile
$\vDash$ and are defined by mutual induction.

\begin{definition}
Semantic typing of stores $\Xi,\Sigma,\Delta \vDash \sigma$ holds if for
every location $\ell$ in $\sigma$ there exists a type $\tau$ such that
$\Xi(\ell)=\tau$ and $\Xi,\Sigma,\Pdefs{\Delta} \vDash \sigma(\ell):\tau$.
\end{definition}

\begin{definition}
Semantic typing of environments $\Xi,\Delta \vDash \epsilon:\Gamma$
is the same as the ordinary typing
$\Xi,\Delta,\vdash \epsilon:\Gamma$.
\end{definition}

\begin{definition}
Semantic typing of an environment and store tuple 
    $\Xi\Sigma,\Delta \vDash \langle \sigma, \epsilon \rangle:\Gamma$
holds when
$\Xi,\Sigma,\Delta \vDash \sigma$,
$\Xi \vDash \epsilon : \Gamma$,
and $\Sigma \vdash \langle \Pstore, \Penv \rangle$ all hold.
\end{definition}

\begin{definition}[Semantic typing of expressions]
If there are $n$ variables $\Pmany{X}$ in $\Pvars{\Delta}$,
semantic typing 
$\Sigma,\Gamma,\Delta \vDash \Pexp:\tau~\kw{goes}~d$ is a predicate on
length $n$ lists of type arguments $\rho$. It holds when the ordinary typing
$\Sigma,\Gamma,\Delta \vdash \Pexp:\tau~\kw{goes}~d$ holds and when for any
store typing $\Xi$, reduced type $\hat{\tau}$, reduced typing context
$\hat{\Gamma}$, store $\Pstore$, and environment $\Penv$ satisfying
\begin{enumerate}
\item $\Sigma, \Delta[\Pmany{X=\rho}] \vdash \tau \rightsquigarrow \hat{\tau}$
\item $\Sigma, \Delta[\Pmany{X=\rho}] \vdash \Gamma \rightsquigarrow \hat{\Gamma}$
\item $\Xi,\Sigma,\Delta[\Pmany{X=\rho}] \vDash \langle \Pstore, \Penv \rangle:\hat{\Gamma}$
\end{enumerate}
there exists a final configuration $\langle \Pstore', \Psig \rangle$
and a store typing $\Xi' \supseteq \Xi$ such that the following
conditions hold.
\begin{enumerate}
\item $\langle \mathcal{C},\Delta[\Pmany{X=\rho}], \Pstore, \Penv, \Pexp \rangle
\Downarrow \langle \Pstore', \Psig \rangle$
\item $\Xi',\Sigma,\Delta[\Pmany{X=\rho}] \vDash \langle\Pstore', \Penv\rangle:\hat{\Gamma}$
\item If $\Psig = \Pval$ then $\Sigma, \Xi', \Delta[\Pmany{X=\rho}] \vDash \Pval:\hat{\tau}$.
\end{enumerate}
\end{definition}

\begin{definition}[Semantic typing of statements]
If there are $n$ variables $\Pmany{X}$ in $\Pvars{\Delta}$,
semantic typing 
$\Sigma,\Gamma,\Delta \vDash \Pstmt \Pddashv \Sigma', \Gamma'$ is a predicate on
length $n$ lists of type arguments $\rho$. It holds when
the ordinary typing 
$\Sigma,\Gamma,\Delta \vdash \Pstmt \dashv \Sigma', \Gamma'$
holds and for any store
typing $\Xi$, reduced typing contexts $\hat{\Gamma}$ and
$\hat{\Gamma}'$, store $\Pstore$, and environment $\Penv$ satisfying
\begin{enumerate}
\item $\Sigma, \Delta[\Pmany{X=\rho}] \vdash \Gamma \rightsquigarrow \hat{\Gamma}$
\item $\Sigma', \Delta[\Pmany{X=\rho}] \vdash \Gamma' \rightsquigarrow \hat{\Gamma}'$
\item $\Xi,\Sigma,\Delta[\Pmany{X=\rho}] \vDash \langle \Pstore, \Penv \rangle:\hat{\Gamma}$
\end{enumerate}
there exists a final configuration $\langle \Pstore', \Penv', \Psig \rangle$
and a store typing $\Xi' \supseteq \Xi$ such that the following
conditions hold.
\begin{enumerate}
\item $\langle \mathcal{C},\Delta[\Pmany{X=\rho}], \Pstore, \Penv, \Pstmt \rangle \Downarrow \langle \Pstore', \Penv', \Psig \rangle$
\item $\Xi',\Sigma,\Delta[\Pmany{X=\rho}] \vDash \langle\Pstore', \Penv'\rangle:\hat{\Gamma'}$
\item If $\Psig = \Preturn{\Pval}$ then there is a type $\tau$ such that
$\Gamma(\Preturnvoid)=\tau$ and a reduced type $\hat{\tau}$ such that
$\Sigma,\Delta[\Pmany{X=\rho}] \vdash \tau \rightsquigarrow \hat{\tau}$
with $\hat{\Gamma}'(\Preturnvoid)=\hat{\tau}$ and
$\Xi', \Sigma', \Delta[\Pmany{X=\rho}] \vDash \Pval:\hat{\tau}$.
\end{enumerate}
\end{definition}

\begin{definition}[Semantic typing of declarations]
If there are $n$ variables $\Pmany{X}$ in $\Pvars{\Delta}$,
semantic typing 
$\Sigma,\Gamma,\Delta \vDash \Pdecl \Pddashv \Sigma', \Gamma', \Delta'$
is a predicate on length $n$ lists of type arguments $\rho$. It holds
when 
the ordinary typing 
$\Sigma,\Gamma,\Delta \vdash \Pdecl \dashv \Sigma', \Gamma', \Delta'$
holds
and for any store typing $\Xi$, reduced typing contexts
$\hat{\Gamma}$ and $\hat{\Gamma}'$, store $\Pstore$, and environment
$\Penv$ satisfying
\begin{enumerate}
\item $\Sigma, \Delta[\Pmany{X=\rho}] \vdash \Gamma \rightsquigarrow \hat{\Gamma}$
\item $\Sigma', \Delta'[\Pmany{X=\rho}] \vdash \Gamma' \rightsquigarrow \hat{\Gamma}'$
\item $\Xi,\Sigma,\Delta[\Pmany{X=\rho}] \vDash \langle \Pstore, \Penv \rangle:\hat{\Gamma}$
\end{enumerate}
there exists a final configuration $\langle \Delta'', \Pstore', \Penv', \Psig \rangle$
such that the following conditions hold.
\begin{enumerate}
\item $\langle \mathcal{C},\Delta[\Pmany{X=\rho}], \Pstore, \Penv, \Pdecl \rangle \Downarrow \langle \Delta'', \Pstore', \Penv', \Psig \rangle$
\item $\Delta'' = \Delta'[\Pmany{X=\rho}]$
\item $\Xi',\Sigma',\Delta'[\Pmany{X=\rho}] \vDash \langle\Pstore', \Penv'\rangle:\hat{\Gamma'}$
\end{enumerate}
\end{definition}

\begin{definition}[Semantic typing of closures]
Semantic typing 
$\Xi,\Sigma,\Delta \vDash \Pclos{\Penv}{\Pmany{X}}{\Pmany{d~x:\tau}}{\tau_{\mathit{ret}}}{\Pmany{\Pdecl}~\Pstmt}$
for closure values holds if there exists some $\Gamma$ such that the
following conditions all hold.
\begin{enumerate}
\item $\Xi \vDash \Penv : \Gamma$
\item $\Xi,\Sigma,\Delta \vDash \Pclos{\Penv}{\Pmany{X}}{\Pmany{d~x:\tau}}{\tau_{\mathit{ret}}}{\Pmany{\Pdecl}~\Pstmt}$
\item $\Sigma,\Gamma[\Pmany{x:\tau}],\Delta[X~\kw{var}] \vDash \Pdecl \Pddashv \Sigma', \Gamma', \Delta'$
\item $\Sigma',\Gamma'[\Preturnvoid=\tau_{\mathit{ret}}],\Delta' \vDash \Pstmt \Pddashv \Sigma'', \Gamma''$
\end{enumerate}
\end{definition}

\begin{definition}[Semantic typing of constructor closures]
Semantic typing
\(
\Xi,\Ptypedefs,\Sigma \vDash \Pcclos{\Penv}{\Pmany{d~x:\tau}}{\Pmany{x_c:\tau_c}}{\Pmany{\Pdecl}}{\Pstmt}
\)
for constructor closures holds if, given fresh locations
$\Pmany{\ell}$, it is the case that
\[\Xi[\Pmany{\ell:\tau_c}],\Ptypedefs,\Sigma
\vDash
\Pclossimple{\Penv[\Pmany{\Penvset{x_c}{\ell}}]}{\Pmany{d~x:\tau}}{\Pmany{\Pdecl}}{\Pstmt}.\]
and 
\(
\Xi,\Ptypedefs,\Sigma \vdash \Pcclos{\Penv}{\Pmany{d~x:\tau}}{\Pmany{x_c:\tau_c}}{\Pmany{\Pdecl}}{\Pstmt}
\)
\end{definition}

\begin{definition}[Semantic typing of tables]
Semantic typing
\(
\Xi,\Ptypedefs,\Sigma \vDash \Ptableval{\ell}{\Penv}{\Pmany{\Pexp:x}}{\Pmany{act}}:\Ptabletype
\)
for tables holds if all of the following hold.
\begin{align*}
\Xi,\Sigma,\Delta&\vDash\Penv:\Gamma \\
\Sigma,\Gamma,\Delta &\vDash \Pmany{\Pexp:\tau_k} \\
\Sigma,\Gamma,\Delta &\vDash \Pmany{x:\kw{match\_kind}} \\
\Sigma,\Gamma,\Delta &\vDash x_{\mathit{act}} : \Pfunctiontype{}{\Pmany{d~x:\tau}, \Pmany{\Pin~x_p:\tau_p}}{\{\}} &\text{for each $\mathit{act} = x_{\mathit{act}}(\Pmany{\Pexp_a})$}\\
\Sigma,\Gamma,\Delta &\vDash \Pmany{\Pexp_k:\tau_k\Pgoes{d}} &\text{for each $\mathit{act} = x_{\mathit{act}}(\Pmany{\Pexp_a})$}
\end{align*}
\end{definition}

\begin{definition}[Semantic typing of values]
$\Xi,\Ptypedefs,\Sigma \vDash \Pval:\tau$ holds if the syntactic typing
$\Xi,\Ptypedefs,\Sigma \vdash \Pval:\tau$ holds
and when the value $\Pval$ is a closure, constructor closure, or table and the
corresponding special semantic typing holds.
\end{definition}

\section{Type safety}
\label{app:safety}

\subsection{Well-formedness assumptions}
First, we state some well-formedness conditions on contexts which we assume
throughout. We assert without proof that the type system preserves these invariants.
\begin{enumerate}
\item We assume all type variables appearing in a context $\Gamma$ are bound in
    their corresponding $\Delta$. In symbols, $\kw{FTV}(\Gamma) \subseteq
        \kw{vars}(\Delta)$, where
        $\kw{FTV}(x_1:\tau_1,\dots,x_n:\tau_n)$ collects the type
        variables appearing free in all the $\tau_i$. Furthermore, any
        type $\tau$ appearing in $\Gamma$ contains no unevaluated
        expressions.

\item
Types in contexts $\Xi$ contain no free type variables
and no unevaluated expressions.
\end{enumerate}
The choice of $\Delta$ for the well-formedness of some $\Gamma$ or
$\Xi$ should always be clear from context.

There are also some constants which we make assumptions about.
\begin{enumerate}
\item We assume the $\Downarrow_{\kw{match}}$ relation used in table evaluation picks
an action from the table's action list and provides well-typed runtime
arguments. This means that for any configuration
$\langle \mathcal{C},\ell,\Pmany{\Pval:x_k},\Pmany{x(\Pmany{x_c:\tau}})$
there is some $\Pexp_c:\tau_i$ such that \[
\langle \mathcal{C},\ell,\Pmany{\Pval:x_k},\Pmany{x(\Pmany{x_c:\tau}})\rangle
\Downarrow_{\mathit{match}} x_i(\Pmany{\Pexp_c})
\] and
$[],[],[] \vDash \Pmany{\Pexp_c : \tau_i}$.

\item We assume that native functions execute safely if given arguments matching their type.
In particular, suppose $\mathcal{N} \vdash
x:\Psimplefunctiontype{\Pmany{d~x:\tau}}{\tau}$ and
$\Xi,\Sigma,\Delta \vdash \Pstore$ with
$\Xi(\Pmany{\ell})=\Pmany{\tau}$.  Then there exists a store $\Pstore'$, a
value $\Pval$, and a store typing $\Xi'$ such that
$\Xi', \Sigma, \Delta \vdash \Pstore'$,
$\Xi', \Sigma, \Delta \vdash \Pval:\tau$, and
$\mathcal{N}(x,\Pstore,[\Pmany{x\mapsto\ell}])
= \langle \Pstore', \Pval \rangle$.

\item
We assume unary operator typing implies safe evaluation.
If $\mathcal{T}(\Delta,\ominus,\tau)=\tau'$ then for any value $\Pval$
with $\Xi, \Sigma, \Delta \vdash \Pval:\tau$ the function application
$\mathcal{E}(\ominus,\Pval)$ is defined and the
result $\Pval'$ satisfies $\Xi,\Sigma,\Delta \vdash \Pval':\tau'$.

\item
We assume binary operator typing implies safe evaluation.
If $\mathcal{T}(\Delta,\oplus,\tau_1,\tau_2)=\tau_3$ then for any values
$\Pval_1, \Pval_2$ with $\Xi,\Sigma,\Delta \vdash \Pval_i:\tau_i$ the function
application $\mathcal{E}(\oplus, \Pval_1, \Pval_2)$
is defined and the result $\Pval_3$ satisfies
$\Xi,\Sigma,\Delta \vdash \Pval_3:\tau_3$.

\item
We assume all legal casts execute safely.
If $\Delta \vdash \tau \preceq \tau'$ and $\Xi, \Sigma, \Delta \vdash \Pval:\tau$,
the cast $\Pcasteval{\Delta}{\Pval}{\tau'}$ has a defined value
$\Pval'$ satisfying $\Xi, \Sigma,\Delta \vdash \Pval':\tau'$.
\end{enumerate}

\subsection{Typing lemmas}

\begin{lemma}\label{thm:typeevalagree}
If $\Sigma \vdash \langle \Pstore, \Penv \rangle$ and
 $\Ptypeeval{\Sigma, \Delta}{\tau}{\tau'}$, then
$\langle \Delta,\Pstore,\Penv,\tau\rangle \Downarrow_{\tau} \tau'$.
\begin{proof}
By induction on $\tau$, with a use of \cref{thm:ctevalcorrect} in the
$\Pbittype{\Pexp}$ case.
\end{proof}
\end{lemma}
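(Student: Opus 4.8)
The plan is to proceed by induction on the derivation of the type simplification judgment $\Ptypeeval{\Sigma, \Delta}{\tau}{\tau'}$ — equivalently, by structural induction on $\tau$ — exploiting the fact that the type simplification rules (\textsc{TyS-Bit}, \textsc{TyS-Rec}, \textsc{TyS-Var}, and so on) stand in one-to-one correspondence with the runtime type evaluation rules (\textsc{TyE-Bit}, \textsc{TyE-Rec}, \textsc{TyE-Var}, etc.). For every rule the two judgments agree on the shape of $\tau$ and of the output $\tau'$; they differ only in how they treat the single place where an \emph{expression} occurs inside a type, namely the width of a bitstring type. Thus I would reduce the whole argument to handling that one base case and then threading the induction hypothesis through the structural rules.

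For the trivial types (integers, booleans, errors, match kinds, enums, and tables) both judgments leave $\tau$ unchanged, so the corresponding runtime rule fires with the same conclusion and there is nothing to prove. The only interesting base case is $\tau = \Pbittype{\Pexp}$: here \textsc{TyS-Bit} produces $\Pbittype{n}$ from a compile-time evaluation $\Pcteval{\Sigma, \Pexp}{n_w}$, whereas \textsc{TyE-Bit} demands the runtime evaluation $\langle \Delta, \Pstore, \Penv, \Pexp \rangle \Downarrow \langle \Pstore, n_w \rangle$ with the \emph{same} width. This is precisely where the hypothesis $\Sigma \vdash \langle \Pstore, \Penv \rangle$ is used: by \cref{thm:ctevalcorrect}, agreement between the constant context and the store/environment guarantees that compile-time and runtime evaluation of $\Pexp$ yield the same value (the runtime evaluation leaving the store unchanged, since compile-time expressions are side-effect free). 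Feeding this runtime evaluation into \textsc{TyE-Bit} discharges the case.

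For the structural cases the induction hypothesis does all the work. In \textsc{TyS-Rec}, \textsc{TyS-Hdr}, and \textsc{TyS-Stack} the premises simplify the field or element types, and applying the IH (lifted to lists of types in the obvious way) to each of them yields the runtime evaluations required by \textsc{TyE-Rec}, \textsc{TyE-Hdr}, and \textsc{TyE-Stack}. The type variable case \textsc{TyS-Var} looks up $\Delta(X) = \tau$ and recurses on $\tau$; since both judgments consult the same $\Delta$, the IH on $\tau$ immediately supplies the premise of \textsc{TyE-Var}. The function and constructor cases \textsc{TyS-Fun} and \textsc{TyS-Ctor} are analogous, with one wrinkle: \textsc{TyS-Fun} simplifies parameter and return types under the extended context $\Delta[\Pmany{X~\kw{var}}]$, and \textsc{TyE-Fun} does the same. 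Because the consistency hypothesis $\Sigma \vdash \langle \Pstore, \Penv \rangle$ constrains only \emph{values} and is entirely independent of $\Delta$, it survives the extension of $\Delta$ with fresh abstract type variables unchanged, so the IH applies verbatim at the extended context.

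The hard part — really the only non-mechanical step — will be the bitstring base case, and even that is less a difficulty than a clean delegation: the substance of ``compile-time and runtime evaluation compute the same thing under consistent contexts'' is isolated in \cref{thm:ctevalcorrect}, so here it suffices to invoke it. Everything else is a rule-by-rule match together with routine appeals to the induction hypothesis, which is exactly why the statement can be proved in a single line once that supporting lemma is in hand.
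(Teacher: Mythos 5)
Your proposal is correct and follows essentially the same route as the paper, whose entire proof reads ``By induction on $\tau$, with a use of \cref{thm:ctevalcorrect} in the $\Pbittype{\Pexp}$ case''---your rule-by-rule matching of the \textsc{TyS} and \textsc{TyE} rules, with the consistency hypothesis $\Sigma \vdash \langle \Pstore, \Penv \rangle$ discharged via \cref{thm:ctevalcorrect} exactly at the bitstring width, is just the fully spelled-out version of that one-line argument. The only point worth tightening is that \cref{thm:ctevalcorrect} as stated assumes a runtime evaluation of $\Pexp$ exists and concludes agreement of values, so in the $\Pbittype{\Pexp}$ case one should also note (by a small induction on the compile-time derivation, using the consistency hypothesis for the variable case) that such a runtime derivation can in fact be constructed; this gap is equally present in the paper's own proof.
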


\begin{lemma}\label{thm:exprtypecanonical}
If $\Sigma,\Gamma,\Delta \vdash e:\tau$, then $\tau$ contains no
unevaluated expressions and if $X$ is a free variable of $\tau$ then
$X~\kw{var}$ appears in $\Delta$.
\begin{proof}
By induction on typing derivations.
\end{proof}
\end{lemma}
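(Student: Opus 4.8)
The plan is to argue by induction on the derivation of $\Sigma,\Gamma,\Delta \vdash \Pexp : \tau$, doing a case analysis on the last rule applied and verifying both invariants at once: that $\tau$ carries no unevaluated subexpressions, and that every free type variable of $\tau$ appears as $X~\kw{var}$ in $\Delta$. The seventeen expression rules split into four groups. The atomic cases are immediate: \textsc{T-Bool} and \textsc{T-Integer} output $\Pbool$ and $\Pinttype$, which contain neither expressions nor variables; \textsc{T-Bit} outputs $\Pbittype{w}$ where $w$ is the literal's own concrete width drawn from $\Pint{n}{w}$ (not an expression); and \textsc{T-Enum}, \textsc{T-Err}, \textsc{T-Match} output the nominal types $\Penumtype{X}{\Pmany{f}}$, $\Perror$, $\Pmatchkind$. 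For these last three I would fix the convention that the enum constructor $\Penumtype{X}{\Pmany{f}}$ binds its own tag $X$ and that $\Perror$, $\Pmatchkind$ are reserved base-type constants, so that all three have empty free-variable set; otherwise \textsc{T-Enum} would spuriously require $X~\kw{var}$ in $\Delta$ even though $X$ is recorded there only as a definition.

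The second group is the structural cases, where the two invariants are threaded through the induction hypothesis. In \textsc{T-Index} the output $\tau$ is a subterm of the stack type $\tau[n]$ assigned to the array, and in \textsc{T-MemHdr} and \textsc{T-MemRec} the output $\tau_i$ is a field type of a header or record type; in each case the IH gives canonicity of the larger type, and canonicity is inherited by subterms (passing to a subterm only shrinks the free-variable set). \textsc{T-Record} runs the IH in the assembling direction instead: each field type is canonical with var-marked free variables by IH, and packaging them as $\Prectype{\Pmany{f:\tau}}$ preserves both properties.

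The third group consists of the rules whose output is handed over by compile-time evaluation, by an oracle, or by the context, and each reduces to a fact established elsewhere. \textsc{T-Slice} outputs $\Pbittype{n_2-n_3+1}$ where $n_2,n_3$ are the results of compile-time evaluating the endpoints, hence a concrete width with no expression and no variable. \textsc{T-UOp} and \textsc{T-BinOp} read their result type from the type-interpretation oracle $\mathcal{T}$, so I would discharge them against the standing assumption that $\mathcal{T}$ returns canonical base types. \textsc{T-Var} and \textsc{T-Var-Const} copy a type out of $\Gamma$, and here I appeal directly to the context well-formedness invariant already posited in the excerpt: types in $\Gamma$ contain no unevaluated expressions and $\kw{FTV}(\Gamma)\subseteq\kw{vars}(\Delta)$.

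The fourth group, \textsc{T-Cast} and \textsc{T-Call}, is where the real work lives, and I expect it to be the main obstacle. Both obtain their output by type simplification---$\Ptypeeval{\Sigma,\Delta}{\rho}{\tau'}$ for the cast, and $\Ptypeeval{\Sigma,\Delta[\Pmany{X=\rho}]}{\tau_{\mathit{ret}}}{\tau_{\mathit{ret}}'}$ for the call---so I would first isolate a supporting lemma stating that $\rightsquigarrow$, run over a well-formed context, always yields a canonical type: it substitutes away every definition and evaluates away every width expression, leaving only var-marked free variables. The subtle point is \textsc{T-Call}, where simplification runs under the extended context $\Delta[\Pmany{X=\rho}]$ that turns the generic parameters $\Pmany{X}$ into definitions. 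After simplification the $\Pmany{X}$ are gone, but the free variables of the substituted type arguments $\Pmany{\rho}$ survive in $\tau_{\mathit{ret}}'$; to conclude that the output's free variables lie in $\kw{vars}(\Delta)$ for the \emph{original} $\Delta$, I need to carry the invariant that the type arguments $\Pmany{\rho}$ are themselves canonical and satisfy $\kw{FTV}(\Pmany{\rho})\subseteq\kw{vars}(\Delta)$. Verifying that substitution under $\Delta[\Pmany{X=\rho}]$ can never manufacture a free variable outside $\kw{vars}(\Delta)$ is the one genuinely delicate piece of bookkeeping, and it is precisely what the supporting lemma on $\rightsquigarrow$ must be phrased strongly enough to supply.
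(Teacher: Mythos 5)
Your proposal is correct and follows exactly the route the paper takes: the paper's entire proof is ``by induction on typing derivations,'' and your case analysis (atomic rules, structural rules inheriting canonicity from the induction hypothesis, oracle/context rules discharged by the well-formedness assumptions on $\Gamma$ and $\mathcal{T}$, and \textsc{T-Cast}/\textsc{T-Call} via a canonicity lemma for the simplification judgment $\rightsquigarrow$) is just that induction spelled out in full. The two refinements you add---reading enum/error/match-kind types nominally so their names do not count as free variables, and phrasing the $\rightsquigarrow$ lemma strongly enough to cover simplification under $\Delta[\Pmany{X=\rho}]$---are exactly the details needed to make the one-line proof go through.
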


\begin{lemma}\label{thm:constvaltypeweak}
If $\Sigma \subseteq \Sigma'$ and $\Xi,\Sigma,\Delta \vdash \Penv:\tau$
$\Xi,\Sigma',\Delta \vdash \Penv:\tau$.
\begin{proof}
By induction on derivations of $\Xi,\Sigma,\Delta \vdash \Penv:\tau$.
\end{proof}
\end{lemma}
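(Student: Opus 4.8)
The plan is to proceed by structural induction on the derivation of the value-typing judgment $\Xi,\Sigma,\Delta \vdash \Pval:\tau$ — reading the $\Penv$ in the statement as a value $\Pval$, since the only $\Xi,\Sigma,\Delta \vdash {\cdot}:\tau$ judgment is value typing and the label names this a \emph{constant/value} weakening. The rules of \Cref{fig:value-typing} split cleanly. The purely structural rules \textsc{TV-Rec}, \textsc{TV-Hdr}, \textsc{TV-Stack}, and the three member rules \textsc{TV-MemEnum}, \textsc{TV-MemMk}, \textsc{TV-MemErr} either ignore $\Sigma$ or thread it unchanged into value-typing premises, so each is discharged immediately by applying the induction hypothesis to the subderivations and re-applying the same rule; \textsc{TV-Native} depends only on $\mathcal{N}$ and is likewise immediate. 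The load-bearing cases are \textsc{TV-Clos}, \textsc{TV-CClos}, \textsc{TV-Table}, and the auxiliary \textsc{Type-Partial-App} behind $\kw{act\_ok}$, whose premises reach into the expression-, statement-, and declaration-typing judgments.

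To close those cases I would first establish companion weakening statements for the other judgments: if $\Sigma \subseteq \Sigma'$ then $\Sigma,\Gamma,\Delta \vdash \Pexp:\tau$ implies $\Sigma',\Gamma,\Delta \vdash \Pexp:\tau$, and similarly for statement typing and for declaration typing (tracking that the newly declared constants appearing in the output contexts are unchanged). Together with the present lemma these form a single mutual induction; in each closure or table case I re-derive every premise under $\Sigma'$ via the appropriate companion statement and then re-apply the original rule. The compile-time evaluation side conditions that occur (e.g.\ in the slice and key checks) are monotone under $\Sigma \subseteq \Sigma'$, since \textsc{CTE-Var} merely reads $\Sigma(x)$ and any $x$ already in $\kw{dom}(\Sigma)$ keeps its value in $\Sigma'$, so a successful $\Pcteval{\Sigma,\Pexp}{v}$ remains valid.

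The step I expect to be the real obstacle is expression weakening \emph{in the presence of directions}. The rules \textsc{T-Var} and \textsc{T-Var-Const} are distinguished exactly by $x \notin \kw{dom}(\Sigma)$ versus $x \in \kw{dom}(\Sigma)$, and they assign directions $\kw{inout}$ and $\kw{in}$ respectively. Enlarging $\Sigma$ to $\Sigma'$ can therefore move a variable from the first rule to the second, silently downgrading an assignable expression ($\kw{goes}~\kw{inout}$) to a non-assignable one ($\kw{goes}~\kw{in}$); this would make a premise such as the $\kw{goes}~\kw{inout}$ check in \textsc{TS-Assign} fail, so the naive weakening statement for directed expression typing is in fact false. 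Value typing only touches this through the bodies of closures and tables, so my fix would be to impose, as a side condition justified by the well-formedness invariants of \Cref{app:safety}, that the names added in passing from $\Sigma$ to $\Sigma'$ are fresh — disjoint from the variables of $\Gamma$ and from the locals and parameters occurring in any enclosed body. Under that disjointness $x \in \kw{dom}(\Sigma)$ iff $x \in \kw{dom}(\Sigma')$ for every variable actually looked up, directions are preserved, and the mutual induction goes through. This is precisely the situation at the lemma's call sites, where $\Sigma$ is extended by a freshly declared constant whose name cannot collide with an existing local.
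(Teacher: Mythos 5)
Your reading of the garbled statement (taking $\Penv$ to mean a value $\Pval$, so that the judgment is value typing) matches the paper's intent, and your inductive skeleton is the same as the paper's proof---which consists, in its entirety, of the sentence ``by induction on derivations.'' What your proposal adds is exactly what that sentence hides. First, the cases \textsc{TV-Clos}, \textsc{TV-CClos}, \textsc{TV-Table}, and \textsc{Type-Partial-App} reach into declaration-, statement-, and expression-typing, so the lemma cannot be proved standalone; it needs the companion weakening lemmas you describe, organized as a mutual (or stratified) induction. Second, your obstacle is genuine, not excess caution: since \textsc{T-Var} requires $x \notin \kw{dom}(\Sigma)$ to produce direction $\Pinout$ and \textsc{T-Var-Const} produces $\Pin$ when $x \in \kw{dom}(\Sigma)$, and since \textsc{T-Var} is the only source of assignability, enlarging $\Sigma$ with a binding whose name coincides with an assignment target inside a stored closure body demotes that variable and breaks the $\Pinout$ premise of \textsc{TS-Assign}. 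A closure whose body assigns to a parameter $y$, typed under $\Sigma = []$ and weakened to $\Sigma' = \Sigma[y{=}v]$, is a counterexample to the lemma exactly as stated. So the unqualified weakening is false, and some disjointness hypothesis of the kind you add is necessary; your observation that compile-time evaluation is monotone under $\Sigma \subseteq \Sigma'$ is also correct and needed.

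The one point to push back on is your claim that the freshness side condition is already ``justified by the well-formedness invariants'' of \Cref{app:safety} and holds automatically at call sites. The invariants stated there concern only free type variables and unevaluated expressions in $\Gamma$ and $\Xi$; nothing in the paper guarantees that a constant introduced by \textsc{Type-Const} cannot collide with a parameter or local occurring inside a closure already in the store. So your repair genuinely strengthens the lemma's hypotheses rather than discharging a condition the development already maintains; the alternative repair is to change the rules so that binding a parameter or local removes that name from $\Sigma$ (shadowing resolves in favor of $\Gamma$), under which the original unqualified statement becomes provable. Either fix is defensible, but the distinction should be made explicit---your analysis is more honest than the paper's one-line proof, which acknowledges neither the mutual induction nor the shadowing problem.
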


\begin{lemma}\label{thm:envtypevarstable}
If $\Xi,\Delta \vdash \Penv:\Gamma$ and $X$ is not in $\Delta$,
then $\Xi,\Delta[X=\rho] \vdash \Penv:\Gamma$ for any $\rho$.
\begin{proof}
By induction on derivations of $\Xi,\Delta \vdash \Penv:\Gamma$.
\end{proof}
\end{lemma}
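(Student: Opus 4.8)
The plan is to prove this by structural induction on the derivation of the environment typing judgment $\Xi,\Delta \vdash \Penv:\Gamma$, following the three rules TEnv-E, TEnv-C, and TEnv-R that define it. The key observation that makes the whole argument go through is that $\Delta$ appears nowhere in the premises or side conditions of these rules: the base case TEnv-E requires only that both $\Penv$ and $\Gamma$ be empty, the cons case TEnv-C requires $\Xi(\ell)=\tau$ (a condition on $\Xi$ alone), and the return case TEnv-R merely extends $\Gamma$ with a $\kw{return}$ binding. Since none of these mentions $\Delta$, replacing $\Delta$ by $\Delta[X=\rho]$ changes neither the applicability of any rule nor any of its premises.

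Concretely, I would proceed as follows. In the base case, TEnv-E derives $\Xi \vdash []:[]$ with no premises, so the same rule derives $\Xi,\Delta[X=\rho] \vdash []:[]$ directly. In the TEnv-C case, the derivation ends with premises $\Xi,\Delta \vdash \Penv:\Gamma$ and $\Xi(\ell)=\tau$ and concludes $\Xi,\Delta \vdash (\Penv, x\mapsto\ell):(\Gamma, x:\tau)$; the induction hypothesis gives $\Xi,\Delta[X=\rho] \vdash \Penv:\Gamma$, the side condition $\Xi(\ell)=\tau$ is untouched, and re-applying TEnv-C yields the goal. The TEnv-R case is identical: apply the induction hypothesis to the single premise and re-apply TEnv-R.

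The hypothesis that $X$ is not already present in $\Delta$ plays no role in the derivation itself; its purpose is to ensure that the extended context $\Delta[X=\rho]$ remains well-formed, so that the ambient invariant that every free type variable of $\Gamma$ is bound in the corresponding $\Delta$ is preserved rather than broken by a clashing binding. There is no real obstacle here: this is a routine weakening result, and the only thing worth flagging is the mild notational mismatch between the judgment signature $\Xi,\Delta \vdash \Penv:\Gamma$ and the inductive definition, which writes the rules in the two-place form $\Xi \vdash \Penv:\Gamma$. Reconciling the two amounts to reading $\Delta$ as a parameter carried silently through every node of the derivation tree, at which point the induction is immediate — indeed one could equivalently observe that the derivation tree witnessing $\Xi,\Delta \vdash \Penv:\Gamma$ is, node for node, also a derivation of $\Xi,\Delta[X=\rho] \vdash \Penv:\Gamma$.
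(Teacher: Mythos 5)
Your proposal is correct and takes exactly the route of the paper's proof, which is simply ``by induction on derivations of $\Xi,\Delta \vdash \Penv:\Gamma$'': the three cases \textsc{TEnv-E}, \textsc{TEnv-C}, and \textsc{TEnv-R} never inspect $\Delta$, so the induction goes through unchanged, as you observe. Your additional remarks---that the hypothesis on $X$ only serves well-formedness of $\Delta[X=\rho]$, and that the derivation tree is node-for-node a derivation of the weakened judgment---are accurate elaborations of the same one-line argument.
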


\begin{lemma}\label{thm:exprsubst}
Suppose $\Sigma,\Gamma,\Delta \vdash \Pexp:\tau\mathrel{\kw{goes}}d$
and $\Sigma,\Delta[X=\rho] \vdash \tau \rightsquigarrow \tau'$.
Then $\Sigma,\Gamma,\Delta[X=\rho] \vdash \Pexp:\tau'\mathrel{\kw{goes}}d$.
\end{lemma}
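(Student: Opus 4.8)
The plan is to proceed by induction on the derivation of the expression typing judgment $\Sigma,\Gamma,\Delta \vdash \Pexp:\tau\Pgoes{d}$, re-deriving the same judgment with $\tau'$ in place of $\tau$ and $\Delta[X=\rho]$ in place of $\Delta$ at each rule. Two facts set up the induction. First, by \Cref{thm:exprtypecanonical} the output type $\tau$ is already canonical for $\Delta$: it has no unevaluated expressions and its only free type variables lie in $\Pvars{\Delta}$. Since $\Delta[X=\rho]$ differs from $\Delta$ only by turning the marker $X~\kw{var}$ into the definition $X=\rho$, the simplification $\Sigma,\Delta[X=\rho]\vdash\tau\rightsquigarrow\tau'$ does nothing to $\tau$ except replace the free occurrences of $X$ by the $\Delta[X=\rho]$-normal form of $\rho$; every other abstract variable is left untouched and the remaining structure of $\tau$ is already normal. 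Second, reading the standing well-formedness convention against the conclusion's context $\Delta[X=\rho]$, the bindings in $\Gamma$ are canonical for $\Delta[X=\rho]$, so $X\notin\kw{FTV}(\Gamma)$ and the types recorded in $\Gamma$ are unaffected by the substitution.

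With this in hand most cases are routine. For the literal and member rules \textsc{T-Bit}, \textsc{T-Bool}, \textsc{T-Integer}, \textsc{T-Enum}, \textsc{T-Err}, \textsc{T-Match}, and \textsc{T-Slice}, the output type contains neither $X$ nor an unevaluated expression (for \textsc{T-Slice} because its endpoints have already been compile-time evaluated to literals using only $\Sigma$, which is unchanged), so $\tau'=\tau$ and the same rule fires verbatim under $\Delta[X=\rho]$. The genuinely inductive cases split in two. The congruence rules \textsc{T-Index}, \textsc{T-MemHdr}, \textsc{T-MemRec}, and \textsc{T-Record} build the output type structurally from the types of subexpressions; here I apply the induction hypothesis to each premise and use that simplification is a structural homomorphism on records, headers, and stacks (rules \textsc{TyS-Rec}, \textsc{TyS-Hdr}, \textsc{TyS-Stack}) to see that reassembling the simplified component types yields exactly $\tau'$. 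The direction annotation $d$ is propagated unchanged from the root subexpression and so is preserved throughout.

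The delicate cases are the ones whose output type is itself produced by a simplification under $\Delta$, namely \textsc{T-Cast} (whose result is the simplified cast target) and \textsc{T-Call} (whose result is the simplified return type after instantiating the callee's own type parameters). For these I need a composition lemma for type simplification: if $\Sigma,\Delta\vdash\sigma\rightsquigarrow\sigma_1$ and $\Sigma,\Delta[X=\rho]\vdash\sigma_1\rightsquigarrow\sigma_2$, then $\Sigma,\Delta[X=\rho]\vdash\sigma\rightsquigarrow\sigma_2$; this holds because $\Delta[X=\rho]$ only extends the definitions of $\Delta$ and compile-time evaluation depends solely on the fixed $\Sigma$. The lemma lets me show that re-simplifying the already-normalized output under $\Delta[X=\rho]$ agrees with simplifying the original annotation directly under $\Delta[X=\rho]$, which is exactly the premise the reconstructed rule requires. \textsc{T-Cast} additionally needs the legal-cast relation $\Delta\vdash\rho_0\preceq\tau'$ to be stable under the substitution $[X:=\rho]$, proved by a side induction on the derivation of $\preceq$, and \textsc{T-UOp}/\textsc{T-BinOp} need the operator-typing oracle $\mathcal{T}$ to commute with the substitution, which I add to the standing assumptions on $\mathcal{T}$ alongside the existing safe-evaluation axioms.

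I expect the main obstacle to be the variable rule \textsc{T-Var} (and \textsc{T-Var-Const}). There the output type is read directly out of $\Gamma$, so the rule under $\Delta[X=\rho]$ can only produce $\Gamma(x)=\tau$, not $\tau'$; the step goes through precisely because the well-formedness convention forces $X\notin\kw{FTV}(\Gamma)$, whence $\tau'=\tau$. Making this convention explicit---and checking that it is exactly what the downstream uses of the lemma guarantee (instantiating a callee's fresh type parameter, where $X$ does not occur in the caller's $\Gamma$)---is the crux of the argument; the remaining work is the two supporting sub-lemmas, composition of simplification and stability of $\preceq$, both of which are straightforward inductions.
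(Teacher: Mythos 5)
Your proposal takes exactly the paper's approach: the paper's entire proof of this lemma is ``By induction on typing derivations,'' which is precisely the induction you carry out case by case. Your supporting observations---that reading the well-formedness convention against $\Delta[X=\rho]$ (forcing $X\notin\kw{FTV}(\Gamma)$) is what rescues the \textsc{T-Var} case, that simplification composes, and that the cast relation $\preceq$ and operator oracle $\mathcal{T}$ need substitution-stability assumptions---correctly fill in details the paper leaves implicit.
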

\begin{proof}
By induction on typing derivations.
\end{proof}

\begin{lemma}\label{thm:storesubst}
Suppose $\Xi,\Sigma,\Delta \vdash \Pstore$
and $\Sigma,\Delta[X=\rho] \vdash \Gamma \rightsquigarrow \Gamma'$.
Then $\Xi,\Sigma,\Delta[X=\rho] \vdash \Pstore$.
\end{lemma}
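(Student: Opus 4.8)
The plan is to unfold the definition of store typing and reduce the claim to a pointwise statement about value typing. By the definition of $\Xi,\Sigma,\Delta \vdash \Pstore$, it suffices to show that for every location $\ell \in \kw{dom}(\Pstore)$, writing $\tau = \Xi(\ell)$, the derivation of $\Xi,\Sigma,\Delta \vdash \Pstore(\ell):\tau$ can be replayed with $\Delta[X=\rho]$ in place of $\Delta$. By the well-formedness assumption that every type recorded in $\Xi$ contains no free type variables and no unevaluated expressions, each such $\tau$ is already in normal form and cannot be the name $X$, so the target type itself is untouched by the extension. What remains is therefore a pure weakening statement: value typing is stable under extending the definition context by a binding for a name fresh in $\Delta$ (which holds at every site where this lemma is applied, since $X=\rho$ always arises from a newly declared type name, exactly as in \ref{thm:envtypevarstable}).

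First I would isolate that weakening as an auxiliary lemma: if $\Xi,\Sigma,\Delta \vdash \Pval:\tau$ and $X$ does not occur in $\Delta$, then $\Xi,\Sigma,\Delta[X=\rho] \vdash \Pval:\tau$. Because the value-typing judgment is mutually recursive with statement and declaration typing through the closure rules \textsc{TV-Clos}, \textsc{TV-CClos}, and \textsc{TV-Table}, this must be proved by a simultaneous induction over value, statement, and declaration typing derivations, carrying along analogous weakening statements for those judgments. The leaf cases \textsc{TV-MemEnum}, \textsc{TV-MemErr}, and \textsc{TV-MemMk} only inspect $\Delta$ through lookups of an existing enumeration, $\Perror$, or $\Pmatchkind$ name, which are unaffected by a fresh $X$; the structural cases \textsc{TV-Rec}, \textsc{TV-Hdr}, and \textsc{TV-Stack} follow immediately from the induction hypotheses; and the environment premises appearing inside \textsc{TV-Clos} and \textsc{TV-Table} are discharged directly by \ref{thm:envtypevarstable}.

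The main obstacle is the closure cases. In \textsc{TV-Clos} and \textsc{TV-CClos} the body is checked under $\Delta$ extended with the closure's own abstract type parameters $\Pmany{Y~\kw{var}}$ (written $\Pmany{X}$ in the figure, so there is even a potential name clash with the binding being added), and the declaration and statement rules invoked there, such as \textsc{Type-Const} and \textsc{TS-Ret}, call type simplification $\Ptypeeval{\Sigma,\Delta}{\tau}{\tau'}$ under that extended context. I would therefore need a companion stability fact for type simplification under a fresh extension, and I would have to check the bookkeeping for the inner binders: if the fresh $X$ coincides with one of the closure parameters it is shadowed by the corresponding $\kw{var}$ marker, and otherwise the two extensions commute, so in either case the inner derivation is genuinely unchanged. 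Here the freshness of $X$ is essential, since it is precisely what excludes the pathological case in which $\Pval$ is a closure whose body refers to a competing definition of $X$ — the one situation that could falsify the lemma. Once the simplification stability and this shadowing argument are in place the induction goes through routinely, and reassembling the pointwise conclusions re-establishes $\Xi,\Sigma,\Delta[X=\rho] \vdash \Pstore$.
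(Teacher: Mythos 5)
Your proposal is correct and takes essentially the paper's route: the paper's entire proof of this lemma is ``By induction on typing derivations,'' and your pointwise unfolding of store typing into value typing, followed by induction over value-typing derivations with companion weakening facts for environment typing (the paper's own stability lemma for environments) and for statement/declaration typing to discharge the closure cases \textsc{TV-Clos}, \textsc{TV-CClos}, and \textsc{TV-Table}, is exactly the expansion that one-line proof leaves implicit. Your one substantive addition is also right: as literally stated the lemma is false without freshness of $X$ in $\Delta$ (e.g.\ a stored enum member $X.f$ typed by looking up $\Delta(X)$, or a closure whose body uses $X$, ceases to typecheck once $X$ is redefined to $\rho$), a hypothesis the paper omits here even though it states it explicitly in the analogous environment lemma and supplies it at every application site, so proving the corrected statement with the freshness side condition is the proper reading rather than a deviation.
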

\begin{proof}
By induction on typing derivations.
\end{proof}

\begin{lemma}\label{thm:declsubst}
Suppose $\Sigma,\Gamma,\Delta \vdash \Pdecl \dashv \Sigma',\Gamma',\Delta'$.
Take $X$, $\rho$, $\hat{\Gamma}$, and $\hat{\Gamma}'$ such that
$\Sigma,\Delta[X=\rho] \vdash \Gamma \rightsquigarrow \hat{\Gamma}$
and
$\Sigma,\Delta[X=\rho] \vdash \Gamma' \rightsquigarrow \hat{\Gamma}'$.
Then
$\Sigma,\hat{\Gamma},\Delta[X=\rho] \vdash \Pdecl
\dashv \Sigma',\hat{\Gamma}',\Delta'[X=\rho]$.
\begin{proof}
By induction on derivations of $\Sigma,\Gamma,\Delta \vdash \Pdecl \dashv \Sigma',\Gamma',\Delta'$.
\end{proof}
\end{lemma}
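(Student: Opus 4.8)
The plan is to argue by structural induction on the derivation of $\Sigma,\Gamma,\Delta \vdash \Pdecl \dashv \Sigma',\Gamma',\Delta'$, case-splitting on the final rule, and in each case re-deriving the conclusion under the extended context $\Delta[X=\rho]$ with the simplified typing contexts $\hat\Gamma$ and $\hat\Gamma'$. Before starting the induction I would isolate two auxiliary facts about type simplification. The first is a composition property: if $\Sigma,\Delta \vdash \tau \rightsquigarrow \tau'$ and $\Sigma,\Delta[X=\rho] \vdash \tau' \rightsquigarrow \hat\tau'$, then $\Sigma,\Delta[X=\rho] \vdash \tau \rightsquigarrow \hat\tau'$; this is exactly what lets me match the re-simplified output types recorded in $\hat\Gamma'$ against the types the typing rules would produce directly under $\Delta[X=\rho]$. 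The second is that compile-time evaluation $\Pcteval{\Sigma,\Pexp}{v}$ and the resulting $\Sigma'$ depend only on $\Sigma$, hence transfer verbatim; in particular the \textsc{Type-Const} case keeps $\Sigma[x=v]$ unchanged.

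The routine cases are the type declarations (\textsc{T-TypeDefDecl}, \textsc{T-EnumDecl}, \textsc{T-ErrorDecl}, \textsc{T-MatchKindDecl}): these leave $\Gamma$ untouched and only extend $\Delta$, so they follow by re-applying the same rule, checking that $\Delta'[X=\rho]$ agrees with running the rule under $\Delta[X=\rho]$ — which holds provided the substituted variable $X$ is kept distinct from the freshly declared type name (guaranteed by the freshness conventions in the well-formedness assumptions). The variable- and table-declaration cases (\textsc{Type-Var}, \textsc{Type-VarInit}, \textsc{Type-Const}, \textsc{Type-Inst}, \textsc{T-TableDecl}) combine the composition property on each embedded type-simplification premise with \cref{thm:exprsubst} on each embedded expression subderivation (the $\kw{act\_ok}$ and constructor-typing premises being just restricted expression typings). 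Finally, the control and function cases (\textsc{T-CtrlDecl}, \textsc{T-FuncDecl}) are handled by recursion: the nested declaration list falls to the induction hypothesis and the body statement to the analogous statement substitution lemma. I would in fact phrase the whole result as a simultaneous induction over expression, statement, and declaration typing — since variable declarations are statements and control bodies nest declarations — with \cref{thm:exprsubst} serving as the expression component and an identically-structured statement lemma supplying the premise I need for control and function bodies.

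The delicate step is reconciling the fixed context in \cref{thm:exprsubst} with the simplified contexts $\hat\Gamma$, $\hat\Gamma'$ that this statement demands. As stated, \cref{thm:exprsubst} yields $\Sigma,\Gamma,\Delta[X=\rho] \vdash \Pexp:\hat\tau'$ with the \emph{original} $\Gamma$, whereas I must conclude with $\hat\Gamma$. These cannot simply be swapped while leaving the output type fixed, because a variable lookup (\textsc{T-Var}) returns $\Gamma(x)$ verbatim, so simplifying the context also simplifies the looked-up type. What makes the swap legitimate is that under $\Delta[X=\rho]$ the target type $\hat\tau'$ is already fully simplified, while $\hat\Gamma$ stores exactly the $\Delta[X=\rho]$-simplifications of the entries of $\Gamma$; threading these simplifications through the expression typing derivation shows the same canonical type is produced at the root. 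I expect most of the real work of the proof to live in making this compatibility precise — effectively a lemma that expression (and statement) typing commutes with simplifying the typing context under a fixed $\Delta[X=\rho]$ — together with the $\alpha$-renaming bookkeeping in \textsc{T-FuncDecl} needed to keep the function's own bound type parameters $\Pmany{X}$ (added to $\Delta$ as $X~\kw{var}$) disjoint from the variable $X$ being instantiated, so that the two context extensions commute.
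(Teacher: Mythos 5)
Your proposal is correct and follows exactly the route the paper takes: the paper's entire proof is ``by induction on derivations of $\Sigma,\Gamma,\Delta \vdash \Pdecl \dashv \Sigma',\Gamma',\Delta'$,'' which is your structural induction with per-rule case analysis, and the auxiliary facts you isolate (composition of type simplification, $\Sigma$-only dependence of compile-time evaluation, the mutual induction with \cref{thm:exprsubst} and \cref{thm:stmtsubst}, and the compatibility of expression typing with context simplification under canonical-form well-formedness) are precisely the details the paper leaves implicit. The paper even uses your composition property elsewhere (in the proof of \cref{thm:exprterm}, where reduction under $\Delta[\Pmany{X=\rho},\Pmany{Y=\rho'}]$ is decomposed into successive reductions), so your elaboration is consistent with the authors' intent.
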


\begin{lemma}\label{thm:stmtsubst}
Suppose $\Sigma,\Gamma,\Delta \vdash \Pstmt \dashv \Sigma',\Gamma'$.
Take $X$, $\rho$, $\hat{\Gamma}$, and $\hat{\Gamma}'$ such that
$\Sigma,\Delta[X=\rho] \vdash \Gamma \rightsquigarrow \hat{\Gamma}$
and
$\Sigma,\Delta[X=\rho] \vdash \Gamma' \rightsquigarrow \hat{\Gamma}'$.
Then
$\Sigma,\hat{\Gamma},\Delta[X=\rho] \vdash \Pstmt
\dashv \Sigma',\hat{\Gamma}'$.
\begin{proof}
By induction on derivations of $\Sigma,\Gamma,\Delta \vdash \Pstmt \dashv \Sigma',\Gamma'$.
\end{proof}
\end{lemma}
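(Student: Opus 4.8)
The plan is to induct on the derivation of $\Sigma,\Gamma,\Delta \vdash \Pstmt \dashv \Sigma',\Gamma'$, reading the result as one clause of a simultaneous induction that also establishes \Cref{thm:exprsubst} and \Cref{thm:declsubst} (these lemmas are mutually dependent through control and function bodies, even though they are stated separately): statements embed expressions in \textsc{TS-Assign}, \textsc{TS-If}, \textsc{TS-Ret}, \textsc{TS-Call}, and \textsc{TS-TblCall}, and embed variable declarations in \textsc{TS-Decl}, while the control and function cases of \Cref{thm:declsubst} recheck statement bodies. Two structural facts about type simplification are used throughout: $\rightsquigarrow$ is syntax-directed and therefore deterministic, and it is total on the well-formed contexts of \Cref{app:safety}; together these let me name, at each intermediate point, the unique normal form $\hat{\Gamma}_1$ of an output context $\Gamma_1$, and guarantee that the normalizations of $\Gamma$ and $\Gamma'$ named in the statement exist.

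Most cases are then routine. In \textsc{TS-Empty} and \textsc{TS-Exit} the output contexts coincide with the inputs, so $\hat{\Gamma} = \hat{\Gamma}'$ by determinism of $\rightsquigarrow$ and the same rule re-applies. \textsc{TS-Block} threads the induction through its two premises, introducing the normal form $\hat{\Gamma}_1$ of the intermediate context and reassembling with \textsc{TS-Block}; the discarded final context causes no trouble because a block re-exports its input context unchanged. \textsc{TS-Decl} hands its premise to \Cref{thm:declsubst} and then forgets the extra $\Delta$-component. The expression-bearing rules push their premises through \Cref{thm:exprsubst}: \textsc{TS-Assign} substitutes in both operands, \textsc{TS-If} substitutes in the guard and applies the induction hypothesis to the two branches, and \textsc{TS-Call}/\textsc{TS-TblCall} substitute in the callee or table expression.

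The crux, and the step I expect to be hardest, is a single commutation property for normalization: for any type (and, pointwise, any context), normalizing under $\Delta$ and then performing the substitution $X := \rho$ must agree with normalizing directly under $\Delta[X=\rho]$. Everything delicate reduces to this. On one hand, \Cref{thm:exprsubst} is phrased with the context $\Gamma$ held fixed, whereas the present lemma wants typing under the normalized $\hat{\Gamma}$; the gap closes only because $\Gamma$ and $\hat{\Gamma}$ have the same normal form under $\Delta[X=\rho]$, so by the commutation property they are interchangeable as typing contexts (every rule looks a type up and immediately normalizes it with the current $\Delta$). On the other hand, the rules whose side conditions themselves invoke $\rightsquigarrow$ — notably the $\Sigma,\Delta \vdash \Gamma(\kw{return}) \rightsquigarrow \tau$ premise of \textsc{TS-Ret}, and the declared-type normalizations inside the variable declarations reached through \textsc{TS-Decl} — only line up with the types that \Cref{thm:exprsubst} outputs once this commutation is known. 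I would prove the commutation by induction on the type being normalized, using \Cref{thm:exprtypecanonical} and the well-formedness invariants to ensure that $\rho$ is an already-normalized base type whose substitution introduces no new redexes; this is exactly where the stratification of the type system pays off, since generics range only over base types and the substitution therefore stays first-order.

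A final bit of care concerns binders in the simultaneous induction: when \Cref{thm:declsubst} rechecks a control or function body, it does so under $\Delta$ extended with that declaration's own type parameters marked $X'~\kw{var}$ and with a fresh $\kw{return}$ binding. I would $\alpha$-rename so that the substituted variable $X$ is disjoint from these local parameters, and then check that $[X=\rho]$ commutes with the extensions $\Delta[\Pmany{X'~\kw{var}}]$ and $\Gamma[\kw{return}{:}\tau]$. Once the commutation property and this disjointness are in place, every remaining case is a direct appeal to an induction hypothesis or a sibling substitution lemma.
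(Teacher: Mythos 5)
Your proposal is correct and takes essentially the same route as the paper: the paper's entire proof is the one-line ``by induction on derivations of $\Sigma,\Gamma,\Delta \vdash \Pstmt \dashv \Sigma',\Gamma'$,'' with the expression-bearing and declaration-bearing cases implicitly discharged by \Cref{thm:exprsubst} and \Cref{thm:declsubst}, exactly as you do. Your additional scaffolding --- the simultaneous induction with the sibling lemmas and the commutation of normalization with the substitution $X:=\rho$ --- is detail the paper leaves unstated rather than a departure from its approach.
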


\begin{lemma}\label{thm:valtypestable}
If $\Xi,\Sigma,\Delta \vdash \Pval:\tau$ and $\Xi' \supseteq \Xi$,
then $\Xi',\Sigma,\Delta \vdash \Pval:\tau$.
\begin{proof}
By induction on value typing derivations.
\end{proof}
\end{lemma}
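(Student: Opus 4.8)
The plan is to prove this by structural induction on the derivation of the value typing judgment $\Xi,\Sigma,\Delta \vdash \Pval:\tau$, weakening the store typing from $\Xi$ to an arbitrary $\Xi' \supseteq \Xi$. The first step is to locate where $\Xi$ actually appears in the value typing rules of \Cref{fig:value-typing}. In the rules for base and aggregate values --- \textsc{TV-Rec}, \textsc{TV-Hdr}, \textsc{TV-Stack}, \textsc{TV-MemEnum}, \textsc{TV-MemMk}, \textsc{TV-MemErr}, \textsc{TV-Native}, and the base cases for booleans and integers --- the store typing $\Xi$ is either unused (the member, type-member, and native-function rules) or simply threaded unchanged into value-typing premises on the component values (records, headers, stacks). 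For all of these I would apply the induction hypothesis to each subderivation and reassemble the conclusion with the same rule, leaving $\tau$ untouched.

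The only genuinely interesting cases are \textsc{TV-Clos}, \textsc{TV-CClos}, and \textsc{TV-Table}, each of which contains a premise of the form $\Xi,\Sigma,\Delta \vdash \Penv:\Gamma$ (environment typing). Crucially, the other premises of these rules are statement, declaration, expression, and \kw{act\_ok} judgments that do not mention $\Xi$ at all, so they carry over verbatim when we switch to $\Xi'$. Hence the whole argument reduces to a companion weakening statement for environment typing: if $\Xi \vdash \Penv:\Gamma$ and $\Xi' \supseteq \Xi$, then $\Xi' \vdash \Penv:\Gamma$.

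I would establish this environment-typing weakening by a separate induction on the derivation of $\Xi \vdash \Penv:\Gamma$, using the rules \textsc{TEnv-E}, \textsc{TEnv-C}, and \textsc{TEnv-R}. The base case \textsc{TEnv-E} is immediate, and \textsc{TEnv-R} follows by applying the induction hypothesis and re-applying the rule. The one step that actually uses $\Xi' \supseteq \Xi$ is \textsc{TEnv-C}, whose side condition $\Xi(\ell)=\tau$ must be re-derived for $\Xi'$: since $\Xi'$ extends $\Xi$ as a finite partial map --- agreeing with $\Xi$ wherever $\Xi$ is defined --- we have $\Xi'(\ell)=\tau$, and the rule reapplies. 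With this companion lemma in hand, the closure, constructor-closure, and table cases of the main induction follow by replacing the environment-typing premise with its weakened form and leaving the remaining premises alone.

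I do not expect any real obstacle: this is a routine monotonicity argument. The only point requiring care is recognizing the auxiliary dependence of value typing on environment typing, so that the induction is organized to first establish (or simultaneously invoke) weakening for $\Xi \vdash \Penv:\Gamma$; the genuinely monotonic content is the single appeal to $\Xi' \supseteq \Xi$ in \textsc{TEnv-C}.
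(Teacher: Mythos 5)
Your proposal is correct and matches the paper's approach: the paper proves this lemma by exactly the induction on value typing derivations you describe, and it isolates the environment-typing weakening you identify as the one substantive ingredient into its own lemma (\cref{thm:envtypestable}, proved by induction on environment typing derivations, with the appeal to $\Xi' \supseteq \Xi$ landing in \textsc{TEnv-C} just as you say). Your case analysis --- that only \textsc{TV-Clos}, \textsc{TV-CClos}, and \textsc{TV-Table} genuinely touch $\Xi$ via their $\Penv:\Gamma$ premises, while the remaining premises are $\Xi$-free and carry over verbatim --- is precisely the content the paper's one-line proof leaves implicit.
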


\begin{theorem}\label{thm:ctevalcorrect}
Suppose $\Pcteval{\Sigma, \Pexp}{\Pval}$ and take any
$\Pcp,\Ptypedefs,\Pstore,\Penv$ with
$\Sigma \vdash \langle \Pstore, \Penv \rangle$.
For any $\Pval'$, if $\Pexpreval{\Pcp}{\Ptypedefs}{\Pstore}{\Penv}{\Pexp} {\Pstore'}{\Pval'}$
then $\Pval' = \Pval$ and $\Sigma \vdash \langle \Pstore', \Penv \rangle$.
\begin{proof}
We proceed by induction on compile-time evaluation.
\begin{description}
\item[Case \textsc{CTE-Bool}.] Trivial.

\item[Case \textsc{CTE-Bit}.] Trivial.

\item[Case \text{CTE-Var}.]
By inversion on $\Pcteval{\Sigma, x}{\Pval}$ we know that $\Sigma(x) = \Pval$,
and by inversion on $\Pexpreval{\Pcp}{\Ptypedefs}{\Pstore}{\Penv}{x} {\Pstore'}{\Pval'}$
we have that $\Penv(x)=\ell$ and $\Pstore(\ell)=\Pval'$.
By definition of $\Sigma \vdash \langle \Pstore, \Penv \rangle$
we have ${\Pstore(\Penv(x))} = {\Sigma(x)}$, and thus ${\Pval} = {\Pval'} $,
and by assumption $\Sigma \vdash \langle \Pstore, \Penv \rangle$.

\item[Case \text{CTE-UOp}.]
By inversion on $\Pcteval{\Sigma, {\ominus}\Pexp}{{\ominus}\Pval}$ we know that
$\Pcteval{\Sigma, \Pexp}{ \Pval } $, and by inversion on
$\Pexpreval{\Pcp}{\Ptypedefs}{\Pstore}{\Penv}{{\ominus}\Pexp}
          {\Pstore'}{\Puopeval{\ominus}{\Pval'}}$
we have that $\Pexpreval{\Pcp}{\Ptypedefs}{\Pstore}{\Penv}{\Pexp}
          {\Pstore'}{\Pval'}$. By applying the induction hypothesis we get that
$\Pval=\Pval'$ and $\Sigma \vdash \langle \Pstore', \Penv \rangle$,
and thus $\ominus\Pval=\ominus\Pval'$.

\item[Case \text{CTE-BinOp}.]
By inversion on $\Pcteval{\Sigma, \Pexp_1\oplus\Pexp_2}{\Pval_1\oplus\Pval_2}$,
we have that $\Pcteval{\Sigma, \Pexp_1}{\Pval_1}$ and $\Pcteval{\Sigma, \Pexp_2}{\Pval_2}$,
and by inversion on $\Pexpreval{\Pcp}{\Ptypedefs}{\Pstore}{\Penv}{\Pexp_1\oplus\Pexp_2}
          {\Pstore_2}{\Pbinopeval{\oplus}{\Pval_1'}{\Pval_2'}}$
we have that $\Pexpreval{\Pcp}{\Ptypedefs}{\Pstore}{\Penv}{\Pexp_1}{\Pstore_1}{\Pval_1'}$
and $\Pexpreval{\Pcp}{\Ptypedefs}{\Pstore_1}{\Penv}{\Pexp_2}{\Pstore_2}{\Pval_2'}$.
By applying the induction hypothesis we respectively get that $\Pval_1 = \Pval_1'$
and $\Sigma \vdash \langle \Pstore_1, \Penv \rangle$, and applying the
induction hypothesis again we get $\Pval_2 = \Pval_2'$ and
$\Sigma \vdash \langle \Pstore_2, \Penv \rangle$. Thus we have that
$\Pval_1\oplus\Pval_2 = \Pval_1'\oplus\Pval_2'$.
\end{description}
\end{proof}
\end{theorem}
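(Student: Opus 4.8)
The plan is to proceed by induction on the derivation of the compile-time evaluation judgment $\Pcteval{\Sigma, \Pexp}{\Pval}$. Because this relation is defined only on the restricted grammar of booleans, integers, variables, unary operations, and binary operations, the induction splits into exactly the five cases corresponding to rules \textsc{CTE-Bool}, \textsc{CTE-Bit}, \textsc{CTE-Var}, \textsc{CTE-UOp}, and \textsc{CTE-BinOp}. In each case I would invert the runtime hypothesis $\Pexpreval{\Pcp}{\Ptypedefs}{\Pstore}{\Penv}{\Pexp}{\Pstore'}{\Pval'}$, which for these syntactic forms is syntax-directed and so admits a unique applicable rule, to learn exactly how $\Pval'$ and the output store $\Pstore'$ arose. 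The goal in each case is to match them against the compile-time result $\Pval$ and to show that the agreement $\Sigma \vdash \langle \Pstore, \Penv \rangle$ survives.

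The two literal cases are immediate: inverting \textsc{E-Bool} and \textsc{E-Int} shows the store is left untouched and the returned value is the same literal the compile-time rule produced, so $\Pval' = \Pval$ and $\Pstore' = \Pstore$, whence agreement is preserved verbatim. The variable case is where the agreement hypothesis does the real work. From \textsc{CTE-Var} I obtain $\Sigma(x) = \Pval$, while inverting \textsc{E-Var} yields $\Penv(x) = \ell$ and $\Pstore(\ell) = \Pval'$ with $\Pstore' = \Pstore$. The definition of $\Sigma \vdash \langle \Pstore, \Penv \rangle$ states precisely that $\Pstore(\Penv(x)) = \Sigma(x)$ for every $x$ in $\kw{dom}(\Sigma)$, so $\Pval' = \Pstore(\ell) = \Pstore(\Penv(x)) = \Sigma(x) = \Pval$, and since the store is unchanged the agreement carries over unchanged.

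For the two inductive cases I would appeal to the induction hypothesis on the subderivations. In the \textsc{CTE-UOp} case, inverting \textsc{E-Uop} gives a runtime operand value together with an unchanged store; the induction hypothesis equates this operand value with the compile-time operand value and preserves agreement, and applying $\ominus$ on both sides yields the desired equality of $\Puopeval{\ominus}{\Pval'}$ with the compile-time result. The \textsc{CTE-BinOp} case is analogous but threads the store through two subevaluations: inverting \textsc{E-BinOp} produces an intermediate store $\Pstore_1$ from the left operand and final store $\Pstore_2 = \Pstore'$ from the right. I would apply the induction hypothesis first to the left subderivation, obtaining equality of left values and $\Sigma \vdash \langle \Pstore_1, \Penv \rangle$, and then---crucially using that intermediate agreement---to the right subderivation under $\Pstore_1$, obtaining equality of right values and $\Sigma \vdash \langle \Pstore_2, \Penv \rangle$. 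Combining the two equalities gives equality of the two applications of $\oplus$.

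There is no genuine obstacle here; the only delicate point is the bookkeeping in the binary-operator case, where one must establish agreement at the intermediate store $\Pstore_1$ \emph{before} the second appeal to the induction hypothesis is licensed. This is exactly why the conclusion carries $\Sigma \vdash \langle \Pstore', \Penv \rangle$ rather than merely $\Pval' = \Pval$: the stronger invariant---which morally records that constant-foldable expressions are effect-free---is what makes the induction go through on compound expressions.
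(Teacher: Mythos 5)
Your proposal is correct and takes essentially the same route as the paper's own proof: induction on the derivation of the compile-time evaluation judgment, inversion of the syntax-directed big-step rules in each of the five cases (\textsc{CTE-Bool}, \textsc{CTE-Bit}, \textsc{CTE-Var}, \textsc{CTE-UOp}, \textsc{CTE-BinOp}), with the agreement invariant $\Sigma \vdash \langle \Pstore, \Penv \rangle$ threaded through the intermediate store in the binary-operator case exactly as the paper does. One tiny caveat: in the unary case, inversion of \textsc{E-Uop} alone does not give you an ``unchanged store''---it gives a subevaluation from $\Pstore$ to $\Pstore'$, and it is the induction hypothesis (as you in fact then invoke) that supplies $\Sigma \vdash \langle \Pstore', \Penv \rangle$, so the parenthetical claim is unnecessary and strictly unsupported, though harmless to the argument.
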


The following theorems show that evaluation does not change constant variables.
\begin{theorem}
Suppose that 
$\Sigma, \Gamma, \Delta \vdash \Pexp:\tau$
and $\Sigma(x)=\Pval$.
If $\Sigma \vdash \langle \Pstore, \Penv \rangle$ and
$\langle \Pcp, \Delta, \Pstore, \Penv, \Pexp \rangle \Downarrow \langle \Pstore', \Psig \rangle$,
then $\Pstore'(\Penv(x)) = \Pval$.
\end{theorem}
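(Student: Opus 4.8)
The plan is to prove a strengthened statement by mutual induction over all of the big-step evaluation judgments, rather than attacking the expression-evaluation statement in isolation. The natural induction invariant is preservation of the agreement relation $\Sigma \vdash \langle \Pstore, \Penv \rangle$ itself: whenever a configuration satisfying $\Sigma \vdash \langle \Pstore, \Penv \rangle$ evaluates to an output store $\Pstore'$ (with the appropriate threading of $\Sigma$ and $\Penv$ for statements and declarations), I want to re-establish $\Sigma \vdash \langle \Pstore', \Penv \rangle$. The theorem as stated is then the immediate instance obtained by reading off the binding for the single constant $x$. Since a function call is an expression that runs a statement body, which may itself perform assignments, copy-in/copy-out, nested declarations, and further calls, the induction must range simultaneously over expression evaluation, statement evaluation, l-value evaluation, the $\Downarrow_{\mathit{copy}}$ judgment, and l-value assignment; I would state one agreement-preservation lemma per judgment and prove them together, in parallel with \Cref{thm:ctevalcorrect}.

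First I would dispatch the routine cases. Every rule whose conclusion merely threads the store through its subderivations---\textsc{E-Int}, \textsc{E-Bool}, \textsc{E-Var}, \textsc{E-BinOp}, \textsc{E-Cast}, \textsc{E-Rec}, the member, index, and slice rules---leaves the store either unchanged or equal to a sequence of subevaluations, so repeated use of the induction hypotheses closes them. The rules that genuinely enlarge the store---\textsc{E-VarDecl}, \textsc{E-VarInit}, \textsc{E-Inst}, and the copy rules \textsc{CopyIn}, \textsc{CopyOut}, \textsc{CopyInOut}---only ever write at a freshly chosen location $\Pfresh{\ell}$. Because any constant $x$ already has $\Penv(x) \in \kw{dom}(\Pstore)$, a fresh location is distinct from $\Penv(x)$, so these writes leave every constant binding intact.

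The crux is the l-value assignment case, the only place the store is mutated in place. The rules \textsc{LW-Var}, \textsc{LW-Rec}, \textsc{LW-Idx}, \textsc{LW-Slice}, and the header rules all ultimately write at $\Penv(y)$, where $y$ is the root variable of the l-value being assigned---both in \textsc{E-Assign} and in the copy-out writebacks $\Passign{\Plval}{\Pstore'(\ell)}$ that conclude \textsc{E-Call}, \textsc{E-CallN}, \textsc{E-Call-DeclExit}, and \textsc{E-Call-StmtExit}. Here I would invoke the typing derivation carried by the hypothesis: an assignable expression has direction $\kw{goes}~\Pinout$, and the only rule producing a root variable with that direction is \textsc{T-Var}, which requires $y \notin \kw{dom}(\Sigma)$, whereas a constant receives $\kw{goes}~\Pin$ via \textsc{T-Var-Const}. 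Hence the root of any assigned l-value is never a constant, and combined with the freshness invariant that distinct in-scope variables occupy distinct locations (maintained because every binding form allocates with $\Pfresh{\ell}$) we obtain $\Penv(y) \neq \Penv(x)$ for every constant $x$, so the in-place write preserves each constant's value. I expect the genuine obstacle to be the function-call cases: the body executes in the closure environment $\Penv_c$, not the caller's $\Penv$, so the body's induction hypothesis speaks about constants at $\Penv_c$'s locations while the theorem concerns $\Penv(x)$. Resolving this cleanly pushes toward a global, location-based invariant---that the set of locations currently holding constants (in any live scope) is disjoint from the write set of evaluation, which by the argument above consists only of fresh locations and non-constant l-value roots---so that both the caller's constants and those captured in $\Penv_c$ remain frozen. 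The remaining bookkeeping, including the growth of $\Sigma$ under constant declarations inside blocks, is then routine.
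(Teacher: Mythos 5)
Your proposal takes essentially the same route as the paper: the paper's own proof is the one-line ``by induction on the evaluation relation,'' followed immediately by the remark that preservation of $\Sigma \vdash \langle \Pstore, \Penv \rangle$ across evaluation is the corollary of these theorems---which is precisely the strengthened invariant you adopt as your induction hypothesis, so the two formulations are interderivable. Your elaboration of the cases the paper leaves implicit (the $\kw{goes}~\Pinout$ versus $\kw{goes}~\Pin$ discipline of \textsc{T-Var} and \textsc{T-Var-Const} excluding constants as assignment roots, freshness for allocating rules and the copy judgments, and the caller-versus-closure environment mismatch in the call cases) is a faithful, more detailed rendering of the same argument.
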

\begin{proof}
By induction on the evaluation relation.
\end{proof}

\begin{theorem}
Suppose that 
$\Sigma, \Gamma, \Delta \vdash \Pstmt \dashv \Sigma', \Gamma'$
and $\Sigma(x)=\Pval$.
If $\Sigma \vdash \langle \Pstore, \Penv \rangle$ and
$\langle \Pcp, \Delta, \Pstore, \Penv, \Pstmt \rangle \Downarrow \langle \Pstore', \Penv', \Psig \rangle$,
then $\Pstore'(\Penv'(x)) = \Pval$.
\end{theorem}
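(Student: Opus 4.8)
The plan is to prove this by induction on the derivation of the statement evaluation $\langle \Pcp, \Delta, \Pstore, \Penv, \Pstmt \rangle \Downarrow \langle \Pstore', \Penv', \Psig \rangle$, carried out simultaneously with the companion theorem for expressions stated just above. The two results are genuinely interdependent---function-call expressions evaluate statement bodies, and statements contain expressions---so I would set them up as a single mutual induction on evaluation derivations and freely appeal to the expression-level statement wherever a premise evaluates an expression. The recurring move in every case is store-threading: each premise transforms the store through an intermediate configuration, and I would show that the constant's location $\Penv(x)$ retains the value $\Pval$ across each step, composing these facts to reach the final store $\Pstore'$.

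Most cases are routine. For \textsc{E-Empty} and \textsc{E-Exit} neither the store nor the environment changes, so the hypothesis $\Sigma \vdash \langle \Pstore, \Penv \rangle$ gives the conclusion immediately. For the structural cases \textsc{E-IfT}, \textsc{E-IfF}, \textsc{E-Block}, and \textsc{E-Return}, I would apply the expression theorem to the guard or returned expression and the induction hypothesis to each sub-statement, threading the intermediate stores; in each of these rules the reported output environment is the incoming $\Penv$, so $\Penv'(x) = \Penv(x)$ and the claim follows from the threaded store facts. The statement-level \textsc{E-Call} rule simply delegates to expression evaluation, so it is discharged by the companion theorem directly.

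The cases that actually touch the store or the environment are the crux. In \textsc{E-Assign}, inversion on \textsc{TS-Assign} gives that the left-hand side $\Pexp_1$ has direction $\kw{inout}$; since the only l-value rule producing $d \neq \Pin$ is \textsc{T-Var}, the root variable must be typed by \textsc{T-Var} rather than \textsc{T-Var-Const}, so it does not lie in $\kw{dom}(\Sigma)$ and is in particular a different name from the constant $x$. Because the l-value-writing rules ultimately modify only the single location bound to that root variable (bottoming out in \textsc{LW-Var}), I would invoke the structural invariant that distinct in-scope variable names occupy distinct storage locations---the no-aliasing property guaranteed by copy-in copy-out---to conclude that $\Penv(x)$ is untouched, while the sub-expression evaluations feeding the l-value and the right-hand side preserve the constant by the expression theorem. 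The declaration cases \textsc{E-VarInit}, \textsc{E-VarDecl}, and \textsc{E-Inst} allocate a fresh location $\ell$ and extend the environment; since $\Penv(x)$ already lies in the store's domain by the constant-agreement hypothesis, freshness gives $\ell \neq \Penv(x)$, so the store update leaves $\Pstore(\Penv(x))$ intact, and under the convention that local declarations do not reuse constant names the new binding does not disturb the binding of $x$, so $\Penv'(x) = \Penv(x)$. \textsc{E-Call-Table} combines these ingredients: key evaluation preserves constants by the expression theorem, and the selected action executes as a call handled by the mutual hypothesis.

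The main obstacle is precisely the aliasing and shadowing reasoning in the assignment and declaration cases. Establishing that an assignment's target location and a freshly allocated location are always distinct from $\Penv(x)$, and that the constant's binding is never shadowed by a local declaration, is exactly where the type system's direction discipline (constants are non-assignable via \textsc{T-Var-Const}) must be combined with the well-formedness invariants on stores and environments maintained elsewhere in the development. Once those invariants are in hand, the remaining work is bookkeeping: threading the preserved-location fact through the chain of premises in each evaluation rule.
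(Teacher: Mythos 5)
Your proposal takes essentially the same approach as the paper: the paper proves this statement (and its expression and declaration companions) simply ``by induction on the evaluation relation,'' which is precisely the mutual induction on evaluation derivations you set up. Your elaboration of the delicate cases---the direction discipline ruling out constant targets in \textsc{E-Assign}, freshness of allocated locations in the declaration cases, and the no-aliasing and no-shadowing invariants---is consistent with, and considerably more explicit than, the paper's one-line proof.
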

\begin{proof}
By induction on the evaluation relation.
\end{proof}

\begin{theorem}
Suppose that 
$\Sigma, \Gamma, \Delta \vdash \Pdecl \dashv \Sigma', \Gamma', \Delta'$
and $\Sigma(x)=\Pval$.
If $\Sigma \vdash \langle \Pstore, \Penv \rangle$ and
$\langle \Pcp, \Delta, \Pstore, \Penv, \Pdecl \rangle \Downarrow \langle \Delta', \Pstore', \Penv', \Psig \rangle$,
then $\Pstore'(\Penv'(x)) = \Pval$.
\end{theorem}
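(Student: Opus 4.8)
The plan is to argue by induction on the derivation of the declaration evaluation $\langle \Pcp, \Delta, \Pstore, \Penv, \Pdecl \rangle \Downarrow \langle \Delta', \Pstore', \Penv', \Psig \rangle$, with a case analysis on the final rule, reusing the already-established companion theorem for expression evaluation whenever a subexpression is evaluated. It helps to first unfold the hypothesis $\Sigma \vdash \langle \Pstore, \Penv \rangle$: for every $y \in \kw{dom}(\Sigma)$ the lookup $\Pstore(\Penv(y))$ is defined and equals $\Sigma(y)$. In particular it guarantees $\Penv(x) \in \kw{dom}(\Pstore)$, which is the fact I will lean on repeatedly to keep the fixed location of the tracked constant $x$ disjoint from any freshly allocated location.

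First I would dispatch the type declarations \textsc{E-TypeDefDecl}, \textsc{E-EnumDecl}, \textsc{E-ErrorDecl}, and \textsc{E-MatchKindDecl}. Each of these modifies only $\Ptypedefs$, leaving the store and environment untouched, so $\Pstore' = \Pstore$ and $\Penv' = \Penv$ and the conclusion $\Pstore'(\Penv'(x)) = \Pstore(\Penv(x)) = \Sigma(x) = \Pval$ follows immediately from the invariant.

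The remaining cases all fit one of two templates. In the pure-allocation cases, \textsc{E-VarDecl} and the object declarations \textsc{E-TableDecl}, \textsc{E-CtrlDecl}, and \textsc{E-FuncDecl}, the resulting store extends $\Pstore$ at a fresh location $\ell$ and the resulting environment extends $\Penv$ with a fresh binding. Since $\ell$ is fresh with respect to the current store while $\Penv(x) \in \kw{dom}(\Pstore)$, we have $\ell \neq \Penv(x)$, so the tracked location is untouched; and since the newly bound name is distinct from $x$, we have $\Penv'(x) = \Penv(x)$, whence $\Pstore'(\Penv'(x)) = \Pstore(\Penv(x)) = \Pval$. The initialized cases \textsc{E-VarInit} and \textsc{E-Inst} first evaluate one or more subexpressions to an intermediate store $\Pstore_1$; the expression analogue of this theorem gives $\Pstore_1(\Penv(x)) = \Pval$, and since expression evaluation only grows the store's domain, $\Penv(x)$ still lies in $\kw{dom}(\Pstore_1)$ and the fresh-allocation argument above applies verbatim. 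Finally, \textsc{E-Const} re-invokes the strictly smaller \textsc{E-VarInit} derivation and is therefore handled directly by the induction hypothesis.

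The main obstacle is the precise handling of the freshly introduced bindings. Two points must be nailed down: that a fresh location is genuinely distinct from $\Penv(x)$ — which rests on the observation that the store's domain only ever grows through subevaluation, so $\Penv(x)$ persists into every intermediate store — and that the freshly bound name does not coincide with $x$, which is exactly where the paper's naming convention (locally declared names are distinct from names already in scope, so an ambient constant is never shadowed) is indispensable; without it, shadowing $x$ with a differently-valued binding would make the conclusion $\Pstore'(\Penv'(x)) = \Pval$ genuinely false. A secondary subtlety is that, because expression evaluation itself invokes statement and declaration evaluation through \textsc{E-Call}, the three constant-preservation theorems are really proved simultaneously by mutual induction on evaluation derivations; for the present statement it suffices to appeal to the expression theorem only on strictly smaller subderivations.
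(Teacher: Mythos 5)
Your proof takes essentially the same route as the paper's, whose entire proof is ``by induction on the evaluation relation'': you carry out that induction case by case, appealing to the companion expression-level theorem on subevaluations (and, implicitly, to the mutual induction forced by \textsc{E-Call}) exactly as intended. Your explicit identification of the no-shadowing convention---needed so that a freshly declared name cannot rebind the tracked constant $x$---is a genuine subtlety the paper leaves implicit, but it does not change the approach.
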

\begin{proof}
By induction on the evaluation relation.
\end{proof}

As a corollary (or really corollaries) to the above, if
$Sigma \vdash \langle \Pstore, \Penv \rangle$ before an evaluation
then $\Sigma \vdash \langle \Pstore', \Penv' \rangle$ after the
evaluation. With this in mind, in the soundness theorems below we only
bother proving agreement between the constant environment and the
run-time environment when we add new values to the constant
environment, since we can now see that old values are left untouched.

\subsection{Lemmas about semantic typing}

\begin{lemma}\label{thm:valsemanticsubst}
Suppose $\Xi,\Delta,\Sigma \vDash \Pval : \tau$.
If $\Sigma,\Delta[X=\rho] \vdash \tau \rightsquigarrow \hat{\tau}$,
then $\Xi,\Delta[X=\rho],\Sigma \vDash \Pval:\hat{\tau}$.
\begin{proof}
By induction on typing derivations.
\end{proof}
\end{lemma}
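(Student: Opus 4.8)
The plan is to induct on the derivation of the syntactic value typing $\Xi,\Sigma,\Delta \vdash \Pval:\tau$ that underlies the semantic hypothesis $\Xi,\Delta,\Sigma \vDash \Pval:\tau$, inverting the simplification derivation $\Sigma,\Delta[X=\rho] \vdash \tau \rightsquigarrow \hat{\tau}$ at each step to read off the shape of $\hat{\tau}$. By the definition of semantic value typing it suffices to re-establish (i) the syntactic typing $\Xi,\Sigma,\Delta[X=\rho] \vdash \Pval:\hat{\tau}$ and (ii) for the three special value forms---closures, constructor closures, and tables---the accompanying semantic side conditions. The engine of the argument is that the value-typing rules and the type-simplification rules are head-matched: a record value forces a record type (\textsc{TV-Rec}) that simplifies componentwise (\textsc{TyS-Rec}), and symmetrically for headers (\textsc{TV-Hdr}/\textsc{TyS-Hdr}) and stacks (\textsc{TV-Stack}/\textsc{TyS-Stack}), so the two derivations march in lockstep and the syntactic half (i) falls out of the induction itself rather than needing a separately cited value-substitution lemma.

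First I would dispatch the first-order cases. Literals (booleans and integers) carry closed, already-normal types, so $\hat{\tau}=\tau$ and there is nothing to do. For enum, error, and match-kind members (\textsc{TV-MemEnum}, \textsc{TV-MemErr}, \textsc{TV-MemMk}) simplification fixes the (open) enumeration type, and the only obligation is a lookup side condition such as $\Delta(X)=\Penumtype{X}{\Pmany{f}}$, which survives the passage to $\Delta[X=\rho]$ because instantiating an abstract variable never deletes or overwrites an existing type definition; here I rely on the $\alpha$-freshness convention separating the instantiated variable from declared type names. For the structural forms I invert the matching \textsc{TyS} rule to obtain the componentwise reductions $\tau_i\rightsquigarrow\hat{\tau}_i$, apply the induction hypothesis to each element, and reassemble with the same value-typing rule over $\hat{\tau}$.

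The main obstacle is the higher-order forms \textsc{TV-Clos}, \textsc{TV-CClos}, and \textsc{TV-Table}, because their semantic typing bundles extra conditions phrased with the \emph{semantic} declaration- and statement-typing judgments over locally extended contexts. For a closure carrying type $\Pfunctiontype{\Pmany{X'}}{\Pmany{d~x:\tau}}{\tau_{\mathit{ret}}}$, inverting \textsc{TyS-Fun} shows that simplification descends under the binder, adjoining $\Pmany{X'~\kw{var}}$ before reducing the parameter and return types; since the captured environment is typed by $\Xi\vDash\Penv:\Gamma$, which is insensitive to $\Delta$, that condition is immediate, and the real work is re-checking that the body declarations and statement still semantically type-check once $X$ is instantiated to $\rho$ and the parameter types are replaced by their normal forms $\hat{\tau}_i$. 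This is precisely an instance of semantic substitution one level down, so in practice I would prove this lemma by a mutual induction together with the $\vDash$-analogues of the syntactic substitution lemmas \Cref{thm:exprsubst,thm:stmtsubst,thm:declsubst}, threading the instantiation $X=\rho$ through the enlarged definition contexts $\Delta[X=\rho][\Pmany{X'~\kw{var}}]$. The constructor-closure case is identical modulo the trivial \textsc{TyS-Ctor} reduction, and the table case is lighter still because \textsc{TyS} fixes $\Ptabletype$, leaving only the captured environment and the key-expression conditions to re-thread through the extended $\Delta$. I expect essentially all of the genuine difficulty to live in this context bookkeeping---confirming that enlarging and instantiating the ambient definition context preserves the inner semantic judgments---while every other case follows mechanically from the lockstep structure of the two judgments.
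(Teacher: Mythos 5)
Your proposal is correct and follows the same overall strategy as the paper, which proves this lemma simply ``by induction on typing derivations'' --- your write-up is essentially that proof fleshed out, with the right case split between the first-order values (where the value-typing and type-simplification rules are head-matched and the induction is mechanical) and the three special value forms whose semantic typing carries extra conditions.

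The one place you genuinely diverge is in how you discharge those higher-order cases. You propose proving the lemma by \emph{mutual} induction together with semantic analogues of \cref{thm:exprsubst,thm:stmtsubst,thm:declsubst}, threading $X=\rho$ through the extended contexts. The paper instead states the statement- and declaration-level semantic substitution lemmas separately (\cref{thm:stmtsemanticsubst,thm:declsemanticsubst}) and proves them \emph{without} induction: since semantic typing of statements and declarations is by definition a predicate universally quantified over instantiations of all abstract variables in $\Delta$, the $\Delta[X=\rho]$ version is obtained by simply fixing one of those arguments and observing that reduction under $\Delta[\Pmany{X=\rho},\Pmany{Y=\rho'}]$ agrees with reducing under $\Delta[\Pmany{X=\rho}]$ and then $\Delta[\Pmany{Y=\rho'}]$. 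That quantifier-instantiation trick is lighter than your mutual induction and is worth adopting; on the other hand, your packaging has a real virtue: the paper states the value lemma \emph{before} the statement/declaration lemmas even though its closure case depends on them, and folding everything into one mutual induction (or reordering, as your argument implicitly does) resolves that dependency cleanly. Both routes go through; the bookkeeping you flag --- that instantiating $X$ commutes with adjoining the bound variables $\Pmany{X'~\kw{var}}$ of a polymorphic closure, and that $\Xi \vDash \Penv:\Gamma$ is insensitive to $\Delta$ --- is exactly where the content lives, and you have identified it accurately.
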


\begin{lemma}\label{thm:stmtsemanticsubst}
Suppose $\Sigma,\Gamma,\Delta \vDash \Pstmt \Pddashv \Sigma',\Gamma'$.
Take $X$, $\rho$, $\hat{\Gamma}$, and $\hat{\Gamma}'$ such that
$\Sigma,\Delta[\Pmany{X=\rho}] \vDash \Gamma \rightsquigarrow \hat{\Gamma}$
and
$\Sigma,\Delta[\Pmany{X=\rho}] \vDash \Gamma' \rightsquigarrow \hat{\Gamma}'$.
Then $\Sigma,\hat{\Gamma},\Delta[X=\rho] \vDash \Pstmt
\Pddashv \Sigma',\hat{\Gamma}'$.
\begin{proof}
Recall that semantic typing is universally quantified over type
arguments. Let $\Pmany{Y}$ be the variables in $\Delta$ aside from
$\Pmany{X}$ and take a list of type arguments $\Pmany{\rho'}$ for
these variables $Y$. Then because the results of reducing under
$\Delta[\Pmany{X=\rho},\Pmany{Y=\rho}]$ or
$\Delta[\Pmany{X=\rho}]$ and then $\Delta[\Pmany{Y=\rho}]$ are equal,
the desired semantic typing will hold.
\end{proof}
\end{lemma}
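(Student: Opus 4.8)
The plan is to prove this directly from the definition of semantic typing of statements, without inducting on the structure of $\Pstmt$ or on its typing derivation. The point is that $\Sigma,\Gamma,\Delta \vDash \Pstmt \Pddashv \Sigma',\Gamma'$ is by definition a predicate universally quantified over a list of type arguments---one for each abstract variable in $\Pvars{\Delta}$---so most of the work demanded by the conclusion has already been front-loaded into the hypothesis. Since $X$ is among $\Pvars{\Delta}$, the abstract variables of $\Delta[X=\rho]$ are exactly the remaining ones, which I will call $\Pmany{Y}$; the goal's semantic typing is therefore quantified only over type-argument lists $\Pmany{\rho'}$ for the $\Pmany{Y}$.

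First I would fix such an assignment $\Pmany{\rho'}$ together with a store typing $\Xi$, a store $\Pstore$, an environment $\Penv$, and reduced contexts satisfying the three premises of the goal's obligation under $\Delta[X=\rho][\Pmany{Y=\rho'}]$. The key move is to instantiate the universally quantified hypothesis at the combined assignment $X\mapsto\rho$, $\Pmany{Y\mapsto\rho'}$; this is exactly a full list of arguments for $\Pvars{\Delta}$, so it is legal, and it produces an evaluation of $\Pstmt$ together with all three output conditions, but phrased relative to the simultaneously-extended context $\Delta[X=\rho,\Pmany{Y=\rho'}]$. It then remains to reconcile this with the goal's staged context $\Delta[X=\rho][\Pmany{Y=\rho'}]$. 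The evaluation relation $\Downarrow$ mentions $\Delta$ only in order to simplify types, so once I show the two contexts drive type simplification identically, the entire evaluation derivation and the store/environment typing transfer verbatim; for the return-value condition I would additionally appeal to the value-level substitution lemma \Cref{thm:valsemanticsubst} to line up the reduced return type and carry the typing of $\Pval$ across.

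The main obstacle I anticipate is exactly this reconciliation: establishing that reducing a type (and hence a context such as $\Gamma$, $\Gamma'$, or the $\Preturnvoid$ binding) in a single step under $\Delta[X=\rho,\Pmany{Y=\rho'}]$ yields the same normal form as reducing it first under $\Delta[X=\rho]$ and then under $\Pmany{Y=\rho'}$. This is a compositionality/confluence property of the type-simplification judgment, and proving it cleanly requires that $\rightsquigarrow$ be functional and that substituting $\rho$ for $X$ commute with substituting the $\Pmany{\rho'}$ for the $\Pmany{Y}$. Both hinge on the well-formedness invariants collected in \Cref{app:safety}---in particular that types recorded in contexts contain no free type variables other than those marked $X~\kw{var}$ in $\Delta$, so that the substituted types $\rho$ and $\Pmany{\rho'}$ cannot capture or reintroduce variables being eliminated. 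Once that lemma is in place the argument closes with pure bookkeeping, and crucially never recurses on $\Pstmt$, since the universal quantification baked into semantic typing already does the real work.
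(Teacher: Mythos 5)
Your proposal is correct and follows essentially the same route as the paper's own proof: both avoid any induction on $\Pstmt$, instantiate the universally quantified semantic-typing hypothesis at the combined assignment for $\Pmany{X}$ and the remaining variables $\Pmany{Y}$, and reduce everything to the fact that type simplification under the simultaneous extension $\Delta[\Pmany{X=\rho},\Pmany{Y=\rho'}]$ agrees with staged reduction under $\Delta[\Pmany{X=\rho}]$ followed by $\Pmany{Y=\rho'}$. The paper asserts this commutation fact in a single sentence, whereas you explicitly flag it as the key obligation and identify the well-formedness invariants and \cref{thm:valsemanticsubst} needed to discharge it---a faithful elaboration of the same argument.
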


\begin{lemma}\label{thm:declsemanticsubst}
Suppose $\Sigma,\Gamma,\Delta \vDash \Pdecl \Pddashv \Sigma',\Gamma',\Delta'$.
Take $X$, $\rho$, $\hat{\Gamma}$, and $\hat{\Gamma}'$ such that
$\Sigma,\Delta[X=\rho] \vDash \Gamma \rightsquigarrow \hat{\Gamma}$
and
$\Sigma,\Delta[X=\rho] \vDash \Gamma' \rightsquigarrow \hat{\Gamma}'$.
Then
$\Sigma,\hat{\Gamma},\Delta[X=\rho] \vDash \Pdecl
\Pddashv \Sigma',\hat{\Gamma}',\Delta'[X=\rho]$.
\begin{proof}
As in \cref{thm:stmtsemanticsubst}.
\end{proof}
\end{lemma}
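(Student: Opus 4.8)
The plan is to unfold the definition of semantic typing for declarations, which is by construction the conjunction of (i) the ordinary syntactic typing judgment $\Sigma,\Gamma,\Delta \vdash \Pdecl \dashv \Sigma',\Gamma',\Delta'$ and (ii) a predicate universally quantified over one type argument for each abstract variable in $\Pvars{\Delta}$. I would establish these two conjuncts for the specialized judgment $\Sigma,\hat{\Gamma},\Delta[X=\rho] \vDash \Pdecl \Pddashv \Sigma',\hat{\Gamma}',\Delta'[X=\rho]$ separately, following the scheme of \cref{thm:stmtsemanticsubst} for the quantified part.

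The syntactic conjunct follows immediately from the syntactic declaration-substitution lemma (\cref{thm:declsubst}): given the assumed typing $\Sigma,\Gamma,\Delta \vdash \Pdecl \dashv \Sigma',\Gamma',\Delta'$ and the hypothesized reductions of $\Gamma$ and $\Gamma'$ to $\hat{\Gamma}$ and $\hat{\Gamma}'$ under $\Delta[X=\rho]$, that lemma delivers exactly $\Sigma,\hat{\Gamma},\Delta[X=\rho] \vdash \Pdecl \dashv \Sigma',\hat{\Gamma}',\Delta'[X=\rho]$. Note that type simplification $\rightsquigarrow$ is a purely syntactic operation on types, so the $\vDash$ and $\vdash$ decorations on the reduction hypotheses denote the same relation here.

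For the quantified conjunct I would mirror \cref{thm:stmtsemanticsubst}. The abstract variables of $\Delta[X=\rho]$ are precisely the variables $\Pmany{Y}$ of $\Delta$ other than $X$, so the target predicate is quantified over argument lists $\Pmany{\rho'}$ for $\Pmany{Y}$. Fixing such a $\Pmany{\rho'}$, I would instantiate the hypothesis predicate at the combined list consisting of $\rho$ for $X$ and $\Pmany{\rho'}$ for $\Pmany{Y}$. The crux is that type reduction and context substitution are insensitive to the order of substitution: reducing any type or context under $\Delta[X=\rho,\Pmany{Y=\rho'}]$ coincides with reducing it under $\Delta[X=\rho][\Pmany{Y=\rho'}]$. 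Under this identification the three input conditions of the two predicates line up, so the hypothesis supplies a final configuration $\langle \Delta'', \Pstore', \Penv', \Psig \rangle$ whose three output conditions --- the evaluation derivation, the equation constraining $\Delta''$, and the output semantic typing of $\langle \Pstore', \Penv' \rangle$ --- then transfer verbatim to the specialized predicate.

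The main obstacle, and the only genuine departure from the statement case, is the extra output type-variable context $\Delta'$: I must check that substitution into $\Delta'$ composes the same way, so that the hypothesis's output equation $\Delta'' = \Delta'[X=\rho,\Pmany{Y=\rho'}]$ matches the equation $\Delta'' = \Delta'[X=\rho][\Pmany{Y=\rho'}]$ demanded by the specialized predicate. This reduces to the commutativity of substitutions for distinct type variables, which holds because $X \notin \Pmany{Y}$ and the $\Pmany{Y}$ are fresh for $\rho$; it is a routine structural fact about the context-substitution operation. Everything else is bookkeeping: rewriting along these equalities to align the hypotheses and conclusions of the two semantic-typing predicates.
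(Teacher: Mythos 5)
Your proposal is correct and takes essentially the same route as the paper, whose proof of this lemma is just ``as in \cref{thm:stmtsemanticsubst}'': instantiate the universally quantified semantic-typing predicate at the combined argument list ($\rho$ for $X$ together with $\Pmany{\rho'}$ for the remaining variables $\Pmany{Y}$), using the fact that reduction under $\Delta[X=\rho,\Pmany{Y=\rho'}]$ coincides with reduction under $\Delta[X=\rho]$ followed by $[\Pmany{Y=\rho'}]$. Your explicit discharge of the syntactic conjunct via \cref{thm:declsubst} and your check that the output condition $\Delta''=\Delta'[X=\rho][\Pmany{Y=\rho'}]$ composes correctly only spell out details the paper leaves implicit.
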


\begin{lemma}\label{thm:envtypestable}
If $\Xi \vDash \Penv:\Gamma$ and $\Xi' \supseteq \Xi$,
then $\Xi' \vDash \Penv:\Gamma$.
\begin{proof}
By induction on environment typing derivations.
\end{proof}
\end{lemma}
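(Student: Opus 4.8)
The plan is to proceed by structural induction on the derivation of the environment typing judgment $\Xi \vDash \Penv : \Gamma$. First I would invoke the definition stating that semantic environment typing coincides with ordinary environment typing, so that $\Xi \vDash \Penv : \Gamma$ is literally $\Xi \vdash \Penv : \Gamma$ and the three inference rules \textsc{TEnv-E}, \textsc{TEnv-C}, and \textsc{TEnv-R} fully determine both the possible derivations and the induction principle. The goal in each case is to show that the same rule instance still applies after $\Xi$ is replaced by an arbitrary $\Xi' \supseteq \Xi$, where $\supseteq$ denotes containment of finite partial maps from locations to types.

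For the base case \textsc{TEnv-E}, the derivation concludes $\Xi \vdash [] : []$ with no premises, and the identical axiom instance immediately yields $\Xi' \vdash [] : []$; the larger store typing plays no role.

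The inductive case \textsc{TEnv-C} is the only one with any content. Here the derivation ends in $\Xi \vdash (\Penv, x \mapsto \ell) : (\Gamma, x : \tau)$ from premises $\Xi \vdash \Penv : \Gamma$ and $\Xi(\ell) = \tau$. Applying the induction hypothesis to the first premise gives $\Xi' \vdash \Penv : \Gamma$. The single substantive observation is that $\Xi(\ell) = \tau$ together with $\Xi' \supseteq \Xi$ forces $\Xi'(\ell) = \tau$, since extending a partial map never disturbs bindings already present in its domain. Recombining these two facts through \textsc{TEnv-C} delivers the conclusion.

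The remaining case \textsc{TEnv-R} is analogous and even simpler: its sole premise is $\Xi \vdash \Penv : \Gamma$, to which the induction hypothesis applies directly, and re-deriving $\Xi' \vdash \Penv : \Gamma, \kw{return} \mapsto \ell$ requires no lookup in the store typing at all. I do not anticipate a genuine obstacle; the only point demanding the slightest care is reading $\Xi' \supseteq \Xi$ as partial-map containment in the \textsc{TEnv-C} case, which is precisely what guarantees that every location binding consulted in the original derivation survives unchanged in the extended store typing.
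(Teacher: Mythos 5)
Your proposal is correct and matches the paper's proof, which simply states ``by induction on environment typing derivations'': unfolding the definition that semantic environment typing is ordinary environment typing, then checking \textsc{TEnv-E}, \textsc{TEnv-C}, and \textsc{TEnv-R}, with the only substantive step being that $\Xi' \supseteq \Xi$ preserves the lookup $\Xi(\ell)=\tau$ in the \textsc{TEnv-C} case. You have merely written out in full what the paper leaves implicit.
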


\subsection{Termination theorems}
The following theorems are morally proved by mutual induction,
but we write them as separate proofs. The reader may verify that uses
of theorems are limited to structural subterms or subderivations.

We also avoid proving the safety of exceptional control flow for most
program constructs. We have omitted rules, but the idea is that an exit
or return freezes the state and is propagated up to the top of the
program. The only exception to this is the function call rule, which
will ``catch'' a return statement and also perform copy-out operations
before propagating an exit command. We provide explicit rules for these
situations and prove them correct. The structure of the other cases
is such that the safety for intermediate states of the computation is
proved in passing in order to get to the safety of the final state;
this would not work if freezing the state was unsafe.

\begin{lemma}\label{thm:termstable}
If $\Xi,\Ptypedefs,\Sigma\vDash \Pval:\tau$
and $\Xi' \supseteq \Xi$,  $\Ptypedefs' \supseteq \Ptypedefs$, and $\Sigma' \supseteq \Sigma$,
then $\Xi',\Ptypedefs',\Sigma' \vDash \Pval:\tau$.
\end{lemma}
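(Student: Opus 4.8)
The plan is to prove this by induction on the structure of the value $\Pval$, which is the same as inducting on the syntactic value-typing derivation that $\Xi,\Ptypedefs,\Sigma \vDash \Pval:\tau$ contains. By the definition of semantic value typing there are two obligations to discharge under the weakened contexts $\Xi' \supseteq \Xi$, $\Ptypedefs' \supseteq \Ptypedefs$, and $\Sigma' \supseteq \Sigma$: first, the ordinary (syntactic) typing $\Xi',\Ptypedefs',\Sigma' \vdash \Pval:\tau$; and second, when $\Pval$ is a closure, constructor closure, or table, the corresponding special semantic typing.

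For the syntactic obligation I would dispatch each context extension with an existing weakening result. Extending the store typing $\Xi$ is exactly \cref{thm:valtypestable}, and extending the constant context $\Sigma$ is \cref{thm:constvaltypeweak}. Extending $\Ptypedefs$ is routine: by the well-formedness invariants the type $\tau$ and all types appearing in $\Pval$ are already in canonical form, so the only value-typing rules that consult the type-definition context are the member rules \textsc{TV-MemEnum}, \textsc{TV-MemMk}, and \textsc{TV-MemErr}, whose lookups $\Ptypedefs(X)$ are preserved whenever $\Ptypedefs'$ extends $\Ptypedefs$ consistently. A straightforward induction over the syntactic derivation therefore yields $\Xi',\Ptypedefs',\Sigma' \vdash \Pval:\tau$.

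For the special obligation, consider the closure case $\Pval = \Pclos{\Penv}{\Pmany{X}}{\Pmany{d~x:\tau}}{\tau_{\mathit{ret}}}{\Pmany{\Pdecl}~\Pstmt}$; the constructor-closure and table cases are analogous. The definition supplies a $\Gamma$ with $\Xi \vDash \Penv:\Gamma$ together with semantic typings of the body declarations $\Pmany{\Pdecl}$ and statement $\Pstmt$ under contexts built from $\Gamma$, $\Sigma$, and $\Ptypedefs$. The environment component transports to $\Xi'$ by \cref{thm:envtypestable}. For the body, $\Pmany{\Pdecl}$ and $\Pstmt$ are structural subterms of $\Pval$, so within the mutual induction spanning the termination theorems I may invoke the corresponding weakening of semantic declaration and statement typing on these strictly smaller subderivations; those facts in turn bottom out in this very lemma applied to the still-smaller values produced by the body, keeping the induction well-founded. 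Tables reduce similarly, appealing to weakening of semantic expression typing for the key expressions and action arguments.

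The hard part will be the $\Ptypedefs$-extension inside the semantic typing of the body. Unlike $\Xi$, over which the semantic statement and declaration judgments already quantify universally, the type-definition context appears in the premises that form the reduced contexts $\hat\Gamma$ by simplifying under $\Ptypedefs[\Pmany{X=\rho}]$. I would argue that the extra definitions in $\Ptypedefs'$ are never consulted during this simplification, since the well-formedness invariant guarantees that every type occurring in a well-typed body mentions only variables already bound in $\Ptypedefs$; hence simplification under $\Ptypedefs'[\Pmany{X=\rho}]$ agrees with simplification under $\Ptypedefs[\Pmany{X=\rho}]$, and the operational content of the semantic typing---existence of a terminating, suitably typed final configuration---carries over unchanged. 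Once this agreement is established, the remaining bookkeeping is routine.
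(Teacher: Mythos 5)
Your decomposition is in the spirit of the paper's proof, which consists entirely of the words ``By induction on semantic typing'': you split the goal into the syntactic obligation (correctly discharged via \cref{thm:valtypestable} for $\Xi$, \cref{thm:constvaltypeweak} for $\Sigma$, and a routine well-formedness argument for $\Ptypedefs$) and the special semantic obligation for closures, constructor closures, and tables, with \cref{thm:envtypestable} handling the captured environment. The gap is in how you justify the recursive appeal in the closure case. You claim the needed weakening of semantic declaration and statement typing ``bottoms out in this very lemma applied to the still-smaller values produced by the body.'' That is not true: semantic statement typing quantifies over arbitrary semantically typed input stores and demands semantically typed output stores and values, and none of those values are structural subterms of the closure --- an input store may contain values as large as, or larger than, the closure itself, and evaluation of the body can produce values of unbounded size. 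So an induction on the structure of $\Pval$ (equivalently, on the syntactic value-typing derivation) is not well-founded for this argument. Moreover, anchoring the argument to ``the mutual induction spanning the termination theorems'' is circular: this lemma is stated and used \emph{before} those theorems --- for instance the \textsc{T-TypeDefDecl} case of \cref{thm:declterm} explicitly invokes \cref{thm:termstable} --- so it cannot be proved mutually with them.

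The repair is to run the induction over the mutually inductive \emph{definition} of semantic typing itself, not over value structure: prove stability under consistent extension of $\Xi$, $\Ptypedefs$, $\Sigma$ simultaneously for every semantic judgment (values, stores, environment/store pairs, expressions, statements, declarations, closures, constructor closures, tables). In that mutual induction, the closure case may cite the statement-level claim as a component of the definition being inducted over; the statement-level case handles the universally quantified input configurations by restriction (a configuration valid for the extended contexts is valid for the original ones, since by well-formedness the extra entries of $\Ptypedefs'$ and $\Sigma'$ are never consulted, so type simplification and evaluation under $\Ptypedefs'[\Pmany{X=\rho}]$ agree with those under $\Ptypedefs[\Pmany{X=\rho}]$) and handles the existentially produced outputs by the value-level inductive hypothesis. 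Your agreement argument about simplification never consulting the new entries is exactly the right ingredient here; it simply cannot be hung on a structural induction on $\Pval$.
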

\begin{proof}
By induction on semantic typing.
\end{proof}

\begin{theorem}\label{thm:exprterm}
If $\Sigma,\Gamma,\Delta \vdash \Pexp:\tau~\kw{goes}~d$,
then
$\Sigma,\Gamma,\Delta \vDash \Pexp:\tau~\kw{goes}~d$.
\end{theorem}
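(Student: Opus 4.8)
The plan is to proceed by induction on the derivation of the syntactic typing judgment $\Sigma,\Gamma,\Delta \vdash \Pexp:\tau~\kw{goes}~d$. Unfolding the definition of semantic typing, I must fix an arbitrary list of type arguments $\Pmany{\rho}$ for the abstract variables $\Pvars{\Delta}$, assume a store typing $\Xi$, a reduced type $\hat{\tau}$, a reduced context $\hat{\Gamma}$, a store $\Pstore$, and an environment $\Penv$ satisfying the three premises (reduction of $\tau$ and $\Gamma$ under $\Delta[\Pmany{X=\rho}]$, and semantic well-typing of $\langle \Pstore, \Penv \rangle$ against $\hat{\Gamma}$), and then exhibit a terminal store $\Pstore'$, a signal $\Psig$, and an enlarged store typing $\Xi' \supseteq \Xi$ witnessing evaluation, preservation of store and environment typing, and well-typing of any resulting value at $\hat{\tau}$. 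The ordinary-typing half of the conclusion is exactly the hypothesis, so only the operational half requires work.

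For the leaf rules \textsc{T-Bit}, \textsc{T-Bool}, \textsc{T-Integer}, \textsc{T-Enum}, \textsc{T-Err}, and \textsc{T-Match}, evaluation is immediate via the corresponding axiom (\textsc{E-Int}, \textsc{E-Bool}, \textsc{E-TypMem}) with $\Pstore' = \Pstore$ and $\Xi' = \Xi$, and value typing follows directly. For \textsc{T-Var} and \textsc{T-Var-Const} I would read off the location from $\Penv$ and the value from $\Pstore$ using \textsc{E-Var}, obtaining its typing from the semantic well-typing of the store. The structural cases \textsc{T-UOp}, \textsc{T-BinOp}, and \textsc{T-Cast} apply the induction hypothesis to subexpressions (threading the store through in the evaluation order), then appeal to the operator and cast safety assumptions of \Cref{app:safety} to conclude that each operation is defined and yields a value of the correct reduced type; for casts I would also invoke \Cref{thm:typeevalagree} to align the static type-simplification of $\rho$ with its runtime evaluation. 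The member-access cases \textsc{T-MemRec} and \textsc{T-MemHdr} are analogous, except that \textsc{T-MemHdr} splits on header validity: the valid branch uses \textsc{E-HdrMem}, while the invalid branch uses \textsc{E-HdrMemUndef}, whose havoc'd result still inhabits the field type $\tau_i$. The indexing rule \textsc{T-Index} likewise splits into the in-bounds rule \textsc{E-Index} and the out-of-bounds rule \textsc{E-IndexOOB}, the latter producing a havoc value of the element type. For \textsc{T-Slice} I would use \Cref{thm:ctevalcorrect} to connect the compile-time-evaluated endpoints appearing in the slice type to the runtime integers, so that the resulting bitstring has exactly the advertised width.

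The substantive case is the function-call rule \textsc{T-Call}, and it is where the mutual induction with the statement- and declaration-termination theorems is essential. Here I would first apply the induction hypothesis to the callee expression to obtain a closure (or native) value together with its semantic typing, which I then unfold: the closure's environment types against some $\Gamma$, and its declarations and body are themselves semantically well-typed. Next I would reduce the parameter types under $\Delta[\Pmany{X=\rho}]$ (using \Cref{thm:typeevalagree}) and run copy-in/copy-out, appealing to the semantic typing of the argument expressions to show that the freshly allocated temporaries are well-typed and that the extended environment matches the closure's expected context. Evaluating the declarations and then the body invokes the declaration- and statement-termination results to yield a terminal signal, and I must then case-split on that signal across the rules \textsc{E-Call}, \textsc{E-Call-StmtExit}, and \textsc{E-Call-DeclExit} (plus \textsc{E-CallN} for native functions): a $\Preturn{\Pval}$ is caught and its value typed at the reduced return type, while an $\Pexit$ is propagated after the copy-out writes. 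In every branch the copy-out assignments update the caller's l-values, and I would use l-value safety to maintain store typing, enlarging $\Xi$ to account for the fresh temporary locations.

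The main obstacle is precisely this call case: it is the only place that genuinely consumes the mutual-induction hypotheses, it requires carefully tracking three interleaved stores (after evaluating the callee, after copy-in, and after the body) while growing the store typing at each allocation, and it must reconcile the several distinct evaluation rules that a single syntactic call can match depending on how control flow leaves the body. The bookkeeping around copy-out and signal propagation---ensuring that freezing the state on an $\Pexit$ is type-safe and that a caught $\Preturn$ value lands at the correctly reduced type---is the delicate part; the remaining cases are routine once the store-threading discipline and the appeals to \Cref{thm:typeevalagree} and \Cref{thm:ctevalcorrect} are in place.
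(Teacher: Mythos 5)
Your proposal follows essentially the same route as the paper's own proof: induction on the typing derivation with the semantic-typing definition unfolded in exactly this way, the same case analysis (including the validity split for \textsc{T-MemHdr}, the bounds split for \textsc{T-Index}, and the use of \cref{thm:ctevalcorrect} for \textsc{T-Slice} and \cref{thm:typeevalagree} for casts), and the same treatment of \textsc{T-Call} as the one substantive case, discharged via the copy-in/copy-out lemma, l-value write safety, and the mutual induction with the statement- and declaration-termination theorems (indeed, you are slightly more explicit than the paper about the signal case split across \textsc{E-Call}, \textsc{E-Call-StmtExit}, \textsc{E-Call-DeclExit}, and \textsc{E-CallN}). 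The only omission is the record-construction case \textsc{T-Record}, which is routine (iterate the induction hypothesis over the field expressions, then apply \textsc{E-Rec} and \textsc{TV-Rec}), so nothing of substance is missing.
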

\begin{proof}
We will proceed by induction on typing derivations.
Unfolding the definition of semantic typing for expressions, we take
\begin{enumerate}
\item a list of types $\rho$,
\item a store typing $\Xi$,
\item a reduced type $\hat{\tau}$,
\item a reduced typing context $\hat{\Gamma}$,
\item a store $\Pstore$,
\item and an environment $\Penv$.
\end{enumerate}
We assume the following conditions.
\begin{enumerate}
\item $\Sigma, \Delta[\Pmany{X=\rho}] \vdash \tau \rightsquigarrow \hat{\tau}$
\item $\Sigma, \Delta[\Pmany{X=\rho}] \vdash \Gamma \rightsquigarrow \hat{\Gamma}$
\item $\Xi,\Sigma,\Delta[\Pmany{X=\rho}] \vDash \langle \Pstore, \Penv \rangle:\hat{\Gamma}$
\end{enumerate}
Again unfolding the definition of semantic typing,
it remains to show that there exists a final configuration $\langle \Pstore', \Psig \rangle$
and a store typing $\Xi' \supseteq \Xi$ such that the following
conditions hold.
\begin{enumerate}
\item $\langle \mathcal{C},\Delta[\Pmany{X=\rho}], \Pstore, \Penv, \Pexp \rangle
\Downarrow \langle \Pstore', \Psig \rangle$
\item $\Xi',\Sigma,\Delta[\Pmany{X=\rho}] \vDash \langle\Pstore', \Penv\rangle:\hat{\Gamma}$
\item If $\Psig = \Pval$ then $\Sigma, \Xi', \Delta[\Pmany{X=\rho}] \vDash \Pval:\hat{\tau}$.
\end{enumerate}
We now give the inductive cases.
\begin{description}
\item[Case \textsc{T-Bool}.]
From the typing rule we have $\Pexp=b$, where $b$ is the metavariable
for boolean constants, and $\tau=\Pbool$. Observe any reduction
$\hat{\tau}$ of $\Pbool$ has to be $\Pbool$, since it contains no type
names or expressions that could be changed by reduction. Choose
$\Pstore'=\Pstore$ and $\Psig=b$.  Applying \textsc{E-Bool} proves
$\langle \mathcal{C},\Delta[\Pmany{X=\rho}], \Pstore, \Penv, \Pexp \rangle
\Downarrow \langle \Pstore, b \rangle$.
Because the store is left unchanged, we do not need to give it a new
store typing.  We already have
$\Xi,\Delta[\Pmany{X=\rho}] \vDash \langle \Pstore, \Penv \rangle:\hat{\Gamma}$
from assumption (3). So $\Xi,\Sigma,\Delta[\Pmany{X=\rho}] \vDash
b:\hat{\tau}$.

\item[Case \textsc{T-Bit}.] Trivial.

\item[Case \textsc{T-Integer}.] Trivial.

\item[Case \textsc{T-Var}.]
The typing premises are $x\notin\Pdom{\Sigma}$ and $\Gamma(x)=\tau$.
From $\Xi,\Sigma,\Delta[\Pmany{X=\rho}] \vDash \langle \Pstore, \Penv \rangle:\hat{\Gamma}$
we can conclude that there is a location $\ell$ such that the following all hold.
\begin{align*}
\hat{\Gamma}(x)&=\hat{\tau} \\
\Penv(x)&=\ell \\
\Pstore(\ell) &= \Pval \\
\Xi(\ell) &= \hat{\tau}
\end{align*}
By \textsc{E-Var} there is a reduction \[
\Pexpreval{\Pcp}{\Ptypedefs[\Pmany{X=\rho}]}{\Pstore}{\Penv}{x}
          {\Pstore}{\Pval}.
\]
We know $\Xi,\Sigma,\Delta[\Pmany{X=\rho}] \vDash \Pval:\hat{\tau}$
from the typing of $\Pstore$.
Taking $\Xi'=\Xi$, clearly $\Xi',\Sigma,\Delta[\Pmany{X=\rho}] \vDash \langle\Pstore', \Penv\rangle:\hat{\Gamma}$ and $\Xi',\Sigma,\Delta[\Pmany{X=\rho}] \vDash \Pval:\hat{\tau}$. So we're done.

\item[Case \textsc{T-Var-Const}.]
As in the previous case.

\item[Case \textsc{T-Index}.]
From the typing premises and induction hypothesis we have the following.
\begin{align}
\Sigma,\Gamma,\Delta &\vDash \Pexp_1:\tau[n]\Pgoes{d} \\
\Sigma,\Gamma,\Delta &\vDash \Pexp_2:\Pbittype{32}
\end{align}
So we can evaluate both of these expressions to values $\Pval_1$ and
$\Pval_2$, with new stores $\Pstore_1, \Pstore_2$ and store typings
$\Xi_1,\Xi_2$ satisfying some typing conditions.
\begin{align}
\Xi_1,\Sigma,\Delta[\Pmany{X=\rho}] \vDash \langle\Pstore_1, \Penv\rangle:\hat{\Gamma}
\\
\Sigma, \Xi_1, \Delta[\Pmany{X=\rho}] \vDash \Pval_1:\hat{\tau}[n]
\\
\label{eqn:indexstore2type}
\Xi_2,\Sigma,\Delta[\Pmany{X=\rho}] \vDash \langle\Pstore_2, \Penv\rangle:\hat{\Gamma}
\\
\Sigma, \Xi_2, \Delta[\Pmany{X=\rho}] \vDash \Pval_2:\Pbittype{32}
\end{align}
Inverting the value typings shows $\Pval_1
= \Pstackval{\hat{\tau}}{\Pmany{\Pval}}$ and $\Pval_2 = n_{32}$, where
$n\geq0$. Let $B=\kw{len}(\Pmany{\Pval})$. The index $n$ is either in
bounds or out of bounds.
\begin{description}
\item[Case $n < B$.]
Applying \textsc{E-Index} we have
$\Pexpreval{\Pcp}{\Ptypedefs[\Pmany{X=\rho}]}{\Pstore}{\Penv}{\Pexp_1[\Pexp_2]}
          {\Pstore_2}{\Pval_n}$.
We have a value typing
$\Sigma, \Xi_1, \Delta[\Pmany{X=\rho}] \vdash \Pval_n:\hat{\tau}$ by
inverting the typing of $\Pval_1$. Since arrays can only contain
values of base type (see the syntax in \cref{fig:types}), this amounts
to a semantic typing. It still holds when we move to $\Xi_2$
by \cref{thm:valtypestable}. We have already typed $\Pstore_2$
(\ref{eqn:indexstore2type}), so we're done.

\item[Case $n \geq B$.]
Applying \textsc{E-IndexOOB}, we have
$\Pexpreval{\Pcp}{\Ptypedefs[\Pmany{X=\rho}]}{\Pstore}{\Penv}{\Pexp_1[\Pexp_2]}
          {\Pstore_2}{\Phavoc{\hat{\tau}}}$.
Now the resulting value will typecheck because of the definition of \Phavoc.
As in the previous case, this will be a value of base type and so
its syntactic typing implies semantic typing. Again we have already
typed $\Pstore_2$ (\ref{eqn:indexstore2type}), so we're done.
\end{description}

\item[Case \textsc{T-Slice}.]
Repeated applications of the induction hypothesis yield
evaluations for $\Pexp_1$, $\Pexp_2$, and $\Pexp_3$
along with stores $\Pstore_1, \Pstore_2, \Pstore_3$
and typings $\Xi_1, \Xi_2, \Xi_3$. All these data satisfy appropriate
semantic typings. Inverting typings shows that the values are
respectively $n_w$, $p_\infty$, and $q_\infty$.
Applying \textsc{E-Slice} shows
$\Pexpreval{\Pcp}{\Ptypedefs[\Pmany{X=\rho}]}{\Pstore}{\Penv}{\Pexp_1[\Pexp_2:\Pexp_3]}
          {\Pstore_3}{\kw{slice}(\Pint{n}{w},p,q)}$.
Now, the values of the indices were bounds-checked in
typechecking. These checks are accurate to the run-time values $p$ and
$q$ (\cref{thm:ctevalcorrect}), so the result
$\kw{slice}(\Pint{n}{w},p,q)$ will be well-defined with the expected
type $\Pbittype{p-q+1}$. It typechecks semantically because it is not a
closure. The final environment is the same as the one obtained in the
evaluation of $\Pexp_3$ so we already know it satisfies semantic
typing.

\item[Case \textsc{T-UOp}.]
As in the previous case. here the assumption about the agreement of
$\mathcal{T}$ and $\mathcal{E}$ implies that the final configuration
is defined and safe with type $\tau_2$ under $\Xi_1$.  Since it has
base type (see the $\mathcal{T}$ assumptions) it typechecks
semantically as well.

\item[Case \textsc{T-BinOp}.]
As in the unary operation case.

\item[Case \textsc{T-Cast}.]
P4 only permits safe casts between base types, so this is effectively
the same as the unary operation case after an invocation
of \cref{thm:typeevalagree}.

\item[Case \textsc{T-Record}.]
Repeated applications of the induction hypothesis yield
evaluations for all the expressions $\Pmany{\Pexp}$, new stores, and new store typings.
Call the last store $\Pstore'$, and the last store typing $\Xi'$. In
symbols, we have the following.
\begin{align}
&\langle \mathcal{C},\Delta[\Pmany{X=\rho}], \Pstore, \Penv, \Pmany{\Pexp} \rangle
\Downarrow \langle \Pstore', \Pmany{\Pval} \rangle
\\
&
\Xi',\Sigma,\Delta[\Pmany{X=\rho}] \vDash \langle\Pstore', \Penv\rangle:\hat{\Gamma}
\\
\Sigma, \Xi', \Delta[\Pmany{X=\rho}] \vDash \Pmany{\Pval:\hat{\tau}}
\end{align}
We apply \textsc{E-Rec} to show $\langle \mathcal{C},\Delta[\Pmany{X=\rho}], \Pstore, \Penv, \Prec{\Pmany{f=\Pexp}} \rangle
\Downarrow \langle \Pstore', \Prec{\Pmany{f=\Pval}} \rangle$.
The final store satisfies semantic typing under $\Xi'$, as we have
already seen.  The final value has base type so it only needs to
typecheck syntactically. Applying \textsc{TV-Rec} and using the value
typing works.

\item[Case \textsc{T-MemRec}.]
We reduce $\Pexp$ to a value using the induction hypothesis.
\begin{align*}
&\langle \mathcal{C},\Delta[\Pmany{X=\rho}], \Pstore, \Penv, \Pexp \rangle
\Downarrow \langle \Pstore', \Psig \rangle
\\
&\Xi',\Sigma,\Delta[\Pmany{X=\rho}] \vDash \langle\Pstore', \Penv\rangle:\hat{\Gamma}
\\
&\Sigma, \Xi', \Delta[\Pmany{X=\rho}] \vDash \Pval:\Prectype{\Pmany{f:\hat{\tau}}}
\end{align*}
Inverting the typing of $\Pval$ shows that it has the form $\Pval
= \Prec{\Pmany{f:\hat{\tau}=\Pval}}$ and
$\Sigma, \Xi', \Delta[\Pmany{X=\rho}] \vdash \Pval_i:\hat{\tau}$.  for
each $\Pval_i$ in the sequence $\Pmany{\Pval}$.
Then an application of \textsc{E-RecMem} shows
$\Pexpreval{\Pcp}{\Ptypedefs[\Pmany{X=\rho}]}{\Pstore}{\Penv}{\Pexp.f_i}
           {\Pstore'}{\Pval_i}$.
As a field of a record, $\hat{\tau}$ has to be a base type. This means
that the syntactic typing we already have implies its semantic typing.

\item[Case \textsc{T-MemHdr}.]
Similarly to the previous case, we reduce $\Pexp$ to a value $\Pval_r$
which must be a header. Headers, however, also include a validity
bit. If the header is valid, then things still work out like the
previous case. If the header is not valid, the resulting value is
$\kw{havoc}(\hat{\tau}_i)$ instead of $\Pval_i$. This is a value of
type $\hat{\tau}_i$ in any context, even if it is an arbitrary value.
Since $\hat{\tau}_i$ has to be a base type, as a field type of a
reader, we can conclude that the value typechecks semantically.

\item[Case \textsc{T-Enum}.] Trivial.

\item[Case \textsc{T-Err}.] Trivial.

\item[Case \textsc{T-Match}.] Trivial.

\item[Case \textsc{T-Call}.]
We have the following from the typing rule. We write $\Pmany{Y}$ and
$\Pmany{\rho'}$ for the type arguments in the call to distinguish them from
the variables $\Pmany{X}$ and types $\Pmany{\rho}$ introduced
earlier. Note that because the variables $\Pmany{Y}$ are bound, they
cannot have been in $\Delta$ and must be disjoint from the variables
$\Pmany{X}$.
\begin{align}
\Sigma,\Gamma,\Delta &\vdash \Pexp:\Pfunctiontype{\Pmany{Y}}{\Pmany{d~x:\tau}}{\tau_{\mathit{ret}}} \\
\Sigma,\Delta[\Pmany{Y=\rho'}] &\vdash \Pmany{\tau} \rightsquigarrow \Pmany{\tau'} \\
\Sigma,\Gamma,\Delta &\vdash \Pmany{\Pexp:\tau'\mathrel{\kw{goes}}d} \\
\Sigma,\Delta[\Pmany{Y=\rho'}] &\vdash \tau_{\mathit{ret}} \rightsquigarrow \tau_{\mathit{ret}}'
\end{align}
From the induction hypothesis we immediately obtain the following semantic typings.
\begin{align}
\Sigma,\Gamma,\Delta &\vDash \Pexp:\Pfunctiontype{\Pmany{Y}}{\Pmany{d~x:\tau}}{\tau_{\mathit{ret}}} \\
\label{eqn:firsttypeforargs}
\Sigma,\Gamma,\Delta &\vDash \Pmany{\Pexp:\tau'\mathrel{\kw{goes}}d}
\end{align}

We begin by evaluating $\Pexp$.
Let $\tau_f =
\Pfunctiontype{\Pmany{Y}}{\Pmany{d~x:\tau}}{\tau_{\mathit{ret}}}$
and define $\hat{\tau}_f$
by \[\Sigma,\Delta[\Pmany{X=\rho}] \vdash \tau_f \rightsquigarrow \hat{\tau}_f.\]
This type $\hat{\tau}_f$
is really $\Pfunctiontype{\Pmany{Y}}{\Pmany{d~x:\hat{\tau}_1}}{\hat{\tau}_{\mathit{ret},1}}$,
where the types marked with a hat are reductions in the same way.
The semantic typing of \Pexp yields a final
configuration $\langle \Pstore_1, \Pval\rangle$, a typing context
$\hat{\Gamma}_1$, and a store typing $\Xi_1 \supseteq \Xi$
such that all of the following are true.
\begin{align}
\langle \mathcal{C},\Delta[\Pmany{X=\rho}], \Pstore, \Penv, \Pexp \rangle
&\Downarrow \langle \Pstore_1, \Pval \rangle \\
\label{eqn:firsttypeforenv}
\Xi_1,\Sigma,\Delta[\Pmany{X=\rho}] &\vDash \langle\Pstore_1, \Penv\rangle:\hat{\Gamma}_1 \\
\label{eqn:clossemantictyping}
\Xi_1, \Sigma, \Delta[\Pmany{X=\rho}] &\vDash \Pval:\hat{\tau}_f.
\end{align}

Semantic typing of values implies their ordinary typing,
so (\ref{eqn:clossemantictyping}) implies
\(\Xi_1, \Sigma, \Delta[\Pmany{X=\rho}] \vdash \Pval:\hat{\tau}_f\).
Inverting this judgment shows \Pval must be either a closure or a
builtin. We now proceed by cases.

\begin{description}
\item[Closure case.]
Suppose $\Pval = \Pclos{\Penv_c}{\Pmany{Y}}{\Pmany{d~x:\hat{\tau}_1}}
{\hat{\tau}_{\mathit{ret},1}}{\Pmany{\Pdecl}~\Pstmt}$. We will construct 
an instance of \textsc{E-Call}. The first premise, evaluating $\Pexp$
to a closure, is already done. We address each remaining premise in
turn. For clarity set $\Delta_1 = \Delta[\Pmany{X=\rho},\Pmany{Y=\rho'}]$.
\begin{enumerate}
\item Showing $\langle \Delta_1, \Pstore_1, \Penv, \Pmany{\hat{\tau}_1}\rangle\Downarrow \Pmany{\hat{\tau}_2}$ for some $\Pmany{\hat{\tau}_2}$.

From the \textsc{T-Call} typing rule we have $\Sigma,\Delta[\Pmany{Y=\rho'}]\vdash\Pmany{\tau}\rightsquigarrow\Pmany{\tau'}$ and
$\Sigma,\Delta[\Pmany{X=\rho}]\vdash\Pmany{\tau}\rightsquigarrow\Pmany{\hat{\tau}_1}$.
Let $\Pmany{\hat{\tau}_2}$ be types such that
$\Sigma,\Delta_1\vdash\Pmany{\tau}\rightsquigarrow\Pmany{\hat{\tau}_2}$.
This compile-time type evaluation agrees with the runtime type evaluation
by \cref{thm:typeevalagree}.

\item Showing 
$\langle \Pcp,\Ptypedefs_1,\Pstore_1,\Penv,\Pmany{\Passign{d~x:\hat{\tau}_2}{\Pexp}}\rangle
\Downarrow_{\mathit{copy}}
\langle \Pstore_2,\Pmany{\Penvset{x}{\ell}},\Pmany{\Passign{\Plval}{\ell}}\rangle$.

First, we have $\Xi_1,\Sigma,\Delta_1 \vDash \Pstore$ by \cref{thm:storesubst}.
Second, defining $\Gamma_2$ by
$\Sigma,\Delta_1 \vdash \hat{\Gamma}_1 \rightsquigarrow \hat{\Gamma}_2$,
we have 
$\Xi_1 \vdash \Penv:\hat{\Gamma}_2$
and the typing (\ref{eqn:firsttypeforenv}).
Third, we have
$\Sigma,\hat{\Gamma}_1,\Delta_1 \vDash \Pmany{\Pexp:\hat{\tau}_2}$
from \cref{thm:exprsubst} and the typing (\ref{eqn:firsttypeforargs}).

Those three facts are enough to instantiate \cref{thm:copyterm},
which proves the evaluation step we were after as well as the
following facts.
\begin{align}
\label{eqn:closenvgammatyping}
&\Xi_2,\Sigma,\Delta_1 \vDash \langle \Pstore_2,\Penv \rangle:\hat{\Gamma}_1 \\
&\Xi_2(\Pmany{\ell}) = \Pmany{\hat{\tau}_2} \\
&\Sigma,\hat{\Gamma}_1,\Delta_1 \vdash \Pmany{\Plval:\hat{\tau}_2\Pgoes{\kw{inout}}}
\end{align}

\item Showing
$\langle \Pcp, \Ptypedefs_1, \Pstore_2,\Penv_c[\Pmany{x\mapsto\ell}],\Pmany{\Pdecl}\rangle
\Downarrow \langle\Ptypedefs_2, \Pstore_3, \Penv_2, \Pcont\rangle$.
We apply \cref{eqn:clossemantictyping} to produce a context $\Gamma_c$
and the following semantic typings.
\begin{align}
\label{eqn:envctyping}
&\Xi_1 \vDash \Penv_c:\Gamma_c \\
\label{eqn:closdecltyping}
&\Sigma,\Gamma_c[\Pmany{x:\hat{\tau}_2}],\Delta[\Pmany{X=\rho},\Pmany{Y~\kw{var}}] \vDash \Pmany{\Pdecl}
\Pddashv \Sigma',\Gamma_c',\Delta' \\
\label{eqn:closstmttyping}
&\Sigma',\Gamma_c'[\Preturnvoid:\hat{\tau}_{\mathit{ret,1}}],\Delta' \vDash \Pstmt
\Pddashv \Sigma'', \Gamma''
\end{align}

We set $\Delta_2 = \Delta'[\Pmany{Y=\rho'}]$.

In order to get an evaluation of \Pmany{\Pdecl} out of its semantic typing, we should
first reduce the contexts $\Gamma_c$ and $\Gamma_c'$. So, define
$\hat{\Gamma}_c$ and $\hat{\Gamma}_c'$ from the following reductions.
\begin{align}
&\Sigma, \Delta[\Pmany{X=\rho},\Pmany{Y=\rho'}], \vdash \Gamma_c[\Pmany{x:\hat{\tau}}] \rightsquigarrow \hat{\Gamma}_c \\
&\Sigma', \Delta'[\Pmany{Y=\rho'}] \vdash \Gamma_c' \rightsquigarrow \hat{\Gamma}_c'
\end{align}
We also need to show \[
\Xi_2, \Sigma, \Delta[\Pmany{X=\rho},\Pmany{Y=\rho'}] \vDash
\langle \Pstore_2, \Penv_c[\Pmany{x\mapsto\ell}]\rangle:\hat{\Gamma}_c.
\]
The first part of this, typing $\Pstore_2$, is an immediate
consequence of (\ref{eqn:closenvgammatyping}).
The second part, typing $\Penv_c[\Pmany{x\mapsto\ell}]$, is more involved. We have
$
\Xi_2, \Sigma, \Delta[\Pmany{X=\rho}] \vDash
\langle \Pstore_2, \Penv_c[\Pmany{x\mapsto\ell}]\rangle:\Gamma_c[\Pmany{x:\hat{\tau}}]
$
by splitting the environment into $\Penv_c$ and the new bindings $\Pmany{x\mapsto\ell}$.
The first part holds by \cref{thm:envtypestable} with \ref{eqn:envctyping}.
The second part holds because $\Xi_2(\Pmany{\ell}) = \Pmany{\hat{\tau}}$.

Now we can apply the semantic typing property (\ref{eqn:closdecltyping}) for the declarations,
which proves this evaluation step and produces a new environment
$\Penv_2$, store $\Pstore_3$, and store typing $\Xi_3$ such that
$\Xi_3,\Delta'[\Pmany{Y=\rho'}] \vDash \langle \Pstore_3, \Penv_2 \rangle : \hat{\Gamma}_c'$.

\item Showing $\langle\Pcp, \Ptypedefs_2, \Pstore_3, \Penv_2, \Pstmt\rangle \Downarrow
               \langle \Pstore_4, \Penv_3,\Preturn{\Pval_{\mathit{ret}}} \rangle$.

Define $\hat{\Gamma}''$ by
$\Sigma',\Delta'[\Pmany{Y=\rho'}] \vdash \Gamma'' \rightsquigarrow \hat{\Gamma}''$.
Observe that the only variables in $\Delta'$ are the variables $\Pmany{Y}$,
so we can instantiate the semantic typing (\ref{eqn:closstmttyping})
with type arguments $\Pmany{\rho'}$. We have established all its premises, so we
get the evaluation step we wanted along with a store typing $\Xi_4$ and the
following facts.
\begin{align}
&\Ptypeeval{\Sigma',\Delta'[\Pmany{Y=\rho'}]}{\hat{\tau}_{\mathit{ret},1}}{\hat{\tau}_{\mathit{ret},2}}
\\
&\Xi_4,\Sigma',\Delta_2 \vDash \langle \Pstore',\Penv'\rangle:\hat{\Gamma}''
\\
&\Xi_4,\Sigma',\Delta_2 \vDash \Pval:\hat{\tau}_{\mathit{ret},2}
\end{align}

We have to show that $\hat{\tau}_{\mathit{ret},2}$ is the same as the type $\hat{\tau}_{\mathit{ret}}$
we wanted. First note that the free type variables in $\hat{\tau}_{\mathit{ret},2}$ are exactly $\Pmany{Y}$.
Since $\Delta[\Pmany{X=\rho},Y~\kw{var}] \subseteq \Delta'$ and $\Sigma \subseteq \Sigma'$ we
have $\Ptypeeval{\Delta[\Pmany{X=\rho},\Pmany{Y=\rho'}], \Sigma}{\tau_{\mathit{ret},1}}{\hat{\tau}_{\mathit{ret},2}}$.
Then because reduction under $\Delta[\Pmany{X=\rho},\Pmany{Y=\rho'}]$ is the
same as reduction under $\Delta[\Pmany{X=\rho}]$ followed by reduction under $\Delta[\Pmany{Y=\rho'}]$,
we conclude $\hat{\tau}_{\mathit{ret},2}$ is syntactically equal to $\hat{\tau}_{\mathit{ret}}$.

\item Showing $\Passignlval{\Pcp,\Delta[\Pmany{X=\rho}],\Pstore_4}{\Penv}{\Pmany{\Passign{\Plval}{\Pstore_4(\ell)}}}{\Pstore_5}$.

Follows from \cref{thm:lvalwriteterm}.

\end{enumerate}

\item[Builtin case.]
This is the same as the usual closure case but with the involvement of
$\mathcal{N}$. The value typing rule says $\mathcal{N} \vdash
x: \Psimplefunctiontype{\Pmany{d~x:\tau}}{\tau}$, so the call will
evaluate
safely by assumption.
\end{description}
\end{description}
\end{proof}

\begin{theorem}\label{thm:stmtterm}
If $\Xi, \Sigma, \Gamma, \Delta \vdash \langle \Pcp, \Ptypedefs, \Pstore, \Penv, \Pstmt \rangle \dashv \Sigma', \Gamma'$,
then $\Xi, \Sigma, \Gamma, \Delta \vDash \langle \Pcp, \Ptypedefs, \Pstore, \Penv, \Pstmt \rangle \Pddashv \Sigma', \Gamma'$.
\begin{proof}
We will proceed by induction on typing derivations.
Unfolding the definition of semantic typing for declarations, we take
\begin{enumerate}
\item a list of types $\rho$,
\item a store typing $\Xi$,
\item reduced typing context $\hat{\Gamma}$ and $\hat{\Gamma}'$,
\item a store $\Pstore$,
\item and an environment $\Penv$.
\end{enumerate}
We assume the following conditions on these data.
\begin{enumerate}
\item $\Sigma, \Delta[\Pmany{X=\rho}] \vdash \Gamma \rightsquigarrow \hat{\Gamma}$
\item $\Sigma', \Delta[\Pmany{X=\rho}] \vdash \Gamma' \rightsquigarrow \hat{\Gamma}'$
\item $\Xi,\Sigma,\Delta[\Pmany{X=\rho}] \vDash \langle \Pstore, \Penv \rangle:\hat{\Gamma}$
\end{enumerate}
Now it remains to show that there exists a final configuration
$\langle \Pstore', \Penv', \Psig \rangle$ and store typing $\Xi'$ such
that the following conditions hold.
\begin{enumerate}
\item $\langle \mathcal{C},\Delta[\Pmany{X=\rho}], \Pstore, \Penv, \Pstmt \rangle \Downarrow \langle \Pstore', \Penv', \Psig \rangle$
\item $\Xi',\Sigma',\Delta[\Pmany{X=\rho}] \vDash \langle\Pstore', \Penv'\rangle:\hat{\Gamma}'$
\end{enumerate}
We now consider the inductive cases.
\begin{description}
\item[Case \textsc{TS-Call}.]
Call statements are handled by treating them as call expressions, so
this case follows from \cref{thm:exprterm}.

\item[Case \textsc{TS-TblCall}.]
The induction hypothesis yields a semantic typing
$\Sigma,\Gamma,\Delta \vDash \Pexp:\Ptabletype$.
As usual we will use this to evaluate $\Pexp$ to a value $\Pval_t$,
yielding a new store $\Pstore_1$ and store typing $\Xi_1$.  Inverting
the semantic typing for $\Pval_t$ shows that it must be a table value,
with all that its semantic typing entails. So in summary we have the
following.
\begin{align*}
&\langle \mathcal{C},\Delta[\Pmany{X=\rho}], \Pstore, \Penv, \Pexp \rangle
\Downarrow \langle \Pstore_1, \Ptableval{\ell}{\Penv}{\Pmany{\Pexp:x}}{\Pmany{act}} \rangle
\\
&\Xi,\Ptypedefs[\Pmany{X=\rho}],\Sigma \vDash \langle \Pstore_1, \Penv \rangle:\hat{\Gamma}
\\
&\Xi,\Ptypedefs[\Pmany{X=\rho}],\Sigma \vDash \Ptableval{\ell}{\Penv_t}{\Pmany{\Pexp_k:x_k}}{\Pmany{act}}:\Ptabletype
\\
&\Xi,\Sigma,\Delta \vDash\Penv_t:\hat{\Gamma_t}
\\
&\Sigma,\Gamma,\Delta \vDash \Pmany{\Pexp_k:\tau_k}
\\
&\Sigma,\Gamma,\Delta \vDash \Pmany{x_k:\kw{match\_kind}}
\\
&\Sigma,\Gamma,\Delta \vDash x_{\mathit{act}} : \Pfunctiontype{}{\Pmany{d~x:\tau}, \Pmany{\Pin~x_p:\tau_p}}{\{\}} &\text{for each $\mathit{act} = x_{\mathit{act}}(\Pmany{\Pexp_a})$}
\\
&\Sigma,\Gamma,\Delta \vDash \Pmany{\Pexp_k:\tau_k\Pgoes{d}} &\text{for each $\mathit{act} = x_{\mathit{act}}(\Pmany{\Pexp_a})$}
\end{align*}

Next we evaluate $\Pmany{\Pexp_k}$, producing values
$\Pmany{\Pval_k}$, a store $\Pstore_2$, and a store typing $\Xi_2$. We
now have the following.
\begin{align}
&\Peval{\Pcp,\Ptypedefs[\Pmany{X=\rho}],\Pstore_1,\Penv_c,\Pmany{\Pexp_k}}
      {\Pstore_2,\Pmany{\Pval_k}}
\\
&\Xi_2,\Sigma,\Delta[\Pmany{X=\rho}] \vDash \langle\Pstore_2, \Penv_c\rangle:\hat{\Gamma}
\\
\Sigma, \Xi_2, \Delta[\Pmany{X=\rho}] \vDash \Pmany{\Pval_k:\hat{\tau_k}}
\end{align}

Now that we have values for the keys, we can perform the match.
Our assumption about \Pcp lets us conclude
$\langle\Pcp,\ell,\Pmany{\Pval_k:x_k},\Pmany{x_{\mathit{act}}(\Pmany{x_c:\tau_c})}\rangle
\Downarrow_{\mathit{match}} x_{\mathit{act}}(\Pmany{\Pexp_c})$.

Last, we need to evaluate the function call statement
$x_{\mathit{act}}(\Pmany{\Pexp_c})$.  But we have semantic typings for
the function and all its arguments, so this follows from the same
argument given in \textsc{T-Call}.

We have all the premises of \textsc{E-Call-Table} and semantic typings for the final
state, so we're done with this case.

\item[Case \textsc{TS-Assign}.]
After applying the induction hypothesis to the premises of this rule,
we have the following semantic typings.
\begin{align}
&\Sigma,\Gamma,\Delta \vDash \Pexp_1:\tau\mathrel{\kw{goes}}\kw{inout}
\\
&\Sigma,\Gamma,\Delta \vDash \Pexp_2:\tau
\end{align}
Let $\hat{\tau}$ be defined by
$\Ptypeeval{\Sigma,\Delta[\Pmany{X=\rho}]}{\tau}{\hat{\tau}}$.
We first evaluate the left hand expression to an lvalue
using \cref{thm:lvalterm}. The lemma produces $\Plval$, $\Pstore_1$,
and $\Xi_1$ satisfying the following conditions.
\begin{align}
&\Pstmtconf{\Pcp,\Ptypedefs}{\Pstore,\Penv}{\Pexp_1}
\Downarrow_{\mathit{\Plval}}
\langle\Pstore_1,\Plval\rangle\\
&\Xi_1,\Sigma,\hat{\Gamma},\Delta[\Pmany{X=\rho}] \vDash \langle \Pstore_1,\Penv \rangle:\hat{\Gamma}\\
&\Sigma,\hat{\Gamma},\Delta[\Pmany{X=\rho}] \vDash \Plval:\hat{\tau}\Pgoes{\Pinout}
\end{align}

Next we evaluate the right hand side using \cref{thm:exprterm},
which yields \Pval, $\Pstore_2$, and $\Xi_2$ satisfying the following
conditions.
\begin{align}
&\Peval{\Pcp,\Ptypedefs[\Pmany{X=\rho}],\Pstore_1,\Penv,\Pexp_2}
       {\Pstore_2,\Pval}\\
&\Xi_2,\Sigma,\hat{\Gamma},\Delta[\Pmany{X=\rho}] \vDash \langle \Pstore_2,\Penv \rangle:\hat{\Gamma}\\
&\Xi_2,\Delta[\Pmany{X=\rho}],\Sigma \vDash \Pval:\hat{\tau}
\end{align}

Finally we apply \cref{thm:lvalwriteterm}, which yields a new store
$\Pstore_3$, the evaluation step
$\Passignlval{\Pcp,\Delta[\Pmany{X=\rho}],\Pstore_2}{\Penv}{\Passign{\Plval}{\Pval}}{\Pstore_3}$,
and all the typing side conditions we need after applying \textsc{E-Assign}.

\item[Case \textsc{TS-Exit}.]
Trivial.

\item[Case \textsc{TS-Empty}.]
Trivial.

\item[Case \textsc{TS-If}.]
The induction hypothesis gives us the following semantic typings.
\begin{align}
&\Sigma,\Gamma,\Delta \vDash \Pexp:\Pbool \\
&\Sigma,\Gamma,\Delta \vDash \Pstmt_1 \Pddashv \Sigma_1, \Gamma_1 \\
&\Sigma,\Gamma,\Delta \vDash \Pstmt_2 \Pddashv \Sigma_2, \Gamma_2
\end{align}
We first need to evaluate $\Pexp$ using its semantic typing. This
produces $\Pstore_1$, $\Xi_1$, and $\Pval$ satisfying the following
conditions.
\begin{align}
&\langle \mathcal{C},\Delta[\Pmany{X=\rho}], \Pstore, \Penv, \Pexp \rangle
\Downarrow \langle \Pstore_1, \Psig \rangle \\
&\Xi_1,\Sigma,\Delta[\Pmany{X=\rho}] \vDash \langle\Pstore_1, \Penv\rangle:\hat{\Gamma} \\
&\Sigma, \Xi_1, \Delta[\Pmany{X=\rho}] \vDash \Pval:\Pbool
\end{align}
Inverting the typing of $\Pval$ we see that $\Pval=\Ptrue$ or $\Pval
= \Pfalse$. The proofs are symmetric, we give only the $\Pval=\Ptrue$
case here.
We evaluate $\Pstmt_1$ using its semantic typing, which allows us to
use \textsc{E-IfElse-True} and prove the required semantic typing side
conditions for the final state.

\item[Case \textsc{TS-Block}.]
This is a straightforward application of the induction hypothesis
and \textsc{E-Block}.

\item[Case \textsc{TS-Ret}.]
Subexpression evaluation proceeds as in other cases to obtain a value
$\Pval$ and semantic typing side conditions for the resulting
store. Since $\Psig=\Preturn{\Pval}$, we have to show that $
\Pval:\hat{\Gamma}(\Preturnvoid)$.
Fortunately the type of $\Pval$ is $\hat{\tau}$, defined by
$\Sigma,\Delta[\Pmany{X=\rho}] \vdash \Gamma(\Preturnvoid) \rightsquigarrow \hat{\tau}$
or equivalently by $\hat{\tau} = \hat{\Gamma}(\Preturnvoid)$.

\item[Case \textsc{TS-Decl}.]
Immediate from \cref{thm:declterm}.

\end{description}
\end{proof}
\end{theorem}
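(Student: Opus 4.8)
The plan is to prove this by a logical-relations argument, because the big-step presentation blocks a direct induction on the typing derivation: the body of a called function is not a subterm of the call expression, so \textsc{E-Call} would demand an induction hypothesis that is simply not available structurally. I would therefore define a \emph{semantic typing} predicate $\vDash$, by mutual induction across values, expressions, statements, and declarations, that bakes in both termination and type preservation. For a statement, $\Sigma,\Gamma,\Delta \vDash \Pstmt \Pddashv \Sigma',\Gamma'$ would be parameterized by a choice of type arguments $\rho$ for the abstract variables $X~\kw{var}$ of $\Delta$, and would assert: whenever a store $\Pstore$ and environment $\Penv$ semantically inhabit the reduced context $\hat{\Gamma}$, evaluation of $\Pstmt$ terminates in some $\langle\Pstore',\Penv',\Psig\rangle$ under a larger store typing $\Xi'\supseteq\Xi$ inhabiting $\hat{\Gamma}'$, with $\Psig=\Preturn{\Pval}$ only when $\Pval$ carries the reduced type of $\Gamma(\Preturnvoid)$. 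The crucial design choice is that semantic typing of a \emph{closure value} carries the semantic typing of the closure's own body—exactly the hypothesis the call rule needs.

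The heart of the proof is the fundamental lemma: syntactic typing implies semantic typing, proved by a single mutual induction on typing derivations for expressions, statements, and declarations. Most statement cases (\textsc{TS-Empty}, \textsc{TS-Exit}, \textsc{TS-If}, \textsc{TS-Block}, \textsc{TS-Assign}, \textsc{TS-Ret}, \textsc{TS-Decl}) are routine: one evaluates each subexpression or substatement using its induction hypothesis, threads the growing store typing through, and assembles the matching evaluation rule, observing that the resulting values live at base type and so type semantically the instant they type syntactically. Beforehand I would establish termination lemmas for the auxiliary judgments—l-value evaluation, l-value writing, and copy-in/copy-out—each by induction on its own structure, since \textsc{E-Assign} and every call rule decompose into these; and I would prove the bookkeeping lemmas that type-variable substitution commutes with type simplification, so that reducing a context under $\Delta[X=\rho,\,Y=\rho']$ agrees with reducing under $\Delta[X=\rho]$ and then under $\Delta[Y=\rho']$.

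The main obstacle is the \textsc{T-Call} case (and, through it, \textsc{TS-TblCall}, which reduces to a call once the control plane selects an action). Here I would first evaluate the callee by its induction hypothesis to obtain a semantically well-typed closure value; inverting that semantic typing yields a context $\Gamma_c$ under which the closure's declarations and body are themselves semantically well-typed. I would then instantiate the copy-in/copy-out lemma to bind fresh locations for the arguments, evaluate the body's declarations and then its statement using the closure's \emph{carried} semantic typing, and finally apply the l-value-writing lemma to perform copy-out. The delicate point is type-argument handling: the call introduces its own bound variables $Y$ with instantiations $\rho'$, and one must show the reduced return type produced by this nested reduction is syntactically identical to the expected $\tau_{\mathit{ret}}'$; this is precisely where the substitution-commutation lemmas are indispensable, together with care that an exit or return signal arising at any intermediate step freezes the state and propagates (the call rule being the one place a return is caught and copy-out is still performed).

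Finally I would read off the stated theorem. At the top level $\Delta$ has no abstract variables, so $\rho$ is empty and simplification is the identity on the already-canonical $\Gamma$ and $\Gamma'$. The initial store and environment are semantically well-typed—they are produced by evaluating the program's top-level declarations—so the semantic typing of the statement applies directly: its first clause gives the evaluation of conclusion (1); its second clause gives conclusions (2)–(3), once we recall that semantic typing of stores and environments entails their ordinary typing; and its third clause gives conclusion (4) on the $\Preturn{\Pval}$ signal.
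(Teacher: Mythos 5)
Your proposal is correct and takes essentially the same route as the paper: a logical-relations (semantic typing) argument whose fundamental lemma is proved by (morally mutual) induction on typing derivations, with closure values carrying the semantic typing of their own bodies to supply the hypothesis the call rule needs, auxiliary termination lemmas for l-value evaluation, l-value writing, and copy-in/copy-out, and substitution-commutation lemmas to reconcile the nested type-argument instantiation in the call case. The paper's proof of this theorem is precisely the statement-level portion of that fundamental lemma, and it handles \textsc{TS-TblCall} exactly as you describe, by reducing to the call argument once the control plane selects an action.
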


\begin{theorem}\label{thm:declterm}
If
$\Sigma,\Gamma,\Delta \vdash \Pdecl \Pddashv \Sigma', \Gamma', \Delta'$
then
$\Sigma,\Gamma,\Delta \vDash \Pdecl \Pddashv \Sigma', \Gamma', \Delta'$.
\end{theorem}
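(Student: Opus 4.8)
The plan is to follow the template established for \cref{thm:exprterm} and \cref{thm:stmtterm}, proving the claim by induction on the derivation of the ordinary declaration typing $\Sigma,\Gamma,\Delta \vdash \Pdecl \dashv \Sigma',\Gamma',\Delta'$. First I would unfold the definition of semantic typing of declarations: fixing type arguments $\Pmany{\rho}$ for the abstract variables in $\Pvars{\Delta}$, a store typing $\Xi$, reduced contexts $\hat{\Gamma},\hat{\Gamma}'$, a store $\Pstore$, and an environment $\Penv$ satisfying the three reduction and store-typing hypotheses, I must exhibit a final configuration $\langle \Delta'',\Pstore',\Penv',\Psig \rangle$ and a store typing $\Xi' \supseteq \Xi$ for which evaluation succeeds, the bookkeeping equation $\Delta''=\Delta'[\Pmany{X=\rho}]$ holds, and $\langle \Pstore',\Penv'\rangle$ is semantically well typed against $\hat{\Gamma}'$. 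A declaration introduces no control flow of its own; the only way $\Psig$ can be exceptional is if an initializer, constructor argument, or key expression raises an exit, which the (elided) unwinding rules propagate while freezing the state, and the state-preservation guarantee of \cref{thm:exprterm} keeps the frozen store well typed. The argument then splits on the last typing rule, grouped into the three syntactic classes of declaration.

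For the variable declarations---\textsc{Type-Var}, \textsc{Type-VarInit}, \textsc{Type-Const}, and \textsc{Type-Inst}---the definition context is unchanged, so $\Delta''=\Delta[\Pmany{X=\rho}]$ follows at once from the matching evaluation rule, and the real work is typing the freshly bound value at a location I add to $\Xi'$. For \textsc{Type-Var} I would use that $\kw{init}_{\Ptypedefs}~\tau'$ yields a value of the reduced type $\tau'$; for \textsc{Type-VarInit} I would evaluate the initializer with \cref{thm:exprterm}; and for \textsc{Type-Const} I would additionally invoke \cref{thm:ctevalcorrect} to show the stored value coincides with the compile-time value recorded in $\Sigma'$, keeping the constant context consistent with the runtime state. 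In each case I re-establish $\Xi',\Sigma',\Delta[\Pmany{X=\rho}] \vDash \langle \Pstore',\Penv'\rangle:\hat{\Gamma}'$ by extending $\Xi$ with the new location and invoking \cref{thm:termstable}.

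The hard part will be the declarations that construct closures: \textsc{Type-Inst}, \textsc{T-CtrlDecl}, and \textsc{T-FuncDecl} (with \textsc{T-TableDecl} a lighter variant). Here the obligation is not merely to type the stored value but to establish the \emph{specialized} semantic typing predicate for closures, constructor closures, or tables. For \textsc{T-FuncDecl} and \textsc{T-CtrlDecl}, evaluation captures the current environment without running the body, so I would supply the captured context $\Gamma$ from the environment-typing hypothesis $\Xi \vDash \Penv:\hat{\Gamma}$ and discharge the ``body semantically typechecks'' clauses using \cref{thm:stmtterm} on the body statement and, for controls, the induction hypothesis on the body declarations---both structural subterms, so the (morally mutual) induction stays well founded. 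For \textsc{Type-Inst}, the constructor closure is read out of the semantically typed store and therefore already carries its special predicate; by the definition of semantic typing of constructor closures, allocating the evaluated constructor arguments (typed via \cref{thm:exprterm}) at fresh locations yields exactly the semantically well-typed closure that \textsc{E-Inst} produces. \textsc{T-TableDecl} is analogous but only needs the key expressions and actions to semantically typecheck, again by \cref{thm:exprterm}.

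Finally, the type declarations (\textsc{T-TypeDefDecl}, \textsc{T-EnumDecl}, \textsc{T-ErrorDecl}, \textsc{T-MatchKindDecl}) leave the store and environment untouched, so semantic typing of $\langle \Pstore',\Penv'\rangle$ is inherited directly from the hypothesis. The only subtle point is the equation $\Delta''=\Delta'[\Pmany{X=\rho}]$: evaluation extends $\Ptypedefs$ with the \emph{unreduced} definition whereas typing records the reduced one, but since these agree after reduction under $[\Pmany{X=\rho}]$ the equation holds, and for the open enumerations $\kw{error}$ and $\kw{match\_kind}$ the same reasoning applies to the extended member lists.
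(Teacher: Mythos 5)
Your proposal follows essentially the same route as the paper's proof: induction on the declaration typing derivation, unfolding the semantic-typing definition, and the same grouping of cases---variable declarations discharged via \cref{thm:exprterm} and \cref{thm:ctevalcorrect} with a fresh location added to $\Xi$ (plus \cref{thm:termstable}), closure-building declarations (\textsc{Type-Inst}, \textsc{T-TableDecl}, \textsc{T-CtrlDecl}, \textsc{T-FuncDecl}) handled by establishing the specialized semantic-typing predicates with bodies covered by \cref{thm:stmtterm} and the declaration induction hypothesis, and the type declarations dispatched trivially. The one piece of machinery you leave implicit, which the paper invokes explicitly in the closure cases, is the family of substitution lemmas (\cref{thm:stmtsubst}, \cref{thm:declsubst}, \cref{thm:stmtsemanticsubst}, \cref{thm:declsemanticsubst}, \cref{thm:valsemanticsubst}) needed to transport the typings obtained from the induction hypothesis---stated without the instantiation $[\Pmany{X=\rho}]$---into the instantiated context, a reconciliation of the same kind you correctly flag for the type-declaration cases.
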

\begin{proof}
We will proceed by induction on typing derivations.
Unfolding the definition of semantic typing for declarations, we take
\begin{enumerate}
\item a list of types $\rho$,
\item a store typing $\Xi$,
\item reduced typing context $\hat{\Gamma}$ and $\hat{\Gamma}'$,
\item a store $\Pstore$,
\item and an environment $\Penv$.
\end{enumerate}
We assume the following conditions on these data.
\begin{enumerate}
\item $\Sigma, \Delta[\Pmany{X=\rho}] \vdash \Gamma \rightsquigarrow \hat{\Gamma}$
\item $\Sigma', \Delta'[\Pmany{X=\rho}] \vdash \Gamma' \rightsquigarrow \hat{\Gamma}'$
\item $\Xi,\Sigma,\Delta[\Pmany{X=\rho}] \vDash \langle \Pstore, \Penv \rangle:\hat{\Gamma}$
\end{enumerate}
Now it remains to show that there exists a final configuration
$\langle \Delta'', \Pstore', \Penv', \Psig \rangle$ and a store typing
$\Xi'$ such that the following conditions hold.
\begin{enumerate}
\item $\langle \mathcal{C},\Delta[\Pmany{X=\rho}], \Pstore, \Penv, \Pdecl \rangle \Downarrow \langle \Delta'', \Pstore', \Penv', \Psig \rangle$
\item $\Delta'' = \Delta'[\Pmany{X=\rho}]$
\item $\Xi',\Sigma',\Delta'[\Pmany{X=\rho}] \vDash \langle\Pstore', \Penv'\rangle:\hat{\Gamma}'$
\end{enumerate}

We now give the inductive cases of the proof.
\begin{description}
\item[Case \textsc{Type-Const}.]
We have some typing hypotheses.
\begin{align}
&\Ptypeeval{\Sigma,\Delta}{\tau}{\tau'} \\
&\Sigma,\Gamma,\Delta \vdash \Pexp:\tau'\\
&\Pcteval{\Sigma, \Pexp}{v}
\end{align}
The induction hypothesis implies the semantic typing
$\Sigma,\Gamma,\Delta \vDash \Pexp:\tau'$.
Let $\hat{\tau}$ be the reduced type
$\Sigma,\hat{\Gamma},\Delta[\Pmany{X=\rho}]\vdash\tau\rightsquigarrow\hat{\tau}$.
Then from the semantic typing we obtain a final configuration
$\langle \Pstore', \Pval \rangle$ and a store typing $\Xi' \supseteq \Xi$
such that the following conditions all hold.
\begin{enumerate}
\item $\langle \mathcal{C},\Delta[\Pmany{X=\rho}], \Pstore, \Penv, \Pexp \rangle \Downarrow \langle \Pstore', \Pval \rangle$
\item $\Xi',\Sigma,\Delta[\Pmany{X=\rho}] \vDash \langle\Pstore', \Penv\rangle:\hat{\Gamma}$
\item $\Sigma, \Xi', \Delta[\Pmany{X=\rho}] \vDash \Pval:\hat{\tau}$.
\end{enumerate}
Let $\ell$ be a fresh location.  Applying \textsc{E-Const}
and \textsc{E-VarInit}, we have
\[
\langle \mathcal{C},\Delta[\Pmany{X=\rho}], \Pstore, \Penv, \Pexp \rangle
\Downarrow \langle \Delta[\Pmany{X=\rho}], \Pstore'[\Passign{\ell}{\Pval}], \Penv[\Penvset{x}{\ell}], \Pok \rangle
\]
as desired. This final configuration typechecks under $\Xi'$ because
$\hat{\Gamma}' = \hat{\Gamma}[x:\hat{\tau}]$.

Additionally we should show
$\Sigma[x=v] \vdash \langle \Pstore'[\Passign{\ell}{\Pval}], \Penv[\Penvset{x}{\ell}] \rangle$.
Certainly $\Sigma \vdash \langle \Pstore', \Penv \rangle$, since
constant variables are never mutated once bound. To justify the new binding for
$x$, observe that $v=\Pval$ by \cref{thm:ctevalcorrect}.

\item[Case \textsc{Type-Inst}.]
We have the following typing premises. We use $C$ for the name of the
constructor to avoid clashing with the type variables $\Pmany{X}$.
\[
\begin{array}{l}
\Sigma,\Gamma,\Delta \vdash C:\Pctortype{\Pmany{x:\tau}}{\tau_{\mathit{inst}}} \\
\Ptypeeval{\Sigma,\Delta}{\tau_{\mathit{inst}}}{\tau_{\mathit{inst}}'}\\
\Sigma,\Gamma,\Delta \vdash \Pmany{\Pexp:\tau}
\end{array}
\]
The induction hypothesis gives us semantic typings for $C$ and
$\Pmany{\Pexp}$.
\[
\begin{array}{l}
\Sigma,\Gamma,\Delta \vDash C:\Pctortype{\Pmany{x:\tau}}{\tau_{\mathit{inst}}} \\
\Sigma,\Gamma,\Delta \vDash \Pmany{\Pexp:\tau}
\end{array}
\]
Define $\Pmany{\hat{\tau}}$ and $\hat{\tau}_{\mathit{inst}}$ by
$\Sigma,\hat{\Gamma},\Delta[\Pmany{X=\rho}] \vDash \Pmany{\tau}\rightsquigarrow\Pmany{\hat{\tau}}$
and $\Sigma,\hat{\Gamma},\Delta[\Pmany{X=\rho}] \vDash \Pmany{\tau_{\mathit{inst}}}\rightsquigarrow\Pmany{\hat{\tau}_{\mathit{inst}}}$.
Then semantic typing of $C$ yields a store $\Pstore_1$, a store typing
$\Xi_1$, and a value $\Pval$ such that
\begin{enumerate}
\item $\langle \mathcal{C},\Delta[\Pmany{X=\rho}], \Pstore, \Penv, C\rangle
\Downarrow \langle \Pstore_1, \Pval \rangle$,
\item $\Xi_1,\Sigma,\Delta[\Pmany{X=\rho}] \vDash \langle\Pstore_1, \Penv\rangle:\hat{\Gamma}$, and
\item $\Sigma, \Xi_1, \Delta[\Pmany{X=\rho}] \vDash \Pval:\Pctortype{\Pmany{x:\hat{\tau}}}{\hat{\tau}_{\mathit{inst}}}$
\end{enumerate}
Moving on, the semantic typing of $\Pmany{\Pexp}$ yields a store
$\Pstore_2$, a store typing $\Xi_2$, and values $\Pmany{\Pval}$ such that
\begin{enumerate}
\item $\langle \mathcal{C},\Delta[\Pmany{X=\rho}], \Pstore, \Penv, \Pmany{\Pexp} \rangle
\Downarrow \langle \Pstore_2, \Pmany{\Pval} \rangle$,
\item $\Xi_2,\Sigma,\Delta[\Pmany{X=\rho}] \vDash \langle\Pstore_2, \Penv\rangle:\hat{\Gamma}$, and
\item $\Sigma, \Xi_2, \Delta[\Pmany{X=\rho}] \vDash \Pmany{\Pval:\hat{\tau}}$.
\end{enumerate}
Semantic typing implies static typing. Inverting the static typing
judgment for $\Pval_{\mathit{clos}}$ shows that it must be a constructor closure
with run-time arguments $\Pmany{x_i:\tau_i}$, as follows.
\begin{align*}
\Pval_{\mathit{clos}} &= \Pcclos{\Penv_c}{\Pmany{d~x_i:\hat{\tau}_i}}{\Pmany{x:\hat{\tau}}}{\Pmany{\Pdecl}}{\Pstmt} \\
\hat{\tau}_{\mathit{inst}} &= \Psimplefunctiontype{\Pmany{d~x_i:\hat{\tau}_i}}{\{\}}
\end{align*}
Take fresh locations $\Pmany{\ell}$ and $\ell_f$. 

Let $\Xi_3=\Xi_2[\Pmany{\Pheapset{\ell}{\hat{\tau}}}][\Pmany{\Pheapset{\ell_f}{\hat{\tau}_{\mathit{inst}}}}]$.
Set \[
\Pval_f = \Pclossimple{\Penv_c[\Pmany{\Penvset{x}{\ell}}]}{\Pmany{d~x_i:\hat{\tau}_i}}{\Pmany{\Pdecl}}{\Pstmt}.
\]
Then we can use \textsc{E-Inst} to prove the evaluation step we want.
\[
\Peval{\Pcp,\Ptypedefs,\Pstore,\Penv,\Pinst{C}{\Pmany{\Pexp}}{x}}
      {\Ptypedefs,\Pstore_2[\Pmany{\Pheapset{\ell}{\Pval}}][\Pheapset{\ell_f}{\Pval_f}],\Penv[\Penvset{x}{\ell_f}],\Pok}
\]
From the semantic typing property of constructor closures and \cref{thm:valsemanticsubst} we have \[
\Xi_3,\Delta[\Pmany{X=\rho}],\Sigma \vDash \Pval_f:\hat{\tau}_{\mathit{inst}},
\]
so the final configuration typechecks (semantically).

\item[Case \textsc{Type-Var}.]
As in the \textsc{Type-Const} case, but with $\kw{init}_\Delta(\hat{\tau})$
instead of $\Pval$.

\item[Case \textsc{Type-VarInit}.]
As in the \textsc{Type-Const} and \textsc{Type-Var} cases.

\item[Case \textsc{T-TypeDefDecl}.]
The \textsc{E-TypeDefDecl} rule has no premises, so we can use it to
establish
$\Peval{\Pcp,\Ptypedefs[\Pmany{X=\rho}],\Pstore,\Penv, \Ptypedef{\tau}{X}}{\Ptypedefs[X
= \tau],\Pstore,\Penv,\Pok}$ immediately. The final configuration here
satisfies semantic typing because we've extended the type context with
a new name but left everything else unchanged (\cref{thm:termstable}).

\item[Case \textsc{T-EnumDecl}.]
As in \textsc{T-TypeDefDecl}.

\item[Case \textsc{T-ErrorDecl}.]
As in \textsc{T-TypeDefDecl}.

\item[Case \textsc{T-MatchKindDecl}.]
As in \textsc{T-TypeDefDecl}.

\item[Case \textsc{T-TableDecl}.]
The induction hypothesis shows the following. We treat the
$\mathsf{act\_ok}$ judgment as an abbreviation so that the induction
hypothesis applies to its premises as well.
\begin{align*}
&\Sigma,\Gamma,\Delta \vDash \Pmany{\Pexp_k : \tau_k}
\\
&\Sigma,\Gamma,\Delta \vDash \Pmany{x_k:\Pmatchkind}
\\
&\Sigma,\Gamma,\Delta \vDash \Pmany{act~\kw{act\_ok}}
\\
\Sigma,\Gamma,\Delta \vDash x_{\mathit{act}} : \Pfunctiontype{}{\Pmany{d~x:\tau}, \Pmany{\Pin~x_p:\tau_p}}{\{\}}
&\text{for each $\mathit{act} = x_{\mathit{act}}(\Pmany{\Pexp_a})$}
\\
\Sigma,\Gamma,\Delta \vDash \Pmany{\Pexp_a:\tau\Pgoes{d}}
&\text{for each $\mathit{act} = x_{\mathit{act}}(\Pmany{\Pexp_a})$}
\end{align*}
Let $\ell$ be a fresh location and let
$\Pval = \Ptableval{\ell}{\Penv}{\Pmany{\Pkey}}{\Pmany{\Pact}}$.
Applying \textsc{E-TableDecl}, we have \[
\Peval{\Pcp,\Ptypedefs[\Pmany{X=\rho}],\Pstore,\Penv,\Ptable{x}{\Pmany{\Pkey}~\Pmany{\Pact}}}
      {\Ptypedefs[\Pmany{X=\rho}],\Pstore[\Pheapset{\ell}{\Pval}],\Penv[\Penvset{x}{\ell}],\Pok}.
\]
Let $\Xi'=\Xi[\ell:\Ptabletype]$.
The final configuration typechecks syntactically because the only new
value is $\Pval$, and we have all the premises of \textsc{TV-Table}
for it. It typechecks semantically for similar reasons---we have all
the premises for table semantic typing.

\item[Case \textsc{T-CtrlDecl}.]
From the typing rule we have the following.
\begin{align}
&\Ptypeeval{\Sigma,\Delta}{\Pmany{\tau_c}}{\Pmany{\tau_c'}} \\
&\Ptypeeval{\Sigma,\Delta}{\Pmany{\tau}}{\Pmany{\tau'}} \\
&\Sigma,\Gamma[\Pmany{x_c:\tau_c}][\Pmany{x:\tau}],\Delta \vdash
 \Pmany{\Pdecl} \dashv \Sigma_1, \Gamma_1, \Delta_1 \\
&\Sigma_1, \Gamma_1[\Preturnvoid:\{\}], \Delta_1 \vdash \Pstmt \dashv \Sigma_2, \Gamma_2
\end{align}
Define $\Pmany{\hat{\tau}_c}$ (resp. \Pmany{\hat{\tau}}) by
$\Ptypeeval{\Sigma,\Delta[\Pmany{X=\rho}]}{\Pmany{\tau_c}}{\Pmany{\hat{\tau}_c}}$
(resp. $\Pmany{\tau}~ \rightsquigarrow \Pmany{\hat{\tau}}$.) These reductions
have corresponding run-time reductions (\cref{thm:typeevalagree}).
Let $\ell$ be a fresh location and define a constructor closure \Pval
by $\Pval
= \Pcclos{\Penv}{\Pmany{d~x:\hat{\tau}}}{\Pmany{x_c:\hat{\tau}_c}}{\Pmany{\Pdecl}}{\Pstmt}$.
We have all the ingredients for an application of \textsc{E-CtrlDecl}, which establishes
the following evaluation step.
\[
\Peval{\Pcp,\Ptypedefs[\Pmany{X=\rho}],\Pstore,\Penv,\Pcontroldecl{X}{\Pmany{d~x:\tau}}{\Pmany{x_c:\tau_c}}{\Pmany{\Pdecl}}{\Pstmt}}
      {\Ptypedefs[\Pmany{X=\rho}],\Pstore[\Pheapset{\ell}{\Pval}],\Penv[\Penvset{C}{\ell}],\Pok}
\]
We now turn to the semantic typing of the final configuration.
First we will make some definitions.
\begin{align*}
\tau_{\mathit{clos}} &= \Pctortype{\Pmany{x_c:\tau_c}}{\Psimplefunctiontype{\Pmany{d~x:\tau}}{\{\}}} \\
\hat{\tau}_{\mathit{clos}} &= \Pctortype{\Pmany{x_c:\hat{\tau}_c}}{\Psimplefunctiontype{\Pmany{d~x:\hat{\tau}}}{\{\}}} \\
\Xi' &= \Xi[\ell\mapsto\hat{\tau}_{\mathit{clos}}]
\end{align*}
We need to show $\Xi',\Sigma,\Delta[\Pmany{X=\rho}] \vDash \langle
\Pstore[\Pheapset{\ell}{\Pval}],\Penv[\Penvset{C}{\ell}]\rangle
:\hat{\Gamma}[C:\tau_{\mathit{clos}}]$.
The typing of $\Penv$ is evident. For $\Pstore$, we just have to show
that the new closure is semantically typed, that is, we need to establish
$\Xi',\Sigma,\Delta[\Pmany{X=\rho}] \vDash \Pval:\hat{\tau}_{\mathit{clos}}$.
Recalling the definition for semantic typing of constructor closures, we
introduce fresh locations $\ell$ and consider the ordinary closure \[
\Pclossimple{\Penv[\Pmany{x_c\mapsto\ell}]}{\Pmany{d~x:\hat{\tau}}}{\Pmany{\Pdecl}}{\Pstmt}.
\]
We need to semantically type this closure under
$\Xi'[\Pmany{\ell\mapsto\hat{\tau}_c}],\Sigma,\Ptypedefs[\Pmany{X=\rho}]$.
First define $\hat{\Gamma}_1$ and $\hat{\Gamma}_2$ by \[
\Ptypeeval{\Sigma,\Delta[\Pmany{X=\rho}]}{\Gamma_1}{\hat{\Gamma}_1}
\Ptypeeval{\Sigma_1,\Delta_1[\Pmany{X=\rho}]}{\Gamma_2}{\hat{\Gamma}_2}.
\]
Setting $\Gamma_0=\hat{\Gamma}[\Pmany{x:\hat{\tau}_c}]$, we address
each typing condition in turn.
\begin{enumerate}
\item $\Xi'[\Pmany{\ell\mapsto\tau_c}] \vDash \Penv:\Gamma_0$. Trivial.
\item $\Xi'[\Pmany{\ell\mapsto\hat{\tau}_c}],\Sigma,\Ptypedefs[\Pmany{X=\rho}] \vdash \Pclossimple{\Penv[\Pmany{x_c\mapsto\ell}]}{\Pmany{d~x:\hat{\tau}}}{\Pmany{\Pdecl}}{\Pstmt}:\hat{\tau}_{\mathit{clos}}$. Apply \textsc{TV-Clos}. Use \cref{thm:declsubst} to type the declarations and use \cref{thm:stmtsubst} to type the statements.
\item $\Xi'[\Pmany{\ell\mapsto\hat{\tau}_c}],\Sigma,\Ptypedefs[\Pmany{X=\rho}] \vDash \Pmany{\Pdecl} \Pddashv \Sigma_1,\hat{\Gamma}_1, \Delta_1[\Pmany{X=\rho}]$.
The induction hypothesis gives us a typing without the substitution for $\Pmany{X}$ and \cref{thm:declsemanticsubst} closes the gap.
\item $\Xi'[\Pmany{\ell\mapsto\hat{\tau}_c}],\Sigma_1,\Ptypedefs_1[\Pmany{X=\rho}] \vDash \Pstmt \Pddashv \Sigma_2,\hat{\Gamma}_2, \Delta_2[\Pmany{X=\rho}]$.
Again the induction hypothesis gives us a typing without $\Pmany{X=\rho}$ but in this case \cref{thm:stmtsemanticsubst} closes the gap.
\end{enumerate}

\item[Case \textsc{T-FuncDecl}.]
To avoid confusion we rename the bound type variables in the function
declaration from $\Pmany{X}$ to $\Pmany{Y}$. We first show we can
evaluate the function declaration, placing a closure into the store,
and then we give an appropriate semantic typing for the closure.

From the typing rule we have the following hypotheses.
\begin{align}
&\Gamma_1 = \Gamma[\Pmany{x_i:\tau_i}] \\
&\Delta_1 = \Delta[\Pmany{Y~\kw{var}}] \\
\label{eqn:funcdeclstmttyping}
&\Sigma, \Gamma_1[\Preturnvoid:\tau], \Delta_1 \vdash \Pstmt \dashv \Sigma_2, \Gamma_2 \\
&\Gamma'=\Gamma[x:\Pfunctiontype{\Pmany{Y}}{\Pmany{d~x_i:\tau_i'}}{\tau'}] \\
&\Delta'=\Delta \\
&\Ptypeeval{\Sigma, \Delta[\Pmany{Y~\kw{var}}]}{\Pmany{\tau_i}}{\Pmany{\tau_i'}} \\
&\Ptypeeval{\Sigma, \Delta[\Pmany{Y~\kw{var}}]}{\tau}{\tau'}
\end{align}

Define $\hat{\tau}'$ (resp. $\Pmany{\hat{\tau}_i'}$) by
$\Ptypeeval{\Sigma, \Delta[\Pmany{X=\rho},\Pmany{Y~\kw{var}}]}{\tau}{\hat{\tau}'}$
(resp $\Pmany{\tau_i} \rightsquigarrow \Pmany{\hat{\tau}_i'}$.)
These evaluations agree with corresponding runtime type evaluations
by \cref{thm:typeevalagree}.  Take a fresh location $\ell$ and set
$\Pval = \Pclos{\Penv}{\Pmany{Y}}{\Pmany{d~x:\hat{\tau_i}'}}{\hat{\tau}'}{\Pstmt}$.
Now we can apply \textsc{E-FuncDecl} to show
\[
\Peval{\Pcp,\Ptypedefs[\Pmany{X=\rho}],\Pstore,\Penv,\Pfunctiondecl{\tau}{x}{\Pmany{Y}}{\Pmany{d~x_i:\tau_i}}{\Pstmt}}
      {\Ptypedefs[\Pmany{X=\rho}],\Pstore[\Pheapset{\ell}{\Pval}],\Penv[\Penvset{x}{\ell}],\Pok}.
\]
We have to show $\Delta'' = \Delta'[\Pmany{X=\rho}]$, but that is immediate
in this case where $\Delta'' = \Delta[\Pmany{X=\rho}]$ and
$\Delta' = \Delta$. We also need to produce some $\Xi' \supseteq \Xi$
such that
\[
\Xi',\Sigma,\Delta'[\Pmany{X=\rho}] \vDash \langle\Ptypedefs[\Pmany{X=\rho}],\Pstore[\Pheapset{\ell}{\Pval}],\Penv[\Penvset{x}{\ell}],\Pok\rangle: \hat{\Gamma}'.
\]
Let $\tau_f = \Pfunctiontype{\Pmany{Y}}{\Pmany{d~x_i:\tau_i'}}{\tau'}$
and let
$\hat{\tau}_f=\Pfunctiontype{\Pmany{Y}}{\Pmany{d~x_i:\hat{\tau}_i'}}{\hat{\tau}'}$.
Then in particular $\hat{\Gamma}'=\hat{\Gamma}[x:\hat{\tau}_f]$. So
set $\Xi'=\Xi[\ell\mapsto\hat{\tau}_f]$ and observe
$\Xi' \vDash \Penv[\Penvset{x}{\ell}]$. Finally, proving 
$\Xi',\Sigma,\Delta'[\Pmany{X=\rho}] \vDash \Pstore[\Pheapset{\ell}{\Pval}]$
amounts to showing $\Xi',\Sigma,\Delta'[\Pmany{X=\rho}] \vDash \Pval:\hat{\tau}_f$,
which is going to be somewhat involved.

To prove semantic typing of our newly created closure, we will have to find
contexts $\Sigma_c$ and $\Gamma_c$ such that
\begin{enumerate}
\item $\Xi' \vDash \Penv : \hat{\Gamma}$,
\item $\Xi',\Sigma,\Delta[\Pmany{X=\rho}] \vdash \Pval:\hat{\tau}_f$, and
\item $\Sigma',\hat{\Gamma}[\Pmany{x:\hat{\tau}_i'},\Preturnvoid=\hat{\tau}'],\Delta[\Pmany{X=\rho},Y~\kw{var}] \vDash \Pstmt \Pddashv \Sigma_c, \Gamma_c$.
\end{enumerate}
We address each of these goals in turn.
\begin{enumerate}
\item By hypothesis.
\item Apply \textsc{TV-Clos}. The typing of $\Pstmt$ comes from (\ref{eqn:funcdeclstmttyping}) and \cref{thm:stmtsubst}.
\item From the inductive hypothesis, we know there are contexts $\Sigma_2$ and $\Gamma_2$ such that
\[
\Sigma, \Gamma[\Pmany{x_i:\tau_i},\Preturnvoid: \tau], \Delta[\Pmany{Y~\kw{var}}] \vDash \Pstmt \dashv \Sigma_2, \Gamma_2.
\]
Substituting $\rho$ for $\Pmany{X}$ preserves typing by \cref{thm:stmtsemanticsubst}.
\end{enumerate}
\end{description}
\end{proof}

\begin{lemma}\label{thm:lvalterm}
Let $\langle\Pcp,\Ptypedefs,\Pstore,\Penv,\Pexp\rangle$ be an initial
expression configuration.
Assume $\Sigma,\Gamma,\Delta \vdash \Pexp:\tau\Pgoes{\Pinout}$ and take
variables $\Pmany{\rho}$, a context $\hat{\Gamma}$, and a type
$\hat{\tau}$. Suppose that
\begin{enumerate}
\item $\Sigma, \Delta[\Pmany{X=\rho}] \vdash \tau \rightsquigarrow \hat{\tau}$,
\item $\Sigma, \Delta[\Pmany{X=\rho}] \vdash \Gamma \rightsquigarrow \hat{\Gamma}$, and
\item $\Xi,\Sigma,\Delta[\Pmany{X=\rho}] \vDash \langle \Pstore, \Penv \rangle:\hat{\Gamma}$.
\end{enumerate}
Then there exists a store $\Pstore'$, a typing $\Xi'\supseteq \Xi$,
and an l-value $\Plval$ such that
\begin{enumerate}
\item $\Plvaleval{\Pcp}{\Ptypedefs}{\Pstore}{\Penv}{\Pexp}{\Pstore'}{\Plval}$,
\item $\Xi',\Sigma,\Gamma,\Delta \vDash \langle \Pstore',\Penv \rangle:\Gamma$, and
\item $\Sigma,\Gamma,\Delta \vDash \Plval:\tau\Pgoes{\Pinout}$.
\end{enumerate}
\begin{proof}
By induction on the typing of $\Pexp$ with applications
of \cref{thm:exprterm}.
\end{proof}
\end{lemma}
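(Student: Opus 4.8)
The plan is to induct on the derivation of $\Sigma,\Gamma,\Delta \vdash \Pexp:\tau\Pgoes{\Pinout}$, mirroring the structure of the l-value evaluation rules and invoking \cref{thm:exprterm} on the non-assignable subexpressions. The first step is the key structural observation: the direction annotation $\Pinout$ forces the final rule of the derivation to be one of \textsc{T-Var}, \textsc{T-MemHdr}, \textsc{T-MemRec}, \textsc{T-Index}, or \textsc{T-Slice}. Every other expression typing rule outputs $\Pgoes{\Pin}$, and the four ``root-propagating'' rules yield $\Pinout$ only when their root subexpression does, so \textsc{T-Var} (with $x \notin \kw{dom}(\Sigma)$) is the unique base case. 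This confines the induction to exactly the l-value-shaped expressions and lines each case up with one l-value evaluation rule.

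For the base case \textsc{T-Var}, I apply \textsc{LE-Var} to obtain $\Plval = x$ with the store untouched, take $\Pstore' = \Pstore$ and $\Xi' = \Xi$, inherit the store and environment typing from assumption~(3), and retype the l-value $x$ at direction $\Pinout$ via \textsc{T-Var} itself. The field cases \textsc{T-MemHdr} and \textsc{T-MemRec} are handled uniformly: the root $\Pexp_0$ of $\Pexp_0.f_i$ is again typed at $\Pinout$, so the induction hypothesis produces an l-value $\Plval_0$, a store $\Pstore_1$, and a typing $\Xi_1 \supseteq \Xi$ with the required store and l-value typings; applying the field-access rule \textsc{LE-Rec} then yields $\Plval_0.f_i$, whose type is the $i$th field type, still at direction $\Pinout$.

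The interesting cases are \textsc{T-Index} and \textsc{T-Slice}, where the root $\Pexp_1$ has direction $\Pinout$ but the index and endpoint subexpressions $\Pexp_2$ (and $\Pexp_3$) have direction $\Pin$. Here I apply the induction hypothesis to $\Pexp_1$ to obtain an l-value $\Plval_1$ and an intermediate store, and then invoke \cref{thm:exprterm} on $\Pexp_2$ (and $\Pexp_3$), threading the store through each evaluation since these subexpressions may contain side-effecting calls. Inverting the value typings supplied by \cref{thm:exprterm} forces the index value into the form $n_{32}$ (resp.\ the endpoints into $n_\infty$, $m_\infty$) demanded by the array-index rule and \textsc{LE-Slice}; the l-value evaluation then produces $\Plval_1[n]$ (resp.\ $\Plval_1[n{:}m]$) with type $\tau$ (resp.\ $\Pbittype{n-m+1}$) at direction $\Pinout$. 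Because each subevaluation preserves store typing, the final store is well-typed under the accumulated $\Xi'$.

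I expect the main obstacle to be reconciling the computed index of the \emph{expression} $\Pexp_1[\Pexp_2]$ with the fixed index of the resulting \emph{l-value} $\Plval_1[n]$, exactly the asymmetry flagged in the caption of \cref{fig:expstmtsyntax}. For the slice, the l-value type $\Pbittype{n-m+1}$ must match the type $\Pbittype{n_2-n_3+1}$ that \textsc{T-Slice} assigned using compile-time evaluation of the endpoints; closing this gap requires \cref{thm:ctevalcorrect} to confirm that the runtime endpoint values $n, m$ coincide with the compile-time values $n_2, n_3$, so that the bounds check already discharged during typing transfers to the l-value and the resulting type is the expected one.
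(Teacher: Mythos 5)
Your strategy is exactly the paper's: its entire proof is ``by induction on the typing of $\Pexp$ with applications of \cref{thm:exprterm},'' and your elaboration is faithful to how the surrounding development actually discharges such cases --- the \textsc{T-Var} base case via \textsc{LE-Var}, the field cases via the field-access l-value rule, the index and slice cases threading the store through \cref{thm:exprterm} on the direction-$\Pin$ subexpressions, and in particular the use of \cref{thm:ctevalcorrect} to reconcile the runtime endpoint values with the compile-time values $n_2, n_3$ that \textsc{T-Slice} baked into the type $\Pbittype{n_2-n_3+1}$; the paper's \textsc{T-Slice} case of \cref{thm:exprterm} makes the identical move.

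There is, however, one concrete hole in your inversion step. You claim the derivation must end in \textsc{T-Var}, \textsc{T-MemHdr}, \textsc{T-MemRec}, \textsc{T-Index}, or \textsc{T-Slice} because ``every other expression typing rule outputs $\Pgoes{\Pin}$.'' That is false as the rules are written: \textsc{T-Cast} in \cref{fig:expr-typing} propagates the direction $d$ of its subject, so $\Pcast{\rho}{\Pexp}$ is typed $\Pgoes{\Pinout}$ whenever $\Pexp$ is. Your induction therefore has a sixth case, and it is one you cannot discharge --- the l-value grammar in \cref{fig:expstmtsyntax} contains only $x$, $\Plval.f$, $\Plval[n]$, and $\Plval[n_1{:}n_2]$, and there is no l-value evaluation rule for casts, so no derivation of $\Pexp \Downarrow_{\Plval} \Plval$ exists for a cast expression and the lemma's first conclusion is unprovable in that case. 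The fix is not more induction but a side condition: casts are not l-values in P4, so \textsc{T-Cast} should emit $\Pgoes{\Pin}$ (the $d$-propagation in the figure is best read as an oversight), or equivalently the lemma must be restricted to cast-free assignable expressions; once that is stated, your case analysis is complete and the rest of your argument goes through. The paper's one-line proof silently skates over the same point, so flagging and repairing it is a genuine improvement rather than a deviation.
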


\begin{lemma}\label{thm:copyterm}
Consider an initial configuration
$\langle \Pcp,\Ptypedefs,\Pstore,\Penv,
\Pmany{\Passign{d~x:\tau}{\Pexp}}\rangle$.
Take contexts $\Xi,\Sigma,\Gamma$ and suppose
\begin{itemize}
\item $\Xi,\Delta \vDash \Pstore$,
\item $\Xi,\Delta \vDash \Penv:\Gamma$,
\item and $\Sigma,\Gamma,\Delta \vDash \Pmany{\Pexp:\tau\Pgoes{d}}$.
\end{itemize}
Then there exists a store $\Pstore'$, a new store typing $\Xi'$,
locations $\Pmany{\ell}$, and L-values $\Pmany{\Plval}$ such that
all of the following conditions are true.
\begin{enumerate}
\item $\langle \Pcp,\Ptypedefs,\Pstore,\Penv,
\Pmany{\Passign{d~x:\tau}{\Pexp}}\rangle
\Downarrow_{\mathit{copy}}
\langle \Pstore',\Pmany{\Penvset{x}{\ell}},\Pmany{\Passign{\Plval}{\ell}}\rangle$
\item $\Xi',\Delta \vdash \langle\Pstore',\Penv\rangle:\Gamma$
\item $\Xi'(\Pmany{\ell}) = \Pmany{\tau}$
\item $\Sigma,\Gamma,\Delta \vdash \Pmany{\Plval:\tau\Pgoes{\kw{inout}}}$.
\end{enumerate}
\begin{proof}
By cases on $d$ using \cref{thm:lvalterm} and \cref{thm:exprterm}.
\end{proof}
\end{lemma}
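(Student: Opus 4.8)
The plan is to prove the lemma first for a single argument by case analysis on its direction $d$, and then to lift the result to lists of arguments by an induction that threads the store typing $\Xi$ monotonically through the list. The two workhorses are \cref{thm:exprterm}, which evaluates an $\Pin$ argument to a well-typed value, and \cref{thm:lvalterm}, which evaluates an $\Pout$ or $\Pinout$ argument to an l-value that is itself well-typed and assignable (that is, $\kw{goes}~\Pinout$). In every case I allocate a single fresh location $\ell$, extend the store typing to $\Xi' = \Xi_1[\ell \mapsto \tau]$, and check that the value (or initializer) newly stored at $\ell$ has type $\tau$, so that the extended store still typechecks and condition~(3) records $\Xi'(\ell) = \tau$.

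For a single argument $d~x:\tau := \Pexp$ I match the three copy rules of \cref{fig:copy}. When $d = \Pin$ (rule \textsc{CopyIn}) I apply \cref{thm:exprterm} to the hypothesis $\Sigma,\Gamma,\Delta \vDash \Pexp : \tau~\kw{goes}~d$, obtaining a store $\Pstore_1$, a store typing $\Xi_1 \supseteq \Xi$, and a value $\Pval$ with $\Pval : \tau$ and with store and environment typing preserved; storing $\Pval$ at a fresh $\ell$ discharges conditions~(1)--(3), and condition~(4) is vacuous since \textsc{CopyIn} emits no copy-out assignment. When $d = \Pout$ (rule \textsc{CopyOut}) I instead apply \cref{thm:lvalterm} to obtain an l-value $\Plval$ with $\Sigma,\Gamma,\Delta \vdash \Plval : \tau~\kw{goes}~\Pinout$, which is exactly condition~(4); the fresh location is initialized with $\kw{init}_\Ptypedefs~\tau$, whose type is $\tau$, so the store typing is preserved. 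When $d = \Pinout$ (rule \textsc{CopyInOut}) I first run \cref{thm:lvalterm} to get $\Plval$ and then read its current contents by evaluating $\Plval$ as an expression via \cref{thm:exprterm}, yielding a value $\Pval : \tau$ to place at the fresh location, while the same $\Plval$ again supplies condition~(4).

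To lift to a list of arguments I induct on the length of the list. The base case is immediate, taking $\Pstore' = \Pstore$ and $\Xi' = \Xi$ with empty binding and l-value lists. In the inductive step I process the head argument as above to obtain $\Pstore_1$ and $\Xi_1 \supseteq \Xi$, and then invoke the induction hypothesis on the tail under $\Pstore_1$ and $\Xi_1$. Conveniently, semantic typing of expressions is already quantified over all store typings, so the typings of the remaining expressions carry over unchanged; what must be re-established for the tail is only that $\Pstore_1$ and $\Penv$ typecheck under $\Xi_1$, which the head step delivers directly together with \cref{thm:envtypestable} and \cref{thm:termstable}. I expect the main obstacle to be the remaining bookkeeping: keeping the accumulated location--type bindings $\Pmany{\ell} : \Pmany{\tau}$ pairwise disjoint, so that freshness for the head does not overwrite locations used by the tail and condition~(3) holds for the entire list, and verifying that every emitted copy-out l-value is still assignable in the final context. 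No individual step is deep, but the invariants tying together the store, its typing, the fresh locations, and the l-values must be stated carefully; a secondary point is the reliance, in the $\Pout$ case, on $\kw{init}_\Ptypedefs~\tau$ being a well-typed value of type $\tau$, which I would record as an assumption in the same spirit as the operator, cast, and havoc oracles used elsewhere.
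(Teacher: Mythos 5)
Your proposal follows exactly the paper's proof strategy: a case analysis on the direction $d$ matching the three rules of \cref{fig:copy}, using \cref{thm:exprterm} for the $\Pin$ case and \cref{thm:lvalterm} for the $\Pout$ and $\Pinout$ cases, with the lifting to argument lists handled in the ``obvious way'' the paper alludes to in the caption of \cref{fig:copy}. The additional bookkeeping you spell out (threading $\Xi$ monotonically, freshness of locations, and the well-typedness of $\kw{init}_\Ptypedefs~\tau$) is detail the paper leaves implicit, but it is the same argument.
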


\begin{lemma}\label{thm:lvalwriteterm}
Let $\langle \Pcp,\Delta, \Pstore, \Penv, \Passign{\Plval}{\Pval}\rangle$ be an
initial write configuration and take contexts
$\Xi,\Sigma,\Gamma,\Delta$ and a type $\tau$. If
\begin{enumerate}
\item $\Xi,\Delta \vDash \langle \Pstore, \Penv \rangle:\Gamma$,
\item $\Sigma,\Gamma,\Delta \vDash \Plval:\tau$,
\item and $\Sigma,\Xi,\Delta \vDash \Pval:\tau$,
\end{enumerate}
then there exists a store $\Pstore'$ such that
\begin{enumerate}
\item $\Passignlval{\Pcp, \Delta, \Pstore}{\Penv}{\Passign{\Plval}{\Pval}}{\Pstore'}$
\item and $\Xi,\Delta \vDash \langle \Pstore',\Penv \rangle:\Gamma$.
\end{enumerate}
\begin{proof}
By induction on $\Xi,\Sigma,\Gamma,\Delta \vdash \Plval:\tau~\Pgoes{\Pinout}$.
\end{proof}
\end{lemma}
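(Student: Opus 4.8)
The plan is to induct on the l-value typing derivation $\Sigma,\Gamma,\Delta \vdash \Plval:\tau\Pgoes{\Pinout}$, mirroring the five cases of the L-value writing relation. Two structural observations make the argument uniform. First, writing an l-value never allocates storage: every writing rule updates existing locations in place, so I can keep the \emph{same} store typing $\Xi$ in the conclusion rather than passing to an extension. Second, evaluating an l-value as an expression is read-only: because l-value indices and slice endpoints are already constants (see \Cref{fig:expstmtsyntax}), none of the relevant rules (\textsc{E-Var}, \textsc{E-RecMem}, \textsc{E-HdrMem}, \textsc{E-HdrMemUndef}, \textsc{E-Index}, \textsc{E-Slice}) allocates a fresh location, so the store and $\Xi$ come back unchanged. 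I will establish this once and reuse it in every inductive case, which is what justifies the lemma's unusually strong conclusion under the unextended $\Xi$.

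For the base case, when $\Plval = x$ is typed by \textsc{T-Var}, the direction is $\Pinout$ and $\Gamma(x)=\tau$. From $\Xi,\Delta \vDash \langle \Pstore,\Penv\rangle:\Gamma$ I extract a location $\ell=\Penv(x)$ with $\Xi(\ell)=\tau$ and $\Pstore(\ell)$ semantically typed at $\tau$. Applying \textsc{LW-Var} produces $\Pstore' = \Pstore[\ell:=\Pval]$, and since $\Pval$ is semantically typed at $\tau$ by hypothesis, overwriting a $\tau$-typed location with a $\tau$-typed value preserves $\Xi,\Delta \vDash \langle \Pstore',\Penv\rangle:\Gamma$. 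The environment is untouched, so its typing is immediate.

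The inductive cases---field access (\textsc{LW-Rec}, \textsc{LW-HdrV}, \textsc{LW-HdrInV}), indexing (\textsc{LW-Idx}), and slicing (\textsc{LW-Slice})---follow a common recipe. The parent l-value $\Plval'$ is a structural subterm whose typing is a strict subderivation, so the induction hypothesis applies to it, and its direction is also $\Pinout$ since the field, index, and slice typing rules propagate the direction $d$. First I evaluate $\Plval'$ as an expression to its current aggregate value via \Cref{thm:exprterm}; by the read-only observation this returns the same $\Pstore$ and needs no growth of $\Xi$. Inverting the semantic typing of that value fixes its shape: a record $\Prec{\Pmany{f=\Pval_f}}$ for \textsc{LW-Rec}, a header for the two header rules, a stack $\Pstackval{\tau}{\Pmany{\Pval}}$ for \textsc{LW-Idx}, and a bitstring for \textsc{LW-Slice}. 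I then build the updated aggregate by replacing the targeted component with $\Pval$ (or, for a slice, by splicing $\Pval$ in via $\kw{bitset}$), and recompute its type: because $\Pval$ carries the component type $\tau$ and all other components retain their types, the reconstructed value is syntactically well-typed by \textsc{TV-Rec}, \textsc{TV-Hdr}, or \textsc{TV-Stack}, and since aggregates are neither closures nor tables this syntactic typing \emph{is} semantic typing. Finally I appeal to the induction hypothesis to write the reconstructed aggregate back to $\Plval'$, obtaining $\Pstore'$ with $\Xi,\Delta \vDash \langle \Pstore',\Penv\rangle:\Gamma$.

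I expect the main obstacle to be the header cases, which must respect P4's validity discipline. In \textsc{LW-HdrV} I must preserve the $\kw{valid}$ tag and the field types when reconstructing the header so that it still inhabits its original header type; in \textsc{LW-HdrInV} I must instead recognize that a write through an invalid header is silently discarded, so the ``updated'' store is just the store returned by reading $\Plval'$ and typing is preserved for free. Getting the bookkeeping of validity tags and field-type annotations right---while confirming in each case that the reconstructed aggregate inhabits exactly the parent l-value's type---is the one place where care is required; everything else is routine given the two structural observations above.
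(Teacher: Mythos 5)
Your proof is correct and follows exactly the strategy the paper records for this lemma: induction on the derivation of $\Sigma,\Gamma,\Delta \vdash \Plval:\tau~\Pgoes{\Pinout}$, with cases mirroring the five l-value writing rules. The paper leaves the case analysis implicit, and your elaboration---read the parent aggregate (a pure, non-allocating evaluation, which is what justifies keeping $\Xi$ fixed), reconstruct it with the new component, write back via the induction hypothesis, and silently discard writes through invalid headers---is the intended argument.
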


\section{Union Translation Proof}
\label{app:union-proof}

\begin{figure}
\begin{mathpar}
\Pinfer{TU-Case-Field}{
\Sigma,\Gamma[f_i:\tau_i],\Delta \vdash \Pblk \dashv \Sigma',\Gamma'
}{
\Sigma,\Gamma,\Delta \vdash \kw{switchcaseok}(\Pmany{\tau~f}, \Pswitchcase{f_i}{\Pblk})
}

\and

\Pinfer{TU-Case-Default}{
\Sigma,\Gamma,\Delta \vdash \Pblk \dashv \Sigma',\Gamma'
}{
\Sigma,\Gamma,\Delta \vdash \kw{switchcaseok}(\Pmany{\tau~f}, \Pswitchcase{\kw{default}}{\Pblk}~\kw{switchcaseok})
}
\end{mathpar}
\caption{Omitted switch case typing rules.}
\end{figure}

\begin{lemma}\label{thm:union_value_lemma}
If $\Peval{\Pcp,\Ptypedefs,\Pstore,\Penv,\Pexp}
     {\Pstore',\Pval}$, 
then $\Puniontrans{\Pval} = \Pval$.
\begin{proof}
By induction on expression evaluation rules.
\end{proof}
\end{lemma}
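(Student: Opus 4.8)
The plan is to prove the statement by induction on the derivation of $\Peval{\Pcp,\Ptypedefs,\Pstore,\Penv,\Pexp}{\Pstore',\Pval}$. Since the union extension leaves the expression language untouched, every such derivation is built from the base expression-evaluation rules of \cref{fig:expression-eval-i,fig:expression-eval-ii}. The organizing observation is that the union-value constructor $\Punionval{X}{f,\Pval}$ never appears in the conclusion of any expression rule: union values are created only by variable initialization ($\kw{init}_\Ptypedefs~X$) and manipulated only by the statement-level rules \textsc{E-Union} and \textsc{E-UnionSwitch}. Thus the value $\Pval$ produced by an expression is assembled solely from constants, store contents, projections, operator and cast results, and record formation, and I only have to check that $\Puniontrans{\cdot}$ acts as the identity on each.

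Most cases are immediate. For \textsc{E-Int}, \textsc{E-Bool}, and \textsc{E-TypMem} the result is a literal or a type member, which falls under the ``all other values'' clause and is fixed by $\Puniontrans{\cdot}$. For \textsc{E-Uop}, \textsc{E-BinOp}, \textsc{E-Cast}, and \textsc{E-Slice} the oracle $\mathcal{E}$ and the cast function return a value of numeric or boolean type, which contains no union and is therefore fixed. For \textsc{E-Rec} the result is $\Prec{\Pmany{f=\Pval}}$; applying the induction hypothesis to each subderivation and using $\Puniontrans{\Prec{\Pmany{f=\Pval}}}=\Prec{\Pmany{f=\Puniontrans{\Pval}}}$ gives the claim. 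For the projection rules \textsc{E-RecMem}, \textsc{E-HdrMem}, and \textsc{E-Index} the result is a component of an aggregate; the induction hypothesis fixes the aggregate, and since $\Puniontrans{\cdot}$ recurses structurally through records, headers, and stacks, invariance of the whole is equivalent to invariance of each component, so the projected value is fixed as well. The havoc rules \textsc{E-HdrMemUndef} and \textsc{E-IndexOOB} return $\Phavoc{\tau_i}$ and $\Phavoc{\tau}$, where $\tau_i$ is a header field type and $\tau$ a stack element type; by the grammar of \cref{fig:types} these are base types $\rho$, so the havoc'd value is not a union and is fixed.

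The two delicate cases are \textsc{E-Var} and the call rules, since these are the only places where a union value could in principle escape through an expression. In \textsc{E-Var} the result is $\Pstore(\ell)$, so the claim reduces to the invariant that every value in the store is fixed by $\Puniontrans{\cdot}$; this is exactly the store condition maintained by the translation-correctness theorem in whose proof this lemma is invoked, so I would either thread that invariant through or phrase the lemma relative to a union-free store. In \textsc{E-Call} and \textsc{E-CallN} the returned value comes from a $\Preturn{\Pval}$ signal emitted while evaluating the body, so closing this case requires a companion claim at the statement level: if a statement evaluates to $\Preturn{\Pval}$ then $\Puniontrans{\Pval}=\Pval$. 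This companion is proved by the same mutual induction and bottoms out at \textsc{E-Return}, which evaluates an expression and hence appeals back to the present lemma. I expect this interplay---the store invariant for \textsc{E-Var} together with the statement/expression mutual induction for the call rules---to be the only real obstacle; every other case is a one-line appeal to the structural recursion of $\Puniontrans{\cdot}$ and the induction hypothesis.
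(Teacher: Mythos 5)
Your induction is exactly the paper's proof: the paper's entire argument for this lemma is the sentence ``By induction on expression evaluation rules,'' and your structural cases (literals, operators and casts landing in base types, records, projections, and havoc at $\rho$-types) are precisely what that sentence can be made to mean. Where you depart from the paper is in refusing to wave through \textsc{E-Var} and the call rules, and you are right to refuse: those are not merely delicate, they are counterexamples to the lemma as literally stated. Nothing in the hypotheses constrains the store, and in the extended language stores genuinely contain union values---after a union variable declaration, $\Pstore(\ell)=\kw{init}_\Ptypedefs~X=\Punionval{X}{f_0,\kw{init}_\Ptypedefs~\tau_0}$---so evaluating that variable by \textsc{E-Var} returns a union value, whose translation is a record and not the value itself; the same happens for a call whose body returns a union-typed local. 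So what you have isolated is a gap in the paper's own one-line proof (the lemma needs a side condition the paper never states), not a defect of your case analysis.

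However, the repair you sketch has its own hole, and it is worth naming. A ``union-free store'' invariant cannot be threaded through the induction, because evaluation in the extended language re-introduces union values: the body of a called closure may declare a union-typed local (whose initializer is a union value by the extension's $\kw{init}$ clause) and then return it, which breaks both the store invariant and the statement-level companion claim your \textsc{E-Call} case relies on. The hypothesis that actually closes all the cases is typing-based rather than store-based: require that the expression is well typed at a (simplified) type containing no union type, together with well-typedness of the store and environment. Values of such types are fixed by $\Puniontrans{\cdot}$---by the stratification of \cref{fig:types}, record, header, and stack components are $\rho$-types, so a union value can sit only at union type---and then \textsc{E-Var} and \textsc{E-Call} are discharged by inversion on value typing (plus a small preservation argument for the extension) instead of by a global store invariant. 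This restricted form is also exactly how the lemma is used in Lemma~\ref{thm:union_expression_lemma} and in Theorem~\ref{thm:union_semantics_commutes}, where it is applied to right-hand sides and key expressions of non-union type; restating the lemma with that hypothesis is what remains to turn your proposal into a complete proof.
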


\begin{lemma}\label{thm:union_expression_lemma}
If $\Peval{\Pcp,\Ptypedefs,\Pstore,\Penv,\Pexp}
     {\Pstore',\Pval}$, 
then $\Peval{\Pcp,\Ptypedefs,\Puniontrans{\Pstore},\Penv,\Pexp}
     {\Puniontrans{\Pstore_1},\Pval}$.
\begin{proof}
By induction on expression evaluation rules and lemma~\ref{thm:union_value_lemma}.
\end{proof}
\end{lemma}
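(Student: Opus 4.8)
The plan is to induct on the derivation of the premise $\langle\Pcp,\Ptypedefs,\Pstore,\Penv,\Pexp\rangle \Downarrow \langle\Pstore',\Pval\rangle$ (reading the final store written $\Pstore_1$ in the conclusion as $\Pstore'$; the two names denote the same store). Two structural facts drive the entire argument and should be isolated first. First, translation leaves every expression untouched, $\Puniontrans{\Pexp}=\Pexp$, so both runs evaluate the syntactically identical expression and the only difference between them is the store, $\Pstore$ versus $\Puniontrans{\Pstore}$. Second, $\Puniontrans{(\cdot)}$ acts pointwise on stores: it preserves the domain and satisfies $\Puniontrans{\Pstore}(\ell)=\Puniontrans{\Pstore(\ell)}$. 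The proof therefore reduces to tracking, rule by rule, which locations each step reads and writes, and checking that the two derivations remain in lockstep up to $\Puniontrans{(\cdot)}$ on the store. For the nondeterministic rules \textsc{E-IndexOOB} and \textsc{E-HdrMemUndef}, which yield $\Phavoc{\tau}$, I would have the translated run make the identical arbitrary choice; this is legitimate because the relevant element or field type is preserved by translation, so the same havoc value remains admissible.

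First I would dispatch the axiomatic and congruence cases. The rules \textsc{E-Int}, \textsc{E-Bool}, and \textsc{E-TypMem} leave the store fixed and return a literal or type member, which is a fixed point of $\Puniontrans{(\cdot)}$, so re-applying the same rule in $\Puniontrans{\Pstore}$ gives the same value and the (unchanged, hence trivially translated) store. The one base case that inspects the store is \textsc{E-Var}: there $\Penv(x)=\ell$ and $\Pstore(\ell)=\Pval$, so by pointwise-ness $\Puniontrans{\Pstore}(\ell)=\Puniontrans{\Pval}$, and \cref{thm:union_value_lemma} collapses this to $\Pval$ because $\Pval$ is itself the result of an expression evaluation. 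For the compositional rules (\textsc{E-Cast}, \textsc{E-Uop}, \textsc{E-BinOp}, \textsc{E-Rec}, \textsc{E-RecMem}, \textsc{E-Slice}, \textsc{E-Index}, \textsc{E-HdrMem}) I would apply the induction hypothesis to each subderivation in order, which threads the store through as its exact $\Puniontrans{(\cdot)}$-image at every intermediate point; the final value-producing step (an axiomatized cast or operator, a field projection, a slice, or an index) consumes identical operands and so returns the same value while leaving the store alone.

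The hard part will be the function-call rules \textsc{E-Call}, \textsc{E-CallN}, \textsc{E-Call-DeclExit}, and \textsc{E-Call-StmtExit}, which are the only expression rules that genuinely mutate the store. Here the callee is first obtained by evaluating $\Pexp$; since translation is the identity on closures and native values, \cref{thm:union_value_lemma} ensures the body run on the translated side is literally the same body, with the same argument structure. The store is then altered only by copy-in, by statement evaluation of the body, and by copy-out, and I must show each commutes with $\Puniontrans{(\cdot)}$. I expect to factor this through two auxiliary facts: (i) l-value evaluation and l-value writing commute with store translation — writing $\Puniontrans{\Pval}$ through an l-value $\Plval$ into $\Puniontrans{\Pstore}$ yields $\Puniontrans{\Pstore'}$ — proved by induction over the l-value evaluation and writing rules; and (ii) the body's statement evaluation commutes with translation, which is the statement-level analogue established in \Cref{sec:unions}. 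Because the body derivation is a strict sub-derivation of the call, fact (ii) must be interleaved with the present induction, making the whole argument a simultaneous induction over expression and statement evaluation; and because $\Puniontrans{(\cdot)}$ preserves store domains, the fresh locations for the argument temporaries can be chosen identically on both sides, keeping the environments equal and reducing the remaining work to routine bookkeeping of the copy-out writes.
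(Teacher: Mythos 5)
Your proposal is correct and takes essentially the same route as the paper: the paper's proof is precisely ``induction on the expression evaluation rules together with \cref{thm:union_value_lemma},'' and your case analysis (identity of translation on expressions, pointwise action on stores, matching the havoc choices, and commutation of copy-in/copy-out and l-value writes in the call rules) is just that induction written out in full. Your observation that the call cases force a simultaneous induction with the statement-level translation theorem makes explicit a dependency the paper's one-line proof leaves implicit, but it is the same argument, not a different one.
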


\begin{lemma}\label{thm:union_lval_lemma}
If $\Pstmtconf{\Pcp,\Ptypedefs}{\Pstore,\Penv}{\Pexp}
\Downarrow_{\mathit{\Plval}}
\langle\Pstore',\Plval\rangle$, 
then $\Pstmtconf{\Pcp,\Ptypedefs}{\Puniontrans{\Pstore},\Penv}{\Pexp}
\Downarrow_{\mathit{\Plval}}
\langle\Puniontrans{\Pstore'},\Plval\rangle$.
\begin{proof}
By induction on expression evaluation rules and lemma~\ref{thm:union_value_lemma}.
\end{proof}
\end{lemma}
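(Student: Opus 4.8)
The plan is to induct on the derivation of the L-value evaluation judgment $\Plvaleval{\Pcp}{\Ptypedefs}{\Pstore}{\Penv}{\Pexp}{\Pstore'}{\Plval}$, which is generated by exactly four rules: \textsc{LE-Var}, the two field/index rules (both displayed as \textsc{LE-Rec}), and \textsc{LE-Slice}. Since adding unions introduces no new \emph{expression} forms and the translation fixes every expression ($\Puniontrans{\Pexp}=\Pexp$), the only obligation in each case is to reproduce the \emph{same} l-value $\Plval$ while showing that each intermediate store is transported correctly under $\Puniontrans{\cdot}$. Accordingly, the induction hypothesis for a subderivation asserts that running the subexpression under the translated store yields the translation of its output store together with an unchanged l-value.

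First I would dispatch the base case \textsc{LE-Var}, where $\Pexp=x$, the store is unchanged, and the l-value is $x$; re-applying \textsc{LE-Var} under $\Puniontrans{\Pstore}$ gives $\langle\Puniontrans{\Pstore},x\rangle$, which is what is required because $\Puniontrans{\Pstore'}=\Puniontrans{\Pstore}$. The field case $\Pexp.f$ follows immediately: apply the induction hypothesis to the subderivation for $\Pexp$ and re-apply \textsc{LE-Rec}. The index and slice cases carry the real content. In the index rule the head $\Pexp_1$ is evaluated with the l-value judgment while the index $\Pexp_2$ is evaluated with the \emph{ordinary} expression judgment; I would apply the induction hypothesis to the l-value subderivation for $\Pexp_1$ and appeal to \Cref{thm:union_expression_lemma} for the subderivation of $\Pexp_2$, which guarantees both that evaluation succeeds under the translated store with the translated output store and that the index value is preserved (being a bitstring, \Cref{thm:union_value_lemma} gives it unchanged). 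Threading the two translated stores and re-applying the index rule closes the case; \textsc{LE-Slice} is identical but invokes \Cref{thm:union_expression_lemma} once per endpoint $\Pexp_2,\Pexp_3$.

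The main subtlety to flag is not any individual case but the mutual-recursion structure underlying this family of lemmas. L-value evaluation calls ordinary expression evaluation on its index and slice subexpressions, while expression evaluation—through the copy-out step in the function-call rules—calls l-value evaluation in turn. Consequently \Cref{thm:union_value_lemma}, \Cref{thm:union_expression_lemma}, and this lemma are honestly established by a single simultaneous induction over the combined evaluation judgments, even though they are stated and cited as separate results. The one thing I would be careful about is therefore well-foundedness: every appeal to an ``earlier'' lemma must land on a strictly smaller subderivation, so the mutual recursion bottoms out. (I would also read the companion lemma's conclusion with the evident fix, namely that the output store is $\Puniontrans{\Pstore'}$, the translation of the original output store, rather than an unrelated $\Pstore_1$.)
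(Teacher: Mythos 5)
Your proposal is correct and matches the paper's own argument, which is exactly the one-line ``induction on expression evaluation rules and Lemma~\ref{thm:union_value_lemma}'' that you have expanded: the case analysis over \textsc{LE-Var}, the field and index rules, and \textsc{LE-Slice}, with appeals to Lemma~\ref{thm:union_expression_lemma} for the index and slice endpoints, is the intended instantiation of that sketch. Your observation that the three lemmas are honestly a single simultaneous induction over the mutually recursive evaluation judgments (l-value evaluation invoking expression evaluation, which re-enters l-value evaluation via copy-out) is a faithful and careful reading of the paper's terse proof rather than a departure from it.
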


\begin{definition}
We say $\Penvlt{\Pstore_1}{\Penv_1}{\Pstore_2}{\Penv_2}$ if $\Penv_1$'s domain is a subset of $\Penv_2$'s domain, and for all $\Plval$ in $\Penv_1$'s domain, if $\Penv_1(\Plval) = \ell_1$ and $\Pstore_1(\ell_1) = \Pval$, then $\Penv_2(\Plval) = \ell_2$ and $\Pstore_2(\ell_2) = \Pval$.
\end{definition}

\begin{theorem}\label{thm:union_semantics_commutes}
If $\Peval{\Pcp,\Ptypedefs,\Pstore,\Penv,\Pstmt}
      {\Pstore',\Penv', sig}$,
then $\Peval{\Pcp,\Ptypedefs,\Puniontrans{\Pstore},\Penv,\Puniontrans{\Pstmt}}
      {\Pstore_t,\Penv_t, sig}$ and 
$\Penvlt{\Puniontrans{\Pstore'}}{\Penv'}{\Pstore_t}{\Penv_t}$.

\begin{proof}
The proof works by induction on the statement evaluation rules, and a case analysis on the last rule in the derivation.

\begin{description}

\item[Case \textsc{E-Union-Assign}.]We know
\begin{equation}
\Pstmt = \Passign{\Pexp_1.f_i}{\Pexp_2}
\end{equation}
\begin{equation}
\Pstmtconf{\Pcp,\Ptypedefs}{\Pstore,\Penv}{\Pexp_1}
\Downarrow_{\mathit{\Plval}}
\langle\Pstore_1,\Plval\rangle
\end{equation}
\begin{equation}
\Peval{\Pcp,\Ptypedefs,\Pstore_1,\Penv,\Pexp_2}
      {\Pstore_2,\Pval}
\end{equation}
\begin{equation}\label{eq:union-assign-store-write}
\Passignlval{\Pstore_2}{\Penv}{\Passign{\Plval.f}{\Pval}}{\Pstore_3}
\end{equation}
\begin{equation}
\Peval{\Pcp,\Ptypedefs,\Pstore,\Penv,\Passign{\Pexp_1.f_i}{\Pexp_2}}
      {\Pstore_3,\Penv,\Pcont}
\end{equation}

By lemma~\ref{thm:union_expression_lemma} and lemma~\ref{thm:union_lval_lemma}, we have
\begin{equation}
\Pstmtconf{\Pcp,\Ptypedefs}{\Puniontrans{\Pstore},\Penv}{\Pexp_1}
\Downarrow_{\mathit{\Plval}}
\langle\Puniontrans{\Pstore_1},\Plval\rangle
\end{equation}
\begin{equation}
\Peval{\Pcp,\Ptypedefs,\Puniontrans{\Pstore_1},\Penv,\Pexp_2}
      {\Puniontrans{\Pstore_2},\Pval}
\end{equation}

By definition of translation we know
\begin{equation}
\Puniontrans{\Pstmt} = \Passign{\Pexp_1}{\Pbraces{tag:\Pbittype{n} = i, f_i:\tau_i = \Pexp_2, f_j:\tau_j = \kw{init}_\Ptypedefs~\tau_j}}
\end{equation}

Suppose $\Penv(\Plval.f) = \ell$. Using evaluation rules, it is straightforward to show
\begin{equation}
\Peval{\Pcp,\Ptypedefs,\Puniontrans{\Pstore},\Penv,\Puniontrans{\Pstmt}}
      {\Pstore_t,\Penv,\Pcont}
\end{equation}
\begin{equation}
\Pstore_t  = \Puniontrans{\Pstore_2}[\Pheapset{\ell}{\Pbraces{tag:\Pbittype{n} = i, f_i:\tau_i = \Pval, f_j:\tau_j = \kw{init}_\Ptypedefs~\tau_j}}]
\end{equation}

We know from (\ref{eq:union-assign-store-write}) that $\Pstore_3 = \Pstore_2[\Pheapset{\ell}{\Pval}]$. By definition of translation for stores, we have:
\begin{equation}\label{eq:union_assign_store_equivalent}
\Puniontrans{\Pstore'} = \Puniontrans{\Pstore_3} = \Puniontrans{\Pstore_2[\Pheapset{\ell}{\Punionval{X}{f_i, \Pval}}]} = \Puniontrans{\Pstore_2}[\Pheapset{\ell}{\Puniontrans{\Punionval{X}{f_i, \Pval}}}] = \Pstore_t
\end{equation}

$\Penv' = \Penv_t = \Penv$, so it follows from (\ref{eq:union_assign_store_equivalent}) that $\Penvlt{\Puniontrans{\Pstore'}}{\Penv'}{\Pstore_t}{\Penv_t}$.

\item[Case \textsc{E-UnionSwitch-Match}.] We know
\begin{equation}
\Pstmt = \Pswitch{\Pexp}{\Pmany{\Pswitchcase{\Plbl}{\Pbraces{\Pmany{\Pstmt'}}}}}
\end{equation}
\begin{equation}
\Pstmtconf{\Pcp,\Ptypedefs}{\Pstore,\Penv}{\Pexp}
\Downarrow_{\mathit{\Plval}}
\langle\Pstore_1,\Plval\rangle
\end{equation}
\begin{equation}
\Penv(\Plval) = \ell_1 \;\;\;\;\;
\Pstore_1(\ell_1) = \Punionval{X}{f_i, \Pval_i} \\
\end{equation}
\begin{equation}\label{eq:union_match_case_eq}
\kw{match\_union\_case}~(\Pmany{\Pswitchcase{\Plbl}{\Pbraces{\Pmany{\Pstmt'}}}}, f_i) = (f_i, k)
\end{equation}
\begin{equation}\label{eq:union_match_fresh}
\Pfresh{\ell_2} \;\;\;\;\;
\Pstore_2 = \Pstore_1[\Pheapset{\ell_2}{\Pval_i}]
\end{equation}
\begin{equation}\label{eq:union_match_stmt_eval}
\Peval{\Pcp,\Ptypedefs, \Pstore_2,\Penv[\Penvset{f_i}{\ell_2}], \Pbraces{\Pmany{\Pstmt'}_k}}
      {\Pstore_3,\Penv[\Penvset{f_i}{\ell_2}],sig}
\end{equation}
\begin{equation}
\Peval{\Pcp,\Ptypedefs,\Pstore,\Penv,\Pswitch{\Pexp}{\Pmany{\Pswitchcase{\Plbl}{\Pbraces{\Pmany{\Pstmt'}}}}}}
      {\Pstore_3,\Penv,sig}
\end{equation}
\\
It follows from (\ref{eq:union_match_case_eq}) and the definition of translation that
\begin{equation}
\begin{array}{rl}
\Puniontrans{\Pswitch{\Pexp}{\Pmany{\Pswitchcase{\Plbl}{\Pbraces{\Pmany{\Pstmt'}}}}}} = & \Pvardeclinst{X}{tmp}{\Pexp}, \\
& \kw{if}~(tmp.tag == j_1)~\Pbraces{\Pvardeclinst{\tau_{j_1}}{f_{j_1}}{tmp.f_{j_1}}, \Puniontrans{\Pmany{\Pstmt'}_1}} \\
& ... \\
& \kw{else\ if}~(tmp.tag == i)~\Pbraces{\Pvardeclinst{\tau_{i}}{f_{i}}{tmp.f_{i}}, \Puniontrans{\Pmany{\Pstmt'}_k}} \\
& ... \\
& \kw{else}~\Pbraces{} \\
\end{array} 
\end{equation} \\
Using evaluation rules and assuming $\Pfresh{\ell}$, we can show
\begin{equation}
\Peval{\Pcp, \Ptypedefs, \Puniontrans{\Pstore}, \Penv, \Pvardeclinst{X}{tmp}{\Pexp}}
{\Puniontrans{\Pstore_1}[\Pheapset{\ell}{\Puniontrans{\Punionval{X}{f_i, \Pval_i}}}], \Penv[\Penvset{tmp}{\ell}], \Pcont}
\end{equation}
We can also show
\begin{equation}
\Puniontrans{\Pstore_1}[\Pheapset{\ell}{\Puniontrans{\Punionval{X}{f_i, \Pval_i}}}](tmp.f_i) = \Puniontrans{\Pval_i}
\end{equation}\\
Let $\Pstore_{t1} = \Puniontrans{\Pstore_1}[\Pheapset{\ell}{\Puniontrans{\Punionval{X}{f_i, \Pval_i}}}]$, and  $\Penv_{t1} = \Penv[\Penvset{tmp}{\ell}]$. Note that $\Pfresh{\ell_2}$ holds from (\ref{eq:union_match_fresh}). Using evaluation rules, it is straightforward to show
\begin{equation}
\Peval{\Pcp,\Ptypedefs, \Pstore_{t1}, \Penv_{t1}, \Pvardeclinst{\tau_{i}}{f_{i}}{tmp.f_{i}}}
      {\Pstore_{t1}[\Pheapset{\ell_2}{\Puniontrans{\Pval_i}}],\Penv_{t1}[\Penvset{f_i}{\ell_2}],\Pcont}
\end{equation}
\begin{equation}
\Pstore_{t2} = \Pstore_{t1}[\Pheapset{\ell_2}{\Puniontrans{\Pval_i}}]
\end{equation}
\begin{equation} \label{eq:union_match_helper_1}
\Peval{\Pcp,\Ptypedefs, \Pstore_{t2}, \Penv_{t1}[\Penvset{f_i}{\ell_2}], \Pbraces{\Puniontrans{\Pmany{\Pstmt'}_k}}}
      {\Pstore_{t3}, \Penv_{t1}[\Penvset{f_i}{\ell_2}], sig_{t1}}
\end{equation}
\begin{equation}
\Peval{\Pcp, \Ptypedefs, \Pstore_{t1}, \Penv_{t1}, \Pbraces{\Pvardeclinst{\tau_{i}}{f_{i}}{tmp.f_{i}}, \Puniontrans{\Pmany{\Pstmt'}_k}}}
      {\Pstore_{t3}, \Penv_{t1}, sig_{t1}}
\end{equation}
\begin{equation}
\Peval{\Pcp, \Ptypedefs, \Puniontrans{\Pstore}, \Penv, \Puniontrans{\Pswitch{\Pexp}{\Pmany{\Pswitchcase{\Plbl}{\Pbraces{\Pmany{\Pstmt'}}}}}}}
{\Pstore_{t3}, \Penv_{t1}, sig_{t1}}
\end{equation}
\begin{equation}  \label{eq:union_match_helper_2}
\langle \Pstore_{t}, \Penv_{t}, sig_{t} \rangle = 
\langle \Pstore_{t3}, \Penv_{t1}, sig_{t1} \rangle
\end{equation}
\\
Suppose $\Pstore_2' = \Pstore_2[\Pheapset{\ell}{\Punionval{X}{f_i, \Pval_i}}]$, and $\Penv_2' = \Penv[\Penvset{tmp}{\ell}]$. Note that by definition of translation, $tmp$ is not used elsewhere in the switch statement. Thus, it follows from (\ref{eq:union_match_stmt_eval}) that
\begin{equation}
\Peval{\Pcp,\Ptypedefs,\Pstore_2',\Penv_2'[\Penvset{f_i}{\ell_2}], \Pbraces{\Pmany{\Pstmt'}_k}}
      {\Pstore_3[\Pheapset{\ell}{\Punionval{X}{f_i, \Pval_i}}], \Penv_2'[\Penvset{f_i}{\ell_2}],sig}
\end{equation}

By the induction hypothesis, we know that
\begin{equation}
\Peval{\Pcp,\Ptypedefs,\Puniontrans{\Pstore_2'},\Penv_2'[\Penvset{f_i}{\ell_2}], \Puniontrans{\Pbraces{\Pmany{\Pstmt'}_k}}}
      {\Pstore_{t3}',\Penv_{t3}',sig}
\end{equation}
such that $\Penvlt{\Puniontrans{\Pstore_3[\Pheapset{\ell}{\Punionval{X}{f_i, \Pval_i}}]}}
                            {\Penv_2'[\Penvset{f_i}{\ell_2}]}
                            {\Pstore_{t3}'}{\Penv_{t3}'}$.
\\
Moreover, by definition of translation, we have
\begin{equation}
\begin{array}{rl}
\Puniontrans{\Pstore_2'} & = \Puniontrans{\Pstore_2}[\Pheapset{\ell}{\Puniontrans{\Punionval{X}{f_i, \Pval_i}}}] \\
& = (\Puniontrans{\Pstore_1} [\Pheapset{\ell_2}{\Puniontrans{\Pval_i}}]) [\Pheapset{\ell}{\Puniontrans{\Punionval{X}{f_i, \Pval_i}}}] \\ 
& = (\Puniontrans{\Pstore_1} [\Pheapset{\ell}{\Puniontrans{\Punionval{X}{f_i, \Pval_i}}}]) [\Pheapset{\ell_2}{\Puniontrans{\Pval_i}}]  \\
& = \Pstore_{t1}[\Pheapset{\ell_2}{\Puniontrans{\Pval_i}}] \\
& = \Pstore_{t2}
\end{array}
\end{equation}

Thus, from (\ref{eq:union_match_helper_1}) and (\ref{eq:union_match_helper_2}), it follows that:
\begin{equation}
\Pstore_{t3}' = \Pstore_{t}
\end{equation}
\begin{equation}
\Penv_{t3}'  = \Penv_{t1}[\Penvset{\ell_2}{f_i}]
\end{equation}
\begin{equation}
sig_{t} = sig
\end{equation}
\begin{equation}
\Penvlt{\Puniontrans{\Pstore_3[\Pheapset{\ell}{\Punionval{X}{f_i, \Pval_i}}]}}
                            {\Penv_2'[\Penvset{f_i}{\ell_2}]}
                            {\Pstore_{t}}{\Penv_{t1}[\Penvset{f_i}{\ell_2}]}
\end{equation}
\begin{equation}
\Penvlt{\Puniontrans{\Pstore_3[\Pheapset{\ell}{\Punionval{X}{f_i, \Pval_i}}]}}
                            {\Penv_2'}
                            {\Pstore_{t}}{\Penv_{t1}} 
\end{equation}
\begin{equation}
\Penvlt{\Puniontrans{\Pstore_3[\Pheapset{\ell}{\Punionval{X}{f_i, \Pval_i}}]}}
                            {\Penv[\Penvset{tmp}{\ell}]}
                            {\Pstore_{t}}{\Penv_{t1}}
\end{equation}
\begin{equation}
\Penvlt{\Puniontrans{\Pstore_3}[\Pheapset{\ell}{\Puniontrans{\Punionval{X}{f_i, \Pval_i}}]}}
                            {\Penv[\Penvset{tmp}{\ell}]}
                            {\Pstore_{t}}{\Penv_{t1}}
\end{equation}
\begin{equation}
\Penvlt{\Puniontrans{\Pstore_3}}
                            {\Penv}
                            {\Pstore_{t}}{\Penv_{t1}}
\end{equation}
\begin{equation}
\Penvlt{\Puniontrans{\Pstore'}}
                            {\Penv'}
                            {\Pstore_{t}}{\Penv_{t}}
\end{equation}

\item[Other Cases.] The case when default is matched can be proven similar to the case above. The case when there is no match and the cases for other P4 statements are trivial.
\end{description}
\end{proof}

\end{theorem}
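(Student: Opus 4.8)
The plan is to prove the theorem by induction on the derivation of the source statement-evaluation judgment $\Peval{\Pcp,\Ptypedefs,\Pstore,\Penv,\Pstmt}{\Pstore',\Penv',sig}$, with a case analysis on the final rule. The argument is a forward simulation: each source derivation is matched by one target derivation ending in the \emph{same} signal $sig$, with the final configurations related by $\Penvlt{\cdot}{\cdot}{\cdot}{\cdot}$. Before the main induction I would lean on three supporting facts, each proved by a routine induction. First, $\Puniontrans{\cdot}$ is the identity on any value arising from expression evaluation (Lemma~\ref{thm:union_value_lemma}), because union values never appear as the result of evaluating an expression in the extended language. Second and third, expression and l-value evaluation commute with store translation (Lemmas~\ref{thm:union_expression_lemma} and~\ref{thm:union_lval_lemma}): running a subexpression under $\Puniontrans{\Pstore}$ yields the same value and a final store equal to the translation of the original final store. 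Since expressions are textually unchanged ($\Puniontrans{\Pexp}=\Pexp$), these commutation lemmas amount to observing that the translation's only nontrivial action on a store is on union values, which evaluation never writes.

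With the lemmas available, the statement forms that the translation leaves (essentially) unchanged are handled uniformly. For conditionals, blocks, returns, exits, empty statements, table calls, function calls, and assignment to ordinary l-values, the translated statement is the same statement with its substatements translated; I would evaluate the subexpressions under $\Puniontrans{\Pstore}$ via Lemmas~\ref{thm:union_expression_lemma} and~\ref{thm:union_lval_lemma}, apply the induction hypothesis to each substatement, and use the fact that translating a store before or after a write commutes because the written value is fixed by Lemma~\ref{thm:union_value_lemma}. In these cases one gets $\Penv'=\Penv_t$ and $\Puniontrans{\Pstore'}=\Pstore_t$ on the nose, so the relation holds trivially. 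The union-field assignment case \textsc{E-Union} is only marginally harder: the source writes the union value $\Punionval{X}{f_i,\Pval}$ at the target location, whereas the translated statement writes the record $\Pbraces{tag:\Pbittype{n}=i,\ f_i:\tau_i=\Pval,\ \Pmany{f_j:\tau_j=\kw{init}_\Ptypedefs~\tau_j}}$; these coincide because that record is by definition $\Puniontrans{\Punionval{X}{f_i,\Pval}}$, so the two stores agree up to translation after the write.

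The main obstacle is the switch case \textsc{E-UnionSwitch}, where one source statement expands to a temporary binding $\Pvardeclinst{X}{tmp}{\Pexp}$ followed by a cascade of guarded blocks — one $\kw{if}$/$\kw{else}~\kw{if}$ branch per union field, guarded by $tmp.tag == j$, ending in a trivial $\kw{else}$ — with each branch first declaring a fresh local via $\Pvardeclinst{\tau_j}{f_j}{tmp.f_j}$. I would evaluate the scrutinee to the l-value holding $\Punionval{X}{f_i,\Pval_i}$, bind the fresh $tmp$, and show that the guard $tmp.tag == i$ for the active field is true while all earlier guards are false, so control reaches exactly the branch selected by $\kw{match\_union\_case}$ in the source. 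The declaration $\Pvardeclinst{\tau_i}{f_i}{tmp.f_i}$ then allocates a fresh location holding $\Puniontrans{\Pval_i}$, mirroring the source's binding of $f_i$ to a fresh location holding $\Pval_i$, after which I apply the induction hypothesis to the block. The delicate bookkeeping is that the source allocates one fresh location (for $f_i$) whereas the target allocates two (for $tmp$ and for $f_i$): I would commute the two heap extensions and appeal to the \emph{asymmetry} of $\Penvlt{\cdot}{\cdot}{\cdot}{\cdot}$, which permits a strictly larger target domain, to absorb the spurious $tmp$ location when discharging $\Penvlt{\Puniontrans{\Pstore'}}{\Penv'}{\Pstore_t}{\Penv_t}$; the translation is deliberately set up so that $tmp$ and the rebound $f_i$ do not shadow names read inside the block. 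The remaining subcases — a matched $\kw{default}$ and no match at all — follow the same pattern but are simpler, and I expect this freshness-and-commutation reasoning in the matched-field switch to be the only genuinely subtle part of the proof.
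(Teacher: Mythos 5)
Your proposal is correct and follows essentially the same route as the paper's proof: induction on the source statement-evaluation derivation, the same three supporting lemmas (translation is the identity on evaluated values, and expression/l-value evaluation commute with store translation), the same treatment of union-field assignment via the record being the definitional translation of the union value, and the same handling of the switch case, including commuting the two fresh heap extensions and using the asymmetry of the store--environment relation to absorb the extra $tmp$ location. The only difference is cosmetic: you make explicit (e.g., that earlier guards evaluate to false, and why the value lemma holds) points the paper leaves as routine.
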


\end{document}